\pgfplotsset{compat=1.17}
\definecolor{darkgreen}{rgb}{0.1953125, 0.34375, 0.30859375}
\definecolor{lightgreen}{rgb}{0.4921875, 0.6171875, 0.5078125}
\definecolor{darkbrown}{rgb}{0.80078125, 0.6796875, 0.55078125}
\definecolor{lightbrown}{rgb}{0.94921875, 0.79296875, 0.57421875}
\definecolor{hardcolour}{rgb}{0.937, 0.145, 0.373}
\newtheorem{theorem}{Theorem}
\newtheorem{lemma}[theorem]{Lemma}
\newtheorem{proposition}[theorem]{Proposition}
\newtheorem{conjecture}[theorem]{Conjecture}
\newtheorem{corollary}[theorem]{Corollary}
\newtheorem{remark}[theorem]{Remark}
\newtheorem{definition}[theorem]{Definition}
\newtheorem*{thmzeros}{Theorem~\ref{thm:ising:zero-free}}
\newtheorem*{thmhardness}{Theorem~\ref{thm:hardness}}
\newtheorem*{corfptas}{Corollary~\ref{cor:ising:fptas}}
\newtheorem*{lemimplementations}{Lemma~\ref{lem:ising:implementations}}
\newtheorem*{lemimplementingminusone}{Lemma~\ref{lem:implementing-1}}
\newtheorem*{conjecturezeros}{Conjecture~\ref{con:zeros-hardness}}
\newtheorem*{lemefficientcovering}{Lemma~\ref{lem:efficient-covering}}
\newtheorem*{lemmobiusimplementations}{Lemma~\ref{lem:implementations}}
\newcommand{\size}[1]{\mathrm{size}(#1)}
\def\numP{\mathrm{\#P}}
\def\ising{Z_{\mathrm{Ising}}}
\def\potts{Z_{\mathrm{Potts}}}
\def\hom{\mathrm{hom}}
\def\algebraic{\mathbb{A}}
\def\algcomplex{\mathbb{C}_{\algebraic}}
\newcommand{\I}{\textsc{Ising}}
\newcommand{\IN}{\textsc{IsingNorm}}
\newcommand{\IA}{\textsc{IsingArg}}
\newcommand{\RT}{\textsc{RatioTutte}}
\def\prob#1#2#3{\goodbreak\begin{list}{}{\labelwidth\z@ \itemindent-\leftmargin
                        \itemsep\z@  \topsep6\p@\@plus6\p@
                        \let\makelabel\descriptionlabel}
                \item[\textbf{Name:}]#1   \vspace{-1ex}
               \item[\textbf{Instance:}]                #2   \vspace{-1ex}
                \item[\textbf{Output:}]#3  
                \end{list}}
 \let\epsilon=\varepsilon
\title{The complexity of approximating the complex-valued Ising model on bounded degree graphs}
\author{
Andreas Galanis \thanks{
  Department of Computer Science, University of Oxford, Wolfson Building, Parks Road, Oxford, OX1~3QD, UK.}
  \and
  Leslie Ann Goldberg \footnotemark[1]
\and
Andr\'es Herrera-Poyatos \footnotemark[1] \thanks{This author is supported by an Oxford-DeepMind Graduate Scholarship and a EPSRC Doctoral Training Partnership.}
 }
\date{8th April 2022}
\begin{document}

\maketitle

\begin{abstract} 
We study the complexity of approximating the partition function $\ising(G; \beta)$ of the Ising model in terms of the relation between the edge interaction $\beta$ and a parameter $\Delta$ which is an upper bound on the maximum degree of the input graph $G$. Following recent trends in both statistical physics and algorithmic research, we allow the edge interaction $\beta$ to be any complex number. Many recent partition function results focus on complex parameters, both because of physical relevance and because of the key role of the complex case in delineating the tractability/intractability phase transition of the approximation problem.

In this work we establish both new tractability results and new intractability results. Our tractability results show that $\ising(-; \beta)$ has an FPTAS when $\lvert \beta - 1 \rvert / \lvert \beta + 1 \rvert < \tan(\pi / (4 \Delta - 4))$. The core of the proof is showing that there are no inputs~$G$ that make the partition function $0$ when $\beta$ is in this range. Our result significantly extends the known zero-free region of the Ising model (and hence the known approximation results).

Our intractability results show that it is $\numP$-hard to multiplicatively approximate the norm and to additively approximate the argument of $\ising(-; \beta)$ when $\beta \in \mathbb{C}$ is an algebraic number such that $\beta \not \in \mathbb{R} \cup \{i, -i\}$ and $\lvert \beta - 1\rvert / \lvert \beta + 1 \rvert > 1 / \sqrt{\Delta - 1}$.  These are the first results to show intractability of approximating $\ising(-, \beta)$ on bounded degree graphs with complex $\beta$. Moreover, we demonstrate situations in which zeros of the partition function imply hardness of approximation in the Ising model.
\end{abstract}
\newpage


\section{Introduction}

The Ising model is a classical model of ferromagnetism in statistical physics  \cite{Ising1925} and is one of the most well studied particle spin models \cite[Chapter 4]{Welsh1993}.  
The model has a parameter~$\beta$ called the \emph{edge interaction}  which captures the interactions between   ``spins'' along the edges of 
a host graph.
Let $G = (V, E)$ be an undirected graph, possibly with multiple edges or loops. The configurations of the Ising model on~$G$ are the assignments $\sigma \colon V \to \{0,1\}$. Each configuration $\sigma$ has a weight $\beta^{m(\sigma)}$, where $m(\sigma)$ denotes the number of monochromatic edges of $G$ under the assignment~$\sigma$. The \emph{partition function of the Ising model} is the aggregate weight over all configurations, i.e.,
\begin{equation*}
  \ising \left(G; \beta \right) = \sum_{\sigma \colon V \to \{0,1\}} \beta^{m(\sigma)}.
\end{equation*}
The partition function is important because 
of
its relevance in capturing qualitative properties of the model such as free energy and magnetisation.
It is also the normalising factor in the
\emph{Gibbs distribution} which is used to sample configurations.

This work studies the complexity of approximating the partition function, especially in terms of the interaction between the edge interaction~$\beta$ and a parameter~$\Delta\geq 3$
which is an upper bound on the maximum degree of the input graph~$G$. 
It is well-known in statistical physics 
\cite{bena2005statistical,Heilmann1972, lieb1981general,Sokal2005,Welsh1993,yang1952statistical}
that physical phase transitions can be identified by viewing the parameter~$\beta$ 
as a complex number.
More recently, the consideration of complex parameters
has been discovered in  computer science
to be the key to characterising  the tractability/intractability boundary.
On the algorithmic side, important work of Barvinok and of Patel and Regts \cite{BarvinokBook, Patel2017} shows that, for some partition functions, 
finding an open zero-free region (in terms of a parameter in the complex plane)
leads directly to an  FPTAS for approximating the partition function. 
This approach has been exploited for several partition functions to obtain new approximability results for complex parameters, sometimes including real parameters that other methods had failed to address~\cite{
barvinokregts2019, GLLZ, GLLb, harrow2019classical,
LSS, Liu2019Fisher,  Liu2019Approximation, Bencs2018,  peters2018location,PetersRegts2019}.
The new algorithms discovered in this work are
also based on the identification of zero-free regions. 
On the complexity-theoretic side, 
recent publications have shown for various spin systems a connection between the location of the complex zeros of the partition function and provable inapproximability of the partition function~\cite{Bezakova2019, Bezb, DeBoer2021, BGPR}. 
A key approach in all of these applications, and one that we will also develop in this work, is the
connection to complex dynamics.

Before describing our results, we briefly describe existing work on the
problem of approximating the partition function of the Ising model.
Let $\Delta \ge 3$ be an integer. We are interested in the problem of approximating $\ising(G; \beta)$ when the input graph~$G$ has maximum degree at most $\Delta$. This problem has already been well studied in the case where $\beta$ is a positive real. When $\beta > (\Delta-2)/\Delta$ there is an FPRAS for $\ising(-; \beta)$ on graphs with maximum degree at most $\Delta$ \cite{Jerrum1993, Sinclair2014}. When $0 < \beta < (\Delta-2)/\Delta$, there is no FPRAS for $\ising(-; \beta)$ on graphs with maximum degree at most $\Delta$ unless $\mathsf{NP} = \mathsf{RP}$ \cite{Galanis2016}. The computational phase transition $(\Delta-2)/\Delta$ coincides precisely with the uniqueness / non-uniqueness phase transition of the $\Delta$-regular infinite tree. 

The complexity of approximation is mostly not understood when $\beta$ is complex. 
Prior to this work, there was no inapproximability result for any non-real edge interactions. 
Moreover, even though zero-free regions of the partition function have been the focus of recent publications~\cite{Barvinok2020, Liu2019Fisher, Mann2019}, these   regions 
turn out to be far from optimal. In this work we shed some light on this approximability problem by significantly extending the known zero-free regions 
(leading to approximation algorithms)
and by giving an inapproximability result that covers most of the complex plane. Our zero-free region for the Ising model is stated in Theorem~\ref{thm:ising:zero-free}.

\newcommand{\statethmzeros}{Let $\Delta$ be an integer with $\Delta\ge 3$. Let $G = (V, E)$ be a graph of maximum degree at most $\Delta$. Let $\varepsilon_{\Delta} = \tan(\pi/(4 (\Delta-1))) \in (0,1)$. Then $\ising(G; \beta) \ne 0$ for all $\beta \in \mathbb{C}$ with  $\lvert \beta - 1\rvert / \lvert \beta + 1 \rvert \le \varepsilon_\Delta$.}
\begin{theorem} \label{thm:ising:zero-free}
	\statethmzeros \footnote{This result can be extended  to the Potts model, demonstrating that, for $\varepsilon_{\Delta}' = \tan(\pi/(4 \Delta))$, and for any $q\geq 2$, $\potts(G; q, \beta) \ne 0$ for all  $\beta \in \mathbb{C}$ with  $\lvert \beta - 1\rvert / \lvert \beta + 1 \rvert \le \varepsilon_\Delta'$.}
\end{theorem}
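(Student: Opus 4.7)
The plan is to reduce to the tree case via Weitz's self-avoiding walk (SAW) tree and then handle trees by a structural induction tracking the argument of the conditional ratio of partition functions at each vertex.

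For a rooted tree $T$, let $Z_T^{+}$ and $Z_T^{-}$ denote the Ising partition functions with the root pinned to spins $+$ and $-$ respectively, and set $R_T = Z_T^{-}/Z_T^{+}$. If the root's children have subtrees $T_1,\ldots,T_d$, the standard recursion gives $R_T = \prod_{i=1}^d M_\beta(R_{T_i})$ with $M_\beta(x) = (\beta x + 1)/(x + \beta)$, and since $Z_T = Z_T^{+}(1+R_T)$ it suffices to show $Z_T^{+}\neq 0$ and $R_T\neq -1$. The first move is the M\"obius change of variables $\phi(x) = (x-1)/(x+1)$: a direct computation gives $\phi\circ M_\beta\circ\phi^{-1}(t) = wt$ with $w := (\beta-1)/(\beta+1)$, and $\phi$ maps the right half-plane $\{\Re R>0\}$ onto the open unit disk.

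The geometric heart of the argument is this. By conjugation, the image of $\{\Re R>0\}$ under $M_\beta$ corresponds to $\{|t|\le|w|\}$, which $\phi^{-1}$ sends to the disk in the $R$-plane with centre $(1+|w|^{2})/(1-|w|^{2})$ and radius $2|w|/(1-|w|^{2})$. A tangent-line computation together with the identity $\arcsin\bigl(2|w|/(1+|w|^{2})\bigr)=2\arctan|w|$ shows that every point of this image disk has argument strictly less than $2\arctan|w|$ from the positive real axis. This is the pivotal quantitative bound: the hypothesis $|w|\le\tan(\pi/(4(\Delta-1)))$ translates exactly into $2(\Delta-1)\arctan|w|\le \pi/2$.

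I would then prove by induction on $T$ the joint invariant (i)~$Z_T^{+}\neq 0$ and (ii)~$\Re R_T>0$. The base case (single leaf) is immediate. At a non-root vertex (with $d\le\Delta-1$), the hypothesis and the geometric step yield
\[
|\Arg R_T| \;<\; d\cdot 2\arctan|w| \;\le\; 2(\Delta-1)\cdot\frac{\pi}{4(\Delta-1)} \;=\; \frac{\pi}{2},
\]
preserving (ii); moreover $|w|<1$ forces $\Re\beta>0$, so each $\beta+R_{T_i}$ is nonzero, and hence $Z_T^{+}=\prod_i Z_{T_i}^{+}(\beta+R_{T_i})\neq 0$. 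At the root, $d\le\Delta$ gives $|\Arg R_T|<\Delta\pi/(2(\Delta-1))<\pi$ since $\Delta\ge 3$, so $R_T\neq -1$ and $Z_T\neq 0$. For a general graph $G$ of maximum degree $\Delta$, Weitz's SAW-tree construction at any vertex $v$ produces a tree of maximum degree $\Delta$ whose root ratio equals $R_v(G)$, reducing the general case to the tree case. The hardest technical step is the geometric lemma above, namely that $M_\beta$ maps the right half-plane into a sector of opening exactly $4\arctan|w|$ about the positive real axis; it is this sharp bound that aligns the product over $d\le\Delta-1$ factors with the tangent bound $\varepsilon_\Delta$. A secondary subtlety is the need to track $Z_T^{+}\neq 0$ alongside the location of $R_T$, since concluding $Z_T\neq 0$ from $Z_T = Z_T^{+}(1+R_T)$ requires both.
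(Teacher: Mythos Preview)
Your proposal is correct and follows essentially the same route as the paper: SAW-tree reduction, the M\"obius identity $\phi\circ M_\beta\circ\phi^{-1}(t)=wt$ (which is exactly the paper's equation~\eqref{eq:hbeta:property}), the sector bound $|\Arg M_\beta(R)|\le 2\arctan|w|$ on the image of the closed right half-plane, and the joint induction on $Z_T^{+}\neq 0$ and $R_T$ lying in that half-plane. The one place to be a bit more careful than your sketch is the base case for \emph{pinned} leaves produced by the SAW construction, where $R$ can be $0$ or $\infty$ and $Z_T^{+}$ can vanish; the paper handles this by allowing $R\in\{0,\infty\}$ at leaves and checking the factor $\beta Z_{T_i}^{+}+Z_{T_i}^{-}$ directly rather than writing it as $Z_{T_i}^{+}(\beta+R_{T_i})$.
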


Theorem~\ref{thm:ising:zero-free} can be applied in conjunction with the algorithms of Barvinok, and Patel and Regts \cite{BarvinokBook, Patel2017} to obtain an FPTAS for $\ising(-; \beta)$,  giving the following corollary\footnote{In this work we state our algorithmic results in terms of algebraic numbers as the required computations can be performed efficiently in the field of algebraic numbers, see Section~\ref{sec:pre:algebraic}.} (proved in Section~\ref{sec:easiness}).

\newcommand{\statecorfptas}{Let $\Delta$ be an integer with $\Delta \ge 3$. Let $\beta$ be an algebraic number such that  {$\lvert \beta - 1\rvert / \lvert \beta + 1 \rvert < \varepsilon_\Delta$}, where  $\varepsilon_{\Delta} = \tan(\pi/(4 (\Delta-1)))$. Then there is  an algorithm that, on inputs a graph $G$ with maximum degree at most $\Delta$  and a rational $\epsilon > 0$, runs in time $\mathrm{poly}(\mathrm{size}(G), 1/\epsilon)$ and outputs $\hat{Z} = \ising(G;\beta) e^z$ for some complex number $z$ with $\lvert z \rvert \le \epsilon$.}
\begin{corollary} \label{cor:ising:fptas}
	\statecorfptas
\end{corollary}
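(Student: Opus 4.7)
My plan is to combine Theorem~\ref{thm:ising:zero-free} with the polynomial interpolation framework of Barvinok and Patel--Regts. The first step is to move to the Möbius coordinate $w = (\beta - 1)/(\beta + 1)$, so that the trivial evaluation point $\beta = 1$ corresponds to $w = 0$ and the hypothesis reads $|w_0| < \varepsilon_\Delta$ for the target point $w_0 = (\beta-1)/(\beta+1)$. Substituting $\beta = (1+w)/(1-w)$, I would define the rescaled polynomial
\[
P(G; w) = (1 - w)^{|E|} \, \ising(G; \beta) = \sum_{\sigma \colon V \to \{0,1\}} (1 + w)^{m(\sigma)} (1 - w)^{|E| - m(\sigma)},
\]
which has degree at most $|E|$, satisfies $P(G; 0) = 2^{|V|}$, and, since $|w| \le \varepsilon_\Delta < 1$ implies $(1-w)^{|E|} \ne 0$, inherits from Theorem~\ref{thm:ising:zero-free} the property that $P(G; w) \ne 0$ on the closed disk $|w| \le \varepsilon_\Delta$ for every graph $G$ of maximum degree at most $\Delta$.

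Next, I would apply the Barvinok--Patel--Regts method to approximate $\log P(G; w_0)$. Because $P(G;\cdot)$ is zero-free on $|w| \le \varepsilon_\Delta$ and $|w_0| < \varepsilon_\Delta$, the Taylor expansion of $\log P(G; w)$ around $0$ truncated at degree $T = O(\log(|E|/\epsilon))$ approximates $\log P(G; w_0)$ up to additive error $\epsilon$, with the error bound coming from the standard contour-integral estimate on the coefficients and the geometric decay $(|w_0|/\varepsilon_\Delta)^T \to 0$ furnished by the strict inequality. The required Taylor coefficients of $\log P$ depend only on the first $T$ coefficients of $P(G; \cdot)$, and these can be computed in time $\mathrm{poly}(\mathrm{size}(G), 1/\epsilon)$ by writing them as linear combinations of counts of connected edge subgraphs of $G$ with at most $T$ edges; for bounded-degree $G$ such enumerations can be carried out in polynomial time via the algorithm of Patel and Regts~\cite{Patel2017}, into whose framework of bounded-degree graph polynomials the Ising partition function fits directly.

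Finally, I would recover an approximation to $\ising(G; \beta)$ by exponentiating the approximation of $\log P(G; w_0)$ and dividing by $(1 - w_0)^{|E|}$, yielding $\hat{Z} = \ising(G; \beta) \, e^z$ with $|z| \le \epsilon$. Since $\beta$ is algebraic, so is $w_0$, and all numerical operations can be performed in the number field $\mathbb{Q}(\beta)$ to polynomial precision using the tools referenced in Section~\ref{sec:pre:algebraic}. The proof is essentially a mechanical application of a now-standard recipe once the zero-free region is established; the only delicate point is ensuring that the implicit constants in the Patel--Regts coefficient computation depend only on $\Delta$ and not on $\beta$ itself, which holds because the relevant subgraph counts are purely combinatorial and $w_0$ lies in a fixed compact subset of the zero-free disk.
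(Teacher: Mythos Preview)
Your proposal is correct and takes a somewhat different route from the paper's own argument. The paper parametrises the segment from $1$ to $\beta$ linearly, setting $q_{G,\beta}(z)=\ising(G;1+z(\beta-1))$, observes that $q$ is zero-free on a thin strip around $[0,1]$, and then invokes Barvinok's strip-to-disk polynomial $\phi_\delta$ before applying the truncated Taylor estimate. You instead exploit the specific shape of the zero-free region: since Theorem~\ref{thm:ising:zero-free} is stated precisely as $\lvert(\beta-1)/(\beta+1)\rvert\le\varepsilon_\Delta$, passing to the M\"obius coordinate $w$ turns the region into a round disk about the origin, and your rescaled polynomial $P(G;w)=\sum_\sigma (1+w)^{m(\sigma)}(1-w)^{|E|-m(\sigma)}$ is genuinely a polynomial that is zero-free there. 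This lets you skip the strip-to-disk step entirely and apply Barvinok's lemma directly, which is a clean simplification.

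One small point worth making explicit: the Patel--Regts algorithm, as cited in the paper, produces the low-order coefficients of $\ising(G;1+y)$. To get the first $T$ coefficients of your $P(G;w)$ from these, use the identity $P(G;w)=\sum_j c_j\,2^j w^j (1-w)^{|E|-j}$ where $c_j$ is the coefficient of $y^j$ in $\ising(G;1+y)$; this shows that the $m$-th coefficient of $P$ is an explicit linear combination of $c_0,\ldots,c_m$. Everything else in your outline is sound, including the observation that $\beta$ is fixed, so $|w_0|/\varepsilon_\Delta<1$ is a constant and the truncation depth $T=O(\log(|E|/\epsilon))$ is as claimed.
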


 \pgfkeys{
    /pgf/declare function={br1(\t) = e^(2*(1-\t) /3) * cos (deg(2 * \t^2/30) );},
    /pgf/declare function={br2(\t) = e^(-2*(1-\t) /3) * cos (deg(2 * \t^2/30));},
    /pgf/declare function={bi1(\t) =  e^(2*(1-\t) / 3) * sin (deg(2 * \t^2 / 30));},
    /pgf/declare function={bi2(\t) =  e^(-2*(1-\t) / 3) * sin (deg(2 * \t^2/30));}
}

\pgfplotsset{width=\textwidth,height=0.5\textwidth}
\begin{figure}[H]
\centering
    \begin{tikzpicture}
		\begin{axis}[
			ticks=none,
			xmin=-1.1,
			xmax=3.5,
			ymin=-1,
			ymax=1,
			axis equal,
			axis lines=middle,
			xlabel=Re($\beta$),
			ylabel=Im($\beta$),
			disabledatascaling,
			axis background/.style={fill=white}]
			
			\addplot[draw=black, mark=o, only marks, nodes near coords=$0$,every node near coord/.style={anchor=45}] coordinates{(0,0)};
			\addplot[draw=black, mark=o, only marks, nodes near coords=$1$,every node near coord/.style={anchor=225}] coordinates{(1,0)};
			\addplot[draw=black, mark=o, only marks, nodes near coords=$-1$,every node near coord/.style={anchor=45}] coordinates{(-1,0)};
			\addplot[draw=black, mark=o, only marks, nodes near coords=$\frac{\Delta-2}{\Delta}$,every node near coord/.style={anchor=45}] coordinates{(0.33333333,0)}; %
			\addplot[draw=black, mark=o, only marks, nodes near coords=$\frac{\Delta}{\Delta-2}$,every node near coord/.style={anchor=45}] coordinates{(3,0)};%
			\addplot[draw=black, mark=o, only marks, nodes near coords=$i$,every node near coord/.style={anchor=315}] coordinates{(0,1)};
			\addplot[draw=black, mark=o, only marks, nodes near coords=$-i$,every node near coord/.style={anchor=135}] coordinates{(0,-1)};
			
           \draw [line width=2.5pt,draw= lightgreen, fill=lightgreen!30] (1.41421,0) circle(1);
           \draw [line width=3pt, dotted, draw= darkgreen, fill=darkgreen!35] (1.0704830971,0) circle(0.3820131689);
		   \draw [line width=3.5pt, draw= lightbrown] (axis cs: 0.333333333,0) -- (axis cs: 3,0);
           
           \draw [name path=bru, line width=2pt, draw= darkbrown] plot [domain=0:1, samples=100, const plot] ({br1(\x)},{bi1(\x)});
           \draw [name path=brd, line width=2pt, draw= darkbrown] plot [domain=0:1, samples=100, const plot] ({br1(\x)},{-bi1(\x)});
           \draw [name path=blu, line width=2pt, draw= darkbrown] plot [domain=0:1, samples=100, const plot] ({br2(\x)},{bi2(\x)});
           \draw [name path=bld, line width=2pt, draw= darkbrown] plot [domain=0:1, samples=100, const plot] ({br2(\x)},{-bi2(\x)});
           \tikzfillbetween[of=blu and bld]{darkbrown, opacity=0.9};
           \tikzfillbetween[of=bru and brd]{darkbrown, opacity=0.9};
		\end{axis}
	\end{tikzpicture}
	\caption{Zero-free regions for the partition function of the Ising models on graphs with maximum degree $\Delta = 3$.  The following four regions have been plotted:\\\hspace{\textwidth} 
	{\color{lightgreen} $\bullet$} The large disk corresponds to the region given in Theorem~\ref{thm:ising:zero-free}.\\\hspace{\textwidth}
	{\color{darkgreen} $\bullet$} The small dotted disk corresponds to $\lvert \beta - 1 \rvert / \lvert \beta + 1 \rvert \le \delta_\Delta$, where $\delta_\Delta$ is as in \eqref{eq:delta}, and it contains the regions stated in Corollaries~\ref{cor:region-barvinok} and~\ref{cor:mann-bremner} due to Barvinok, Mann and Bremner \cite{BarvinokBook, Mann2019}. \\\hspace{\textwidth} 
	{\color{darkbrown} $\bullet$} The diamond-shaped region corresponds to the zero-free region given in \cite{Barvinok2020} by Barvinok and Barvinok (see Theorem~\ref{thm:barvinok:ising} for the statement).  \\\hspace{\textwidth} {\color{lightbrown} $\bullet$} The segment joining $(\Delta-2) / \Delta$ and $\Delta / (\Delta - 2)$ corresponds to the region given in \cite{Liu2019Fisher} by  Liu, Sinclair and Srivastava (see Theorem~\ref{thm:liu} for the statement).}
	\label{fig:zero-free}
\end{figure}

Theorem~\ref{thm:ising:zero-free} significantly extends the zero-free regions given in  \cite{Barvinok2020,BarvinokBook, Liu2019Fisher, Mann2019}. The case $\Delta = 3$ is depicted in Figure~\ref{fig:zero-free}. In fact, the zero-free regions of Barvinok, and Mann and Bremner \cite{BarvinokBook, Mann2019} are contained in our result for any $\Delta \ge 3$, see Section~\ref{sec:comparing} for a detailed description of these zero-free regions. In \cite{Mann2019} the authors also discuss how an FPRAS for the partition function of the Ising model on bounded-degree graphs can be used to strongly simulate certain classes of IQP circuits. We note that their quantum simulation results are also extended as a consequence of Theorem~\ref{thm:ising:zero-free}.

When it comes to hardness results on complex edge interactions, we are not aware of any hardness result in the literature that covers non-real edge interactions. Our hardness result is given in Theorem~\ref{thm:hardness}. First, let us introduce some notation. We consider the problem of multiplicatively approximating the norm of $\ising(G; \beta)$ and
the problem of additively approximating the principal argument of $\ising(G; \beta)$ for a fixed algebraic number $\beta$\footnote{For $z \in \mathbb{C}\backslash\{0\}$, we denote by $|z|$ the norm of $z$, by $\mathrm{Arg}(z)\in [0,2\pi)$ the principal argument of $z$ and  by $\arg(z)$ the set $\{\mathrm{Arg}(z) + 2 \pi j : j \in \mathbb{Z}\}$ of all the arguments of $z$, so that for any $a\in \arg(z)$ we have $z = |z|\exp(i a)$.}.  These computational problems can be formally stated as follows.  Let $K > 1$ and $\rho \in (0, \pi/2)$ be real numbers.

\prob{$\IN(\beta, \Delta, K)$}{A (multi)graph $G$ with maximum degree at most $\Delta$.}{ If $\ising(G; \beta)= 0$, then the algorithm may output any rational number. Otherwise, it must output a rational number $\hat{N}$ such that $\hat{N}/ K \le \left|\ising(G; \beta)\right| \le K \hat{N} $.}

\prob{$\IA(\beta, \Delta, \rho)$}{A (multi)graph $G$ with maximum degree $\Delta$.}{ If $\ising(G; \beta)= 0$, then the algorithm may output any rational number. Otherwise, it must output a rational number $\hat{A}$ such that for some $a \in \arg(\ising(G; \beta))$  we have 
$ | a - \hat{A} | \le \rho $.}

It is important to note that each choice of the parameters $\beta, \Delta, K, \rho$ gives a different computational problem. It turns out that, by a standard powering argument of the partition function, the choice of $K$ and $\rho$ does not change the hardness of the problem (as long as $K > 1$ and $\rho \in (0, \pi/2)$), see \cite[Lemma 3.2]{Goldberg2017}.

\newcommand{\statethmhardness}{Let $\Delta$ be an integer with $\Delta \ge 3$ and let $\beta \in \mathbb{C}$ be an algebraic number such that $\beta \not \in \mathbb{R} \cup \{i, -i\}$ and $\lvert \beta - 1\rvert / \lvert \beta + 1 \rvert > 1 / \sqrt{\Delta - 1}$. Then the problems  $\IN(\beta, \Delta, 1.01)$ and $\IA(\beta, \Delta, \pi/3)$ are $\numP$-hard. }
\begin{theorem} \label{thm:hardness}
  \statethmhardness
\end{theorem}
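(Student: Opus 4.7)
The plan is to reduce from an exactly-computable $\numP$-hard problem via the gadget-plus-complex-dynamics paradigm exemplified by \cite{Bezakova2019, Goldberg2017, BGPR}. Lemma~\ref{lem:implementing-1} suggests that the target of the reduction is the evaluation of $\ising(G;\beta^{\star})$ at $\beta^{\star}=-1$, a value for which exact computation encodes a known $\numP$-hard counting problem. To reach this target using an oracle for $\IN(\beta,\Delta,1.01)$ or $\IA(\beta,\Delta,\pi/3)$, I would first build a family of two-terminal gadgets of maximum degree at most $\Delta$---constructed from series--parallel combinations and $(\Delta-1)$-ary tree recursions---which, when substituted for an edge in a host graph, replace it by an effective edge interaction $\widehat\beta$. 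This ``implementations'' construction is the content of Lemma~\ref{lem:ising:implementations}.

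Under the Möbius change of coordinate $y = (\beta-1)/(\beta+1)$, series composition of edges becomes $y\mapsto y^{2}$ and $(\Delta-1)$-ary tree recursion becomes iteration of a rational map $\varphi_{\beta}$ of degree $\Delta-1$ on the Riemann sphere, with $y^{\star}:=\infty$ the coordinate of $\beta^{\star}=-1$. A direct calculation shows that the symmetric fixed point of $\varphi_{\beta}$ has multiplier of modulus $(\Delta-1)|y_{0}|$ with $y_{0}=(\beta-1)/(\beta+1)$, so the hypothesis $|y_{0}|>1/\sqrt{\Delta-1}$ is precisely the statement that the squared multiplier exceeds $\Delta-1$, which is the strength of repulsion required for the dynamical argument that follows.

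The heart of the proof is a density statement: the semigroup orbit $\mathcal{S}$ of implementable effective interactions, started at the seed $y_{0}$, accumulates on every non-empty open set of a neighbourhood of $y^{\star}$. This is a statement in complex dynamics: Montel's theorem applied to backward orbits of the repelling fixed point shows that preimages are dense in the Julia set $J(\varphi_{\beta})$, which under our hypotheses has non-empty interior and contains $y^{\star}$. The exclusions $\beta\in\mathbb{R}\cup\{i,-i\}$ are needed precisely to rule out the exceptional Lattès, power-map and Chebyshev degeneracies, at which $\varphi_{\beta}$ preserves a line or circle avoiding $y^{\star}$ and no orbit from $y_{0}$ can accumulate on $y^{\star}$. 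The assumption that $\beta$ is algebraic is used to ensure that every implemented interaction lies in a field where arithmetic is efficient, so that the reduction is polynomial-time.

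To finish, I would replace every edge of a hard instance $G$ by a gadget whose effective interaction $\widehat\beta$ is close to $\beta^{\star}$, producing a bounded-degree graph $G'$ whose partition function is a known multiplicative rescaling of $\ising(G;\widehat\beta)$, which is in turn close to $\ising(G;\beta^{\star})$. A standard powering trick (cf.~\cite[Lemma~3.2]{Goldberg2017}) amplifies the given $(1.01,\pi/3)$-approximation into arbitrarily fine resolution of both norm and argument; combined with the algebraicity, this recovers $\ising(G;\beta^{\star})$ exactly and completes the reduction. The main obstacle I anticipate is the complex-dynamics density step: rigorously verifying that $\mathcal{S}$ reaches a neighbourhood of $y^{\star}$ in every case not ruled out by the stated exclusions. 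This requires a careful classification of the maps $\varphi_{\beta}$ that arise, ruling out additional invariant curves/discs, and checking that the algebraic seed $y_{0}$ does lie in the relevant basin---a delicate analysis that I expect to occupy the bulk of the technical work.
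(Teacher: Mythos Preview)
Your proposal has a critical error: you target exact evaluation at $\beta^{\star}=-1$, but $-1$ is an \emph{easy} point of the Ising model (see Figure~\ref{fig:results} and \cite{Jaeger1990}), so there is no $\numP$-hardness to reduce from. Relatedly, you have the dependency on Lemma~\ref{lem:implementing-1} reversed: that lemma \emph{uses} Theorem~\ref{thm:hardness} in its proof (implementing $-1$ is leveraged, via the gadget of Figure~\ref{fig:J}, to implement some $\gamma$ with $\lvert\gamma-1\rvert/\lvert\gamma+1\rvert>1/\sqrt{\Delta-1}$, whereupon Theorem~\ref{thm:hardness} is invoked). It cannot serve as the engine for proving Theorem~\ref{thm:hardness} itself.

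The paper's actual route is quite different from ``implement something close to $-1$ and evaluate there''. The implementation lemma (Lemma~\ref{lem:ising:implementations}) is used to show that $(\Delta,\beta)$ implements the \emph{real line} in polynomial time; this feeds into the Goldberg--Jerrum binary-search/interval-shrinking reduction (Lemmas~\ref{lem:compute-fraction:q>1} and~\ref{lem:reduce-frac-to-exact}), which reduces the exact problem $\I(\Delta,y)$ for a real $y>1$ to $\IN(\beta,\Delta,1.01)$ and $\IA(\beta,\Delta,\pi/3)$. Hardness then comes from $\I(3,y)$ being $\numP$-hard on $3$-regular graphs for $y>1$ \cite{Kowa2016}. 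A couple of smaller issues: the multiplier of the relevant fixed point is $d\,|y_0|^2$ (with $d=\Delta-1$), not $d\,|y_0|$, so the hypothesis $|y_0|>1/\sqrt{d}$ is exactly $|f_\beta'(1)|>1$; and your claim that the Julia set has non-empty interior is not what is used (or generally true)---the paper instead applies Theorem~\ref{thm:cd} to forward iterates of a neighbourhood of the repelling fixed point.
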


Corollary \ref{cor:ising:fptas} and Theorem \ref{thm:hardness} leave the complexity of the problems $\IN(\beta, \Delta, 1.01)$ and $\IA(\beta, \Delta, \pi/3)$ unaddressed for those edge interactions $\beta \not \in \mathbb{R}$ such that 
\begin{equation} \label{eq:unaddressed-region}
    \tan \big(\frac{\pi}{4 (\Delta - 1)} \big) \le \left| \frac{\beta - 1}{\beta + 1 } \right|  \le \frac{1}{\sqrt{\Delta - 1}}.
\end{equation}
 It turns out that the partition function has zeros inside the region given by \eqref{eq:unaddressed-region} (see Corollary~\ref{cor:hardness-inside}). Moreover, we show that if there is a ``nice'' graph $G$ such that $\ising(G; \beta) = 0$, then  $\IN(\beta, \Delta, 1.01)$ and $\IA(\beta, \Delta, \pi/3)$ are $\numP$-hard, see Lemma~\ref{lem:zeros-hard} and Corollary~\ref{cor:zeros-hard}. This allows us to find points $\beta$ as in \eqref{eq:unaddressed-region} such that the approximation problems are $\numP$-hard, as depicted in Figure~\ref{fig:results}.  
 
\pgfplotsset{width=\textwidth,height=0.5\textwidth}
\begin{figure}[H]
\centering
    \begin{tikzpicture}
		\begin{axis}[
			ticks=none,
			xmin=-1.1,
			xmax=6,
			ymin=-3,
			ymax=3,
			axis equal,
			axis lines=middle,
			xlabel=Re($\beta$),
			ylabel=Im($\beta$),
			disabledatascaling,
			axis background/.style={fill=red!10}]
			
			\addplot[draw=black, mark=o, only marks, nodes near coords=$0$,every node near coord/.style={anchor=45}] coordinates{(0,0)};
			\addplot[draw=black, mark=o, only marks, nodes near coords=$1$,every node near coord/.style={anchor=225}] coordinates{(1,0)};
			\addplot[draw=black, mark=o, only marks, nodes near coords=$-1$,every node near coord/.style={anchor=45}] coordinates{(-1,0)};
			\addplot[draw=black, mark=o, only marks, nodes near coords=$i$,every node near coord/.style={anchor=315}] coordinates{(0,1)};
			\addplot[draw=black, mark=o, only marks, nodes near coords=$-i$,every node near coord/.style={anchor=135}] coordinates{(0,-1)};
			\addplot[thick, draw=hardcolour, mark=o, only marks, nodes near coords=$\beta_0$,every node near coord/.style={anchor=205}] coordinates{(0.396608,0.917988)};
			\addplot[draw=black, mark=o, only marks, nodes near coords=,every node near coord/.style={anchor=45}] coordinates{(0.33333333,0)}; %
			\addplot[draw=black, mark=o, only marks, nodes near coords=,every node near coord/.style={anchor=45}] coordinates{(3,0)};%
			
		   \draw [line width=2.5pt, draw= hardcolour, dotted, fill=white] (3,0) circle(2.82843);
           \draw [line width=2.5pt, draw= lightgreen, fill=lightgreen!40] (1.41421,0) circle(1);
           \draw [draw=black] (axis cs: 0,0) -- (axis cs: 6,0);
		   \draw [line width=3.3pt, draw= lightbrown] (axis cs: 0.333333333,0) -- (axis cs: 3,0);

		\end{axis}
	\end{tikzpicture}
	\caption{The complexity of approximating the  partition function of the Ising model on graphs with maximum degree $\Delta = 3$ and $\beta \in \mathbb{C} \setminus \mathbb{R}$.  \\\hspace{\textwidth} {\color{hardcolour} $\bullet$} Theorem~\ref{thm:hardness}: when $\lvert \beta - 1 \rvert / \lvert \beta + 1 \rvert > 1 / \sqrt{\Delta -1}$ and $\beta \not \in \{i, -i\}$,  $\IN(\beta, \Delta, 1.01)$ and $\IA(\beta, \Delta, \pi/3)$ are $\numP$-hard (region outside the large dotted red circle). \\\hspace{\textwidth} 
	{\color{hardcolour} $\bullet$} Corollary~\ref{cor:hardness-inside}: there are points $\beta_0 \in \mathbb{C} \setminus \mathbb{R}$ with $\lvert \beta_0 - 1 \rvert / \lvert \beta_0 + 1 \rvert < 1 / \sqrt{\Delta -1}$ such that $\ising(G; \beta_0) = 0$ for some graph $G$ with maximum degree $\Delta$. The problems  $\IN(\beta_0, \Delta, 1.01)$ and $\IA(\beta_0, \Delta, \pi/3)$ are $\numP$-hard.
	\\\hspace{\textwidth} {\color{lightgreen} $\bullet$} Theorem~\ref{thm:ising:zero-free}: there is an FPTAS for $\ising(-; \beta)$ when $\lvert \beta - 1 \rvert / \lvert \beta + 1 \rvert < \tan(\pi/(4 \Delta - 4))$ (region inside the small green circle). \\\hspace{\textwidth} 
	{\color{lightbrown} $\bullet$} Theorem~\ref{thm:liu} by Liu, Sinclair and Srivastava \cite{Liu2019Fisher}: the interval $((\Delta-2) / \Delta, \Delta / (\Delta - 2))$ is contained in an open zero-free region (thick segment on the real line), so there is an FPTAS for $\ising(-; \beta)$. \\\hspace{\textwidth} 
	{$\circ$} The points $0, 1, -1, i$ and  $-i$ are easy points of the Ising model: the partition function can be evaluated at these points in polynomial time in the size of the input graph \cite{Jaeger1990}. }
	\label{fig:results}
\end{figure}

\subsection{Proof approach}

In the proof of Theorem~\ref{thm:ising:zero-free} we use the SAW tree construction of Godsil and Weitz \cite{Godsil1981, Weitz2006} to reduce the study of zero-free regions of partition functions on graphs to the study of zero-free regions of partition functions on trees (see Section~\ref{sec:easiness:tsaw} for details).  The partition function of a two-spin system on a tree admits a recurrence expression that can be studied to find zero-free regions on trees. This approach has been successfully applied in the literature for the Ising model and other partition functions \cite{Liu2019Fisher, Bencs2018, Bezakova2019}. In our work we exploit the properties of the Mobius function $h_\beta(z) = (\beta z + 1) / (\beta + z)$ appearing in this recurrence for the Ising model. This Mobius function satisfies the equality 
\begin{equation} \label{eq:hbeta:property}
	\frac{h_{\beta}(z) -1 }{h_{\beta}(z) +1} = \frac{(\beta-1)(z-1)}{(\beta+1)(z+1)},
\end{equation}
which neatly relates properties of  $(\beta -1)/(\beta+1)$ to properties of the partition function of the Ising model on trees, and greatly simplifies the derivation of our zero-free region.  The translation of Theorem~\ref{thm:ising:zero-free} to an FPTAS for the partition function then follows from the work of Barvinok, Patel and Regts \cite{Barvinok2020, Patel2017}, see the proof of Corollary~\ref{cor:ising:fptas}.

In order to obtain our inapproximability results, we construct graphs $H$ with maximum degree at most $\Delta$ and two distinguished vertices $s,t$ with degree $1$ such that 
substituting an edge in the host graph  with $(H, s, t)$  has the effect of altering the edge interaction $\beta$ of the original edge to a new edge interaction $\beta'$. In this case, we say that $H$ $(\beta, \Delta)$-implements $\beta'$, see Section~\ref{sec:pre:implementations} for a formal definition. Implementations have played an important role in  proofs of hardness of evaluating and approximating partition functions, and they are the main tool to reduce exact computation to approximate computation via a binary search \cite{Bezb, Goldberg2014, Galanis2020}. Initiated in \cite{Jaeger1990, Welsh1993}, these constructions have now become more elaborate in recent inapproximability results \cite{Goldberg2014, Galanis2020, Bezb, BGPR}, using connections to the iteration of complex dynamical systems. The following definition captures the relevant framework for our implementations.

\begin{definition}\label{def:impplane}
Let $\Delta \ge 3$ be an integer and $\beta \in \algcomplex$.\footnote{We denote by $\algebraic$ the set of real algebraic numbers and we denote by $\algcomplex$ the set of complex algebraic numbers.} We say that the pair $(\Delta, \beta)$ implements the complex plane (resp. the real line) in polynomial time for the Ising model if there is an algorithm such that, on input $\lambda \in  \algcomplex$ (resp. $\lambda \in \algebraic$) and rational $\epsilon > 0$, computes a graph $G$ that $(\Delta, \beta)$-implements a complex number $\hat{\lambda}$ with $\lvert \lambda - \hat{\lambda} \rvert \le \epsilon$. The running time of this algorithm must be polynomial in the size of the representations of $\lambda$ and $\epsilon$.
\end{definition}

Our main contribution is that we can $(\Delta, \beta)$ implement the real line (and, in fact, the complex plane), for those pairs $(\Delta, \beta)$ given in the following lemma.

\newcommand{\statelemimplementations}{Let $\Delta$ be an integer with $\Delta \ge 3$ and let $\beta \in  \algcomplex \setminus \mathbb{R}$ with $\beta \not \in \{i, -i \}$ and $1/\sqrt{\Delta -1} < \lvert \beta - 1\rvert / \lvert \beta + 1 \rvert$. Then the pair $(\Delta, \beta)$ implements the complex plane in polynomial time for the Ising model.}
\begin{lemma}  \label{lem:ising:implementations}
	\statelemimplementations
\end{lemma}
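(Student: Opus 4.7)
The plan is to build tree gadgets recursively and exploit the complex dynamics of the resulting rational map, combined with parallel composition. Every tree gadget $H$ with two degree-$1$ terminals $s,t$ has an effective interaction $\beta' = Z_H^{00}/Z_H^{01}$, and such effective interactions are closed under the series and parallel compositions of gadgets: a single edge implements $\beta$; series composition of two gadgets implementing $\beta_1,\beta_2$ implements $h_{\beta_1}(\beta_2) = (\beta_1\beta_2+1)/(\beta_1+\beta_2)$; and parallel composition implements $\beta_1\beta_2$. In particular, the natural ``$d$-branching'' tree recursion, with $d := \Delta-1$ subtrees attached at each internal vertex so that the degree bound is respected, produces an effective interaction at the root governed by iterating the rational map $F(z) := h_\beta(z)^{d}$.

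The central dynamical input is that $z = 1$ is a repelling fixed point of $F$ under the hypotheses of the lemma. Differentiating the identity~\eqref{eq:hbeta:property} gives $h_\beta(1) = 1$ and $h_\beta'(1) = \lambda$ with $\lambda := (\beta-1)/(\beta+1)$, and by the chain rule $F'(1) = d\lambda$. The assumption $|\lambda| > 1/\sqrt{\Delta-1}$ then yields $|F'(1)| > \sqrt{\Delta-1} \ge \sqrt{2} > 1$, confirming that $z = 1$ is indeed repelling. By Montel's theorem applied to the family $\{F^n\}_{n \ge 0}$, which fails to be normal at this repelling fixed point, for any open neighborhood $U$ of $1$ the union $\bigcup_{n \ge 0} F^n(U)$ covers $\widehat{\mathbb{C}}$ except for at most two exceptional points. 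Moreover, because $F$ is expanding at rate $|F'(1)| > 1$ near $z = 1$, iterating $F$ approximates any non-exceptional target within $\epsilon$ using only $O(\log(1/\epsilon))$ steps, yielding a polynomial-size gadget.

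To promote this density statement into a full implementation of the complex plane, two further issues must be handled. First, I must construct implementable ratios arbitrarily close to the repelling fixed point $z = 1$ in order to seed the Montel-based iteration. I would do this by combining series and parallel compositions starting from the base ratio $\beta$; here the hypotheses $\beta \not\in \mathbb{R}$ and $\beta \not\in \{i,-i\}$ are used precisely to rule out degenerate orbits (purely real, or with finite period, noting that $h_i^{\,2}(z)=1/z$ so $h_i^{\,4}=\mathrm{id}$) that would fail to come close to $1$. Second, I must ensure that the at-most-two exceptional values from Montel do not obstruct reaching a given target $\lambda^* \in \mathbb{C}$; this is where I expect the main technical obstacle to lie, and I would resolve it by composing the $F$-iteration with an already-implemented small non-exceptional perturbation (applied via parallel composition), thereby shifting every prescribed target off the exceptional set.
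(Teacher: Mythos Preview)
Your central dynamical map is wrong, and this is exactly the obstruction that forces the $1/\sqrt{\Delta-1}$ threshold in the lemma. The recursion $F(z)=h_\beta(z)^d$ is the \emph{vertex} ratio recursion of~\eqref{eq:tree-recursion}, not an edge-implementation recursion. For the Ising model vertex ratios are useless: by spin symmetry $Z_v^0(G;\beta)=Z_v^1(G;\beta)$ for every unpinned graph, so every vertex ratio equals~$1$ and you cannot move. If instead you try to realise $h_\beta(z)^d$ as an \emph{edge} gadget (series-compose an edge onto a $z$-gadget, then parallel-compose $d$ copies), the resulting terminals have degree $d=\Delta-1$, not~$1$; one more parallel step would push a terminal to degree $d^2>\Delta$, so the construction cannot be iterated. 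The only way to reset both terminals to degree~$1$ after a $d$-fold parallel composition is to append an edge on \emph{each} side, which introduces \emph{two} factors of $h_\beta$. Hence the iterable edge recursion is
\[
f_\beta(z)\;=\;h_\beta\!\bigl(h_\beta(z^d)\bigr)\;=\;g_\beta(z^d),\qquad g_\beta:=h_\beta\circ h_\beta,
\]
with multiplier $f_\beta'(1)=d\,g_\beta'(1)=d\lambda^2$ where $\lambda=(\beta-1)/(\beta+1)$. The repelling condition is therefore $|\lambda|>1/\sqrt{d}=1/\sqrt{\Delta-1}$, which is the hypothesis of the lemma; your computation $|F'(1)|=d|\lambda|$ would, if it were realisable, give the stronger threshold $|\lambda|>1/(\Delta-1)$, but the paper explicitly notes that removing this square root is an open problem and appears inherent to edge-style gadgets.

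Beyond this, several steps you flag as routine are where the real work lies. Seeding the iteration near $z=1$ is not immediate: the paper needs a three-case argument (Lemma~\ref{lem:ising-program:1}) depending on whether $|\lambda|<1$, $|\lambda|=1$, or $|\lambda|>1$, and a further case split on $|\beta|=1$ (Lemma~\ref{lem:ising:close-1:2}) to avoid the degeneracy $g_\beta'(1)\in\mathbb{R}$. Your line ``iterating $F$ approximates any non-exceptional target within $\epsilon$ using only $O(\log(1/\epsilon))$ steps'' also inverts the logic: expansion pushes orbits \emph{away} from~$1$, so one must instead pull the target back via $f_\beta^{-N}$ to a preimage near~$1$, approximate that preimage densely (this is the substantial Lemma~\ref{lem:efficient-covering}), and then push forward; controlling the error under $f_\beta^N$ and making all of this run in polynomial time requires the contraction/covering machinery of Lemma~\ref{lem:speed-up} rather than a bare Montel argument. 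Finally, the exceptional set here is in fact empty (Remark~\ref{rem:Ef}), so no perturbation trick is needed.
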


The requirement that it be possible to implement the real line is the main bottleneck when reducing exact computation to approximate computation. Even when it is possible to identify some parameter values which enable the implementation of the real line,  the complete determination of  the set of parameter values which make this possible seems out of reach, see, for instance, \cite{Goldberg2014, Bezb}.

The proof of Lemma~\ref{lem:ising:implementations} uses connections with complex dynamics, following recent developments in the area. The main idea in this line of works is to analyse what can be implemented with trees, which can be done via understanding the properties of the underlying dynamical system. A key difference in the case of the Ising model relevant to previous works is that vertex-style implementations are useless; due to the perfect symmetry of the Ising model nothing interesting can be implemented through that route. Instead, we have to consider more elaborate edge gadgets, cf. Section~\ref{sec:hardness:implementations}, and obtain tree-style recursions for them. Surprisingly, we are able to recover the tree-recursions for vertex activities (even though our gadgets are not trees and simulate edge activities instead), albeit with a bit different value of $\beta$ which yields the square root in Lemma~\ref{lem:ising:implementations}. We leave as a tantalising open problem how to remove this square root, which seems inherent in our edge-style approach.

The good news is that once this edge-framework of the gadgets is in place, we can adapt suitably the arguments given in \cite{Bezb}. We have in fact generalised these arguments in Appendix~\ref{sec:A}, so that they are more amenable to be used for other spin systems.   A quick summary of the main idea behind Appendix~\ref{sec:A} is as follows. We assume that we have access to a recursively-constructed gadget that implements a weight $f(z)$ assuming that we can implement $z$ (for us, this is the gadget given in Section~\ref{sec:hardness:implementations}). Then we apply results of complex dynamics to the function $f$ in order to understand which points we can implement by iterating $f$, which involves studying the neighbourhood of fixed points of $f$. There are two steps in the constructions. In the first step, we show how to implement approximations of any number near a fixed point of $f$ (which is done using Lemma~\ref{lem:efficient-covering} in our setting). In the second step, this implementation result is translated to implementing the complex plane in polynomial time when the fixed point under consideration is repelling (Lemma~\ref{lem:implementations}). As explained in Appendix~\ref{sec:A}, it is important in this generalisation that the function $f$ is of the form $g(z^d)$, where $g$ is a Mobius map\footnote{The results of \cite{Bezb} are derived for $g(z) = 1/(1 + \lambda z)$, where $\lambda$ is an activity of the independent set polynomial. As noted in Appendix~\ref{sec:A} some extra work is needed to generalise these results to any Mobius map $g$.}. 

To conclude this section, we comment on the connection between zeros of the partition function and hardness. It turns out that our hardness and implementation results can be applied to conclude hardness of approximation at some zeros of the partition function, such as the zero plotted in Figure~\ref{fig:results}. Our main result on this matter is the following  lemma.

\newcommand{\statelemimplementingminusone}{Let $\Delta$ be an integer with $\Delta \ge 3$. Let $\beta \in \algcomplex \setminus (\mathbb{R} \cup \{i, -i\})$. Let us assume that $(\Delta, \beta)$ implements the edge interaction $-1$. Then $\IN(\Delta, \beta, 1.01)$ and $\IA( \Delta, \beta, \pi/3)$ are $\numP$-hard.
}
\begin{lemma} \label{lem:implementing-1}
\statelemimplementingminusone
\end{lemma}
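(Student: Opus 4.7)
The plan is to use the hypothesis that $(\Delta,\beta)$ implements $-1$ to show that $(\Delta,\beta)$ implements the entire complex plane in polynomial time, at which point the $\numP$-hardness of $\IN(\beta,\Delta,1.01)$ and $\IA(\beta,\Delta,\pi/3)$ follows exactly as in the proof of Theorem~\ref{thm:hardness}: implement some $\beta^*$ lying in the hardness region of that theorem and reduce from $\IN(\beta^*,\Delta,1.01)$ and $\IA(\beta^*,\Delta,\pi/3)$ by the standard gadget-substitution argument.

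We may assume $|\beta-1|/|\beta+1|\le 1/\sqrt{\Delta-1}$, since otherwise Theorem~\ref{thm:hardness} applies directly to $\beta$ and the $-1$-hypothesis is not needed. Under this assumption the reciprocal ratio satisfies $|(-\beta)-1|/|(-\beta)+1|=|\beta+1|/|\beta-1|\ge\sqrt{\Delta-1}>1/\sqrt{\Delta-1}$, and $-\beta\not\in\mathbb{R}\cup\{i,-i\}$ since $\beta$ isn't. Hence the hypothesis of Lemma~\ref{lem:ising:implementations} is met by $-\beta$, so $(\Delta,-\beta)$ implements the complex plane in polynomial time. Because implementations compose by edge substitution (the maximum-degree bound is preserved thanks to the degree-one distinguished vertices of gadgets), it now suffices to exhibit a $(\Delta,\beta)$-implementation of the single edge interaction $-\beta$: substituting this implementation inside any $(\Delta,-\beta)$-gadget for a target value $\lambda$ yields a $(\Delta,\beta)$-gadget for the same $\lambda$, and the degree constraints are preserved.

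The heart of the argument is this implementation of $-\beta$ from $\beta$ and $H_{-1}$. The natural candidate gadget places $H_{-1}$ in parallel with a single $\beta$-edge between two internal vertices $u,v$, producing a local effective interaction of $\beta\cdot(-1)=-\beta$ between $u$ and $v$; degree-one distinguished vertices $s,t$ then have to be attached to $u,v$ by further sub-gadgets. The main obstacle is that the naive attachments collapse the construction: flanking with $\beta$-edges gives the series $\beta,-\beta,\beta$, which reduces back to $\beta$ via the identity $M_\beta M_{-\beta}M_\beta\propto M_\beta$, and flanking with $H_{-1}$-edges gives $-1,-\beta,-1$, which reduces to $-1$. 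The plan is to avoid both collapses by using the more elaborate recursive edge gadgets of Section~\ref{sec:hardness:implementations}, and to analyse them via the complex-dynamics framework of Appendix~\ref{sec:A}. The key dynamical input is that $-1$ is a \emph{repelling} fixed point of the Ising Mobius map $h_\beta(z)=(\beta z+1)/(\beta+z)$ in this regime, with multiplier $(\beta+1)/(\beta-1)$ of modulus at least $\sqrt{\Delta-1}>1$. Applying Lemma~\ref{lem:efficient-covering} and Lemma~\ref{lem:implementations} to the resulting tree-style gadget recursion $f(z)=g(z^d)$ (with $g$ a Mobius map) and the initially-implementable values in $\{\beta,-1\}$ then yields a polynomial-time algorithm implementing an approximation of $-\beta$ to arbitrary precision, which suffices to complete the composition-of-implementations argument and hence the reduction.
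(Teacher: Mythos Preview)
Your reduction idea --- observing that $-\beta$ satisfies the hypothesis of Lemma~\ref{lem:ising:implementations} whenever $\lvert\beta-1\rvert/\lvert\beta+1\rvert\le 1/\sqrt{\Delta-1}$, and then hoping to $(\Delta,\beta)$-implement $-\beta$ --- is a natural one, but the crucial step is not carried out. You correctly note that the naive series/parallel constructions collapse, and then you defer to ``the resulting tree-style gadget recursion $f(z)=g(z^d)$'' and Lemmas~\ref{lem:efficient-covering} and~\ref{lem:implementations} without specifying $g$, the fixed point $\omega$, or verifying any of the required hypotheses. In fact the natural candidates fail in this regime. If one takes the standard Ising setup $g=g_\beta$ with fixed point $\omega=1$, then $f_\beta'(1)=d\bigl((\beta-1)/(\beta+1)\bigr)^2$ has modulus at most $1$ precisely when $\lvert\beta-1\rvert/\lvert\beta+1\rvert\le 1/\sqrt{\Delta-1}$, so $\omega=1$ is \emph{not} repelling and Lemma~\ref{lem:implementations} does not apply. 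Your observation that $-1$ is a repelling fixed point of $h_\beta$ is correct but not usable: the Appendix~\ref{sec:A} machinery requires $f(z)=g(z^d)$ with $d\ge 2$, and $-1$ is a fixed point of $f_\beta(z)=g_\beta(z^d)$ only for odd $d$; even then you would need $-1$ to be program-approximable in the sense of Definition~\ref{def:pa} (points arbitrarily close to but \emph{distinct from} $-1$ generated by a program starting at $a_0=\beta$), which is nowhere established. Having the single exact value $-1$ implemented does not fit the framework of Mobius-programs with a fixed $a_0$ at all.

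The paper's proof sidesteps this entirely. Rather than aiming for the specific value $-\beta$ or invoking the repelling-fixed-point machinery, it builds a concrete gadget: take the parallel composition of the $-1$-implementation with a tunable weight $\alpha$ (producing $-\alpha$), and flank by $\beta$-edges to get $\gamma=g_\beta(-\alpha)$. The tunability comes from Lemma~\ref{lem:ising:close-1:2}, which (crucially) holds for \emph{all} $\beta\in\algcomplex\setminus(\mathbb{R}\cup\{i,-i\})$ without any restriction on $\lvert\beta-1\rvert/\lvert\beta+1\rvert$, and lets one $(\Delta,\beta)$-implement $\alpha$ arbitrarily close to any chosen point near $1$. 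A direct computation using \eqref{eq:hbeta:property} shows that, for a carefully chosen target $\xi$ near $1$ (and hence $\alpha$ close to $\xi$), the resulting $\gamma$ is non-real with $\mathrm{Re}(\gamma)<0$, hence $\lvert\gamma-1\rvert/\lvert\gamma+1\rvert>1>1/\sqrt{\Delta-1}$. Theorem~\ref{thm:hardness} then applies to $\gamma$, and transitivity of implementations finishes the proof. The essential difference from your attempt is that the paper needs only the \emph{local} implementation result near $1$ (Lemma~\ref{lem:ising:close-1:2}), which requires no repelling hypothesis, rather than the full complex-plane implementation of Lemma~\ref{lem:implementations}.
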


Typically, if we have a graph $G$ with maximum degree $\Delta$ such that $\ising(G; \beta) = 0$, then  this can be used to $(\Delta, \beta)$-implement $-1$ (provided that we can make the terminals have degree 1) and conclude hardness. See Lemma~\ref{lem:zeros-hard}, where we conclude hardness of approximation based on Lemma~\ref{lem:implementing-1} and appropriate graphs with zero partition function. This is the first result of this style for the Ising model, though building a  connection between zeros and inapproximability for bounded-degree graphs has also been explored thoroughly in a recent work  \cite{ DeBoer2021} for the independence polynomial. These observations lead us to propose the following conjecture.

\newcommand{\stateconjecturezeros}{
Let $\Delta$ be an integer with $\Delta \ge 3$ and let $\beta\in \algcomplex$  with $\beta \not \in \mathbb{R}\cup \{i, -i\}$. If there is a graph $G$ with maximum degree at most $\Delta$ such that $\ising(G; \beta) = 0$, then the problems $\IN(\beta, \Delta, 1.01)$ and $\IA(\beta, \Delta, \pi/3)$ are $\numP$-hard.
}
\begin{conjecture} \label{con:zeros-hardness} 
   \stateconjecturezeros
\end{conjecture}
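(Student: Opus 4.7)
The plan is to reduce the conjecture to Lemma~\ref{lem:implementing-1} by extracting from the hypothesized graph $G$ with $\ising(G;\beta) = 0$ a gadget that $(\Delta,\beta)$-implements the edge interaction $-1$. The key observation is the spin-flip symmetry of the Ising model: for any graph $H$ with two distinguished degree-$1$ vertices $s$ and $t$, writing $Z_{xy}(H)$ for the sum of weights over configurations with $s\mapsto x$ and $t\mapsto y$, one has $Z_{00}(H) = Z_{11}(H)$ and $Z_{01}(H) = Z_{10}(H)$, hence $\ising(H;\beta) = 2\bigl(Z_{00}(H) + Z_{01}(H)\bigr)$. Therefore, whenever $\ising(H;\beta) = 0$ and $Z_{01}(H)\neq 0$, the gadget $(H,s,t)$ implements the edge interaction $Z_{00}(H)/Z_{01}(H) = -1$, and Lemma~\ref{lem:implementing-1} completes the argument. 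The conjecture is thus equivalent to extracting from any max-degree-$\Delta$ zero graph $G$ a ``nice'' zero graph $H$ of maximum degree at most $\Delta$ with two degree-$1$ terminals and $Z_{01}(H)\neq 0$.

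In the accessible case where $G$ has two vertices of degree strictly less than $\Delta$, I would attach a pendant vertex at each; since $\beta\notin\mathbb{R}$ implies $\beta+1\neq 0$, attaching a pendant multiplies the partition function by $\beta+1$ and preserves the zero. Non-degeneracy $Z_{01}\neq 0$ is then handled by varying the attachment vertices or inserting short paths, and ruling out simultaneous vanishing across a finite family of such modifications via a polynomial-identity argument in $\beta$, exploiting $\beta\notin\mathbb{R}\cup\{i,-i\}$. The genuine obstacle is the case where $G$ is $\Delta$-regular, for then every local modification violates the degree bound. One natural strategy is to delete an edge $uv$ from $G$ and treat $u,v$ as terminals of degree $\Delta-1$; the identity $\ising(G;\beta) = 2\bigl(\beta Z_{00}^{G-uv} + Z_{01}^{G-uv}\bigr) = 0$ shows that $(G-uv,u,v)$ would implement $-1/\beta$ if its terminals had degree~$1$, so one would need a degree-reducing transformation that preserves or transparently re-scales this ratio. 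A complementary strategy is to compose $G$ with auxiliary edge gadgets produced by the machinery of Section~\ref{sec:hardness:implementations} in order to absorb excess terminal degree while inheriting the zero.

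I expect the $\Delta$-regular case to be the main difficulty. It is precisely the absence of low-degree vertices that blocks the standard pendant trick, and there is no evident combinatorial move that simultaneously reduces terminal degrees, preserves the zero, and keeps $Z_{01}$ non-zero. Overcoming this likely requires either a structural theorem about $\Delta$-regular Ising zeros — for instance, showing that a suitably chosen Godsil-Weitz SAW subtree inherits a zero at a leaf — or genuinely new compositional techniques for zero gadgets beyond those developed in the present framework and in \cite{DeBoer2021}, which is consistent with the statement being posed as a conjecture rather than a theorem.
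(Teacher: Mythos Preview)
The statement is a conjecture, not a theorem; the paper does not prove it and explicitly says the full result ``seems to be out of reach for our implementation techniques.'' So there is no proof to compare against, and your proposal is correctly framed as a plan that runs into an obstacle rather than a completed argument.

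Your analysis is well aligned with the paper's own partial progress. The reduction to implementing $-1$ via spin-flip symmetry is exactly Lemma~\ref{lem:zeros-hard}, and your identification of the $\Delta$-regular case as the genuine obstruction matches the paper's discussion: Corollary~\ref{cor:zeros-hard} proves the conjecture under the stronger hypothesis that $G$ has maximum degree at most $\Delta-1$, precisely because then every vertex can serve as a terminal after an edge deletion. One point where your sketch is looser than the paper: for the non-degeneracy condition $Z_{01}\neq 0$ you invoke a ``polynomial-identity argument in $\beta$,'' but $\beta$ is a fixed algebraic number here, not a formal variable, so this needs more care. The paper instead takes a \emph{minimal} zero graph $H$ and deletes an edge $e=(s,t)$; minimality forces $\ising(H\setminus e;\beta)\neq 0$, and the identities $Z_{st}^{00}(H;\beta)=\beta Z_{st}^{00}(H\setminus e;\beta)$, $Z_{st}^{01}(H;\beta)=Z_{st}^{01}(H\setminus e;\beta)$ then give the required non-vanishing directly. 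Your edge-deletion idea in the regular case is the same move, and as you note it leaves terminals of degree $\Delta-1$ rather than $1$, which is exactly the gap the paper cannot close.
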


We make progress toward Conjecture~\ref{con:zeros-hardness} in Corollary~\ref{cor:zeros-hard}, where we have to weaken the result, concluding hardness of  $\IN(\beta, \Delta, 1.01)$ and $\IA(\beta, \Delta, \pi/3)$ when the graph $G$ has  maximum degree at most $\Delta - 1$. Unfortunately, our implementation results seem not enough to prove the full conjecture.

This paper is organised as follows. In Section~\ref{sec:easiness} we prove Theorem~\ref{thm:ising:zero-free} and Corollary~\ref{cor:ising:fptas}. In Section~\ref{sec:hardness} we prove Theorem~\ref{thm:hardness}. In Section~\ref{sec:zeros} we give explicit evidence that zeros imply hardness of approximation and use these results to find more edge interactions where the approximation problem is $\numP$-hard. In Appendix~\ref{sec:A} we generalise the implementation results of \cite{Bezb} so that they can be applied to other two spin systems, including the Ising model. This generalisation may be useful in its own.

\section{Easiness: a zero-free region for the Ising model} \label{sec:easiness}

In this section we prove Theorem~\ref{thm:ising:zero-free} and Corollary~\ref{cor:ising:fptas}. We also compare Theorem~\ref{thm:ising:zero-free} to the zero-free regions appearing in Figure~\ref{fig:zero-free}. First, let us introduce some notation and observations that we use in the rest of this section.  
 
\subsection{The tree of self-avoiding walks} \label{sec:easiness:tsaw}

In this section we recall some results concerning the self-avoiding walk tree  (SAW tree) of a graph and its connection to the partition function of the Ising model. SAW trees were introduced in the study of partition functions by Godsil in \cite{Godsil1981} to study the matching polynomial. SAW trees gained in popularity after the work of Weitz on the independent set polynomial \cite{Weitz2006}. The idea of Godsil and Weitz was reducing the study of the partition function of a two-spin system on graphs to the study of the same partition function on trees. This idea is at the core of our proof of Theorem~\ref{thm:ising:zero-free}. 

Intuitively the SAW tree $T$ of a graph $G = (V, E)$ and a vertex $v \in V$ is constructed by considering 
all of the self-avoiding walks from~$v$ in~$G$ and
storing these in a tree~$T$. The root of~$T$ is the walk consisting of
the single vertex~$v$, and two self-avoiding walks are connected in~$T$ if one of them is a strict sub-walk of the other with maximal length. We refer to \cite[Appendix~A]{Liu2019Fisher} for a formal construction. Some of the leaves of the tree $T$ are pinned according to a systematic procedure that is described in \cite[Appendix~A]{Liu2019Fisher}. It is important to note that if $G$ has maximum degree $\Delta$, then every node of $T$ has at most $d := \Delta-1$ children, except possibly the root of $T$, which might have $\Delta$ children. We will use the following result that relates $\ising(G; x)$ and $\ising(T; x)$.
 
 \begin{proposition}[{\cite[Proposition B.1]{Liu2019Fisher}}] \label{prop:tsaw}
 	Let $G$ be a connected graph and let $v$ be a vertex of $G$. Let $T$ be the SAW tree of $(G,v)$. Then the polynomial $\ising(G; x)$ divides the polynomial $\ising(T; x)$. In particular, if $\beta \in \mathbb{C}$ is such that $\ising(T; \beta) \ne 0$, then it also holds that $\ising(G; \beta) \ne 0$.
 \end{proposition}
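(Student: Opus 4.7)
The plan is to establish the divisibility $\ising(G; x) \mid \ising(T; x)$ by induction on the cyclomatic number $c(G) = |E(G)| - |V(G)| + 1$; the ``in particular'' non-vanishing clause then follows immediately, since if $P \mid Q$ as polynomials and $Q(\beta)\neq 0$, then $P(\beta)\neq 0$ as well. The base case $c(G) = 0$ is trivial: $G$ is itself a tree, the SAW tree construction produces $T = G$ with no pinnings, and the quotient is $1$.

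For the inductive step, suppose $c(G) \ge 1$. I would select an edge $e = \{u, w\}$ lying on a cycle of $G$, chosen so that it corresponds to the closing of a self-avoiding walk from the root $v$ in the SAW enumeration. The plan is to ``unwind'' at $e$ to form an intermediate graph $G'$ by deleting $e$, attaching new pendant vertices $u'$ adjacent to $w$ and $w'$ adjacent to $u$, and pinning $u'$, $w'$ according to the SAW tree rule. Then $c(G') = c(G) - 1$ and the SAW tree of $G'$ rooted at $v$ (with the inherited pinnings) is precisely $T$, so the inductive hypothesis applied to $G'$ yields $\ising(G'; x) \mid \ising(T; x)$. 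To close the induction, one establishes a local identity showing $\ising(G; \beta) \mid \ising(G'; \beta)$, obtained by expanding both partition functions over the four spin assignments at $\{u, w\}$ and verifying that the pinning contributions combine with the original edge weight of $e$ into a polynomial factor.

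The main obstacle lies in this local divisibility step. A naive single pinning at $e$ does not generally yield a polynomial quotient, so the SAW tree rule for the Ising model must pin the auxiliary vertices in a way compatible with the $0\leftrightarrow 1$ symmetry of the model, presumably by taking symmetric linear combinations of two oppositely-pinned copies in the spirit of the series-reduction identity for Ising edge weights. The Mobius identity \eqref{eq:hbeta:property} governing the tree recursion of the Ising model is the key algebraic tool: it ensures that the pinning contributions at a cycle-closing edge combine with the original edge weight via the map $h_\beta$ in such a way that the quotient $\ising(G'; \beta) / \ising(G; \beta)$ is polynomial rather than merely rational at each unwinding step. Once this local identity is secured, iterating the unwinding yields the full divisibility, and the proposition follows.
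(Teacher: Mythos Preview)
The paper does not prove this proposition; it is quoted from \cite{Liu2019Fisher} (their Proposition~B.1) and used as a black box, so there is no in-paper argument to compare against. The question is whether your sketch stands on its own.

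It does not: the inductive framework has a structural flaw. The assertion that after unwinding a single cycle edge $e$ the resulting $G'$ has the same SAW tree from $v$ as $G$ is false. Take $G$ to be the $4$-cycle $v\text{--}a\text{--}b\text{--}c\text{--}v$. Its SAW tree from $v$ consists of two paths of length $3$ emanating from $v$, each ending in a pinned leaf. The only cycle-closing edges are $\{c,v\}$ and $\{a,v\}$; unwinding at $\{c,v\}$ as you describe (deleting it and attaching pinned pendants to both endpoints) produces a path $G'$ whose SAW tree from $v$ has one branch of length $1$ (the new pendant at $v$) and one branch of length $4$ --- a different tree, with a different partition function. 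The SAW tree encodes \emph{all} self-avoiding walks from $v$ at once; opening one cycle edge alters the walk structure asymmetrically on the two sides, so the tree is not preserved by your operation and the induction does not close.

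The local divisibility step $\ising(G;x)\mid\ising(G';x)$ is also not secured. You acknowledge that a single pinning fails to give a polynomial quotient, but the appeal to \eqref{eq:hbeta:property} does not repair this: that identity governs the ratio recursion along a fixed tree and says nothing about divisibility between partition functions of two distinct graphs. The argument in \cite{Liu2019Fisher} proceeds along different lines, inducting on the tree recursion for the conditional partition functions $Z_v^j$ (peeling off children of the root) rather than opening cycles of $G$ one at a time.
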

 
 As a consequence of Proposition~\ref{prop:tsaw}, we can translate zero-free results for trees to zero-free results for graphs. Our proof of Theorem~\ref{thm:ising:zero-free} uses this approach. In the rest of this section we recall some tools to study the partition function of the Ising model on trees.
 
\begin{definition} \label{def:ratios}
Let $T$ be a tree (possibly with some pinned leaves) and let $v$ be its root. For each $j \in \{0,1\}$,  we define $Z_{v}^{j}(T; \beta)$ as the sum of $\beta^{m(\sigma)}$ over the configurations $\sigma$ of $T$ that have $\sigma(v) = j$, so $\ising(T; \beta) = Z_{v}^{0}(T; \beta) + Z_{v}^{1}(T; \beta)$. We define the ratio
\begin{align*}
  R\left(T, v; \beta \right) = \frac{Z_v^1\left(T; \beta\right)}{ Z_v^0(T; \beta)}.
\end{align*}
\end{definition}
The ratio $ R\left(T, v; \beta \right)$ is a rational function on $\beta$. If $Z_{v}^{0}(T; \beta) \ne 0$, we note that $\ising(T; \beta) = 0$ if and only if $ R\left(T, v; \beta \right) = -1$, so we can study the zeros of the partition function by studying these ratios. It turns out that the ratios $R(T, v, \beta)$ can be computed recursively.  Let us consider the function
\begin{equation*}
  \label{eq:f}
  F_{\beta, k}\left(z_1, \ldots, z_k\right) = \prod_{j = 1}^k h_{\beta}\left(z_j\right),
\end{equation*}
where $h_{\beta}(z) = (\beta z + 1) / (\beta + z)$ for any $z \in \mathbb{C}$. Then if $(T_1, v_1), \ldots, (T_d, v_k)$ are the trees with roots $v_j$ hanging from the root of $T$, one can check that
\begin{equation} \label{eq:tree-recursion}
	R(T, v; \beta) =  F_{\beta, k}(r_1, \ldots, r_k),
\end{equation}
 where $r_j = R(T_j, v_j; \beta)$ for all $j \in [k] := \{1, \ldots, k\}$, see for instance \cite{Liu2019Fisher}\footnote{In  \cite{Liu2019Fisher} the authors work with $Z_G(b) = \ising(G; 1/b)  b^{\lvert E(G) \rvert}$, so they get $h_{b}(z) = (b + z) / (b z + 1) $ instead.}. 
 
\subsection{Proof of Theorem~\ref{thm:ising:zero-free}}

First, we introduce some notation that will be used repeatedly in this work.

\begin{definition} \label{def:R}
	Let $\delta > 0$. We define $\mathcal{R}(\delta)$ as the set of complex numbers $z$ such that $\left| (z-1) / (z+1) \right| \le \delta$.
\end{definition}

Definition~\ref{def:R} allows us to conveniently restate Theorem~\ref{thm:ising:zero-free} as $\ising(G; \beta) \ne 0$ for any graph $G$ with maximum degree at most $\Delta$ and any $\beta \in \mathcal{R}(\varepsilon_\Delta)$, where $\varepsilon_\Delta = \tan(\pi / (4\Delta - 4))$. Proposition~\ref{prop:R} gives some properties of the region $\mathcal{R}(\delta)$ that we need in our proofs. For $x \in \mathbb{C}$ and $r > 0$ real we denote $B(x, r) = \{z \in \mathbb{C} : \lvert z - x \rvert < r\}$, $\overline{B}(x, r) = \{z \in \mathbb{C} : \lvert z - x \rvert \le r\}$ and $C(x, r) = \{z \in \mathbb{C} : \lvert z - x \rvert = r\}$.

\begin{proposition} \label{prop:R}  
	 Let $\delta > 0$. The region $\mathcal{R}(\delta)$ satisfies the following properties:
	\begin{enumerate}
		\item \label{item:R:1} We have $1 \in \mathcal{R}(\delta)$ and $-1 \not \in \mathcal{R}(\delta)$.
		\item \label{item:R:2} If $\beta \in  \mathcal{R}(\delta)$, then $\beta^{-1} \in \mathcal{R}(\delta)$.
		\item \label{item:R:3} The map $\phi(z) = (z-1)/(z+1)$ has the following property. We have $\phi(C(0, 1)) = i \mathbb{R} = \phi^{-1}(C(0, 1))$, and $\{z \in \mathbb{C} : \mathrm{Re}(z) > 0 \} = \phi^{-1}(B(0, 1))$.
		 In particular, if $\delta = 1$, then $\mathcal{R}(\delta)$ is the set of complex numbers $z$ with $\mathrm{Re}(z) \ge 0$.
		\item \label{item:R:4} If $\delta \in (0,1)$, then $\mathcal{R}(\delta)$ is the closed disk $\overline{B}(c_\delta, r_\delta)$ with centre $c_\delta = (1+\delta^2)/(1-\delta^2)$ and radius $r_\delta = 2\delta /(1 - \delta^2)$. Moreover, in this case for every $z \in \mathcal{R}(\delta)$ we have $\lvert z \rvert \le c_\delta + r_\delta = (1 + \delta) / (1 - \delta)$.
	\end{enumerate} 

\end{proposition}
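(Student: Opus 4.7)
The plan is to dispatch each of the four items by direct computation with the Mobius map $\phi(z)=(z-1)/(z+1)$, which appears everywhere in the statement. I would begin by noting that $\mathcal{R}(\delta)$ is precisely the preimage $\phi^{-1}(\overline{B}(0,\delta))$ (with the convention that $\phi(-1)=\infty$). Since $\phi$ is a Mobius transformation, it sends circles and lines to circles and lines, so all the ``shape'' statements below are immediate consequences of standard Mobius-map geometry.

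For item~\ref{item:R:1}, I would just plug in: $\phi(1)=0$ gives $1\in\mathcal{R}(\delta)$, and $\phi(-1)$ is undefined (or, equivalently, $|(z-1)/(z+1)|\to\infty$ as $z\to -1$), so $-1\notin\mathcal{R}(\delta)$. For item~\ref{item:R:2}, the identity $\phi(1/\beta)=(1/\beta-1)/(1/\beta+1)=-(\beta-1)/(\beta+1)=-\phi(\beta)$ immediately yields $|\phi(1/\beta)|=|\phi(\beta)|\le\delta$, so $\beta\in\mathcal{R}(\delta)$ forces $\beta^{-1}\in\mathcal{R}(\delta)$. Item~\ref{item:R:3} is the standard Cayley-transform fact: for $z=iy$ purely imaginary, $|z-1|^2=1+y^2=|z+1|^2$ so $|\phi(iy)|=1$, which shows $\phi(i\mathbb{R})\subseteq C(0,1)$; since $\phi$ is a bijection of $\widehat{\mathbb{C}}$ onto itself and $\phi^{-1}(w)=(1+w)/(1-w)$, the same computation with $\phi^{-1}$ gives the reverse inclusion. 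For the half-plane statement, a short calculation shows $|z+1|^2-|z-1|^2=4\,\mathrm{Re}(z)$, so $|\phi(z)|<1$ iff $\mathrm{Re}(z)>0$. The case $\delta=1$ then follows by passing to the closure.

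Item~\ref{item:R:4} is the only calculation of any substance, and even this is short. Writing $z=x+iy$, the condition $|z-1|^2\le \delta^2|z+1|^2$ expands to
\begin{equation*}
(1-\delta^2)(x^2+y^2)-2(1+\delta^2)x+(1-\delta^2)\le 0.
\end{equation*}
Since $\delta\in(0,1)$ we can divide by $1-\delta^2>0$ and complete the square in $x$, obtaining $(x-c_\delta)^2+y^2\le c_\delta^2-1$ with $c_\delta=(1+\delta^2)/(1-\delta^2)$. A direct simplification gives $c_\delta^2-1=4\delta^2/(1-\delta^2)^2=r_\delta^2$, identifying $\mathcal{R}(\delta)$ with the closed disk $\overline{B}(c_\delta,r_\delta)$. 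The bound $|z|\le c_\delta+r_\delta$ is then just the triangle inequality on this disk, and $c_\delta+r_\delta=(1+\delta)^2/((1-\delta)(1+\delta))=(1+\delta)/(1-\delta)$.

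There is no real obstacle here; the entire proposition is bookkeeping about a single Mobius transformation. The only place where one must be slightly careful is the sign convention and the restriction $\delta<1$ in item~\ref{item:R:4}, which is precisely what makes the leading coefficient $1-\delta^2$ positive and turns the inequality into a disk rather than the complement of one (as happens when $\delta>1$, consistent with item~\ref{item:R:3} at the threshold $\delta=1$, where $\mathcal{R}(1)$ degenerates into the closed right half-plane).
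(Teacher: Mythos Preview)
Your proposal is correct and essentially parallels the paper's proof: items~\ref{item:R:1}--\ref{item:R:3} are handled identically via the elementary identities $\phi(1/\beta)=-\phi(\beta)$ and $|z+1|^2-|z-1|^2=4\,\mathrm{Re}(z)$. The only methodological difference is in item~\ref{item:R:4}: you expand $|z-1|^2\le\delta^2|z+1|^2$ in real coordinates and complete the square, whereas the paper instead invokes the circle-to-circle property of the Mobius map $\phi^{-1}$, checks the images of the three points $\pm\delta,\ i\delta$ to identify the boundary circle $C(c_\delta,r_\delta)$, and then uses $\phi^{-1}(0)=1$ to pin down which side is the interior---both routes are standard and equally short.
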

\begin{proof} We prove each property separately.
	\begin{enumerate}
		\item This property is trivial.
		\item Note that $(z^{-1} - 1) / (z^{-1} + 1) = (1 - z) / (1 + z) = - (z - 1) / (1 + z)$, so  $\beta \in  \mathcal{R}(\delta)$ if and only if $\beta^{-1} \in \mathcal{R}(\delta)$.
		\item One can check that the inverse of $\phi(z) = (z-1)/(z+1)$ is the Mobius map $\phi^{-1}(y) = - (1+y)/(y-1)$. Hence, $\lvert \phi(z) \rvert = 1$ if and only if $\lvert \phi^{-1}(z) \rvert = 1$, which happens exactly when $\lvert z + 1 \rvert = \lvert z-1\rvert$ or, equivalently, $z \in i \mathbb{R}$. This proves $\phi(C(0, 1)) = i \mathbb{R} = \phi^{-1}(C(0, 1))$. Note that $\lvert z - 1 \rvert < \lvert z + 1 \rvert$ if and only if $\lvert \mathrm{Re}(z) - 1 \rvert < \lvert \mathrm{Re}(z) + 1 \rvert$. The latter is equivalent to $\mathrm{Re}(z) > 0$. This shows that $\{z \in \mathbb{C} : \mathrm{Re}(z) > 0 \} = \phi^{-1}(B(0, 1))$, and the result follows.
		\item We note that $\mathcal{R}(\delta) = \{z \in \mathbb{C} : \lvert \phi(z) \rvert \le \delta \}  = \phi^{-1}(\overline{B}(0, \delta))$. We claim that $\phi^{-1}$ sends the circle $C(0, \delta)$ to the circle $C(c_\delta, r_\delta)$, where $c_\delta$ and $r_\delta$ are as in the statement. As $\phi^{-1}$ is a Mobius map, $\phi^{-1}(C(0, \delta))$ is a circle or a line of $\mathbb{C}$, see Section~\ref{sec:pre:cd} on rational maps. We take $3$ points in the circle $C(0, \delta)$ and show that they are in $C(c_\delta, r_\delta)$. The three points are $\delta, -\delta$ and $\delta i$. One can easily check that $\phi^{-1}(\delta) = (1 + \delta) / (1 - \delta) = c_\delta + r_\delta$ and  $\phi^{-1}(\delta) = (1 - \delta) / (1 + \delta) = c_\delta - r_\delta$. We also have
		\begin{equation*}
			\phi^{-1}(\delta i) - c_\delta = \frac{i - \delta}{i + \delta} - \frac{1 + \delta^2}{1- \delta^2} = - \frac{1 + i \delta}{i + \delta} \frac{2 \delta }{1 - \delta^2},
		\end{equation*}
	    so $\lvert \phi^{-1}(\delta i) - c_\delta  \rvert = r_\delta$ as we wanted. We conclude that $\phi^{-1}(C(0, \delta)) = C(c_\delta, r_\delta)$. Since $\phi^{-1}$ is holomorphic in $B(0,1)$ and $\phi^{-1}(0) = 1 \in \overline{B}(c_\delta, r_\delta)$, we obtain $\phi^{-1}(\overline{B}(0, \delta)) = \overline{B}(c_\delta, r_\delta)$ as we wanted. Finally, the point in $\overline{B}(c_\delta, r_\delta)$ with the largest norm is $c_\delta + r_\delta = (1 + \delta) / (1 - \delta)$. \qedhere
	\end{enumerate} 
\end{proof}

\begin{remark} \label{rem:tan}
	Let $\alpha, \beta \in [-\pi, \pi]$. Then it is well-known that if $\alpha, \beta, \alpha+\beta \not \in \pi/2 + \pi \mathbb{Z}$, we have
	\begin{equation*}
		\tan(\alpha+ \beta) = \frac{\tan(\alpha) + \tan(\beta)}{1 - \tan(\alpha) \tan(\beta)}.
	\end{equation*}
	In particular, we obtain the equality 
	\begin{equation*}
		\tan(2 \alpha) = \frac{2 \tan(\alpha)}{1-\tan(\alpha)^2}.
	\end{equation*}
	Let $f(x) = 2x/(1-x^2)$. This function is strictly increasing in $x \in [-1, 1]$. Hence, if we have $\alpha \in [-\pi/4, \pi/4]$ and $\tan(2 \alpha) = f(\delta)$ for some $\delta \in (0,1)$, we can conclude that $\tan(\alpha) = \delta$. This argument will be used in the proof of Theorem~\ref{thm:ising:zero-free}.
\end{remark}

\begin{lemma} \label{lem:R}  
	Let $\delta, \epsilon > 0$, let $\beta \in \mathcal{R}(\delta)$ and let $h_\beta(z) = (\beta z + 1) / (\beta + z)$. Then $h_\beta(\mathcal{R}(\epsilon)) \subseteq \mathcal{R}(\delta \epsilon)$. Moreover, we have $h_\beta(\infty) = \beta \in \mathcal{R}(\delta)$.
\end{lemma}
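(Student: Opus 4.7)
The proof is essentially a one-line consequence of the algebraic identity \eqref{eq:hbeta:property}, so the plan is short and straightforward.

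First, I would recall the identity already highlighted in the paper: for every $z$ with $z \ne -\beta$,
\begin{equation*}
\frac{h_\beta(z) - 1}{h_\beta(z) + 1} \;=\; \frac{(\beta - 1)(z - 1)}{(\beta + 1)(z + 1)}.
\end{equation*}
This is the whole point of the Mobius map $h_\beta$ appearing in the tree recursion, and it translates the composition $h_\beta$ into ordinary multiplication under the change of variables $\phi(w)=(w-1)/(w+1)$.

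Second, I would take absolute values and plug in the hypotheses. Fix $\beta\in\mathcal{R}(\delta)$ and $z\in\mathcal{R}(\epsilon)$ with $z\ne -\beta$. By the very definition of $\mathcal{R}(\cdot)$, $|(\beta-1)/(\beta+1)|\le\delta$ and $|(z-1)/(z+1)|\le\epsilon$, so the identity above yields
\begin{equation*}
\left|\frac{h_\beta(z) - 1}{h_\beta(z) + 1}\right| \;\le\; \delta\epsilon,
\end{equation*}
which is exactly the statement $h_\beta(z)\in\mathcal{R}(\delta\epsilon)$. The final clause $h_\beta(\infty)=\beta$ is just the limit as $z\to\infty$ of the Mobius map $(\beta z+1)/(\beta+z)$, and $\beta\in\mathcal{R}(\delta)$ by hypothesis.

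The only point that needs a brief sanity check is the pole $z=-\beta$ of $h_\beta$: if $-\beta$ happens to lie in $\mathcal{R}(\epsilon)$, then $h_\beta(-\beta)=\infty$ and the claim has to be read in the Riemann sphere. However, a direct computation gives $\phi(-\beta)=1/\phi(\beta)$, hence $|\phi(-\beta)|\ge 1/\delta$. In particular, whenever $\delta\epsilon<1$ (which is the regime in which the lemma is actually used in the iterative zero-free argument) this forces $-\beta\notin\mathcal{R}(\epsilon)$, and the pole is not an issue. Thus there is no real obstacle; the lemma is a clean consequence of \eqref{eq:hbeta:property}, and all the work has already been done in establishing that identity.
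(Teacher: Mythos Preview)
Your proof is correct and follows exactly the same approach as the paper: both invoke the identity \eqref{eq:hbeta:property} and take absolute values, using the definitions of $\mathcal{R}(\epsilon)$ and $\mathcal{R}(\delta\epsilon)$ directly. Your extra discussion of the pole $z=-\beta$ is a nice sanity check that the paper omits, but it is not needed for the applications (where $\delta\epsilon<1$) and the paper's terse proof simply sidesteps it.
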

\begin{proof}
It is straightforward to check that for any $z \in \mathbb{C}$ with $z \ne -1$, we have
Equation~\eqref{eq:hbeta:property}, namely
	\begin{equation*}  
		\frac{h_{\beta}(z) -1 }{h_{\beta}(z) +1} = \frac{(\beta-1)(z-1)}{(\beta+1)(z+1)}. 
	\end{equation*}
    The result now follows from \eqref{eq:hbeta:property} and the definition of $\mathcal{R}(\varepsilon),  \mathcal{R}(\delta \varepsilon)$.
\end{proof}

We are now ready to prove Theorem~\ref{thm:ising:zero-free}. 

\begin{thmzeros}
	\statethmzeros
\end{thmzeros}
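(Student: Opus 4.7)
The plan is to reduce to trees via Proposition~\ref{prop:tsaw} and then run an inductive argument on the SAW tree using a tight angle bound implicit in Remark~\ref{rem:tan}. Applying Proposition~\ref{prop:tsaw} to each connected component of $G$, it suffices to prove $\ising(T;\beta)\ne 0$ for the SAW tree $T$ of $(G,v)$. I would work with the ratio $R(T,v;\beta) = Z_v^1/Z_v^0$ and the recursion \eqref{eq:tree-recursion}; since $\ising(T;\beta) = Z_v^0 \cdot (1 + R(T,v;\beta))$, it is enough to show that $R(T,v;\beta)$ is finite, nonzero and different from $-1$. The key analytic ingredient, obtained by combining Proposition~\ref{prop:R}(\ref{item:R:4}) with the tangent double-angle identity of Remark~\ref{rem:tan}, is that for every $\delta \in (0,1)$ the disk $\mathcal{R}(\delta)$ is contained in the sector $\{re^{i\theta} : r>0,\ |\theta| \le 2\arctan(\delta)\}$: the tangent lines from the origin to this disk make angle $2\arctan(\delta)$ with the positive real axis. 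For $\delta = \varepsilon_\Delta$ this half-angle equals $\pi/(2(\Delta-1))$.

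Setting $d = \Delta - 1$, I would prove by induction on height the following claim: for every subtree $T'$ of $T$ rooted at $v'$ in which every vertex has at most $d$ children, $R(T',v';\beta)$ is either $\infty$ (only at a leaf pinned to $1$) or a finite nonzero complex number with $\mathrm{Re}(R(T',v';\beta)) \ge 0$, i.e., lying in $\mathcal{R}(1)$. The base case is immediate since leaf ratios lie in $\{0,1,\infty\}$. For the inductive step, every child subtree contributes $r_j \in \mathcal{R}(1) \cup \{\infty\}$. Because $\beta,\, 1/\beta \in \mathcal{R}(\varepsilon_\Delta) \subset \mathcal{R}(1)$ by Proposition~\ref{prop:R}(\ref{item:R:2}), both $-\beta$ and $-1/\beta$ lie strictly in the open left half-plane; consequently no $r_j$ equals them, and each $h_\beta(r_j)$ is a finite, nonzero complex number. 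By Lemma~\ref{lem:R}, $h_\beta(r_j) \in \mathcal{R}(\varepsilon_\Delta)$, so each $h_\beta(r_j)$ admits an argument $\theta_j$ with $|\theta_j| \le \pi/(2(\Delta-1))$. Since arguments add under multiplication, $\sum_{j=1}^{k} \theta_j$ is an argument of $R(T',v';\beta) = \prod_j h_\beta(r_j)$, and
\begin{equation*}
\left\lvert \sum_{j=1}^{k} \theta_j \right\rvert \le d \cdot \frac{\pi}{2(\Delta-1)} = \frac{\pi}{2},
\end{equation*}
so $R(T',v';\beta) \in \mathcal{R}(1)$, closing the induction.

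Finally, I would apply the same computation to the root $v$ of the SAW tree $T$, which may have up to $\Delta$ children; each child subtree has root degree at most $d$, so the inductive hypothesis applies. Repeating the bound with $k \le \Delta$ yields an argument $\theta$ of $R(T,v;\beta)$ satisfying
\begin{equation*}
\lvert \theta \rvert \le \Delta \cdot \frac{\pi}{2(\Delta-1)} = \frac{\pi}{2} \cdot \frac{\Delta}{\Delta-1} < \pi
\end{equation*}
whenever $\Delta \ge 3$. Hence $R(T,v;\beta)$ is finite, nonzero and distinct from $-1$, giving $\ising(T;\beta) \ne 0$. The main obstacle I anticipate is obtaining the tight angle bound on $\mathcal{R}(\delta)$: the specific choice $\varepsilon_\Delta = \tan(\pi/(4(\Delta-1)))$ is dictated precisely by the requirement that the internal summation close at $\pi/2$ (so the inductive step goes through) while the root summation stays strictly below $\pi$. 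Identity \eqref{eq:hbeta:property} is what makes the approach possible, by converting the $h_\beta$-recursion into a straightforward multiplication on the $\phi$-image disks $\overline{B}(0,\delta)$.
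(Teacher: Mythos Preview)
Your proposal is correct and follows essentially the same route as the paper's proof: reduce to the SAW tree via Proposition~\ref{prop:tsaw}, use Lemma~\ref{lem:R} to place each $h_\beta(r_j)$ in $\mathcal{R}(\varepsilon_\Delta)$, convert the disk to a sector of half-angle $2\arctan(\varepsilon_\Delta)=\pi/(2(\Delta-1))$ via Proposition~\ref{prop:R}(\ref{item:R:4}) and Remark~\ref{rem:tan}, close the induction at $\pi/2$ for nodes with at most $d$ children, and handle the root with the looser bound $\Delta\pi/(2(\Delta-1))<\pi$. The only presentational difference is that the paper carries ``$Z_v^0\ne 0$'' as an explicit second inductive claim and verifies it from \eqref{eq:z-product}, whereas you fold this into ``$R$ is finite''; your observation that $-\beta,-1/\beta$ lie in the open left half-plane (hence are never equal to any $r_j$) is precisely what makes each factor $\beta Z_{v_j}^0+Z_{v_j}^1$ nonzero, so the argument goes through.
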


\begin{proof} 
	Let $\beta \in \mathcal{R}(\varepsilon_{\Delta})$. In light of Proposition~\ref{prop:tsaw}, we only have to prove that $\ising(T; \beta) \ne 0$ for all trees $T$ with maximum degree at most $\Delta$ with possibly some pinned leaves.  Let $v$ be the root of such a tree $T$. We are going to prove that $R(T, v; \beta) \ne -1$ and 
(unless $T$ consists of a single vertex, pinned to~$1$, in which case the Theorem is trivial) that
	$Z_{v}^0(T, v; \beta) \ne 0$. Note that both assertions combined imply that 
	\begin{equation*}
		\ising(T; \beta) = Z_{v}^0(T, v; \beta)  \left(1 + \frac{Z_{v}^1(T, v; \beta)  }{ Z_{v}^0(T, v; \beta) } \right) = Z_{v}^0(T, v; \beta)  \left(1 + R(T, v; \beta) \right) \ne 0
	\end{equation*}
	 as we want.
		
	First, we restrict ourselves to trees such that every node has at most $d:= \Delta-1$ children and possibly some its leaves are pinned. We claim that for such a tree $T$ with root $v$ we have
	\begin{enumerate}  
		\item $R(T, v; \beta) \in \mathcal{R}(1) \cup \{\infty\}$, that is, $R(T, v; \beta)$ has non-negative real part or $R(T, v; \beta) = \infty$ (Proposition~\ref{prop:R});
		\item if $T$ has height at least $1$, then $Z_{v}^0(T, v; \beta) \ne 0$ (a tree with only one vertex has height $0$ by definition).
	\end{enumerate}
	 We carry out the proof by induction on the height of the tree.  Let us consider the case when the tree  $T$ consists of only one vertex. Depending on whether the vertex is pinned or not, either $R(T, v; \beta) = 1$ and $Z^0_v(T; \beta) = 1$, or $R(T, v; \beta) \in \{0, \infty\}$  and $Z^0_v(T; \beta) \in \{1, 0\}$. In either case, $R(T, v; \beta) \in \mathcal{R}(1) \cup \{\infty\}$.
	 
	 Now let $T$ be a tree of height $l > 0$ and let us assume that our claim holds for any of the desired trees with height at most $l-1$. Let $T$ be a tree of height $l$ such that all nodes have at most $d$ children. Let $v$ be its root and let $(T_1, v_1), \ldots, (T_{k}, v_{k})$ be the trees hanging from this root. By assumption, $k \le d$.    Let $r_j = R(T_j, v_j; \beta)$ for all $j \in [k]$.  In view of \eqref{eq:tree-recursion}, we have
	\begin{equation}  \label{eq:ratios-product}
		R(T, v; \beta) =  \prod_{j = 1}^{k} h_\beta(r_j). 
	\end{equation}
 By our induction hypothesis, $r_j \in \mathcal{R}(1) \cup \{\infty\}$ for all $j \in [k]$. In light of Lemma~\ref{lem:R}, we find that $h_{\beta}(r_j) \in \mathcal{R}(\varepsilon_{\Delta})$ for all $j \in [k]$. This property will be enough to ensure that the product in \eqref{eq:ratios-product} yields a complex number with non-negative real part. Let us study the argument of any element of $\mathcal{R}(\varepsilon_{\Delta})$. By a trigonometry reasoning shown in Figure~\ref{fig:proof-base-case}, the argument of any element in $\mathcal{R}(\varepsilon_{\Delta})$ is in the interval $[-\theta, \theta]$ for $\theta$ such that 
   \begin{equation*}
   	\tan(\theta)  = \frac{r_{\varepsilon_{\Delta}}}  { \sqrt{c_{\varepsilon_{\Delta}}^2 -
   			r_{\varepsilon_{\Delta}}^2} } = \frac{2\varepsilon_{\Delta}}{1-\varepsilon_{\Delta}^2},
   \end{equation*}
   where $r_\delta$ and $c_\delta$ are defined in Proposition~\ref{prop:R}.
   In view of Remark~\ref{rem:tan} for $\alpha = \theta/2$, we conclude that $\tan(\theta/2) = \varepsilon_{\Delta}$ and, thus, $\theta/2 = \arctan(\varepsilon_{\Delta}) = \pi / (4 d)$. Therefore, the complex number $R(T, v; \beta)$ is the product of $k$ numbers with argument in $[-\theta, \theta] =  [- \pi / (2 d), \pi / (2 d)]$, so its argument is in $[-k \theta, k \theta] \subseteq [-\pi/2, \pi/2]$, where we used $k\le d$. This is equivalent to saying that $R(T, v; \beta)$ has non-negative real part as we wanted.  Note that when $l \ge 1$, we have also shown that  $R(T, v; \beta) \in \mathbb{C}$.
	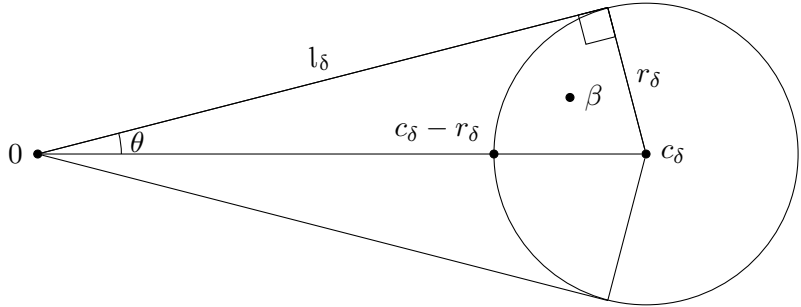
\begin{figure}[H]
	\centering
	\begin{tikzpicture}
		\node [fill, draw, circle, minimum width=3pt, inner sep=0pt, label=left:$0$] (0) at (0,0) {};
		
		\node [fill, draw, circle, minimum width=3pt, inner sep=0pt, label=right:$c_\delta$] (o) at (8, 0) {};

		\node [fill, draw, circle, minimum width=3pt, inner sep=0pt, label=above left:$c_\delta - r_\delta$] (a1) at (6, 0) {};
		
		\node [fill, draw, circle, minimum width=3pt, inner sep=0pt, label=right:$\beta$] (beta) at (7, 0.75) {};
		
		\node [circle,draw,name path=circle] (c) at (o) [minimum size=4cm] {};
		
		\draw (0)  --  (tangent cs:node=c,point={(0)},solution=1) 
		coordinate (B)
		(0) -- (tangent cs:node=c,point={(0)},solution=2)
		coordinate (C)
		(o)  -- (tangent cs:node=c,point={(0)},solution=2) 
		(o)  -- (tangent cs:node=c,point={(0)},solution=1) 
		(0) -- (o);
		
		\draw (0) -- (C) node [midway, above, sloped] (TextNode1) {$l_\delta$};
		\draw (o) -- (C) node [midway, right] (TextNode1) {$r_\delta$};

		\tkzMarkRightAngle[size=.4](o,C,0); 
		\pic [draw, -, angle radius=11mm, angle eccentricity=1.2, "$\theta$"] {angle = o--0--C};
	\end{tikzpicture}
	\caption{The disk $\mathcal{R}(\delta)$.}
	\label{fig:proof-base-case}
\end{figure}
 Let us now prove that $Z_{v}^0(T, v; \beta) \ne 0$. We have 
	\begin{equation}  \label{eq:z-product}
	Z_{v}^0(T; \beta) =  \prod_{j = 1}^{k} \left(\beta Z_{v_j}^0(T_j; \beta) +  Z_{v_j}^1(T_j; \beta)\right). 
  \end{equation}
  If $T_j$ has height at least $1$, then 
  \begin{equation*}
  	\beta Z_{v_j}^0(T_j; \beta) +  Z_{v_j}^1(T_j; \beta) = Z_{v_j}^0(T_j; \beta) \left(\beta + R(T_j, v_j; \beta)  \right) \ne 0,
  \end{equation*}
  where we used that $Z_{v_j}^0(T_j; \beta) \ne 0$  and $\mathrm{Re}(\beta + R(T_j, v_j; \beta)) > 0$ by the induction hypothesis (recall that $\mathrm{Re}(\beta) > 0$). If $T_j$ has height $0$, that is, $T_j$ has only one vertex, then, depending on whether this vertex is pinned or not,
  \begin{equation*}
  	\beta Z_{v_j}^0(T_j; \beta) +  Z_{v_j}^1(T_j; \beta) \in \{1, \beta, 1+\beta\}.
  \end{equation*}
  Therefore, the product in \eqref{eq:z-product} is a product of complex numbers that are non-zero, so $Z_{v}^0(T; \beta) \ne 0$ as we wanted.

Finally, to prove the Theorem, we consider a tree $T$ with maximum degree at most $\Delta$ and possibly some pinned leaves. Let $v$ be its root and let $(T_1, v_1), \ldots, (T_{k}, v_{k})$ be the trees hanging from this root. 
By the claim,
$R(T_j, v_j; \beta)\in \mathcal{R}(1) \cup \{\infty\}$ for all $j \in [k]$.  
By Lemma~\ref{lem:R},
$h_{\beta}(r_j) \in \mathcal{R}(\varepsilon_{\Delta})$ for all $j \in [k]$, so the argument of $h_{\beta}(r_j)$ is in  $[- \pi / (2 d), \pi / (2 d)]$ for all $j \in [k]$. It follows from this fact, $k \le \Delta$, and \eqref{eq:ratios-product}, that the argument of $R(T, v; \beta)$ is in 
	\begin{equation*}
	  [- k \pi / (2 d), k \pi / (2 d)] \subseteq 	[- \Delta \pi / (2 d), \Delta \pi / (2 d)] \subseteq [- 3\pi / 4, 3 \pi / 4],
	\end{equation*}
	 where we used that $\Delta \ge 3$. In particular,  $R(T, v; \beta)$ is not a negative real number, so  $R(T, v; \beta) \ne -1$. The fact that $Z_{v}^0(T; \beta) \ne 0$ follows analogously from \eqref{eq:z-product}, $Z_{v}^0(T; \beta)$ is a product of non-zero complex numbers.
\end{proof}

\begin{corfptas}
\statecorfptas
\end{corfptas}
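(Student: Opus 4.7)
The plan is to combine Theorem~\ref{thm:ising:zero-free} with the standard Barvinok-Patel-Regts framework \cite{BarvinokBook, Patel2017} that converts a zero-free disk into an FPTAS. Because the hypothesis $|\beta-1|/|\beta+1|<\varepsilon_\Delta$ is strict, I would first fix a rational $\delta$ with $|\beta-1|/|\beta+1|<\delta<\varepsilon_\Delta$. By Proposition~\ref{prop:R}, $\mathcal{R}(\delta)$ is then a closed disk containing $1$, and containing $\beta$ strictly in its interior; by Theorem~\ref{thm:ising:zero-free}, the polynomial $\ising(G;\cdot)$ has no zeros on $\mathcal{R}(\delta)$.

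Next, I would reduce to approximating the logarithm of a univariate polynomial with controlled root-free region. Consider $p(z) := \ising(G;\,1+z(\beta-1))$, which has degree at most $|E(G)|$, satisfies $p(0)=\ising(G;1)=2^{|V(G)|}$, and $p(1)=\ising(G;\beta)$. Since both $1$ and $\beta$ lie in $\mathcal{R}(\delta)$ with $\beta$ strictly interior, a short calculation using the explicit centre and radius of $\mathcal{R}(\delta)$ given in Proposition~\ref{prop:R}(\ref{item:R:4}) yields an $\eta>0$, depending only on $\delta$ and $|\beta-1|/|\beta+1|$, such that the affine map $z\mapsto 1+z(\beta-1)$ sends the closed disk $\{z\in\mathbb{C}:|z|\le 1+\eta\}$ into $\mathcal{R}(\delta)$; hence $p$ has no zeros there. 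Barvinok's theorem \cite[Chapter~2]{BarvinokBook} then supplies an $m=O_\eta(\log(|V(G)|/\epsilon))$ such that the degree-$m$ Taylor truncation $T_m(z)=\sum_{j=1}^{m} c_j z^j$ of $\log p(z)-\log p(0)$ at $z=0$ satisfies $|T_m(1)-(\log p(1)-\log p(0))|\le\epsilon$. The returned value $\hat Z := p(0)\exp(T_m(1))$ then has the required form $\ising(G;\beta)\,e^z$ with $|z|\le\epsilon$.

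The final step is to evaluate $T_m(1)$ within the prescribed time bound. The coefficients $c_1,\ldots,c_m$ can be recovered from the first $m$ coefficients of $p$ via a standard recurrence in $O(m^2)$ field operations. The first $m$ coefficients of $p$ are $\mathbb{Z}[\beta]$-linear combinations of the numbers of spin configurations of $G$ with a given number of monochromatic edges, which, by the Patel-Regts algorithm \cite{Patel2017}, are computable from counts of connected subgraphs of $G$ of size at most $m$ on a graph of maximum degree $\Delta$ in time $|V(G)|\cdot(e\Delta)^{O(m)}$; since $m=O(\log(|V(G)|/\epsilon))$, this is polynomial in $|V(G)|$ and $1/\epsilon$. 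Because $\beta$ is algebraic, all intermediate arithmetic is carried out in the fixed number field $\mathbb{Q}(\beta)$ using the polynomial-time algebraic subroutines of Section~\ref{sec:pre:algebraic}, preserving polynomial running time in $\mathrm{size}(G)$ and $1/\epsilon$. I expect no real obstacle beyond exhibiting the margin $\eta$, which is immediate from the strict interior inclusion just noted; everything downstream is a direct application of the Barvinok-Patel-Regts machinery.
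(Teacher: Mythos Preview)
Your overall architecture matches the paper's, but there is a genuine gap in the geometric step. You assert that the affine map $z\mapsto 1+z(\beta-1)$ sends the closed disk $\{|z|\le 1+\eta\}$ into $\mathcal{R}(\delta)$ for some $\eta>0$; the image of this disk is $\overline{B}(1,(1+\eta)|\beta-1|)$, and for that to lie inside $\mathcal{R}(\delta)=\overline{B}(c_\delta,r_\delta)$ you need $|\beta-1|< r_\delta-(c_\delta-1)=2\delta/(1+\delta)$. This fails for admissible $\beta$: e.g.\ with $\Delta=3$ and $\beta=2.2$ one has $|\beta-1|/|\beta+1|=0.375<\varepsilon_3=\tan(\pi/8)\approx 0.414$, yet $|\beta-1|=1.2$, so the image disk $\overline{B}(1,1.2)$ contains points with negative real part and lies outside $\mathcal{R}(\delta)$ for every $\delta<1$. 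The strict interior inclusion of $\beta$ in $\mathcal{R}(\delta)$ only gives a small disk around $\beta$, not the full disk of radius $|\beta-1|$ centred at $1$.

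What actually holds is that the \emph{segment} $\{1+t(\beta-1):t\in[0,1]\}$ lies in the open disk $\mathcal{R}(\varepsilon_\Delta)$ by convexity, hence so does a thin strip $R_\delta=\{z:-\delta\le\mathrm{Re}\,z\le 1+\delta,\ |\mathrm{Im}\,z|\le\delta\}$ around $[0,1]$ in the $z$-plane. The paper then invokes Barvinok's polynomial change of variable \cite[Section 2.2.2]{BarvinokBook}, which produces $\phi_\delta$ with $\phi_\delta(0)=0$, $\phi_\delta(1)=1$, and $\phi_\delta(\overline{B}(0,b_\delta))\subseteq R_\delta$ for some $b_\delta>1$, and applies the Taylor-truncation argument to $p_{G,\beta}(z)=q_{G,\beta}(\phi_\delta(z))$ instead. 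With that amendment (and the observation that the coefficients of $p_{G,\beta}$ are still computable from the low-order coefficients of $q_{G,\beta}$, since $\phi_\delta$ has fixed degree), the rest of your plan---Patel--Regts for the coefficients and arithmetic in $\mathbb{Q}(\beta)$---goes through exactly as you describe.
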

\begin{proof}
  The proof follows from combining Theorem~\ref{thm:ising:zero-free}, the work of Patel and Regts \cite{Patel2017} and the work of Barvinok \cite{BarvinokBook}\footnote{The idea presented in the proof of Corollary~\ref{cor:ising:fptas} is known among experts, see for example, \cite{Liu2019Fisher}; we include it here for completeness. We note that the only properties of $\mathcal{S} = \{z \in \mathbb{C} : \lvert z - 1 \rvert / \lvert z + 1 \rvert < \epsilon_\Delta\}$ needed are that $\mathcal{S}$ is open and $\{t + (1-t)\beta : t \in [0,1]\} \subseteq \mathcal{S}$ for all $\beta \in \mathcal{S}$.}. Let $G$ and $\epsilon > 0$ be the inputs of our algorithm. We consider the polynomial $q_{G, \beta}(z) = \ising(G; 1 + z(\beta-1))$. We want to give an FPTAS for $q_{G; \beta}(1) = \ising(G; \beta)$. We claim that, on graphs with maximum degree at most $\Delta$, we can compute the $k$-th coefficient of $q_{G, \beta}(z)$ in polynomial time in $2^k$ and the size of $G$. This claim is proved for the more general case of the graph homomorphism partition function in the proof of \cite[Theorem 6.1]{Patel2017}. Recall that $1$ and $\beta$ are in the interior of the disk $\mathcal{R}(\varepsilon_{\Delta})$ (Proposition~\ref{prop:R}) so this is also true of an open interval around the line segment between them. Hence, there is $\delta > 0$ such that $1 + z (\beta - 1) \in \mathcal{R}(\varepsilon_{\Delta})$ for all $z \in R_\delta$, where $R_\delta$ is a strip of the form $R_\delta = \{z \in \mathbb{C} : -\delta \le \mathrm{Re}(z) \le 1+\delta, \lvert \mathrm{Im}(z) \rvert \le \delta \}$.  In light of Theorem~\ref{thm:ising:zero-free}, we conclude that  $q_{G, \beta}(z) \ne 0$ for all $z \in R_\delta$. In \cite[Section 2.2.2]{BarvinokBook} Barvinok constructs a polynomial $\phi_\delta$ and a real number $b_\delta > 1$ such that $\phi_\delta(0) = 0$, $\phi_\delta(1)$ and $\phi_\delta(z) \in R_\delta$ for any $z \in \overline{B}(0, b_\delta)$. Note that the polynomial $p_{G, \beta}(z) = q_{G, \beta}(\phi_\delta(z))$ does not vanish in $\overline{B}(0, b_\delta)$. Finally, we compute an approximation of $p_{G, \beta}(1) = \ising(G; \beta)$ as in \cite[Lemma 2.2.1]{BarvinokBook} using the truncated Taylor series of $\log p_{G, \beta}(z)$. The algorithm of Barvinok uses $O(\log(\deg( p_{G, \beta})/\epsilon)) = O(\log(\mathrm{size}(G) / \epsilon))$ coefficients of the Taylor series of $\log p_{G, \beta}(z)$.  Here the implicit ``$O$'' notation depends only on $\beta$. These coefficients can be computed using the algorithm of Patel and Regts in polynomial time in $\mathrm{size}(G)$ and $1/\epsilon$. We conclude that \cite[Lemma 2.2.1]{BarvinokBook} computes $Y$ such that $\lvert \log p_{G, \beta}(1) - Y \rvert \le \epsilon$ in polynomial time in $\size{G}$ and $1/\epsilon$. Let $z = \log p_{G, \beta}(1) - Y$ and $\hat{Z} = \exp(Y)$. Then we have $\hat{Z} = \ising(G; \beta) e^z$ and $\lvert z \rvert  \le \epsilon$ as we wanted.
\end{proof}

\subsection{Comparing our result to the state of the art}\label{sec:comparing}

In this section we gather all the results we are aware of on the zeros of the partition function of the Ising model and compare them to Theorem~\ref{thm:ising:zero-free}. We show that our result extends the state of the art significantly.

Results on the zeros of the graph homomorphism partition function can be particularised to the Ising model. Before stating these results, let us introduce this partition function. Let $G = (V, E)$ be an undirected graph, possibly with multiple edges or loops, and let $A = (a_{ij})$ be a $k \times k$ symmetric matrix of complex numbers. The \emph{graph homomorphism partition  function} is defined as
\begin{equation*}
	\hom(G; A) = \sum_{\phi \colon V \to [k]} \prod_{\{u,v\} \in E}  a_{\phi(u) \phi(v)},
\end{equation*}
where $[k]$ denotes $\{1, \ldots, k\}$. When $k = 2$ and $a_{11} = a_{22}$ we have
\begin{equation} \label{eq:homo}
	\hom(G; A) = a_{12}^{|E|} \sum_{\phi \colon V \to \{1,2\}} \prod_{\substack{\{u,v\} \in E : \\ \phi(u) = \phi(v)}}  \frac{a_{11}}{a_{12}} =  a_{12}^{|E|} \ising\left(G; \frac{a_{11}}{a_{12}}\right),
\end{equation}
recovering the partition function of the Ising model as a particular case.

To the best of our knowledge, the best result on  the zeros of the graph homomorphism partition function known up to date is the following result of Barvinok.

\begin{theorem}[{\cite[Theorem 7.1.4]{BarvinokBook}}] \label{thm:barvinok}
	For a positive integer $\Delta$, let
	\begin{equation} \label{eq:delta}
		\delta_{\Delta} = \max \left\{\sin \left(\frac{\alpha}{2}\right) \cos \left(\Delta \frac{\alpha}{2}\right) : 0 < \alpha < \frac{2 \pi}{3 \Delta} \right\}.
	\end{equation}
	Then for any graph $G = (V, E)$ with maximum degree at most $\Delta$, we have $\hom(G; A) \ne 0$ for any complex symmetric matrix $A$ with dimension $k \times k$ such that $|1- a_{ij}| \le \delta_{\Delta}$ for any  $i, j \in \{1, \ldots, k\}$. 
\end{theorem}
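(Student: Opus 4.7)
The plan is to follow the interpolation strategy from Barvinok's book: reduce the zero-freeness of $\hom(G; A)$ to an inductive bound on ``partial'' partition functions, controlled by a sharp trigonometric estimate.

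First I would normalize by writing $A = J + B$, where $J$ is the all-ones $k\times k$ matrix and $|b_{ij}|\le \delta_\Delta$; note that $\hom(G; J) = k^{|V|} \ne 0$ provides the ``base case'' of any inductive argument. Then I would pick a vertex $v$ of degree $d \le \Delta$ with neighbours $v_1,\dots,v_d$ and separate the spin at $v$:
\begin{equation*}
    \hom(G; A) = \sum_{j=1}^{k} \sum_{\phi\colon V\setminus\{v\}\to [k]} \Bigl(\prod_{i=1}^{d} a_{j\,\phi(v_i)}\Bigr)  \prod_{\substack{u'v'\in E \\ u',v'\ne v}} a_{\phi(u')\phi(v')}.
\end{equation*}
The inner summation is a boundary-weighted partition function of the smaller graph $G-v$, with a prefactor that is a product of $d\le\Delta$ numbers in the disk $\overline{B}(1,\delta_\Delta)$.

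The key inductive invariant I would maintain is that, for any induced subgraph $H$ of $G$ and any fixing of a small set of boundary spins, the resulting conditional partition function lies in the open cone $\{r e^{i\theta} : r>0,\ |\theta|<\alpha/2\}$ around the positive real axis, for the $\alpha$ that optimises $\delta_\Delta$ in the statement. This cone plays the same role as the region $\mathcal{R}(\varepsilon_\Delta)$ in the proof of Theorem~\ref{thm:ising:zero-free}. The geometry behind the identity $\delta_\Delta = \sin(\alpha/2)\cos(\Delta\alpha/2)$ becomes transparent at the inductive step: the factor $\sin(\alpha/2)$ bounds the radius of the one-edge perturbation $a_{ij}-1$ allowed while still keeping each $a_{ij}$ in a sector of half-angle $\alpha/2$, and $\cos(\Delta\alpha/2)>0$ provides the positive margin that survives after compounding up to $\Delta$ such perturbations through the vertex $v$; optimising over $\alpha \in (0,2\pi/(3\Delta))$ gives the largest perturbation radius this scheme can tolerate.

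The main obstacle, compared to Theorem~\ref{thm:ising:zero-free}, is that the Godsil--Weitz SAW-tree reduction is unavailable: a general symmetric matrix $A$ need not correspond to a two-spin system, so the partition function does not admit a one-dimensional Mobius-style recurrence on trees. Instead, one must run the cone-preservation argument directly on the graph, simultaneously tracking a family of ratios indexed by all possible boundary spin patterns at the $d\le\Delta$ neighbours of $v$. Controlling the worst case among these $k^d$ patterns---rather than a single ratio as in the Ising model---is precisely what forces the weaker trigonometric bound $\delta_\Delta$ of Theorem~\ref{thm:barvinok}, compared to the Ising-specific bound $\varepsilon_\Delta = \tan(\pi/(4(\Delta-1)))$ of Theorem~\ref{thm:ising:zero-free}.
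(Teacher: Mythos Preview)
The paper does not prove Theorem~\ref{thm:barvinok}; it is quoted verbatim from \cite[Theorem~7.1.4]{BarvinokBook} and used only as a benchmark in Section~\ref{sec:comparing} to show that the new zero-free region $\mathcal{R}(\varepsilon_\Delta)$ of Theorem~\ref{thm:ising:zero-free} strictly contains the region one gets from Barvinok's theorem via Corollary~\ref{cor:region-barvinok}. So there is no ``paper's own proof'' to compare your proposal against.

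That said, your sketch is a reasonable outline of the argument in Barvinok's book: the proof there does proceed by induction on the vertex set, maintaining for each partially-pinned partition function an angular constraint of half-width roughly $\alpha/2$, and the expression $\sin(\alpha/2)\cos(\Delta\alpha/2)$ arises exactly from balancing the perturbation radius against the worst-case angular spread after multiplying $\Delta$ edge factors. Your remark that the SAW-tree reduction is unavailable for general $k\times k$ matrices is also correct and is precisely why the paper's Theorem~\ref{thm:ising:zero-free} (which exploits the Mobius identity~\eqref{eq:hbeta:property} specific to the Ising recursion) achieves a strictly larger constant $\varepsilon_\Delta > \delta_\Delta$. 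If you want to actually carry out the proof, the point that needs more care than your sketch suggests is the precise formulation of the inductive hypothesis: one does not directly control the full conditional partition functions, but rather certain ratios (or, in Barvinok's formulation, one works with a carefully chosen family of ``restricted'' sums and shows they all lie in a common sector), and getting the bookkeeping right is where most of the work lies.
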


Theorem~\ref{thm:barvinok} can be naively translated to the Ising model by considering matrices of the form
\begin{equation*}
	\left[
	\begin{matrix}
		\beta & 1 \\
		1       & \beta
	\end{matrix}
	\right].
\end{equation*}
For those matrices, Theorem~\ref{thm:barvinok} says that $\ising(G, \beta) \ne 0$ when $|1 - \beta| \le \delta_{\Delta}$. One can obtain a stronger result for the Ising model if we apply \eqref{eq:homo} together with Theorem~\ref{thm:barvinok}.

\begin{corollary} \label{cor:region-barvinok}
	Let $\Delta$ be a positive integer, let $\delta_{\Delta}$ as in \eqref{eq:delta} and let 
	\begin{equation*}
		\beta \in \bigcup_{a \in \overline{B}(1, \delta_{\Delta})} \frac{1}{a} \overline{B}(1, \delta_{\Delta}).
	\end{equation*}
Then $\ising(G, \beta) \ne 0$ for any graph $G$ with maximum degree at most $\Delta$. 
\end{corollary}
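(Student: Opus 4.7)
The plan is to reduce to Theorem~\ref{thm:barvinok} via the identity \eqref{eq:homo}, which relates the graph homomorphism partition function (for a $2\times 2$ symmetric matrix with equal diagonal entries) to the Ising partition function. The key observation is that the set described in the statement is precisely the set of ratios $\beta = b/a$ with $a, b \in \overline{B}(1, \delta_\Delta)$, and any such ratio arises from a matrix whose entries all lie in the disk $\overline{B}(1, \delta_\Delta)$ to which Theorem~\ref{thm:barvinok} applies.

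In more detail, given $\beta$ in the specified union, choose $a \in \overline{B}(1, \delta_\Delta)$ and $b \in \overline{B}(1, \delta_\Delta)$ with $\beta = b/a$. First I would note that $a \ne 0$: since $\delta_\Delta = \max \{\sin(\alpha/2)\cos(\Delta\alpha/2) : 0 < \alpha < 2\pi/(3\Delta)\} < 1$, any element of $\overline{B}(1, \delta_\Delta)$ has positive real part and in particular is nonzero. Next, I would form the symmetric matrix
\begin{equation*}
  A = \begin{pmatrix} b & a \\ a & b \end{pmatrix},
\end{equation*}
whose entries satisfy $|1 - a_{ij}| \le \delta_\Delta$ for all $i,j \in \{1,2\}$. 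By \eqref{eq:homo}, for any graph $G = (V, E)$ we have
\begin{equation*}
  \hom(G; A) = a^{|E|} \, \ising\!\left(G; b/a\right) = a^{|E|} \, \ising(G; \beta).
\end{equation*}

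Now I would invoke Theorem~\ref{thm:barvinok}: for $G$ of maximum degree at most $\Delta$, the hypothesis $|1 - a_{ij}| \le \delta_\Delta$ yields $\hom(G; A) \ne 0$. Since $a^{|E|} \ne 0$, we conclude $\ising(G; \beta) \ne 0$, as required.

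There is essentially no obstacle in this proof; it is purely a repackaging step. The only mild point to verify is that the choice of $a$ (as opposed to just applying Theorem~\ref{thm:barvinok} to the matrix with off-diagonal entry $1$) is what extends the zero-free region beyond $\overline{B}(1, \delta_\Delta)$: scaling the matrix by $1/a$ is invisible to $\hom(G; A)$ up to the nonvanishing factor $a^{|E|}$, but it rescales the Ising parameter from $b$ to $b/a$, yielding the larger set $\bigcup_{a \in \overline{B}(1, \delta_\Delta)} \tfrac{1}{a}\overline{B}(1, \delta_\Delta)$.
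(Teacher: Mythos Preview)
Your proof is correct and follows essentially the same approach as the paper: write $\beta = b/a$ with $a,b \in \overline{B}(1,\delta_\Delta)$, form the $2\times 2$ symmetric matrix with diagonal $b$ and off-diagonal $a$, apply Theorem~\ref{thm:barvinok} to get $\hom(G;A)\ne 0$, and then use \eqref{eq:homo} together with $a\ne 0$ to conclude $\ising(G;\beta)\ne 0$. Your added justification that $\delta_\Delta<1$ (hence $a\ne 0$) is a nice explicit detail that the paper leaves implicit.
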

\begin{proof}
	We  can write $\beta = a_{11} / a_{12}$ for $a_{11}, a_{12}\in B(1, \delta_\Delta)$. We consider the matrix
	\begin{equation*}
		A = \left[
		\begin{matrix}
			a_{11} & a_{12} \\
			a_{12}      & a_{11}
		\end{matrix}
		\right].
	\end{equation*}
	 By \eqref{eq:homo} and Theorem~\ref{thm:barvinok} we have $\ising(G, \beta) = a_{12}^{-\lvert E \rvert } \hom(G; A) \ne 0$ for any graph $G = (V, E)$ with maximum degree at most $\Delta$.
\end{proof}

The case $a = 1$ is the naive application of Theorem~\ref{thm:barvinok} mentioned after Theorem~\ref{thm:barvinok}. Taking $a = 1/\sqrt{\beta}$ in Corollary~\ref{cor:region-barvinok} gives the following corollary that can be found in the work of Mann and Bremner \cite{Mann2019}.

\begin{corollary}[{\cite[Corollary 7]{Mann2019}}]
	\label{cor:mann-bremner}
		Let $\Delta$ be a positive integer, let $\delta_{\Delta}$ as in \eqref{eq:delta} and let $\beta \in \mathbb{C}$ such that $|1 - 1/\sqrt{\beta}| \le \delta_{\Delta}$ and $|1 - \sqrt{\beta}|\le \delta_{\Delta}$. Then $\ising(G, \beta) \ne 0$ for any graph $G$ with maximum degree at most $\Delta$. 
\end{corollary}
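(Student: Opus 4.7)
The plan is to read Corollary~\ref{cor:mann-bremner} as an immediate consequence of Corollary~\ref{cor:region-barvinok}, applied to the symmetric decomposition $\beta = \sqrt{\beta}\cdot\sqrt{\beta}$. The two hypotheses $|1-\sqrt{\beta}|\le\delta_\Delta$ and $|1-1/\sqrt{\beta}|\le\delta_\Delta$ implicitly fix a branch of the square root under which both values lie in $\overline{B}(1,\delta_\Delta)$, and I will work with that branch throughout.

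Concretely, I would set $a = 1/\sqrt{\beta}$. The first hypothesis is exactly the statement $a \in \overline{B}(1,\delta_\Delta)$. Next, observe that $\beta \cdot a = \sqrt{\beta}$, which by the second hypothesis also lies in $\overline{B}(1,\delta_\Delta)$. Rearranging this gives $\beta \in \tfrac{1}{a}\,\overline{B}(1,\delta_\Delta)$, so $\beta$ belongs to the union of sets appearing in the statement of Corollary~\ref{cor:region-barvinok}. Applying that corollary then yields $\ising(G;\beta)\ne 0$ for every graph $G$ of maximum degree at most $\Delta$.

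There is essentially no obstacle in this derivation; it is a one-line invocation of the previous corollary with a convenient choice of $a$. The only subtlety, which is also the reason that the pair of hypotheses is phrased the way it is, is the choice of branch of $\sqrt{\beta}$, and the hypotheses are exactly what is needed to pin this branch down. Conceptually, the factorisation $\beta = \sqrt{\beta}/(1/\sqrt{\beta})$ is the most balanced way to write $\beta$ as a ratio of two complex numbers, and therefore minimises the worst-case distance of the two factors from~$1$; this is precisely why the symmetric choice $a = 1/\sqrt{\beta}$ improves on the naive application of Theorem~\ref{thm:barvinok} corresponding to the factorisation $\beta = \beta/1$.
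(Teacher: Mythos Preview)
Your proof is correct and matches the paper's own argument exactly: the paper's proof is the single sentence ``This is a particular case of Corollary~\ref{cor:region-barvinok} where $a$ is set to $1/\sqrt{\beta}$,'' which is precisely what you do. Your additional remarks on the branch of the square root and the intuition for the symmetric factorisation are sound but go beyond what the paper records.
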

\begin{proof}
	This is a particular case of Corollary~\ref{cor:region-barvinok} where $a$ is set to $ 1/\sqrt{\beta}$.
\end{proof}
In Lemma~\ref{lem:region-R} we show that the sets $\bigcup_{a \in \overline{B}(1, \delta)} \overline{B}(1, \delta) / a$ and $\mathcal{R}(\delta)$ are related.

\begin{lemma}\label{lem:region-R}
	For any $\delta \in (0,1)$, we have
	\begin{equation*}	
		\left[\frac{1-\delta}{1+\delta}, \frac{1+\delta}{1-\delta}\right] \subseteq \bigcup_{a \in \overline{B}(1, \delta)} \frac{1}{a} \overline{B}\left(1, \delta\right) \subseteq \mathcal{R}\left(\delta \right).
	\end{equation*}
\end{lemma}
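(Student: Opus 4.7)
Both inclusions reduce to geometric questions about pairs $(a, b) \in \overline{B}(1, \delta) \times \overline{B}(1, \delta)$ via the algebraic identity
\begin{equation*}
\phi(\beta) \;=\; \frac{\beta - 1}{\beta + 1} \;=\; \frac{b - a}{b + a} \qquad \text{whenever} \qquad \beta = b/a,
\end{equation*}
where $\phi$ is the Mobius map from Proposition~\ref{prop:R}. This identity is a direct parallel of \eqref{eq:hbeta:property} and will be the workhorse of the argument: membership in $\mathcal{R}(\delta)$ for $\beta = b/a$ becomes a direct inequality for the pair $(a, b)$.

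For the first inclusion I would exhibit an explicit witness. Given $x$ in the real interval $[(1-\delta)/(1+\delta),\,(1+\delta)/(1-\delta)]$, take the symmetric choice
\begin{equation*}
a \;=\; \frac{2}{x + 1}, \qquad b \;=\; \frac{2x}{x + 1}.
\end{equation*}
Then $b/a = x$ by construction, and $|a - 1| = |b - 1| = |x - 1|/(x + 1) = |\phi(x)|$. By Proposition~\ref{prop:R}, the hypothesis on $x$ is precisely the condition $|\phi(x)| \le \delta$, so $a, b \in \overline{B}(1, \delta)$, and thus $x \in (1/a)\,\overline{B}(1, \delta)$, which shows the left inclusion.

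For the second inclusion, let $a, b \in \overline{B}(1, \delta)$ and write $a = 1 + \alpha$, $b = 1 + \beta'$ with $|\alpha|, |\beta'| \le \delta$. The identity above gives
\begin{equation*}
\phi(b/a) \;=\; \frac{\beta' - \alpha}{2 + \alpha + \beta'},
\end{equation*}
and the goal reduces to $|\beta' - \alpha| \le \delta\,|2 + \alpha + \beta'|$. My plan is to pass to the sum and difference $s = (\alpha + \beta')/2$ and $d = (\beta' - \alpha)/2$: the ball constraints $|s \pm d| \le \delta$ translate (by the parallelogram identity) into the coupled quadratic constraint $|s|^2 + |d|^2 \le \delta^2$, while the target inequality becomes $|d| \le \delta\,|1 + s|$. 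The rotational symmetry $(s, d) \mapsto (e^{i\theta} s, e^{i\theta} d)$ lets me assume $s \in [-\delta, \delta]$ is real, reducing the task to a one-variable extremal analysis of the function $(\delta^2 - s^2)/|1 + s|^2$ on $[-\delta, \delta]$.

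The main obstacle is this second inclusion. A naive triangle-inequality estimate yields only the weaker bound $|\phi(\beta)| \le \delta/(1-\delta)$, so the proof has to genuinely exploit the coupling $|s|^2 + |d|^2 \le \delta^2$ (arising from the two simultaneous constraints $|s \pm d| \le \delta$) rather than treating $|\beta' - \alpha|$ and $|2 + \alpha + \beta'|$ independently. The symmetry reduction to real $s$ followed by the one-variable optimisation is where the real work lies, and care is needed to handle the worst case where $s$ is negative (so that $|1+s|$ is small).
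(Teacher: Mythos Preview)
Your argument for the first inclusion is correct and cleaner than the paper's (which covers the interval with the two discs $\tfrac{1}{1\pm\delta}\,\overline B(1,\delta)$).

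For the second inclusion your reduction is essentially right, but carrying it through shows the inclusion is \emph{false}. When $s$ is real and $d$ is purely imaginary one has $|s+d|=|s-d|$, so the parallelogram bound $|s|^2+|d|^2\le\delta^2$ is tight there and nothing is lost by passing to it. The one-variable problem you arrive at asks, for $s=-u$ with $u\in(0,\delta)$, whether $\delta^2-u^2\le\delta^2(1-u)^2$; this rearranges to $u\ge 2\delta^2/(1+\delta^2)$ and fails for all small $u>0$. Concretely, for $\delta=\tfrac12$ take $a=\tfrac34-\tfrac{\sqrt3}{4}i$ and $b=\bar a=\tfrac34+\tfrac{\sqrt3}{4}i$: both lie in $\overline B(1,\tfrac12)$, yet $\beta=b/a=e^{i\pi/3}$ satisfies
\[
\left|\frac{\beta-1}{\beta+1}\right|=\frac{|b-a|}{|b+a|}=\frac{\sqrt3/2}{3/2}=\frac{1}{\sqrt3}>\frac12,
\]
so $\beta\notin\mathcal R(\tfrac12)$. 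The second inclusion is therefore false as stated, and no completion of your outline (or any other) can succeed.

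The paper takes a different route: it parametrises the boundary case $a=1+\delta e^{i\theta}$, $b=1+\delta e^{i\tau}$, delegates the resulting two-variable inequality to a \texttt{Mathematica} \texttt{Resolve} call, and then extends to interior $a$ by a convexity argument. That \texttt{Resolve} call, however, uses a tangent-half-angle substitution with the parameter restricted to $[-1,1]$, which traces only the half of the unit circle with non-negative imaginary part; the counterexample above has $e^{i\theta}$ on the other half and is therefore not checked. (The paper's downstream comparison survives: already the crude triangle-inequality bound $\bigcup_a\tfrac1a\overline B(1,\delta)\subseteq\mathcal R\bigl(\delta/(1-\delta)\bigr)$ places the regions of Corollaries~\ref{cor:region-barvinok} and~\ref{cor:mann-bremner} inside $\mathcal R(\varepsilon_\Delta)$.)
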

\begin{proof}
	The first inclusion follows from the fact that 
	\begin{equation*}	
		\left[\frac{1-\delta}{1+\delta}, \frac{1+\delta}{1-\delta}\right] \subseteq \frac{1}{1+\delta} \overline{B}\left(1, \delta\right) + \frac{1}{1-\delta} \overline{B}\left(1, \delta\right).
	\end{equation*}
	In the rest of the proof we focus on the second inclusion. First, let us consider $a$ of the form $a = 1 + \delta e^{i \theta}$ for some $\theta \in [0, 2\pi)$. We show that $\overline{B}(1, \delta)/a  \subseteq \mathcal{R}(\delta)$. Note that $\overline{B}(1, \delta)/a = \overline{B}(1/a, \delta/|a|)$. Since $\overline{B}(1, \delta)/a$ and $\mathcal{R}(\delta)$ are convex, we only have to show that the border of $\overline{B}(1, \delta)/a$ is contained in  $\mathcal{R}(\delta)$. Let $\beta$ be in the border of $\overline{B}(1, \delta)/a$. We can write $\beta = (1 + \delta e^{\tau i}) / a  = (1 + \delta e^{\tau i}) / (1 + \delta e^{\theta i})$ for some $\tau \in [0, 2\pi)$. We have
	\begin{equation} \label{eq:region-R}
		\frac{\beta - 1}{\beta+1} = \delta \frac{e^{\tau i}-e^{\theta i}}{2 + \delta(e^{\tau i}+e^{\theta i})}.
	\end{equation}
	The norm of the right hand size of \eqref{eq:region-R} is bounded by $\delta$. This can be shown using \texttt{Mathematica} (see Appendix~\ref{sec:B} for the code). Hence, $\beta$ is in $\mathcal{R}(\delta)$ as we wanted. 
	
	Now we consider the case when $a = 1 + r e^{i \theta}$ for some $r \in (0, \delta)$. Let $f(z) = 1/z$ and $\Omega = f(\overline{B}(1, \delta))$. We claim that the set $f(\overline{B}(1, \delta))$ is convex. Let us finish the proof assuming this claim. The map $f$ maps the border of $\overline{B}(1, \delta)$ to the border of $f(\overline{B}(1, \delta))$. Thus, we can write $1/a = \lambda f(a_1) + (1-\lambda) f(a_2)$ for some $\lambda \in (0,1)$ and $a_1$ and $a_2$ in the circle of centre $1$ and radius $\delta$. We obtain,
	\begin{equation*}
		\frac{1}{a} \overline{B}\left(1, \delta\right)  = \lambda \frac{1}{a_1} \overline{B}\left(1, \delta\right) + (1-\lambda) \frac{1}{a_2} \overline{B}\left(1, \delta\right),
	\end{equation*}
	which is contained in $\mathcal{R}(\delta)$ due to the convexity of $\mathcal{R}(\delta)$ (Proposition~\ref{prop:R}, $\mathcal{R}(\delta)$  is a closed disk) and the fact that $\overline{B}(1, \delta)/a_1$ and $\overline{B}(1, \delta)/a_2$ are contained in $\mathcal{R}(\delta)$ as we argued at the beginning of this proof.
	
	Finally we prove that $f(\overline{B}(1, \delta))$ is a convex set.   The map $f(z) = 1/z$ is a Mobius map, so it sends lines and circles to lines and circles, see Section~\ref{sec:pre:cd} on rational maps. The points $f(1-\delta), f(1+\delta), f(1+ i\delta)$ are not aligned so $f$ sends the circle of center $1$ and radius $\delta$ to a circle determined by the points $1/(1-\delta), 1/(1+\delta)$ and $1/(1+i \delta)$.  Note that $f(1) = 1$ is in the disk determined by this circle, so $f(\overline{B}(1, \delta))$ is a closed disk and, in particular, convex.
\end{proof}

As a consequence of Lemma~\ref{lem:region-R}, the non-zero regions of the partition function of the Ising model given by Corollaries~\ref{cor:region-barvinok} and~\ref{cor:mann-bremner}  are contained in $\mathcal{R}(\delta_{\Delta})$, where $\delta_\Delta$ is as in \eqref{eq:delta}.  Recall that the zero-free region given in Theorem~\ref{thm:ising:zero-free} is $\mathcal{R}(\varepsilon_{\Delta})$, where $\varepsilon_{\Delta} = \tan(\pi/(4 \Delta - 4))$. By the definition of $\mathcal{R}(\delta)$ (see Definition~\ref{def:R}), we have $\mathcal{R}(\delta_{\Delta}) \subseteq \mathcal{R}(\varepsilon_{\Delta})$ if and only if $\delta_\Delta \le \varepsilon_{\Delta}$. Hence, now we compare  $\delta_\Delta$ and $\varepsilon_{\Delta}$. Figure~\ref{fig:epsilon} shows that our $\varepsilon_\Delta$ is significantly larger than $\delta_{\Delta}$, so Theorem~\ref{thm:ising:zero-free} improves the results of Barvinok, Mann and Bremner (Corollaries~\ref{cor:region-barvinok} and~\ref{cor:mann-bremner}) considerably, particularly for the case $\Delta = 3$. 
See also Figure~\ref{fig:zero-free}.
The limit of $\varepsilon_{\Delta} (\Delta -1)$ is $\pi / 4=0.785...$, whereas we have  numerically checked that  $\delta_\Delta (\Delta -1)$ tends to 
$0.561...$. Thus, our result is stronger for all~$\Delta$, and in the limit as $\Delta\to \infty$.

\pgfplotsset{width=0.7\textwidth,height=0.4\textwidth}
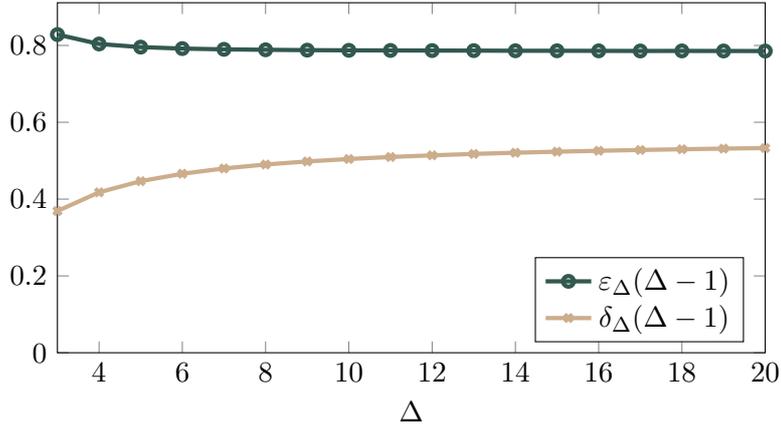
\begin{figure}[H]
	\centering
	\begin{tikzpicture}
		\begin{axis}[ 
			xlabel={$\Delta$},
			legend pos=south east,
			legend entries={$\varepsilon_\Delta (\Delta - 1)$,$\delta_\Delta (\Delta - 1)$},
			xmin=3, xmax=20,
			ymin=0, 
			samples at={3,...,20}
			] 
			\addplot[ultra thick, color=darkgreen, mark=o] {tan(deg(pi/(4*(x-1)))) * (x-1)}; 
			\addplot[ultra thick, color=darkbrown,mark=x] coordinates {
				(3, 0.369009) 
				(4, 0.417603) 
				(5, 0.446673) 
				(6, 0.465984) 
				(7, 0.479732) 
				(8, 0.490015) 
				(9, 0.497994) 
				(10, 0.504365) 
				(11, 0.509568) 
				(12, 0.513898) 
				(13, 0.517557) 
				(14, 0.52069) 
				(15, 0.523403) 
				(16, 0.525774) 
				(17, 0.527865) 
				(18, 0.529723) 
				(19, 0.531383) 
				(20, 0.532877)
			};
		\end{axis}
	\end{tikzpicture}
	\caption{Plot of the quantities $\varepsilon_\Delta (\Delta - 1)$  and $\delta_\Delta (\Delta - 1)$.}
	\label{fig:epsilon}
\end{figure}

When the edge interaction~$\beta$ is  real,  the partition function of the 
anti-ferromagnetic  Ising model 
(corresponding to the case $\beta<1$)
has an FPTAS when  $\beta$ is in the uniqueness region of the infinite $\Delta$-regular tree \cite{Sinclair2014}. This uniqueness region turns out to be the interval $((\Delta -2) / \Delta, \Delta/ (\Delta -2))$. When $\beta > 1$ (corresponding to the ferromagnetic Ising model) the partition function has an FPRAS on arbitrary graphs (with no restrictions on the degree) by the work of Jerrum and Sinclair \cite{Jerrum1993}. However, in the case of the anti-ferromagnetic Ising model ($\beta \in (0,1)$) this uniqueness/non-uniqueness phase transition is also a computational transition for the complexity of approximating the partition function of the Ising model: unless $\mathsf{RP} = \mathsf{NP}$, for all $\Delta \ge 3$, there is no FPRAS for approximating the partition function on graphs of maximum degree $\Delta$ when $\beta \in (0, (\Delta-2)/\Delta)$ \cite{Galanis2016}. Interestingly, the uniqueness interval $((\Delta -2) / \Delta, \Delta/ (\Delta -2))$ is contained in a complex zero-free region of the partition function of the Ising model.

 \begin{theorem}[{\cite[Theorem 1.2]{Liu2019Fisher}}] \label{thm:liu}
   Let $\Delta$ be an integer with $\Delta \ge 3$. For any $\beta \in ((\Delta-2) / \Delta, \Delta/ (\Delta-2))$, there exists a $\delta > 0$ such that for all $\beta' \in \mathbb{C}$ with $|\beta' - \beta| < \delta$, we have $\ising(G; \beta') \ne 0$ for any graph $G$ with maximum degree at most $\Delta$.
 \end{theorem}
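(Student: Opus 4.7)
The plan is to apply the self-avoiding walk tree construction of Proposition~\ref{prop:tsaw} to reduce the problem to trees: to show $\ising(G; \beta') \ne 0$ for every graph $G$ of maximum degree at most $\Delta$, it suffices to show $\ising(T; \beta') \ne 0$ for every rooted tree $T$ (possibly with pinned leaves) in which the root has at most $\Delta$ children and every other internal node has at most $d := \Delta - 1$ children. Following the strategy used in the proof of Theorem~\ref{thm:ising:zero-free}, I would factor $\ising(T; \beta') = Z_v^0(T; \beta')\bigl(1 + R(T, v; \beta')\bigr)$ and argue separately that $Z_v^0(T; \beta') \ne 0$ and, more substantially, that $R(T, v; \beta') \ne -1$.

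The crux is to produce a forward-invariant region for the tree recursion. Recall that $R(T, v; \beta') = \prod_{j=1}^{k} h_{\beta'}(r_j)$ with $r_j = R(T_j, v_j; \beta')$ and $k \le \Delta$, and internal ratios obey the same rule with arity at most $d$. The goal is to construct, for each real $\beta$ in the uniqueness interval, a family of regions $U_{\beta'} \subseteq \mathbb{C}$ depending continuously on $\beta'$ in a small disk $B(\beta, \delta)$, satisfying (a) $U_{\beta'}$ contains the base ratios $\{0, 1, \infty\}$ corresponding to pinned and unpinned leaves; (b) $\prod_{j=1}^{k} h_{\beta'}(z_j) \in U_{\beta'}$ whenever $z_1, \ldots, z_k \in U_{\beta'}$ and $k \le \Delta$; and (c) $-1 \notin U_{\beta'}$. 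Any such family gives $R(T, v; \beta') \in U_{\beta'}$ by induction on the height of $T$, hence $R(T, v; \beta') \ne -1$.

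To construct $U_{\beta'}$, I would exploit that $\beta$ lies in the uniqueness regime of the $\Delta$-regular tree. At the real value $\beta$ the recursion $F_{\beta, d}$ has an attracting fixed point $z^* > 0$, and a standard contraction technology (in the spirit of Sinclair, Srivastava, Thurley and Yin) provides a potential function $\Psi$ under which $F_{\beta, d}$ is a strict contraction on a real neighbourhood of $z^*$; since $z^* > 0$, this invariant neighbourhood is bounded away from $-1$. The main analytic step is to upgrade this to a complex contraction: compute the Jacobian of $F_{\beta, k}$ at $z^*$ in the $\Psi$-metric and verify strict contraction for every arity $k \le \Delta$ (the root contributes one extra factor, so the case $k = \Delta$ requires a separate check and not just $k \le d$). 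One then invokes continuity of both $F_{\beta', k}$ and $\Psi$ in $\beta'$ to transport the contraction and the invariant neighbourhood to a small complex disk around $\beta$, and finally verifies that $U_{\beta'}$ remains closed under $F_{\beta', k}$ and avoids $-1$. The companion non-vanishing statement $Z_v^0(T; \beta') \ne 0$ follows by induction on the height of $T$, using that $\mathrm{Re}(\beta' + R(T_j, v_j; \beta')) > 0$ for $\beta'$ sufficiently close to the positive real interval.

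The hardest part is the construction of the contractive potential $\Psi$ and the verification that it yields strict contraction of $F_{\beta, k}$ for all arities $k \le \Delta$; this is the ingredient that forces the condition $\beta \in ((\Delta-2)/\Delta, \Delta/(\Delta-2))$ and that distinguishes this theorem from the simpler argument-sum approach used to prove Theorem~\ref{thm:ising:zero-free}. Once the contraction at the real fixed point is established, the passage to a complex neighbourhood of $\beta$ is a routine continuity argument, and the conclusion that $\ising(T; \beta') \ne 0$ (and hence $\ising(G; \beta') \ne 0$) follows as in the proof of Theorem~\ref{thm:ising:zero-free}.
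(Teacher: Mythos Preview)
This theorem is not proved in the paper: it is quoted from \cite[Theorem~1.2]{Liu2019Fisher} and used as a black box (see the discussion around Theorem~\ref{thm:liu} and Figure~\ref{fig:results}). So there is no proof in the paper to compare your proposal against.

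That said, your sketch is a faithful outline of the Liu--Sinclair--Srivastava argument: SAW-tree reduction, a contraction analysis of the tree recursion via a suitable potential around the positive real fixed point (which is where the uniqueness condition $\beta \in ((\Delta-2)/\Delta,\Delta/(\Delta-2))$ enters), and a continuity/compactness argument to thicken the real invariant interval into a complex invariant region for all $\beta'$ near $\beta$. One point to be careful about: your final sentence handles $Z_v^0 \ne 0$ by asserting $\mathrm{Re}(\beta' + R(T_j,v_j;\beta')) > 0$, but this is not an automatic consequence of the contraction argument---the invariant region $U_{\beta'}$ you build need not sit in the right half-plane a priori. In the actual proof one arranges the complex invariant region to be a small perturbation of a real interval in $(0,\infty)$, so that both $-1 \notin U_{\beta'}$ and $-\beta' \notin U_{\beta'}$ hold simultaneously; you should make that explicit rather than treating it as a separate inductive afterthought.
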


The argument given in the proof \cite[Theorem 1.2]{Liu2019Fisher} uses continuity to prove the existence of $\delta > 0$ as in the statement. Hence, the zero-free region is not given explicitly. We note that Theorem~\ref{thm:liu} cannot be extended to include more edge interactions $\beta \in (0, (\Delta-2)/\Delta)$ unless   $\mathsf{RP} = \mathsf{NP}$ as, by the work of Patel and Regts \cite{Patel2017}, this would imply easiness of approximating $\ising(G; \beta)$ on graphs with maximum degree $\Delta$.

A recent paper of Barvinok and Barvinok gives another region where $\ising(G; \beta)$ is non-zero \cite{Barvinok2020}. This result actually applies to the multivariate Ising model with a field but it can be stated for our particular case as follows.
\begin{theorem}[{\cite[Theorem 1.1]{Barvinok2020}}] \label{thm:barvinok:ising}
  Let $\Delta$ be a positive integer with $\Delta \ge 3$.  Let $a \in \mathbb{C}$ and let $\beta = e^{2a}$. Suppose that for some $0 < \delta < 1$ we have $|\mathrm{Re}(a)| < (1-\delta)/\Delta$ and $| \mathrm{Im}(a) | \le \delta^2/ (10 \Delta)$. Then $\ising(G; \beta) \ne 0$ for any graph $G$ with maximum degree at most $\Delta$.   
 \end{theorem}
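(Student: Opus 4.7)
The plan is to use the Godsil--Weitz tree of self-avoiding walks (Proposition~\ref{prop:tsaw}) to reduce the problem to showing $\ising(T;\beta)\ne 0$ for every tree $T$ of maximum degree at most $\Delta$ with possibly pinned leaves, where $\beta = e^{2a}$ and $a$ satisfies the stated bounds. On trees, I would then analyse the ratio recursion from Definition~\ref{def:ratios}, following the template used in the proof of Theorem~\ref{thm:ising:zero-free} but with a parametrization tailored to the exponential form $\beta = e^{2a}$.

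The key change of variables is $y = \tanh(a) = (\beta-1)/(\beta+1)$ and, for each node, $x = (R-1)/(R+1)$. Note that $\ising(T;\beta)=0$ forces $R=-1$, i.e.\ $x=\infty$, so the task becomes to show that $x$ stays bounded at every node of $T$. By identity~\eqref{eq:hbeta:property}, the Möbius factor $h_\beta$ becomes pure multiplication on this scale: $(h_\beta(r_j)-1)/(h_\beta(r_j)+1) = y\,x_j$. Since $R=\prod_{j=1}^k h_\beta(r_j)$, a straightforward calculation gives the recursion
\begin{equation*}
x \;=\; F_k(y x_1,\dots,y x_k), \qquad F_k(t_1,\dots,t_k) \;=\; \frac{\prod_j(1+t_j) - \prod_j(1-t_j)}{\prod_j(1+t_j) + \prod_j(1-t_j)},
\end{equation*}
i.e.\ $F_k$ is the $k$-fold $\tanh$-addition. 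At leaves (pinned or unpinned) one checks $x\in\{0,+1,-1\}$, which is real and bounded.

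The inductive invariant I would carry is that, at every non-root node, $x$ lies in a thin axis-aligned region $\mathcal{U} = \{x\in\mathbb{C} : |\Re x|\le \mu,\ |\Im x|\le \nu\}$ with $\mu$ slightly less than $1$ and $\nu$ of order $\delta^2$. The budgets on $a$ translate as $|\Re y|\lesssim (1-\delta)/\Delta$ and $|\Im y|\le \delta^2/(10\Delta)$ via the estimate $|\tanh a|\le |a|\cosh|a|$, so each $y x_j$ is small. The heart of the argument is then to show that $F_k$ maps $(y\cdot \mathcal{U})^k$ into $\mathcal{U}$ for every $k \le \Delta$. To first order $F_k(t_1,\dots,t_k)\approx \sum_j t_j$, which accumulates a real error of at most $\Delta\mu\cdot(1-\delta)/\Delta = \mu(1-\delta)<\mu$ and an imaginary error of at most $\Delta\mu\cdot\delta^2/(10\Delta)=\mu\delta^2/10<\nu$, both within $\mathcal{U}$; the quadratic and higher-order corrections coming from the denominator of $F_k$ are $O(1/\Delta^2)$ and can be absorbed into the slack left by the constant $1/10$.

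The main obstacle I anticipate is two-fold. First, $F_k$ is nonlinear, and although it is roughly a contraction when the $|y x_j|$ are small, tracking the exact shape of $F_k(\mathcal{U}\cdot y)^k$ is delicate; I expect that a sharper invariant—perhaps a disc or an ellipse rather than a rectangle—may be needed to close the induction cleanly and explain the quadratic-in-$\delta$ imaginary bound (the factor $\delta^2$ suggests that the imaginary amplification is controlled by the product $|\Re y|\cdot|\Im y|$-type corrections in $F_k$). Second, the root can have up to $\Delta$ children rather than $\Delta-1$, so the invariant for internal nodes must be strict enough to absorb one extra factor; this will likely force the choice $\mu=1-c\delta$ with $c$ absolute, consistently with the $(1-\delta)/\Delta$ bound on $|\Re a|$. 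Once the invariant is verified, $x\in\mathcal{U}$ at the root excludes $x=\infty$, hence $R\ne -1$, and the non-vanishing of $\ising(T;\beta)$ follows as in the proof of Theorem~\ref{thm:ising:zero-free}.
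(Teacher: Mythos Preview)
The paper does not prove Theorem~\ref{thm:barvinok:ising}. It is quoted verbatim from \cite{Barvinok2020} in Section~\ref{sec:comparing}, where it appears only so that its zero-free region can be plotted in Figure~\ref{fig:zero-free} and compared with the paper's own Theorem~\ref{thm:ising:zero-free}. There is therefore no proof in this paper for your proposal to be compared against.

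On the substance of your sketch: the SAW-tree reduction together with the $\tanh$-addition recursion is a natural route, and it is essentially the template the present paper uses for its own Theorem~\ref{thm:ising:zero-free}. The original argument of Barvinok and Barvinok in \cite{Barvinok2020} is different: it works directly with the multivariate Ising partition function with an external field and runs a vertex-by-vertex induction controlling arguments of conditional partition sums, without passing through SAW trees.

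Your proposal also contains a concrete gap in the linearised step. When you estimate the imaginary part of $x$ after one recursion, you write only the contribution $\Delta\mu\cdot\delta^{2}/(10\Delta)$ coming from $(\Im y)(\Re x_j)$ and omit the cross-term $(\Re y)(\Im x_j)$, which contributes roughly $(1-\delta)\nu$ after summing over up to $\Delta$ children. The actual first-order bound is therefore $(1-\delta)\nu + \mu\delta^{2}/10$, and requiring this to be at most $\nu$ forces $\nu \gtrsim \mu\delta/10$, not $\nu$ of order $\delta^{2}$ as you posit. This is precisely the interaction you later flag as an ``obstacle'', but it already invalidates the first-order closure you present, so the invariant as stated does not propagate even before the nonlinear corrections are considered.
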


Generally Theorems~\ref{thm:ising:zero-free} and~\ref{thm:barvinok:ising} are incomparable for $\Delta$ large enough, both of them cover edge interactions that escape from the other result. However, for $\Delta = 3$ Barvinok's region is contained in 
the region $\mathcal{R}(\tan(\pi / (4\Delta - 4)))$ covered by
Theorem~\ref{thm:ising:zero-free}. This is depicted in Figure~\ref{fig:zero-free}, where all the regions introduced in this section have been plotted for $\Delta = 3$.

\section{Proof of our hardness results} \label{sec:hardness}

In this section we prove Theorem~\ref{thm:hardness}. Our hardness proof uses the reduction developed by Goldberg and Jerrum \cite{Goldberg2014}. Goldberg and Jerrum developed this reduction to obtain $\numP$-hardness results for determining the sign of the Tutte polynomial, which (in its random-cluster formulation) is a two-variable polynomial with variables $q$ and $y$ that includes the partition function of the Ising model as a particular case with the change of variables $q = 2$ and $y = \beta$. This reduction has been further refined in \cite{Goldberg2017} to obtain $\numP$-hardness results for the problem of approximating the norm of the Tutte polynomial. Further refinements have been obtained in \cite{Galanis2020}, where the authors give a reduction from exact evaluation of the Tutte polynomial to approximation of this polynomial with complex edge interactions. This later refinement is particularly useful when obtaining hardness results 
for restricted families of graphs 
for which exact evaluation of the Tutte polynomial remains hard. In \cite{Galanis2020} this was exploited to prove hardness of approximation for planar graphs whereas here we exploit this reduction to obtain hardness of approximation for bounded-degree graphs for the partition function of the Ising model.

In order to apply the reduction given in  \cite{Galanis2020}  there are a few technical results that we have to develop. The reduction is based on the binary search / interval shrinking technique of Goldberg and Jerrum \cite{Goldberg2014} and this requires us to be able to implement approximations of any real edge interaction efficiently. We formalised this property in Definition~\ref{def:impplane} (recall that we denote by $\algebraic$ the set of real algebraic numbers and we denote by $\algcomplex$ the set of complex algebraic numbers).

Our work shows that we can implement the complex plane in polynomial time for most pairs $(\Delta, \beta)$.   These pairs $(\Delta, \beta)$ are those where $\beta \not \in \mathbb{R}$ and $\lvert (\beta -1) / (\beta + 1) \rvert > 1 / \sqrt{\Delta-1}$, see Lemma~\ref{lem:ising:implementations}. If we could extend Lemma~\ref{lem:ising:implementations} to other pairs  $(\Delta, \beta)$, then we could automatically extend Theorem~\ref{thm:hardness} to these pairs. In other words, the limiting factor in the proof of Theorem~\ref{thm:hardness} is being able to $(\Delta, \beta)$ implement the real line. In fact, most of our work is devoted to this task. The proof of Lemma~\ref{lem:ising:implementations} heavily uses the results of \cite{Bezb} as an input. In \cite{Bezb}, the authors $(\Delta, \lambda)$ implement the complex plane in polynomial time for the independent set polynomial for most complex activities $\lambda$\footnote{Here $\lambda$ is a vertex activity of the independent set polynomial, and a graph $G$ with terminal $v$ $(\Delta, \lambda)$-implements $\lambda'$ if $\deg(v) = 1$ and $\lambda' = R(G, v; \lambda)$}. Their arguments apply results of complex dynamics in conjunction with the tree recurrence for the independent set polynomial.
It turns out that the arguments presented in \cite{Bezb} can be generalised so that they can be applied to other spin systems, and we do so in Appendix~\ref{sec:A}. We refer to \cite[Section 2]{Bezb} or our Appendix~\ref{sec:A} for a description of this complex dynamics approach.

This section is organised as follows. First, in Section~\ref{sec:pre:algebraic} we indicate how we represent algebraic numbers and how we compute with them. In Section~\ref{sec:pre:implementations} we introduce implementations formally as well as some constructions that we use in our proofs. In Section~\ref{sec:pre:cd} we state the results of complex dynamics that are used in our implementation results. In Section~\ref{sec:hardness:programs} we introduce the framework needed to implement the real line in polynomial-time (using Appendix~\ref{sec:A} as an input). In Section~\ref{sec:hardness:implementations} we use this framework to prove Lemma~\ref{lem:ising:implementations}. Then in Section~\ref{sec:hardness:reduction} we use Lemma~\ref{lem:ising:implementations} in conjunction with the reductions of \cite{Galanis2020} to prove our hardness results.

\subsection{Computing with algebraic numbers} \label{sec:pre:algebraic}

Our hardness result (Theorem~\ref{thm:hardness}) involves algebraic edge interactions. Here we specify how we represent algebraic numbers and how we perform arithmetic computations with them.  An algebraic number $z$ can be represented as its minimal polynomial $p$ and a rectangle $R$ of the complex plane such that $z$ is the only root of $p$ in $R$. We can compute the addition, subtraction, multiplication, division and conjugation of algebraic numbers in polynomial time in the length of their representations, see~\cite{Strzebonski1997} for details. As a consequence, we can also compute the real and imaginary parts of $z$ and the norm of $z$, which are algebraic numbers themselves, in polynomial time. Note that an algebraic number is $0$ if and only if its minimal polynomial is $x$, which can be easily checked in this representation. Hence, we can also determine in polynomial time whether two algebraic numbers $z_1$ and $z_2$ are equal by checking if $z_1 - z_2$ is $0$. 

When $z$ is a real algebraic number, we can simply represent it as its minimal polynomial $p$ and an interval $I$ with rational endpoints such that $z$ is the only root of $p$ in $I$. If we are given a real algebraic number $z$ with this representation, then we can approximate it as closely as we want by applying Sturm sequences and binary search~\cite{Emiris2008}. In fact, for $z_1$ and $z_2$ real algebraic numbers, Sturm sequences also allow us to check whether $z_1 \ge z_2$ in time polynomial in the length of the representations of $z_1$ and $z_2$. See~\cite{Emiris2008} for more details and complexity analysis. In some of our algorithms we have to check, for $z, x \in \algcomplex$ and $r \in \algebraic$ with $r > 0$, if $z \in B(z, x)$ or $z \in \overline{B}(z, x)$. Note that $\lvert z - x \rvert$ is a real algebraic number, and thus, we can check if $\lvert z - x \rvert  < r$ or  $\lvert z - x \rvert  = r$ in polynomial time in the sizes of $x, z$ and $r$.

\subsection{Implementing weights, series compositions and parallel compositions} \label{sec:pre:implementations}

In this section we define implementations and series and parallel compositions for the Ising model. These concepts have been used several times to obtain hardness results for partition functions, see for instance  \cite{Goldberg2014} for definitions in the context of Tutte polynomials or  \cite{Dyer2000} for definitions for the graph homomorphism partition function. Here we restrict ourselves to the partition function of the Ising model.

We will make use of the following notation. Let $H = (V, E)$ be a graph and let $s$ and $t$ be two distinct vertices of $H$. For $j, k \in \{0,1\}$ we define
\begin{align*}
  Z_{st}^{jk}(H; \beta) = \sum_{\substack{\sigma \colon V \to \{0,1\} \\ \sigma(s) = j,\, \sigma(t) = k}} \beta^{m(\sigma)} .
\end{align*}
The \emph{interaction matrix of $H$} at $(s,t)$ is the matrix
\begin{equation*}
  I_{st}(H; \beta) = \left[
    \begin{matrix}
      Z_{st}^{00}(H; \beta) & Z_{st}^{01}(H; \beta)  \\
      Z_{st}^{10}(H; \beta)  & Z_{st}^{11}(H; \beta)
    \end{matrix}
  \right].
\end{equation*}

We say that the graph $H$  $\beta$-\emph{implements} the weight $w$ if there are vertices $s$ and $t$ in $H$ such that the interaction matrix $I_{st}(H; \beta)$ is of the form
\begin{equation*}
	C
	\left[
	\begin{matrix}
		w & 1 \\
		1 & w
	\end{matrix}
	\right]
\end{equation*}
for some complex number $C$ with $C \ne 0$ or, equivalently, $Z_{st}^{01}(H; \beta)\ne 0$ and we have $Z_{st}^{11}(H; \beta) / Z_{st}^{01}(H; \beta) = w$. The point of implementations is that if we substitute an edge $e$ with weight $w$ of a graph $G$ by the graph $H$ (identifying the endings of $e$ with the vertices $s$ and $t$), the value of the partition function stays the same up to the factor $C = Z_{st}^{01}(H; \beta)$, see for instance \cite[Lemma 9]{Galanis2020}. Hence, if we have an oracle to evaluate the partition function of the Ising model at $\beta$ and we know $C$, we can use this oracle to evaluate this partition function at $w$. This idea is exploited in many hardness reductions, see \cite{Galanis2020} and the references therein. In this paper we are interested in graphs with bounded degree, so in order to use this construction while maintaining the maximum degree of the graphs involved, the vertices $s$ and $t$ should have degree $1$ in $H$. This is formalised in the following definition.
\begin{definition}  \label{def:implementation}
Let $\Delta \ge 2$ be an integer and $\beta \in \mathbb{C} \setminus \{0\}$. Let $G$ be a graph. We say that $G$ $(\Delta, \beta)$-implements the edge interaction $\beta' \in \mathbb{C}$ if $G$ has maximum degree at most $\Delta$  and distinct vertices $s$ and $t$ of degree $1$ such that $G$ $\beta$-implements $\beta'$ with the terminals $s$ and $t$.  We say that $(\Delta, \beta)$ implements the edge interaction $\beta' \in \mathbb{C}$ if there is a graph $G$  that $(\Delta, \beta)$-implements $\beta'$.  More generally, we say that $(\Delta, \beta)$ implements a set of edge interactions $S \subseteq \mathbb{C}$ if $(\Delta, \beta)$ implements $\beta'$ for any $\beta' \in S$.
\end{definition}

It is important to know that implementations are transitive, that is, if $H$ $(\Delta, \beta)$-implements the weight $w$ and $J$ $(\Delta, w)$-implements the weight $\gamma$, it is not difficult to construct a graph that $(\Delta, \beta)$-implements $\gamma$.

We now state a systematic way to implement edge interactions. Let $H_1 = (V_1, E_1)$ and $H_2 = (V_2, E_2)$ be two graphs. For each $j \in \{1,2\}$, let $s_j, t_j \in V_j$ be two distinct vertices.
\begin{enumerate}
\item The \emph{parallel composition} of $(H_1, s_1, t_1)$ and $(H_2, s_2, t_2)$ is the graph $H$  constructed by considering the union of $H_1$ and $H_2$ and identifying $s_1$ with $s_2$ and $t_1$ with $t_2$. The interaction matrix $I_{s_1 t_1}(H; y)$ is the Hadamard product (or component-wise product) of the interaction matrices  $I_{s_1 t_1}(H_1; y)$ and  $I_{s_2 t_2}(H_2; y)$.  Hence, if $(H_j, s_j, t_j)$ implements $w_j$ for $j \in \{1,2\}$, then $(H, s_1, t_1)$ implements $w = w_1 w_2$.

\item The \emph{series composition} of $(H_1, s_1, t_1)$ and $(H_2, s_2, t_2)$ is the graph $H$ constructed by considering the union of $H_1$ and $H_2$ and identifying $t_1$ with $s_2$. The interaction matrix $I_{s_1 t_2}(H; y)$ is the product of the interaction matrices  $I_{s_1 t_1}(H_1; y)$ and  $I_{s_2 t_2}(H_2; y)$. Hence, if $(H_j, s_j, t_j)$ implements $w_j$ for $j \in \{1,2\}$,  $(H, s_1, t_1)$ implements the edge interaction $w = (w_1 w_2 + 1) / (w_1 + w_2)$. Note that this operation is commutative, the series composition of $(H_1, s_1, t_1)$ and $(H_2, s_2, t_2)$ implements the same weight as the series composition of $(H_2, s_2, t_2)$ and $(H_1, s_1, t_1)$. 
\end{enumerate}

Series compositions are particularly helpful when working with graphs with bounded degree. In our constructions we usually consider the series composition of a graph $H$ that $(\Delta, \beta)$-implements a weight $w$ and a path of length $1$ with edge interaction $\beta$. This allows us to have a terminal vertex with degree $1$ in the resulting graph. This construction implements the edge interaction
\begin{equation} \label{eq:hbeta}
    h_\beta(w) = \frac{\beta w + 1}{\beta + w}.
\end{equation}
The Mobius map $h_\beta$ arises very frequently in this work and plays an important role in our arguments.

 \subsection{Iteration of complex rational maps} \label{sec:pre:cd}
 
In Appendix~\ref{sec:A} we extend the work on implementations for the independent set polynomial given in \cite{Bezb} to a more general setting so that these results can be applied to other partition functions, such as the partition function of the Ising model.   The technique developed in \cite{Bezb} uses several results from complex dynamics that we recall here. These complex dynamics results are also used in this section when implementing edge interactions for the Ising model. We gather all this material in this section. We refer to \cite{RiemannSurfaces} for an introduction to Riemann surfaces and to \cite{Beardon, Milnor2006} for an introduction to complex dynamics.

By $\widehat{\mathbb{C}} = \mathbb{C} \cup \{\infty\}$ we denote the Riemann sphere. The Riemann sphere is a metric space with the \emph{chordal metric} $d(\cdot, \cdot)$, given by
\begin{equation*}
  d(z, w) = \frac{2 \lvert z - w \rvert}{ \left( 1 + \left| z \right|^2 \right)^{1/2} \left( 1 + \left| w \right|^2 \right)^{1/2}}, \quad \text{ and } \quad d(z, \infty) = \lim_{w \to \infty} d(z, w) = \frac{2}{\left( 1 + \left| z \right|^2 \right)^{1/2}}.
\end{equation*}
The Riemann sphere is a Riemann surface, meaning that locally the Riemann sphere is homeomorphic to open subsets of $\mathbb{C}$. One can translate several results from complex analysis to Riemann surfaces. An example of such a result is the \emph{open mapping theorem}, see, for example, \cite[Theorem 2.2.2]{RiemannSurfaces}. 

\begin{proposition}[{Open mapping theorem for Riemann surfaces, \cite[Theorem 2.2.2]{RiemannSurfaces}}]\label{thm:openmapping}
  Let $X$ and $Y$ be Riemann surfaces. If $\phi \colon X \to Y$ is a non-constant holomorphic mapping, then $\phi$ is open, that is, $\phi(O)$ is an open subset of $Y$ for any open set $O \subseteq X$.
\end{proposition}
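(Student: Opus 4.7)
The plan is to reduce the statement to the classical open mapping theorem for holomorphic functions on open subsets of $\mathbb{C}$, which we may take as known. The reduction proceeds locally via charts, and the only care needed is that the local representative of $\phi$ at a point is non-constant, which is where the identity theorem for Riemann surfaces enters.

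Concretely, fix an open set $O \subseteq X$ and a point $p \in O$; it suffices to show that $\phi(p)$ is an interior point of $\phi(O)$ in $Y$. Choose a chart $(U, \varphi)$ around $p$ with $U \subseteq O$ (shrinking $U$ if necessary) and a chart $(V, \psi)$ around $\phi(p)$ with $\phi(U) \subseteq V$ (possible by continuity of $\phi$ and by shrinking $U$ further). Set $W = \varphi(U) \subseteq \mathbb{C}$ and consider the holomorphic function
\begin{equation*}
f = \psi \circ \phi \circ \varphi^{-1} \colon W \to \psi(V) \subseteq \mathbb{C}.
\end{equation*}
Assuming $f$ is non-constant on $W$, the classical open mapping theorem for holomorphic functions on open subsets of $\mathbb{C}$ tells us that $f(W)$ is open in $\mathbb{C}$. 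Since $\psi$ is a homeomorphism from $V$ onto $\psi(V)$, the set $\phi(U) = \psi^{-1}(f(W))$ is open in $Y$, and it is contained in $\phi(O)$ and contains $\phi(p)$, as required.

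The only nontrivial point is showing that $f$ is non-constant. If $f$ were constant on $W$, then $\phi$ would be constant on the non-empty open set $U \subseteq X$. The identity theorem for holomorphic maps between Riemann surfaces (which itself follows from the local chart argument together with the classical identity theorem on $\mathbb{C}$, using that $X$ is connected, as is standard in the definition of a Riemann surface) then forces $\phi$ to be constant on all of $X$, contradicting the hypothesis. Hence $f$ is non-constant on every chart neighbourhood of every point, and the argument goes through. The main subtlety is thus merely the invocation of connectedness of $X$ in the identity theorem step; everything else is routine transfer between the Riemann surface and its local chart model in $\mathbb{C}$.
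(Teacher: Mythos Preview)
Your proof is correct and is the standard argument. Note, however, that the paper does not supply its own proof of this proposition: it is stated as a citation to an external reference (Theorem~2.2.2 of the cited Riemann surfaces text), so there is no ``paper's own proof'' to compare against. Your write-up is precisely the expected reduction to the classical open mapping theorem via charts, with the identity theorem handling non-constancy of the local representative.
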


One can show that the set of holomorphic functions on the Riemann sphere is exactly the set of rational functions. A rational function of degree $d$ is a $d$-fold map on $\widehat{\mathbb{C}}$. Hence, the automorphisms on the Riemann sphere are precisely the rational functions of degree $1$. These are also known as \emph{Mobius maps} or \emph{Mobius transformations}. We use the following two properties of Mobius maps.

\begin{proposition}[{\cite[Theorem 5.7.3, part (f)]{RiemannSurfaces}}] \label{prop:mobius}
  If $C$ is a circle in $\widehat{\mathbb{C}}$ (i.e., $C$ is a circle in $\mathbb{C}$ or $C = L \cup \{\infty\}$ for some line $L$ in $\mathbb{C}$), then the image of $C$ under any Mobius map is also a circle in $\widehat{\mathbb{C}}$.
\end{proposition}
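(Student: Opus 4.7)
The plan is to use the classical decomposition of Möbius maps into elementary transformations and verify the statement on each piece. Given $M(z) = (az + b)/(cz + d)$ with $ad - bc \ne 0$, I would split into two cases. If $c = 0$, then $M$ is affine, $M(z) = (a/d)z + b/d$, which is a composition of a rotation-scaling $z \mapsto (a/d)z$ and a translation $z \mapsto z + b/d$. If $c \ne 0$, I would use the identity
\begin{equation*}
M(z) = \frac{a}{c} + \frac{bc - ad}{c^2} \cdot \frac{1}{z + d/c},
\end{equation*}
so that $M$ is the composition of a translation $z \mapsto z + d/c$, the inversion $J(z) = 1/z$, a rotation-scaling by $(bc-ad)/c^2$, and a final translation. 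Since compositions of maps that send generalized circles to generalized circles again have this property, it suffices to check the statement separately for translations, rotation-scalings, and for $J$.

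For translations and rotation-scalings the claim is elementary and geometric: these are similarities of the Euclidean plane, which send circles to circles and lines to lines, and with the standard convention $\infty \mapsto \infty$ they preserve the class of generalized circles in $\widehat{\mathbb{C}}$. So the nontrivial case is the inversion $J$. The efficient way to handle $J$ is to use the unified algebraic description of generalized circles: a set $C \subseteq \widehat{\mathbb{C}}$ is a generalized circle if and only if it is the zero set of an equation of the form
\begin{equation*}
\alpha |z|^2 + \overline{\gamma} z + \gamma \overline{z} + \delta = 0,
\end{equation*}
with $\alpha, \delta \in \mathbb{R}$, $\gamma \in \mathbb{C}$, and $|\gamma|^2 > \alpha \delta$ (plus the convention that $\infty$ belongs to the set exactly when $\alpha = 0$, corresponding to the line case). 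Substituting $z = 1/w$ and multiplying through by $|w|^2$ converts this equation into
\begin{equation*}
\delta |w|^2 + \gamma w + \overline{\gamma}\overline{w} + \alpha = 0,
\end{equation*}
which is an equation of the same shape with $\alpha$ and $\delta$ interchanged (and the non-degeneracy condition preserved). Hence $J$ maps generalized circles to generalized circles.

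I would then just note that combining the two cases shows $M$ preserves the class of generalized circles in $\widehat{\mathbb{C}}$, with the convention that $J(0) = \infty$ and $J(\infty) = 0$ are treated correctly by the above equation (a line through the origin maps to a line through the origin; a circle through the origin maps to a line; a line not through the origin maps to a circle through the origin; a circle not through the origin maps to a circle not through the origin). There is no real obstacle here: the only computational content is the substitution $z \mapsto 1/w$ in the generalized circle equation, and the bookkeeping with $\infty$ is handled automatically by the condition $\alpha = 0$. Since this is a standard textbook fact, the proposal is essentially to cite the decomposition and the one-line algebraic verification rather than redo a detailed geometric proof.
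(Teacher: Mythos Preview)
Your proof is correct and is the standard textbook argument. The paper does not actually prove this proposition: it is stated as a citation to \cite[Theorem 5.7.3, part (f)]{RiemannSurfaces} and used as a black box, so there is no paper-proof to compare against beyond noting that your decomposition-and-inversion argument is precisely the kind of proof one finds in the cited reference.
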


\begin{proposition}[{\cite[Proof of Theorem 5.8.2]{RiemannSurfaces}}] \label{prop:mobius:disk}
  Let $a \in \mathbb{C}$ with $\lvert a \rvert < 1$, $\theta \in \mathbb{R}$ and let $\phi(z) = e^{i \theta}(a z + 1) / (\overline{a} + z)$. Then the Mobius map $\phi$ fixes the circle $C(0, 1)$.
\end{proposition}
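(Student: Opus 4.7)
The goal is to show that $\phi(C(0,1)) = C(0,1)$, so the plan is to verify that $|\phi(z)| = 1$ whenever $|z| = 1$, and then upgrade set containment to set equality using the general structure of Möbius maps.

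First I would check that $\phi$ is indeed a Möbius map: its associated matrix is $\bigl(\begin{smallmatrix} a & 1 \\ 1 & \overline{a} \end{smallmatrix}\bigr)$ up to the unimodular scalar $e^{i\theta}$, and the determinant $|a|^2 - 1$ is nonzero because $|a|<1$. Next, since $|e^{i\theta}| = 1$, I would compute
\begin{equation*}
|\phi(z)|^2 = \frac{|az+1|^2}{|\overline{a}+z|^2} = \frac{(az+1)(\overline{a}\,\overline{z}+1)}{(\overline{a}+z)(a+\overline{z})} = \frac{|a|^2|z|^2 + az + \overline{a}\,\overline{z} + 1}{|a|^2 + az + \overline{a}\,\overline{z} + |z|^2}.
\end{equation*}
Plugging in $|z|^2 = 1$, the numerator and denominator coincide, so $|\phi(z)| = 1$. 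This shows $\phi(C(0,1)) \subseteq C(0,1)$.

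To upgrade this containment to equality, I would invoke Proposition~\ref{prop:mobius}: the image $\phi(C(0,1))$ is a circle in $\widehat{\mathbb{C}}$, and it is contained in the circle $C(0,1)$. Since a Möbius map is a bijection of $\widehat{\mathbb{C}}$, the image $\phi(C(0,1))$ is in bijection with $C(0,1)$, and a proper subset of a circle cannot be a circle; hence $\phi(C(0,1)) = C(0,1)$.

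I do not anticipate any real obstacle here: the computation $|az+1|^2 = |\overline{a}+z|^2$ on the unit circle is a standard identity used to build automorphisms of the disk, and the only subtlety is being explicit about why set containment on the unit circle improves to equality, which is settled by invoking the preservation of circles under Möbius maps combined with injectivity.
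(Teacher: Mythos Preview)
Your proof is correct. The paper does not supply its own proof of this proposition --- it is stated with a citation to \cite[Proof of Theorem 5.8.2]{RiemannSurfaces} --- so there is no in-paper argument to compare against; your direct computation of $|\phi(z)|$ on the unit circle followed by the circle-preservation plus injectivity argument is the standard route.
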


It is well-known that holomorphic complex maps are locally Lipschitz and this is exploited in \cite{Bezb}. Here we use a global Lipschitz property on the Riemann sphere, see Lemma~\ref{lem:lipschitz}.

\begin{lemma}[{\cite[Theorem 2.3.1]{Beardon}}] \label{lem:lipschitz}
  Let $f$ be a rational map. Then $f$ is a Lipschitz map on the Riemann sphere, that is, there is a constant $L > 0$ such that $d(f(z), f(w)) \le L d(z,w)$ for every $z,w \in \hat{\mathbb{C}}$, where $d$ is the chordal metric.
\end{lemma}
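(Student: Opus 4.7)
The plan is to use the classical notion of the \emph{spherical derivative} together with compactness of $\widehat{\mathbb{C}}$. For a rational map $f$ and a point $z \in \mathbb{C}$ with $f(z)\neq \infty$, one defines
\begin{equation*}
  f^{\#}(z) \;=\; \frac{|f'(z)|\,(1+|z|^2)}{1+|f(z)|^2},
\end{equation*}
which is precisely the infinitesimal stretching factor of $f$ with respect to the chordal metric, in the sense that $d(f(z),f(w)) = f^{\#}(z)\,d(z,w) + o(d(z,w))$ as $w\to z$. The goal is to show that $f^{\#}$ extends to a continuous function on all of $\widehat{\mathbb{C}}$, and then to upgrade the local estimate to a global Lipschitz bound using compactness.

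First I would verify continuity of $f^{\#}$ away from the poles of $f$ and from $\infty$, where it is obvious from the formula. At a pole $z_0$ of $f$ one rewrites everything in the coordinate $w = 1/f$; since $1/f$ is holomorphic and nonzero near $z_0$, a direct computation shows that $f^{\#}(z) = (1/f)^{\#}(z)$, which extends continuously through $z_0$. Similarly, to analyse behaviour at $\infty$, one uses the change of variable $\zeta = 1/z$ and the rational map $g(\zeta) = f(1/\zeta)$; an analogous identity $f^{\#}(1/\zeta) = g^{\#}(\zeta)$ exhibits a continuous extension at $\infty$. Once $f^{\#}$ is continuous on the compact space $\widehat{\mathbb{C}}$, it attains a finite maximum, say $L := \max_{z \in \widehat{\mathbb{C}}} f^{\#}(z)$.

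Having obtained the pointwise bound, I would pass from infinitesimal to global Lipschitz control by integration along geodesics. Given $z,w \in \widehat{\mathbb{C}}$, let $\gamma\colon[0,T]\to\widehat{\mathbb{C}}$ be a spherical geodesic from $z$ to $w$ parametrised by arc length, so $T = d(z,w)$. Then $f\circ\gamma$ is a rectifiable curve from $f(z)$ to $f(w)$ whose spherical length is $\int_0^T f^{\#}(\gamma(t))\,\mathrm{d}t \leq L\,T$. Since the chordal distance is bounded above by the spherical length (indeed it is proportional to the chord subtending the geodesic arc), this yields $d(f(z),f(w)) \leq L\,d(z,w)$, which is the claim.

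The main obstacle is genuinely the continuity of $f^{\#}$ at poles of $f$ and at $\infty$, because the naive formula involves $|f'(z)|$ and $|f(z)|^2$, both of which blow up there; one has to reinterpret $f^{\#}$ invariantly as the norm of the differential of $f$ between the tangent spaces equipped with the spherical metric, or equivalently verify the change-of-chart identities above. Everything else is then a straightforward compactness-plus-geodesic argument.
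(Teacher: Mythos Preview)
The paper does not prove this lemma; it is quoted verbatim as Theorem~2.3.1 of Beardon, so there is no in-paper argument to compare against. Your outline is in fact the standard proof (and essentially the one in Beardon): show the spherical derivative $f^{\#}$ extends continuously to all of $\widehat{\mathbb{C}}$ via chart changes at poles and at $\infty$, take $L=\max f^{\#}$ by compactness, and integrate along a geodesic.

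There is one small slip in the last paragraph. If $\gamma$ is a spherical geodesic parametrised by \emph{arc length}, then $T$ is the spherical (Riemannian) distance $\sigma(z,w)$, not the chordal distance $d(z,w)$; the two are related by $d=2\sin(\sigma/2)$, so $d\le\sigma$ but not conversely. Your chain of inequalities actually gives
\[
d\bigl(f(z),f(w)\bigr)\;\le\;\text{spherical length of }f\circ\gamma\;\le\;L\,T\;=\;L\,\sigma(z,w),
\]
and to close the loop you need the reverse comparison $\sigma(z,w)\le\tfrac{\pi}{2}\,d(z,w)$ (valid since $\sigma\in[0,\pi]$), yielding Lipschitz constant $\tfrac{\pi}{2}L$ rather than $L$. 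This is harmless for the lemma as stated, which only asks for \emph{some} constant, but the identification ``$T=d(z,w)$'' is not correct as written and the final inequality $d(f(z),f(w))\le L\,d(z,w)$ does not follow without inserting this extra factor.
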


We conclude this section by introducing some results from complex dynamics. For a non-negative integer $n$ we denote by $f^n$ the $n$-fold iterate of $n$ (for $n = 0$, $f^0$ denotes the identity map). Let $f \colon \widehat{\mathbb{C}} \to \widehat{\mathbb{C}}$ be a rational map. Suppose that $\omega \in \widehat{\mathbb{C}}$ is a fixed point of $f$. If $\omega \in \mathbb{C}$, the \emph{multiplier} of $f$ at $\omega$ is defined as $f'(\omega)$. If $\omega = \infty$, the \emph{multiplier} of $f$ at $\omega$ is defined as $1/f'(\infty)$. The behaviour of the iterates $f^n$ near a fixpoint is characterised in terms of the multiplier $q$ of $f$ at this fixpoint. With this in mind, there are three types of fixpoints: \emph{attracting} if $\lvert q \rvert < 1$, \emph{neutral} or \emph{indifferent} if $\lvert q \rvert = 1$,  and \emph{repelling} if $\lvert q \rvert > 1$. We also need to introduce the \emph{Julia} set of $f$. We refer to \cite{Beardon} for a definition, here we only use the two following properties of Julia sets, Lemma~\ref{lem:repelling} and Theorem~\ref{thm:cd}.

\begin{lemma}[{\cite[Lemma 4.6]{Milnor2006}}] \label{lem:repelling}
  Let $f \colon \widehat{\mathbb{C}} \to \widehat{\mathbb{C}}$ be a rational map. Every repelling fixpoint of $f$ belongs to the Julia set of $f$.
\end{lemma}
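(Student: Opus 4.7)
The plan is to recall the definition of the Julia set as the complement of the Fatou set, where the Fatou set $F(f)$ consists of those $z\in\widehat{\mathbb{C}}$ having an open neighbourhood on which the family of iterates $\{f^n\}_{n\ge 0}$ is normal (every sequence admits a subsequence converging locally uniformly in the chordal metric). To prove the lemma it therefore suffices to show that at a repelling fixed point $\omega$ the family $\{f^n\}$ fails to be normal on any neighbourhood of $\omega$. I would argue by contradiction.

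First I would reduce to the case $\omega \in \mathbb{C}$. If $\omega = \infty$, then conjugating $f$ by the Mobius map $z\mapsto 1/z$ gives a new rational map $\widetilde{f}$ with a fixed point at $0$ whose multiplier equals the multiplier of $f$ at $\infty$ (by the definition of the multiplier at $\infty$ used in the excerpt). Since Mobius conjugation is a homeomorphism of $\widehat{\mathbb{C}}$ that carries Julia set to Julia set and preserves normality of iterate families, it is enough to handle the case $\omega \in \mathbb{C}$.

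Assume then that $\omega \in \mathbb{C}$ is a repelling fixed point with multiplier $q = f'(\omega)$, $|q| > 1$, and suppose for contradiction that $\{f^n\}$ is normal on some open neighbourhood $U$ of $\omega$. Extract a subsequence $f^{n_k}$ converging locally uniformly on $U$ to some holomorphic map $g \colon U \to \widehat{\mathbb{C}}$. Since $f(\omega) = \omega$, we have $f^{n_k}(\omega) = \omega$ for every $k$, hence $g(\omega) = \omega \in \mathbb{C}$; in particular $g$ is not identically $\infty$, and by continuity $g$ takes values in $\mathbb{C}$ on a neighbourhood $V\subseteq U$ of $\omega$. On $V$ the convergence is locally uniform in the usual Euclidean sense, so Weierstrass' theorem gives that derivatives converge: $g'(\omega) = \lim_{k\to\infty} (f^{n_k})'(\omega)$.

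The final step is to compute $(f^{n_k})'(\omega)$ with the chain rule. Because $\omega$ is a fixed point of $f$, one gets $(f^n)'(\omega) = f'(\omega)^n = q^n$ for every $n$, so $|(f^{n_k})'(\omega)| = |q|^{n_k} \to \infty$. This contradicts $g'(\omega)$ being a finite complex number. Therefore no such neighbourhood $U$ exists, so $\omega$ is not in the Fatou set, i.e.\ $\omega$ belongs to the Julia set. The main (minor) obstacle I expect is being careful with the case $\omega = \infty$ and ensuring that the normal-convergence limit $g$ really is holomorphic and finite-valued near $\omega$, rather than the constant $\infty$; both are handled by the coordinate change and the observation $g(\omega) = \omega$.
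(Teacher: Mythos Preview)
Your proof is correct and is essentially the standard argument (indeed, it is close to the proof in Milnor's book to which the lemma is attributed). Note, however, that the paper does not give its own proof of this lemma: it is stated as a citation to \cite[Lemma 4.6]{Milnor2006} and used as a black box, so there is nothing in the paper to compare against beyond observing that you have supplied the classical normal-families argument that the cited reference contains.
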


 A set $U$ is a neighbourhood of $x$ if it contains a ball $B(x, r)$ for some $r >0$.  The exceptional set of a rational map $f$ is the set of points $z \in \widehat{\mathbb{C}}$ such that $[z] = \{z'\in \widehat{\mathbb{C}} : f^n(z') = f^m(z) \text{ for some integers } n,m \ge 0\}$ is finite.

\begin{theorem}[{\cite[Theorem 4.2.5]{Beardon}}] \label{thm:cd}
   Let $f \colon \widehat{\mathbb{C}} \to \widehat{\mathbb{C}}$ be a rational map with exceptional set $E_f$. Let $z_0$ be a point in the Julia set of $f$ and let $U$ be a neighbourhood of $z_0$. Then $\bigcup_{n = 0}^{\infty} f^n(U) = \widehat{\mathbb{C}} \setminus E_f$.
\end{theorem}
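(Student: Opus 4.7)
The plan is to derive this from Montel's fundamental theorem on normal families, exploiting that a point in the Julia set is precisely a point at which the family of iterates $\{f^n\}$ fails to be normal. I will show both inclusions of the equality $\bigcup_{n \ge 0} f^n(U) = \widehat{\mathbb{C}} \setminus E_f$, setting $V := \bigcup_{n \ge 0} f^n(U)$ for brevity.

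First I would observe that $V$ is open: each iterate $f^n$ is a non-constant holomorphic map on $\widehat{\mathbb{C}}$, so by the open mapping theorem (Proposition~\ref{thm:openmapping}) each $f^n(U)$ is open, and hence so is their union. The core step is to show $\widehat{\mathbb{C}} \setminus V \subseteq E_f$. Since $z_0$ lies in the Julia set of $f$, by definition the family $\{f^n|_U\}$ is not normal. Montel's theorem states that a family of meromorphic maps into $\widehat{\mathbb{C}}$ that jointly omits three distinct values must be normal; taking the contrapositive, the set $\widehat{\mathbb{C}} \setminus V$ of values missed by every iterate on $U$ has cardinality at most two. Next I would combine this finiteness with backward invariance: since $f(V) \subseteq V$, we have $f^{-1}(\widehat{\mathbb{C}} \setminus V) \subseteq \widehat{\mathbb{C}} \setminus V$, so a finite backward-invariant set has the property that the grand orbit of each of its points is finite, which is exactly the defining property of $E_f$. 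Thus $\widehat{\mathbb{C}} \setminus V \subseteq E_f$.

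For the reverse inclusion $E_f \subseteq \widehat{\mathbb{C}} \setminus V$, I would use the classification of exceptional points (for a rational map of degree $\ge 2$ there are at most two, each a super-attracting periodic point) together with the fact that such points lie in the Fatou set. If some exceptional $w$ belonged to $V$, then $w = f^n(u)$ for some $u \in U$; pulling back along iterates and using that the grand orbit of $w$ is a finite totally invariant set, $u$ itself would have to lie in that finite set, contradicting $u$ being an interior point of the neighbourhood $U$ of $z_0 \in J(f)$ (after possibly shrinking $U$ to avoid the finitely many exceptional preimages). The main obstacle I anticipate is handling this reverse inclusion cleanly; the Montel-based direction is the cleanest and most substantive piece, while the exclusion of exceptional points from $V$ rests on the global classification of $E_f$ and the Fatou/Julia dichotomy rather than on any further dynamical input.
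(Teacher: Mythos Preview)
The paper does not give its own proof of this result; Theorem~\ref{thm:cd} is simply quoted from Beardon's textbook and used as a black box. So there is no proof in the paper to compare against. That said, your Montel-based argument for the inclusion $\widehat{\mathbb{C}}\setminus E_f\subseteq V$ is correct and is exactly the standard route: non-normality of $\{f^n|_U\}$ forces at most two omitted values, and backward invariance of the (finite) omitted set then places it inside $E_f$.

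Your difficulty with the reverse inclusion is not a gap in your reasoning but a genuine issue with the statement as written. For an arbitrary neighbourhood $U$ of $z_0$ the equality can fail: take $f(z)=z^2$, $z_0=1\in J(f)$, and $U=B(1,2)$; then $0\in E_f$ and $0\in U=f^0(U)\subseteq V$, so $V\cap E_f\neq\emptyset$. Your proposed fix of ``shrinking $U$'' is not available, since the statement quantifies over all neighbourhoods. In Beardon the theorem is stated as the inclusion $\widehat{\mathbb{C}}\setminus E_f\subseteq\bigcup_n f^n(U)$, not as an equality; the paper's ``$=$'' is a slight overstatement. This is harmless for the paper: every application (in the proofs of Lemma~\ref{lem:ising:implementations}, Lemma~\ref{lem:efficient-covering:any-point}, and Lemma~\ref{lem:efficient-covering:infinity}) only uses the inclusion direction, which is precisely what your Montel argument establishes.
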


The exceptional points of $f$ can be characterised as follows.

\begin{lemma}[{\cite[Lemma 4.9]{Milnor2006} and \cite[Theorem 4.1.2]{Beardon}}] \label{lem:Ef}
  Let $f \colon \widehat{\mathbb{C}} \to \widehat{\mathbb{C}}$ be a complex rational map of degree at least $2$, and let $E_f$ be its exceptional set. Then, $\lvert E_f \rvert \le 2$. Moreover,
  \begin{itemize}
  \item if $E_f = \{\zeta\}$, then $\zeta$ is a fixed point of $f$ with multiplier $0$;
  \item if $E_f = \{\zeta_1, \zeta_2\}$ where $\zeta_1 \ne \zeta_2$, then $\zeta_1, \zeta_2$ have multiplier $0$ and either they are fixed points of $f$, or $f(\zeta_1) = \zeta_2$ and $f(\zeta_2) = \zeta_1$.
  \end{itemize}
\end{lemma}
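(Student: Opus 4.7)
The plan is to combine the grand-orbit invariance of $E_f$ with the Riemann--Hurwitz formula. First, I would observe that for any $z \in E_f$ the grand orbit $[z]$ is both forward- and backward-invariant under $f$: forward invariance is immediate from the definition, and if $w \in f^{-1}(z')$ with $z' \in [z]$ and $f^n(z') = f^m(z)$, then $f^{n+1}(w) = f^n(z') = f^m(z)$ shows $w \in [z]$. Hence $f^{-1}([z]) = [z]$ and $f$ restricts to a self-map of the finite set $[z]$.

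Next, I would count preimages with multiplicity. Writing $n = |[z]|$ and using the fundamental identity $\sum_{w \in f^{-1}(v)} \deg_f(w) = d$ for each $v \in \widehat{\mathbb{C}}$,
\begin{equation*}
  nd \;=\; \sum_{v \in [z]} \sum_{w \in f^{-1}(v)} \deg_f(w) \;=\; \sum_{w \in [z]} \deg_f(w) \;\le\; n d,
\end{equation*}
so equality holds throughout and every $w \in [z]$ is totally ramified, i.e.\ $\deg_f(w) = d$. The Riemann--Hurwitz formula then gives the bound: a degree-$d$ rational self-map of $\widehat{\mathbb{C}}$ has total ramification $2d-2$, and each totally ramified point contributes $d-1$, so $n(d-1) \le 2(d-1)$ and $n \le 2$ since $d \ge 2$. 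Since $E_f$ is a disjoint union of grand orbits sharing this ramification budget, we also obtain $|E_f| \le 2$.

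Finally, I would carry out the case analysis and verify the multipliers. The preimage count forces each $v \in E_f$ to have exactly one preimage in $E_f$, so $f|_{E_f}$ is a bijection; if $|E_f| = 1$ its single element is fixed, and if $|E_f| = 2$ the two elements are either both fixed or swapped (a mixed configuration being ruled out, since it would give one point two preimages in $E_f$ and contradict the preimage count). For the multiplier, in a local chart at each $\zeta \in E_f$ (using $w = 1/z$ when $\zeta = \infty$, in accordance with the convention set out before the lemma), total ramification means $f$ conjugates to $w \mapsto c w^d + \cdots$ near $0$, so the derivative---equivalently, the multiplier along the cycle of length $1$ or $2$---vanishes because $d \ge 2$. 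The main obstacle is the preimage-counting step that forces total ramification; once this is established, the Riemann--Hurwitz count is routine, though the chart change at $\zeta = \infty$ must be handled with some care to match the multiplier convention in the excerpt.
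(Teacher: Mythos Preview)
The paper does not provide its own proof of this lemma; it is stated with a citation to Milnor and Beardon and used as a black box from the complex-dynamics literature. Your sketch is a correct outline of the standard argument found in those references: grand-orbit invariance forces total ramification at every exceptional point, Riemann--Hurwitz then bounds the number of such points by two, and the local normal form $w \mapsto c w^d + \cdots$ at a totally ramified point gives the vanishing multiplier. There is therefore nothing to compare against in the paper itself; your proposal simply supplies the proof that the paper omits.
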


\subsection{Ising and Mobius programs} \label{sec:hardness:programs}

In this section we introduce the framework that we use to implement the real line in polynomial time for the Ising model. Our proofs are based on the techniques developed in  \cite{Bezb} for the hardcore model. The idea behind the implementation results of \cite{Bezb} is the following one. First, we have to come up with a recursively-constructed gadget that implements a weight $f(z_1, z_2, \ldots, z_d)$ assuming that we can implement $z_1, \ldots, z_d$. Then we apply results of complex dynamics to the function $f$ in order to understand which points we can implement by iterating $f$. As we will see, it is important that the function $f$ is of the form $g(z_1  z_2 \cdots z_d)$, where $g$ is a Mobius map. In \cite{Bezb} the function $f$ naturally arises from the tree-recurrence for vertex implementations in the hardcore model. Unfortunately vertex-style implementations are useless in the Ising model; due to the perfect symmetry nothing interesting can be implemented through that route. Hence, we need to devise another way to obtain this type of recurrence in the Ising model. This is done in Proposition~\ref{prop:ising-program} for the Mobius map $g_\beta$, which is introduced in Definition~\ref{def:ising-program}.

\begin{definition} \label{def:ising-program}
	Let $\Delta \ge 3$ and $\beta \in \mathbb{C}$, and set $d := \Delta -1$. Let $h_\beta(x) = (\beta x + 1) / (\beta + x)$ and let $g_\beta(x) = h_\beta(h_\beta(x))$.  An Ising-program for $\beta$ is a sequence $a_0, a_1, \ldots$, starting with $a_{0} = \beta$ and satisfying
	\begin{equation*}
		a_k = g_{\beta}(a_{i_{k,1}} \cdots a_{i_{k,d_k}}) \quad \text{for } k \ge 1,
	\end{equation*}
	where $d_k \in [d]$ and $i_{k,1}, \ldots, i_{k,d_k} \in \{0, \ldots, k-1\}$. We say that the Ising program $a_0, a_1, \ldots$ generates $x \in \mathbb{C}$ if there exists an integer $k \ge 0$ such that $a_k = x$.
\end{definition}

We use these definitions for $h_\beta$ and $g_\beta$ several times in the rest of Section~\ref{sec:hardness}. We work with Ising-programs from a computational point of view. We represent an element $a_k$ of an Ising-program by the tuples $(i_{j,1}, \ldots, i_{j,d_{j}})$ for $j \in \{2, \ldots, k\}$, so computing $a_k$ means computing its representation as a sequence of tuples. Proposition~\ref{prop:ising-program} gives a gadget that implements the edge-interactions generated by an Ising-program.

\begin{proposition} \label{prop:ising-program}
	Let $\Delta \ge 3$ and $\beta \in \mathbb{C}$. Suppose that $a_0, a_1, \ldots$ is an Ising-program for $\beta$. Then, for every $k \ge 0$, we can compute from the representation of $a_k$  a graph $H_k$ with maximum degree at most $\Delta$ that $(\Delta, \beta)$-implements the edge interaction $a_k$. This computation takes $\mathrm{poly}(\Delta, k)$ steps.
\end{proposition}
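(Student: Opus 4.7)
The plan is to construct $H_k$ by induction on $k$, using the parallel and series composition rules summarised in Section~\ref{sec:pre:implementations}. For the base case $k=0$, take $H_0$ to be the single-edge graph with distinguished vertices $s_0,t_0$; its interaction matrix at $(s_0,t_0)$ is $\bigl(\begin{smallmatrix}\beta&1\\1&\beta\end{smallmatrix}\bigr)$, so $H_0$ $(\Delta,\beta)$-implements $a_0=\beta$ with both terminals of degree $1$. For the inductive step, assume that for every $j<k$ we have a graph $H_j$ of maximum degree at most $\Delta$ with degree-$1$ terminals $s_j,t_j$ implementing $a_j$, and read the tuple $(i_{k,1},\dots,i_{k,d_k})$ from the representation of $a_k$.

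Take $d_k$ pairwise vertex-disjoint copies of the graphs $H_{i_{k,1}},\dots,H_{i_{k,d_k}}$ and form their parallel composition by identifying all of the source-terminals into a single vertex $s^\star$ and all of the target-terminals into a single vertex $t^\star$. By the Hadamard product rule for parallel composition, the resulting auxiliary graph implements the weight $w:=\prod_{j=1}^{d_k}a_{i_{k,j}}$ at the terminals $(s^\star,t^\star)$. The key degree bookkeeping is that each merged terminal $s_{i_{k,j}}$ contributed exactly one edge (since it had degree $1$ in $H_{i_{k,j}}$), so $s^\star$ has degree $d_k\le\Delta-1$; similarly for $t^\star$. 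All other vertices retain the degree they had in their original $H_{i_{k,j}}$, which is at most $\Delta$ by the inductive hypothesis.

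To apply $g_\beta=h_\beta\circ h_\beta$, attach one new pendant edge at each of $s^\star$ and $t^\star$, with fresh endpoints $s_k$ and $t_k$; declare $s_k,t_k$ to be the terminals of $H_k$. Each of these two series compositions with a single edge of weight $\beta$ applies $h_\beta$ to the implemented weight (by the $2\times 2$ matrix-product computation recorded in Section~\ref{sec:pre:implementations}), so $H_k$ implements $h_\beta(h_\beta(w))=g_\beta(w)=a_k$. The vertices $s^\star,t^\star$ now have degree $d_k+1\le\Delta$, the new terminals $s_k,t_k$ have degree exactly $1$, and no other degrees have changed; hence $H_k$ has maximum degree at most $\Delta$ and is a valid $(\Delta,\beta)$-implementation. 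Each of the $k$ inductive stages performs $O(\Delta)$ elementary graph operations (at most $d_k\le\Delta-1$ vertex identifications plus two new edges), giving the claimed $\mathrm{poly}(\Delta,k)$ running time.

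The main subtlety is the symmetric placement of the two pendant edges. A tempting alternative would be to attach both extra edges at the same terminal, effectively applying $h_\beta$ twice in series at one side; this also realises $g_\beta$ algebraically, but it leaves the opposite terminal with degree $d_k$, which is in general larger than $1$ and so violates the degree-$1$ requirement of Definition~\ref{def:implementation}. Splitting the two $h_\beta$'s one-per-side simultaneously realises $g_\beta$ and restores both terminals to degree exactly $1$, while keeping the internal degree increases at $s^\star$ and $t^\star$ bounded by $d_k+1\le\Delta$.
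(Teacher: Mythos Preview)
Your proof is correct and follows essentially the same construction as the paper: the parallel composition of $H_{i_{k,1}},\dots,H_{i_{k,d_k}}$ sandwiched between two single-edge series compositions is exactly the gadget in Figure~\ref{fig:Hk}, and your degree bookkeeping and application of $g_\beta=h_\beta\circ h_\beta$ match the paper's argument. Your closing remark on why the two pendant edges must be placed symmetrically (rather than both on one side) is a nice clarification that the paper leaves implicit.
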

\begin{proof}
	Set $d := \Delta -1$.  We give a recursive algorithm for the task of the statement. For $k = 0$, our algorithm outputs the graph with two vertices and one edge joining them. This graph implements the edge interaction $a_0 = \beta$. For $k > 0$, first our algorithm computes recursively graphs $H_0, \ldots, H_{k-1}$ such that $H_j$ $(\Delta, \beta)$-implements $a_j$ for every $j \in \{0, \ldots, k-1 \}$. Since $a_0, a_1, \ldots$ is an Ising-program, we have  $a_k = g_{\beta}(a_{i_{k,1}} \cdots a_{i_{k,d_k}})$ for $d_k \in [d]$ and some indexes $i_{k,1}, \ldots, i_{k,d_k} \in \{0, \ldots, k-1\}$. We have access to these indexes since we have access to the representation of $a_k$. Our algorithm constructs $H_k$ as the series composition of the following graphs: $H_0$, the parallel composition of the graphs $H_{i_{k,1}}, \ldots, H_{i_{k,d_k}}$, and $H_0$. The graph $H_k$ implements the same edge interaction as that implemented by the graph shown in Figure~\ref{fig:Hk}.
	\begin{figure}[H]
		\centering
			\begin{tikzpicture}
			\node [circle, draw, inner ysep=1mm] (S1) {$s$};
			\node [circle, draw, inner ysep=1mm, right = 1cm of S1] (S2) {};
			\node [circle, draw, inner ysep=1mm, right = 1.75cm of S2] (T2) {};
			\node [right = 0.7cm of S2] (dots) {$\vdots$};
			\node [circle, draw, inner ysep=1mm, right = 1cm of T2] (T1) {$t$};
			\path[-] 
			(S1) edge node [above] {$\beta$} (S2)
			(T1) edge node [above] {$\beta$} (T2)
			(S2) edge[bend left=90] node [above] {$a_{i_{k,1}}$} (T2)
			(S2) edge[bend right=90] node [below] {$a_{i_{k,d_k}}$} (T2);
		\end{tikzpicture}
		\caption{The recursive construction for $H_k$.} 
		\label{fig:Hk}
	\end{figure}
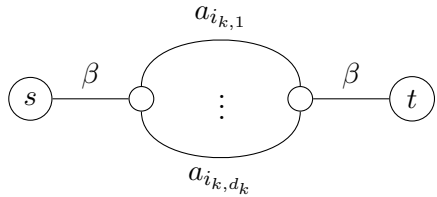
	By the properties of series and parallel compositions, see \eqref{eq:hbeta}, the graph $H_k$ implements the edge interaction $h_{\beta}(h_\beta(a_{i_{k,1}} \cdots a_{i_{k,d_k}})) = a_k$. Note that constructing $H_k$ from  $H_0, \ldots, H_{k-1}$ takes $\mathrm{poly}(\Delta, k)$ steps, so in total our algorithm has performed at most $k$ times that number of steps.
\end{proof}

 Note that $g_\beta$ is the composition of two Mobius maps and, thus, is a Mobius map. Hence, Ising-programs can be viewed as a particular case of Mobius-programs (see Definition~\ref{def:mobius-program}).

\newcommand{\statedefmobius}{Let $d \ge 2$ be an integer, $g$ be a Mobius map and $a_{0}\in \mathbb{C}$. A Mobius-program for $g$ and $d$ starting at $a_0$ is a sequence of complex numbers $a_0, a_1, \ldots$ of the form
  \begin{equation*}
    a_k = g(a_{i_{k,1}} \cdots a_{i_{k,d_k}}) \quad \text{for } k \ge 1,
  \end{equation*}
   where $d_k \in [d]$ and  $i_{k,1}, \ldots, i_{k,d_k} \in \{0, \ldots, k-1\}$. We say that the Mobius-program $a_0, a_1, \ldots$ generates $x \in \mathbb{C}$ if there exists a non-negative integer $k$ such that $a_k = x$. We usually omit $d$ when its value is clear from the context.}
\begin{definition} \label{def:mobius-program}
  \statedefmobius
\end{definition}

 In \cite{Bezb} the authors studied the points that can be generated by those Mobius-programs starting at $a_0 = \lambda$ for
\begin{equation*}
  g(x) = \frac{1}{1 + \lambda x}, 
\end{equation*}
where $\lambda$ is an activity for the independent set polynomial. They called this program a hardcore-program. The study of hardcore-programs is at the core of the hardness results for the independent set polynomial derived in \cite{Bezb}. It turns out that their results on hardcore-programs can be generalised to our setting of Mobius-programs. In short, their techniques imply that under some hypothesis we can efficiently generate approximations of any complex number with Mobius-programs algorithmically. First, let us introduce some notation that we use to generalise their results. 

\newcommand{\statedefpa}{Let $d \ge 2$ be an integer and let $g$ be a Mobius map.  Let $a_0 \in \mathbb{C}$. We say that $\gamma \in \mathbb{C} \setminus \mathbb{R}$ is program-approximable for $g$, $d$ and $a_0$ if for each $\epsilon > 0$, there is a Mobius-program for $g$ and $d$ starting at $a_0$ that generates a number $x \in \mathbb{C} \setminus \mathbb{R}$ with $0 < |\gamma - x| \le \epsilon$.}
\begin{definition} \label{def:pa}
    \statedefpa
\end{definition}

\newcommand{\statedefdpa}{Let $d \ge 2$ be an integer and let $g$ be a Mobius map with coefficients in $\algcomplex$. Let $a_0 \in \algcomplex$. We say that $\gamma \in \mathbb{C}$ is densely program-approximable in polynomial time for $g$, $d$ and $a_0$ if there is  
$r_{\gamma}\in \algebraic_{>0}$  such that for each positive integer $k$ there is an algorithm whose 
inputs are a rational $\epsilon > 0$ and $\lambda' \in B(\gamma, r_{\gamma}) \cap \algcomplex$ that computes,  in polynomial time in $\mathrm{size}(\epsilon)$ and $\mathrm{size}(\lambda')$,  $k$ distinct complex numbers  $x_1, x_2, \ldots, x_k$ generated by Mobius-programs for $g$ starting at $a_0$ with $|\lambda' - x_j| \le \epsilon$ for all $j \in [k]$. }
\begin{definition} \label{def:dpa}
  \statedefdpa
\end{definition}
In both definitions, we usually omit $d$ when its value is clear from the context. 

In \cite{Bezb} the author consider a fixed point of $f(x):= g(x^d)$ that is program-approximable for their choice of $g$ and $a_0$ and show that this fixed point is densely program-approximable in polynomial time for $g$ and $a_0$. Then they use this property in conjunction with results of complex dynamics to generate approximations of any complex number when the fixed point under consideration is repelling. This idea is made precise in Lemmas~\ref{lem:efficient-covering} and~\ref{lem:implementations}. We include the proofs of Lemmas~\ref{lem:efficient-covering} and~\ref{lem:implementations} in Appendix~\ref{sec:A}.

\newcommand{\statelemefficientcovering}{  Let $d$ be an integer with $d \ge 2$ and let $g$ be a Mobius map with coefficients in $\algcomplex$. Let $f(x):= g(x^d)$ and let $\omega$ be a fixed point of $f$. Let us assume that the following assumptions hold.
\begin{enumerate}
\item  $\omega$ is program-approximable for $g$, $d$ and $a_0 \in \mathbb{C}$;
\item  $\omega \ne 0$, $g'(\omega^d) \not \in \{0, \infty\}$ and $g''(\omega^d) \ne \infty$; 
\item Let $z := f'(\omega) / d = g'(\omega^d) \omega^{d-1}$. We have $0 < |z| < 1$ and $z \not \in \mathbb{R}$.
\end{enumerate}
Then $\omega$ is densely program-approximable in polynomial time for $g$, $d$ and $a_0$.}
\begin{lemma}[{\cite[Proposition 2.6 for Mobius-programs]{Bezb}}] \label{lem:efficient-covering}
  \statelemefficientcovering
\end{lemma}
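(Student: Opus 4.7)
The plan is to follow the strategy of Bezáková et al.~\cite{Bezb} for hardcore-programs, adapted to a general Möbius map~$g$. The key algebraic input is the first-order expansion of $f(x) = g(x^d)$ near the fixed point~$\omega$. Since $g(\omega^d) = \omega$, and hypothesis~(2) ensures that $g$ is analytic with bounded second derivative on a neighbourhood of~$\omega^d$, expanding the product gives
\begin{equation*}
  g(x_1 x_2 \cdots x_d) \;=\; \omega \;+\; z \sum_{j=1}^d (x_j - \omega) \;+\; O\!\left(\max_j |x_j - \omega|^2\right),
\end{equation*}
where $z = g'(\omega^d)\omega^{d-1}$ is the parameter appearing in hypothesis~(3). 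This identity describes the effect on displacements of one application of the Möbius-program recursion $a_k = g(a_{i_{k,1}} \cdots a_{i_{k,d}})$ when all inputs are close to~$\omega$.

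First, I would invoke hypothesis~(1) to construct, as a one-off preprocessing step, an initial Möbius-program generating a point $x^\star \in (\mathbb{C} \setminus \mathbb{R}) \cap B(\omega, \epsilon_0)$ with $x^\star \ne \omega$, where $\epsilon_0$ is chosen small enough (depending only on the constants in hypothesis~(2)) that on $B(\omega, \epsilon_0)$ the quadratic remainder in the linearisation is dominated by the linear term; call $\delta^\star := x^\star - \omega$. Next, feeding combinations of previously generated approximations of~$\omega$ back into the recursion shows that, to leading order, the set of reachable displacements contains every $\mathbb{Z}_{\ge 0}$-combination of $\{z^n \delta^\star : n \ge 0\}$: applying $g(\cdot^d)$ to a $d$-tuple with $k$ copies of $\omega + \delta^\star$ and $d-k$ copies of a previously computed $\omega + \eta$ produces $\omega + z\bigl(k \delta^\star + (d-k)\eta\bigr)$ up to second order, and induction generates all polynomials in~$z$ with non-negative integer coefficients applied to~$\delta^\star$. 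Because $|z| < 1$ and $z \notin \mathbb{R}$, the powers $z^n$ accumulate at~$0$ along a spiral that cannot be confined to any proper real cone, so this semigroup is dense in a neighbourhood of~$0$, giving density of the reachable points around~$\omega$.

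The polynomial-time bound, which is the main technical content, is obtained by a greedy digit-selection scheme in the ``base''~$z$. Given a target $\lambda' \in B(\omega, r_\omega) \cap \algcomplex$ and precision $\epsilon > 0$, one builds a truncated representation $\lambda' - \omega \approx \sum_{n=0}^{N} c_n z^n \delta^\star$ by repeatedly subtracting a best approximation from the current residual; since $|z| < 1$ the residual shrinks geometrically and a depth of $N = O(\log(1/\epsilon))$ suffices, and the non-reality of~$z$ allows the greedy choice to reduce the residual in any direction. The corresponding Möbius-program has $\mathrm{poly}(\mathrm{size}(\lambda'), \mathrm{size}(\epsilon))$ length and can be constructed in the same time. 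To obtain $k$ distinct outputs, one appends harmless variations to the final step of the program that perturb the result by less than~$\epsilon$.

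The main obstacle is controlling the accumulated quadratic error through a recursion of depth $O(\log(1/\epsilon))$ while keeping every intermediate value inside the ball on which the linearisation is valid. This is handled by a standard bootstrap argument: using the global Lipschitz bound from Lemma~\ref{lem:lipschitz} to rule out blow-ups in the chordal metric, and choosing $\epsilon_0$ and $r_\omega$ small enough that the quadratic error at each level stays below a constant fraction of the linear contribution, so that the cumulative error remains $O(\epsilon)$ throughout the construction.
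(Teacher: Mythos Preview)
Your linearisation $g(x_1\cdots x_d)=\omega+z\sum_j(x_j-\omega)+O(\max_j|x_j-\omega|^2)$ and the density argument via the additive semigroup generated by $\{z^n\delta^\star\}$ are correct and match the paper's Lemmas~\ref{lem:g:1} and~\ref{lem:covering}. The divergence is in how the polynomial-time bound is obtained, and here your sketch has a genuine gap.

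You propose to run the linearised recursion as the algorithm and bootstrap the quadratic remainder over $O(\log(1/\epsilon))$ levels. But at every level where some input sits at distance $\Theta(|\delta^\star|)$ from $\omega$ --- and there must be such levels, since $x^\star$ is fed in repeatedly --- the remainder is $\Theta(|\delta^\star|^2)$, a fixed constant once preprocessing is done. Summing these (even with the geometric damping by powers of $|z|$) gives a cumulative error of order $|\delta^\star|^2$, independent of~$\epsilon$; no choice of $\epsilon_0$ and $r_\omega$ made once in preprocessing can push this below an arbitrary~$\epsilon$. The paper sidesteps this entirely. After using the density result once to pin down finitely many program-generated points $\lambda_0,\ldots,\lambda_t$ near~$\omega$ (Lemma~\ref{lem:covering:setup}), it works with the \emph{exact} M\"obius maps $\Phi_i(x)=g\bigl(x\,\lambda_i\,\lambda_0^{\,d-2}\bigr)$, shows they are contracting on a ball $B$ with $B\subseteq\bigcup_i\Phi_i(B)$, and then runs a standard IFS-coding algorithm (Lemma~\ref{lem:speed-up}). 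Because each $\Phi_i$ is an exact M\"obius map applied to an exactly generated program value, there is no linearisation error to accumulate; the first-order expansion (Lemmas~\ref{lem:g:1}--\ref{lem:g:3}) is used only in the \emph{analysis} establishing contraction and covering, never in the computation itself. This separation is the point you are missing.

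Two smaller remarks. The appeal to the chordal Lipschitz bound (Lemma~\ref{lem:lipschitz}) is misplaced here: that device enters later, in Lemma~\ref{lem:implementations}, to control behaviour near poles and at large arguments, whereas everything in the present lemma lives in a fixed Euclidean ball around~$\omega$. And for the $k$ distinct outputs, the paper reruns the IFS algorithm from $k$ distinct starting points $x_0$ (successively sharper program-generated approximations of~$\omega$) and invokes bijectivity of the $\Phi_i$ to guarantee distinct results, rather than perturbing the final step.
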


\newcommand{\statelemmobiusimplementations}{Let $d$ be an integer with $d \ge 2$ and let $g$ be a Mobius map  with coefficients in
  $\algcomplex$   such that $g(\infty) \in \mathbb{C}$. Let $\omega \in \mathbb{C}$ be a repelling fixed point of $f(z) := g(z^d)$ that is densely program-approximable in polynomial time  for $g$ and $a_0 \in \algcomplex $. Let $E_f$ be the exceptional set of the rational map $f$. If $0, \infty \not \in E_f$, then the following holds.

  There is a polynomial-time algorithm such that, on input $\lambda \in \algcomplex$  and rational $\epsilon > 0$, computes an element $a_k$ of a Mobius-program for $g$ starting at $a_0$ with $\lvert \lambda - a_k \rvert \le \epsilon$.}
\begin{lemma}[{\cite[Proposition 2.2 for Mobius-programs]{Bezb}}]  \label{lem:implementations}
  \statelemmobiusimplementations
\end{lemma}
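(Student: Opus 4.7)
The plan is to combine the density statement of Theorem~\ref{thm:cd} with the dense program-approximability hypothesis and the Lipschitz property of $f$ from Lemma~\ref{lem:lipschitz}. The key observation is that if $x$ is an element of a Mobius-program, then so is $f(x) = g(x^d)$, obtained by taking $d_k = d$ and $i_{k,1}=\cdots=i_{k,d}$ equal to the index of $x$; iterating, every $f^n(x)$ is likewise a Mobius-program element. Thus, given a target $\lambda\in\algcomplex$, the strategy is to find a point $z^\ast$ near $\omega$ with $f^n(z^\ast)=\lambda$ for some small $n$, approximate $z^\ast$ via dense program-approximability, and then feed that approximation through $n$ iterations of $f$.

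On input $\lambda$ and $\epsilon$, the algorithm proceeds as follows. The exceptional set $E_f$ has at most two elements by Lemma~\ref{lem:Ef}, and by hypothesis $0,\infty\notin E_f$; if $\lambda\in E_f$, perturb it to an algebraic $\lambda'\notin E_f$ with $|\lambda-\lambda'|\le\epsilon/2$ and replace $\epsilon$ by $\epsilon/2$, so we may assume $\lambda\notin E_f$. Since $\omega$ is repelling, Lemma~\ref{lem:repelling} places it in the Julia set, and Theorem~\ref{thm:cd} applied to the open neighbourhood $U = B(\omega,r_\omega)$ guarantees an integer $n\ge 0$ and $z^\ast\in U$ with $f^n(z^\ast) = \lambda$. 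Let $L$ be a Lipschitz constant for $f$ in the Euclidean metric on a compact region containing the relevant orbit, obtained from Lemma~\ref{lem:lipschitz} and a change of metric from the chordal one (here the assumption $g(\infty)\in\mathbb C$ ensures everything stays bounded away from $\infty$ after one iteration). Set $\delta = \epsilon/L^n$, invoke dense program-approximability to produce a Mobius-program element $\tilde z$ with $|\tilde z - z^\ast|\le\delta$, and then extend that program by $n$ forward iterations of $f$, yielding a Mobius-program element $a_k = f^n(\tilde z)$ satisfying
\begin{equation*}
|a_k-\lambda| \;=\; |f^n(\tilde z) - f^n(z^\ast)| \;\le\; L^n\delta \;=\; \epsilon.
\end{equation*}

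The main obstacle is the quantitative control of $n$ and the production of a polynomial-size algebraic proxy for $z^\ast$ to hand to the dense program-approximability algorithm. For the bound on $n$, the idea is to upgrade the purely topological statement of Theorem~\ref{thm:cd} to a quantitative one using the repelling nature of $\omega$: Koenig's linearisation in a neighbourhood of $\omega$ conjugates $f$ to multiplication by $q=f'(\omega)$ with $|q|>1$, so forward iterates $f^n(B(\omega,\rho))$ expand geometrically until they exit the linearisation chart; a compactness argument on the complement of a small neighbourhood of $E_f$ then yields some $n_0=O(\log(1/\mathrm{dist}(\lambda,E_f)))$ after which $f^{n_0}(U)$ contains a ball around $\lambda$ of radius $\Omega(\epsilon)$, bounding $n=\mathrm{poly}(\mathrm{size}(\lambda),\log(1/\epsilon))$. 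For the algebraic approximation of $z^\ast$, although its exact minimal polynomial may have degree $d^n$, we only need an algebraic number within distance $\delta/2$ of $z^\ast$: backward iteration along the contracting inverse branch of $f$ that fixes $\omega$ converges at rate $|q|^{-1}$ per step and can be run in polynomial time on algebraic numbers (using the computations described in Section~\ref{sec:pre:algebraic}), and the resulting low-complexity algebraic approximation is then fed into the dense program-approximability routine with precision $\delta/2$. Together these steps give the desired polynomial-time algorithm.
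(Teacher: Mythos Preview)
Your high-level plan—use Theorem~\ref{thm:cd} to find $n$ and a preimage $z^\ast\in B(\omega,r_\omega)$ with $f^n(z^\ast)=\lambda$, approximate $z^\ast$ via dense program-approximability, then push forward by $f^n$—is the right idea and is indeed what underlies the paper's auxiliary Lemma~\ref{lem:efficient-covering:any-point}. However, the quantitative control you sketch does not hold up, and the algorithm as described is not polynomial-time.

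The central issue is the dependence of $n$ on the input $\lambda$. Your Koenig/compactness claim $n=O(\log(1/\mathrm{dist}(\lambda,E_f)))$ is not justified, and in any case cannot yield a uniform bound over all of $\mathbb{C}$: to reach a target with $|\lambda|$ large, some intermediate iterate must itself be large, contradicting your ``bounded away from $\infty$'' assertion, and then the Euclidean Lipschitz constant you need blows up. If instead you allow $n$ to grow with the input, the equation $f^n(z)=\lambda$ has degree $d^{\,n}$, which is exponential. Your proposed workaround, ``backward iteration along the contracting inverse branch of $f$ that fixes $\omega$'', does not make sense for $\lambda$ outside the small neighbourhood of $\omega$ where that branch is defined: there is no canonical choice of inverse branch at each step that lands at $z^\ast$ without already knowing $z^\ast$, and even if there were, the algebraic degree of the iterates can multiply by $d$ at every step.

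The paper avoids all of this by never letting the number of iterations vary with the input. It first proves (Lemmas~\ref{lem:efficient-covering:any-point} and~\ref{lem:efficient-covering:infinity}) that for each \emph{fixed} reference point $\gamma\in\widehat{\mathbb{C}}\setminus E_f$ there is a \emph{fixed} integer $N$ with $\gamma\in f^N(B(\omega,r_\omega))$; then, for inputs $\lambda'\in B(\gamma,r_\gamma)$, it solves the fixed-degree polynomial equation $f^N(z)=\lambda'$ and controls the error via the chordal Lipschitz constant of $f^N$ combined with an a~priori bound on $|f^N|$ over a compact ball. To cover all of $\mathbb{C}$, the proof of Lemma~\ref{lem:implementations} then routes every input through just two such reference points, $\gamma=0$ and $\gamma=\infty$: large $\lambda$ are handled directly by Lemma~\ref{lem:efficient-covering:infinity}, and bounded $\lambda$ are handled by passing to $x^\ast=g^{-1}(\lambda)$ and, when $|x^\ast|$ is also bounded, rescaling by a once-and-for-all program-generated factor $\lambda_4$ so that $x^\ast/\lambda_4$ lands in the small ball around $0$ covered by Lemma~\ref{lem:efficient-covering:any-point}. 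This rescaling trick—reducing the whole plane to a finite set of fixed-$N$ local algorithms—is the key ingredient your argument lacks.
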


In order to apply Lemmas~\ref{lem:efficient-covering} and~\ref{lem:implementations}, first one has to find a fixed point $\omega$ with the properties described in Lemma~\ref{lem:efficient-covering}. When applying this result to the Ising model, we will set $\omega = 1$. Then one has to find the region of activities / edge interactions where the fixed point is repelling. All this work is carried out in Section~\ref{sec:hardness:implementations}.


\subsection{Proof of Lemma~\ref{lem:ising:implementations}} \label{sec:hardness:implementations}

In this section we use the framework introduced in Section~\ref{sec:hardness:programs} to prove Lemma~\ref{lem:ising:implementations}. This proof strongly uses the properties of the map $h_\beta$, which naturally arises in the context of the Ising model. The proof is divided into several technical lemmas. First, we show that $\omega = 1$ is a program-approximable fixed point for the Ising model (Lemma~\ref{lem:ising-program:1}). Then we prove Lemma~\ref{lem:ising:implementations} when $1/\sqrt{\Delta - 1} < \lvert \beta - 1 \rvert / \lvert \beta + 1 \rvert < 1$. Finally, we address the cases $\lvert \beta - 1 \rvert / \lvert \beta + 1 \rvert = 1$ and $\lvert \beta - 1 \rvert / \lvert \beta + 1 \rvert > 1$ separately, as they do not directly follow from the results of Section~\ref{sec:hardness:programs}.  We will use the following remark.

\begin{remark} \label{rem:stretch}
  Let $\beta, x \in \mathbb{C} \setminus \{1, -1\}$. Then it is straightforward to check that
  \begin{equation*}
    \frac{h_{\beta}(x) -1 }{h_{\beta}(x) +1} = \frac{(\beta-1)(x-1)}{(\beta+1)(x+1)}. 
  \end{equation*}
This equation was observed in \eqref{eq:hbeta:property} and plays a key role in the proof of Theorem~\ref{thm:ising:zero-free}.
 By induction we conclude that, for any positive integer $n$,
  \begin{equation*}
    \frac{h_{\beta}^{n}(x) -1 }{h_{\beta}^n(x) +1} =  \left( \frac{\beta-1}{\beta+1}  \right)^{n} \left(\frac{x-1}{x+1}\right).
  \end{equation*}
By rearranging this equation, we obtain, for any positive integer $n$, 
  \begin{equation*}
    h_{\beta}^n(x) =  -1  - \frac{2}{\left( \frac{\beta-1}{\beta+1}  \right)^{n} \left(\frac{x-1}{x+1}\right) - 1} = 1  + \frac{2}{\left( \frac{\beta+1}{\beta-1}  \right)^{n} \left(\frac{x+1}{x-1}\right) - 1}.
  \end{equation*}
Therefore, we have
  \begin{equation*}
    g_{\beta}^n(x) =  -1  - \frac{2}{\left( \frac{\beta-1}{\beta+1}  \right)^{2n} \left(\frac{x-1}{x+1}\right) - 1} = 1  + \frac{2}{\left( \frac{\beta+1}{\beta-1}  \right)^{2n} \left(\frac{x+1}{x-1}\right) - 1}.
  \end{equation*}
\end{remark}

\begin{lemma} \label{lem:ising-program:1}
  Let $\beta \in \mathbb{C}$ with $\beta \not \in \{i, -i\} \cup \mathbb{R}$. Then there is an Ising-program $a_0, a_1, \ldots$ such that the following holds. For every $\epsilon > 0$, there is a positive integer $k$ such that  $0 < \lvert 1 - a_k \rvert \le \epsilon$ and $a_k \not \in \mathbb{R}$.  
\end{lemma}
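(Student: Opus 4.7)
Set $\alpha := (\beta-1)/(\beta+1)$ and work throughout in the M\"obius coordinate $\phi(z) := (z-1)/(z+1)$, so that the identity recorded in Remark~\ref{rem:stretch} reads $\phi(g_\beta(z)) = \alpha^2\,\phi(z)$. Under this change of coordinates, an Ising-program step has the form
\begin{equation*}
\phi(a_k) \;=\; \alpha^2 \bigl( \phi(a_{i_{k,1}}) \oplus \cdots \oplus \phi(a_{i_{k,d_k}})\bigr), \qquad s \oplus t := \frac{s+t}{1+st},
\end{equation*}
and producing $a_k$ close to $1$ is the same as producing $\phi(a_k)$ close to $0$. The hypotheses $\beta \notin \mathbb{R}\cup\{i,-i\}$ translate into $\alpha \notin \mathbb{R}$, $\alpha \neq 0$, and $\alpha^2 \neq -1$. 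The proof plan is to split on $|\alpha|$.

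If $|\alpha|<1$, the simple Ising-program $a_0=\beta$, $a_k := g_\beta(a_{k-1})$ (using $d_k=1$) gives $\phi(a_k) = \alpha^{2k+1} \to 0$, so $a_k \to 1$, and $a_k \neq 1$ because $\alpha\neq 0$. A short case split on whether $\arg(\alpha)/\pi$ is rational or irrational shows that $\alpha^{2k+1}\notin\mathbb{R}$ for infinitely many $k$, providing the required non-real values. If $|\alpha|>1$, simple iteration instead tends to $-1$, so I use pairwise products (available because $d=\Delta-1 \ge 2$). Letting $c_k := g_\beta^k(\beta)$ and $b_m := g_\beta(c_{m-1}\,c_m)$, a direct calculation in $\phi$-coordinates yields
\begin{equation*}
\phi(b_m) \;=\; \frac{\alpha^{2m+1}(1+\alpha^2)}{1+\alpha^{4m}},
\end{equation*}
and since the denominator has magnitude $\sim |\alpha|^{4m}$, one gets $|\phi(b_m)| \sim |1+\alpha^2|\,|\alpha|^{1-2m} \to 0$, proving $b_m\to 1$. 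The Ising-program is then assembled by interleaving the iterates $c_j$ (defined with $d_k=1$ steps) needed to set up each product $b_m$ (defined with a $d_k=2$ step) into a single sequence. Non-vanishing of $\phi(b_m)$ follows from $\alpha^2\neq -1$, and non-realness for infinitely many $m$ follows by a cyclotomic-style argument from $\alpha\notin\mathbb{R}$.

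The delicate case, and the main obstacle, is the boundary $|\alpha|=1$ (equivalently $\mathrm{Re}(\beta)=0$). Here iteration of $g_\beta$ acts as a rotation on $\phi$-values and therefore preserves $|\phi|$, while the particular product step above yields $|\phi(b_m)| = |1+\alpha^2|/|1+\alpha^{4m}|$, which is bounded below as $m$ varies. To handle this case I would consider the more flexible pairwise product $g_\beta(c_k c_l)$,
\begin{equation*}
\phi(g_\beta(c_k c_l)) \;=\; \frac{\alpha^{2k+3}\,(1+\alpha^{2(l-k)})}{1+\alpha^{2(k+l+1)}}.
\end{equation*}
When $\arg(\alpha)/\pi$ is irrational, Weyl equidistribution of $\{\alpha^{2n}\}$ on the unit circle lets me pick $l-k$ so that $\alpha^{2(l-k)}$ is arbitrarily close to $-1$ (shrinking the numerator), while choosing $k$ independently so that $\alpha^{2(k+l+1)}$ stays away from $-1$ (keeping the denominator bounded below); this forces $|\phi|$ to $0$. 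The remaining sub-case $\alpha = e^{i\pi p/q}$ cannot be settled by a single round of products (for $q$ odd one checks $-1\notin\{\alpha^{2m}\}$); here the approach is to apply the product construction recursively to values already produced, using that the reachable $\phi$-set lies in the number field $\mathbb{Q}(\alpha)$ and showing that the moduli of the iteratively constructed values accumulate at $0$. Non-realness in this case is automatic since $\mathbb{Q}(\alpha)\not\subseteq\mathbb{R}$, so infinitely many sequence entries converging to $1$ must also be non-real.
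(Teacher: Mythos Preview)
Your treatment of the cases $|\alpha|<1$ and $|\alpha|>1$ is essentially correct and close to the paper's (in the second case the paper squares a single iterate rather than multiplying two consecutive iterates, but your variant works for the same reason). The genuine gap is the boundary case $|\alpha|=1$.

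For the rational root-of-unity sub-case you write that one should ``apply the product construction recursively'' and that the moduli ``accumulate at $0$'', but no mechanism is given for why recursion helps: after one round of pairwise products the reachable $\phi$-values again form a finite set, and there is no argument that this set contains points of smaller modulus than before, let alone arbitrarily small modulus. Moreover, the non-realness claim ``automatic since $\mathbb{Q}(\alpha)\not\subseteq\mathbb{R}$'' is a non-sequitur: elements of $\mathbb{Q}(\alpha)$ can perfectly well be real, so lying in that field says nothing about whether a particular convergent subsequence avoids $\mathbb{R}$. Even in the irrational sub-case, where your equidistribution argument for $|\phi|\to 0$ is sound, you have not actually verified non-realness of infinitely many terms.

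The paper sidesteps all of this with a one-line reduction. When $|\alpha|=1$ one has $\beta=ci$ with $c\in\mathbb{R}\setminus\{0,\pm1\}$, and the single program step $\gamma:=g_\beta(\beta^2)$ already satisfies
\[
\left|\frac{\gamma-1}{\gamma+1}\right|=|\alpha|^2\left|\frac{\beta^2-1}{\beta^2+1}\right|=\left|\frac{1+c^2}{1-c^2}\right|>1,
\]
together with a direct check that $\gamma\notin\mathbb{R}\cup\{i,-i\}$. Thus $\gamma$ falls into the $|\alpha|>1$ case, and the Ising-program continues from there. This avoids the rational/irrational dichotomy entirely and makes the non-realness verification concrete.
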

\begin{proof}
	We note $(x +1 ) / ( x  - 1) = 1 + 2/(x -1) \in \mathbb{R}$ if and only if $2/(x-1) \in \mathbb{R}$ or, equivalently, $x \in \mathbb{R}$. Thus, we have $(\beta +1 ) / ( \beta  - 1) \not \in \mathbb{R}$. These facts are used repeatedly in this proof. There are three cases:
	
	{\bf Case 1:} $0 < \lvert \beta - 1 \rvert / \lvert \beta  + 1 \rvert < 1$. First, let us describe the Ising-program. We define $a_0 = \beta$ and $a_j = g_{\beta}(a_{j-1})$ for every $j$ with $j \ge 1$. Note that this is an Ising-program. Since $a_j = g_{\beta}^{j}(\beta)$ for $j \ge 1$, by Remark~\ref{rem:stretch} we obtain
    \begin{equation} \label{eq:ising-program:1:case-1}
    a_j - 1 =  \frac{2}{\left( \frac{\beta+1}{\beta-1}  \right)^{2j+1} - 1},
  \end{equation}
  By hypothesis we have $ \lvert \beta +1 \rvert / \lvert \beta  - 1 \rvert > 1$, so the right hand side of \eqref{eq:ising-program:1:case-1} converges to $0$. Moreover, since  $(\beta +1 ) / ( \beta  - 1) \not \in \mathbb{R}$, there are infinitely many positive integers $j$ such that the right hand side of  \eqref{eq:ising-program:1:case-1}  is not real. Therefore, we can find a positive integer $k$ with  $0 < \lvert 1 - a_k \rvert \le \epsilon$ and $a_k \not \in \mathbb{R}$.

  	{\bf Case 2:} $\lvert \beta - 1 \rvert / \lvert \beta  + 1 \rvert > 1$. First, we give an Ising-program  $b_0, b_1, \ldots$ with the property that $b_j$ converges to $-1$. We define $b_0 = \beta$ and $b_j = g_{\beta}(b_{j-1})$ for every $j$ with $j \ge 1$. By Remark~\ref{rem:stretch} we have
    \begin{equation*}
    b_j + 1 =  -\frac{2}{\left( \frac{\beta-1}{\beta+1}  \right)^{2j+1} - 1},
    \end{equation*}  	
  	so $b_j + 1$ converges to $0$ because $\lvert \beta - 1 \rvert / \lvert \beta  + 1 \rvert > 1$. Once we have this Ising program, we define $a_0 = \beta$, $a_{2j-1} = b_j$ and $a_{2j} = g_\beta(a_{2j - 1}^2) =  g_\beta(b_j^2)$ for all $j \ge 1$. From Remark~\ref{rem:stretch} we obtain
  	\begin{equation} \label{eq:ising-program:1:case-2}
  		  \frac{a_{2j} -1 }{a_{2j} +1}  =  \frac{g_{\beta}(b_{j}^2) -1 }{g_{\beta}(b_{k}^2) +1}  =  \left( \frac{\beta-1}{\beta+1} \right)^{2} \frac{b_j^2-1}{b_j^2+1}.
  	\end{equation}
    The right hand side of \eqref{eq:ising-program:1:case-2} converges to $0$, so $a_{2j}$ converges to $1$. Moreover, \eqref{eq:ising-program:1:case-2} in combination with $b_j = g_\beta^j(\beta)$ and Remark~\ref{rem:stretch} gives $(b_j + 1) / (b_j -1) = ( (\beta -1)/(\beta+1) )^{-2j-1}$ and
  	\begin{equation*}
  		  \frac{a_{2j} -1 }{a_{2j} +1}  =  \left( \frac{\beta-1}{\beta+1} \right)^{2} \frac{(b_j+1)(b_j-1)^2}{(b_j-1) (b_j^2+1)}  = \left( \frac{\beta-1}{\beta+1} \right)^{-2j + 1} \frac{(b_j-1)^2}{b_j^2+1} .
  	\end{equation*}
  	Since $(\beta-1)(\beta+1)$ is not real and $(b_j-1)^2/(b_j^2+1)$ converges to $2$, there are infinitely many values of $j$ such that $(a_{2j} -1)/(a_{2j} +1)$ is not real. Equivalently, there are infinitely many values of $j$ such that $a_{2j}$ is not real. Hence, for every $\epsilon > 0$, there is a positive integer $k$ such that  $0 < \lvert 1 - a_{2k} \rvert \le \epsilon$ and $a_{2k} \not \in \mathbb{R}$.

{\bf Case 3:} $\lvert \beta - 1 \rvert / \lvert \beta  + 1 \rvert = 1$. Then we note that $\beta \in \mathbb{R} i$ (Proposition~\ref{prop:R}, Item~\ref{item:R:3}). We can write $\beta = c i$ with $c$ a real number with $c \not \in \{0, 1, -1\}$, where we used that $\beta \not \in \{0, i, -i\}$. We consider $\gamma = g_\beta (\beta^2) $. We claim that $\gamma \not \in \{i, -i\} \cup \mathbb{R}$ and $\lvert \gamma - 1 \rvert / \lvert \gamma  + 1 \rvert > 1$. Assuming this, we obtain our Ising-program as $b_0 = \beta$, $b_1 = \gamma$ and $b_j = a_{j-1}$ for all $j\ge 2$, where $a_0, a_1, \ldots$ is the Ising-program of Case~2 with $\beta = \gamma$.  We study $\gamma$ to conclude the proof. We note that $\gamma$ is the edge interaction implemented by the series composition of three edges with edge interactions $\beta, \beta^2$ and $\beta$. Recall that series compositions are commutative when it comes to the weight they implement (see Section~\ref{sec:pre:implementations}) and that the weight implemented by the series composition of two graphs implementing $w_1$ and $w_2$ is $(w_1  w_2 + 1) / (w_1 + w_2) = h_{w_1}(w_2) = h_{w_2}(w_1)$. Thus, we have $\gamma = h_\beta(h_\beta(\beta^2))$. This is also the edge interaction implemented by the series composition of three edges with edge interactions  $\beta, \beta$ and $\beta^2$, so we can also write $\gamma = h_{\beta^2}(h_\beta(\beta))$. From the expression $\gamma = h_\beta(h_\beta(\beta^2))$ and Remark~\ref{rem:stretch} we find that
  	\begin{equation*}
	\left|\frac{\gamma-1 }{\gamma +1} \right| =  \left|\frac{g_{\beta}(\beta^2) -1 }{g_{\beta}(\beta^2) +1} \right| =  \left| \frac{\beta-1}{\beta+1}  \right|^{2} \left| \frac{\beta^2-1}{\beta^2+1} \right|  = \left| \frac{1 + c^2}{ 1 - c^2 } \right| > 1,
    \end{equation*}
  where we used that $c \ne \pm 1$. In particular, we have $\gamma \ne \pm i$. From the expression $\gamma = h_{\beta^2}(h_\beta(\beta))$ we are going to show that $\gamma \not \in \mathbb{R}$, which would complete the proof. We have
  \begin{equation*}
  	 h_{\beta^2}(x) =  h_{-c^2}(x) = \frac{- c^2 x + 1}{-c^2 + x} = \frac{ \left(- c^2 x + 1 \right) \left( \overline{x} - c^2 \right) }{\left| x -c^2  \right|^2} = \frac{ - c^2 \lvert x\rvert ^2 + \overline{x}  + c^4 x  - c^2  }{\left| x -c^2  \right|^2}.
  \end{equation*}
 Since $c^4 \ne 1$ because $c \ne \pm 1$, we find that $h_{\beta^2}(x)$ is non-real for any non-real $x$. In particular this is the case for $x = h_\beta(\beta)$ as  $h_\beta(\beta) = (1 + \beta^2) / (2 \beta) =  - i (1-c^2)/ (2c) \not \in \mathbb{R}$. We conclude that $\gamma = h_{\beta^2}(h_\beta(\beta))$ is not real as we wanted.
\end{proof}

Using the notation of Appendix~\ref{sec:A} (see Definition~\ref{def:pa}), the statement of Lemma~\ref{lem:ising-program:1}  implies ``$1$ is program-approximable for $g_\beta$ and $a_0 = \beta$ for any $\beta \in \mathbb{C} \setminus (\mathbb{R} \cup \{i,-i\})$''. This is one of the three conditions that we have to check to apply Lemma~\ref{lem:efficient-covering} with  $\omega = 1$ in our current setting. Lemma~\ref{lem:ising-program:conditions} shows that the two other conditions hold for some edge interactions~$\beta$.

\begin{lemma} \label{lem:ising-program:conditions}
  Let $d$ be an integer with $d \ge 2$ and let $\beta \in \mathbb{C} \setminus  (\mathbb{R} \cup \{i, -i\})$. Let $f_\beta(x) = g_\beta(x^d)$, where $g_\beta$ is as in Definition~\ref{def:ising-program}. Let $z = f_\beta'(1)/d$. If $0 < \lvert \beta -1 \rvert / \lvert \beta  + 1 \rvert < 1$ and $\lvert \beta \rvert \ne 1$, then 
  \begin{enumerate}
  	\item $g_\beta'(1)\not \in \{0, \infty\}$, $ g_\beta''(1) \ne \infty$;
  	\item  $0 < \lvert z \rvert < 1$ and $z \not \in \mathbb{R}$.
  \end{enumerate}
\end{lemma}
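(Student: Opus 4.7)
The plan is to reduce both assertions to the single explicit formula $z = g_\beta'(1) = \bigl((\beta-1)/(\beta+1)\bigr)^2$, which in turn follows from a direct chain-rule calculation at the common fixed point $1$ of $h_\beta$ and $g_\beta$.

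First I would compute $h_\beta'(x)$ from the quotient rule, obtaining $h_\beta'(x) = (\beta^2-1)/(\beta+x)^2$, and in particular $h_\beta'(1) = (\beta-1)/(\beta+1)$. Since $h_\beta(1) = 1$, the chain rule gives $g_\beta'(1) = h_\beta'(h_\beta(1))\,h_\beta'(1) = h_\beta'(1)^2 = \bigl((\beta-1)/(\beta+1)\bigr)^2$. Differentiating $f_\beta(x) = g_\beta(x^d)$ at $x=1$ gives $f_\beta'(1) = d\,g_\beta'(1)$, so $z = g_\beta'(1)$.

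From this single formula both assertions essentially fall out. For assertion~(1), $g_\beta'(1) \ne 0$ is immediate from $|\beta-1|/|\beta+1| > 0$, while finiteness of $g_\beta'(1)$ and $g_\beta''(1)$ is automatic since $g_\beta$ is a composition of two Mobius maps, hence a Mobius map itself, and $g_\beta(1) = 1 \ne \infty$ means $1$ is not a pole of $g_\beta$, so $g_\beta$ is analytic at $1$. For the norm part of assertion~(2), $|z| = (|\beta-1|/|\beta+1|)^2$, and the hypothesis $0 < |\beta-1|/|\beta+1| < 1$ gives $0 < |z| < 1$.

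The only step requiring any work is showing $z \notin \mathbb{R}$, and this is where the two excluded conditions on $\beta$ come in. Setting $w := (\beta-1)/(\beta+1)$, the relation $z = w^2 \in \mathbb{R}$ is equivalent to $w \in \mathbb{R} \cup i\mathbb{R}$. The Mobius map $\phi(\beta) = w$ has real coefficients, hence preserves $\widehat{\mathbb{R}}$, so $w \in \mathbb{R}$ would force $\beta \in \mathbb{R}$, which is excluded. By Proposition~\ref{prop:R}, Item~\ref{item:R:3}, $\phi$ sends the unit circle bijectively onto $i\mathbb{R}$, so $w \in i\mathbb{R}$ would force $|\beta| = 1$, also excluded by hypothesis. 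Hence $z \notin \mathbb{R}$. The main (minor) obstacle is precisely this non-realness argument, which requires both hypotheses in tandem, one to block each branch $\mathbb{R}$ and $i\mathbb{R}$.
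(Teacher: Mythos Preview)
Your proof is correct and follows essentially the same approach as the paper: compute $z = g_\beta'(1) = \bigl((\beta-1)/(\beta+1)\bigr)^2$ via the chain rule, read off $0 < |z| < 1$ from the hypothesis, and rule out $z \in \mathbb{R}$ by splitting into $w \in \mathbb{R}$ (forcing $\beta \in \mathbb{R}$) and $w \in i\mathbb{R}$ (forcing $|\beta| = 1$). Your treatment of the finiteness of $g_\beta'(1)$ and $g_\beta''(1)$ via analyticity at a non-pole is slightly slicker than the paper's explicit computation of $g_\beta''(1)$, and your appeal to Proposition~\ref{prop:R}, Item~\ref{item:R:3}, for the $i\mathbb{R}$ branch replaces the paper's direct calculation of $\phi^{-1}(ci)$, but these are cosmetic streamlinings of the same argument.
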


\begin{proof}
  Let us determine $z$, $g_{\beta}'(1)$ and $g_{\beta}''(1)$. We have $h_{\beta}'(x) = (\beta^2 -1)/(\beta + x)^2$ and $h_{\beta}''(x) = 2 (1-\beta^2)/(\beta + x)^3$. Hence, we obtain $g_{\beta}'(1) = h_{\beta}'(h_{\beta}(1)) h_{\beta}'(1) = (\beta-1)^2 / (\beta+1)^2$. Since $0 < \lvert \beta -1 \rvert / \lvert \beta  + 1 \rvert < 1$, we have $0 < \lvert g_\beta'(1) \rvert < 1$, so $g_\beta'(1) \not \in \{0, \infty\}$.  Moreover, from $z = f'(1)/d = g_{\beta}'(1)$, we obtain $0 < \lvert z \rvert < 1$. Note that $(\beta-1)^2/ (\beta+1)^2 \in \mathbb{R}$ if and only if $(\beta-1) / (\beta+1) \in \mathbb{R} \cup \mathbb{R}i$. Also note that $(\beta-1)/ (\beta+1) = 1 - 2 / (\beta +1)$, so $(\beta-1) / (\beta+1) \in \mathbb{R}$ if and only if $\beta  \in \mathbb{R}$.   If $(\beta-1) / (\beta+1) = ci$ for some $c \in (-1,1)$, then we obtain 
   \begin{equation*}
     \beta =\frac{1+ci}{1-ci} = \frac{1-c^2}{1+c^2} + \frac{2 c}{1 + c^2} i,
   \end{equation*}
so $\vert \beta \rvert^2 = 1$. Since $\beta \not \in \mathbb{R}$ and $\lvert \beta \rvert \ne 1$ by hypothesis, we find that $z = g_{\beta}'(1) = (\beta-1)^2 / (\beta+1)^2\not \in \mathbb{R}$ as we wanted. Finally, let us determine $g_{\beta}''(1)$. We have 
$g''_{\beta}(1) = -4{(\beta-1)}^2b/{\beta+1)}^4 \not \in \{0, \infty\}$, where we used that $\beta \not \in \{1, -1\}$. This finishes the proof.
\end{proof}

\begin{remark} \label{rem:Ef}
  The map $f_\beta(z) = g_\beta(z^d)$ does not have exceptional points. To see this, we apply Lemma~\ref{lem:Ef}. First, let us determine the  points of $f_\beta$ with multiplier $0$. We have
  \begin{equation*}
      f_\beta(z) = \frac{(\beta^2 + 1)z^d  + 2\beta}{\beta^2 + 1 + 2 \beta z^d} \qquad \text{and} \qquad f_\beta'(z) = d z^{(d-1)} \frac{(\beta^2 - 1)^2}{(1 + 2  \beta z^d + \beta^2)^2},
  \end{equation*}
  so the only point with multiplier $0$ is $z= 0$. However, $0$ is not a fixed point of $f_\beta$ because $f_\beta(0) = 2 \beta / (1 + \beta^2)$, so $f_\beta$ does not have any exceptional points.
\end{remark}

Now we combine all the results obtained so far in this section obtaining Corollaries~\ref{cor:ising:close-1:1} and~\ref{cor:ising:implementations:1}.

\begin{corollary}  \label{cor:ising:close-1:1}
	Let $\Delta$ be an integer with $\Delta \ge 3$ and let $\beta \in \algcomplex \setminus \mathbb{R}$ with $\lvert \beta \rvert \ne 1$ and  $0 < \lvert \beta - 1\rvert / \lvert \beta + 1 \rvert < 1$. There is a rational number $r \in (0,1)$ and  a polynomial-time algorithm such that, on input $\lambda \in B(1, r) \cap  \algcomplex$  and rational $\epsilon > 0$, computes a graph $G$ that  $(\Delta, \beta)$-implements a complex number $\hat{\lambda}$ with $\lvert \lambda - \hat{\lambda} \rvert \le \epsilon$.
\end{corollary}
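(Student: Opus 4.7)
The plan is to apply Lemma~\ref{lem:efficient-covering} with $g = g_\beta$, $d = \Delta - 1$, $a_0 = \beta$ and $\omega = 1$, and then convert the resulting element of a Mobius-program into an implementing graph via Proposition~\ref{prop:ising-program}. The point $\omega = 1$ is a fixed point of $f_\beta(x) := g_\beta(x^d)$ because $h_\beta(1) = (\beta+1)/(\beta+1) = 1$, hence $g_\beta(1) = 1$ and $f_\beta(1) = 1$.

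I verify the three hypotheses of Lemma~\ref{lem:efficient-covering}. Observe first that a Mobius-program for $g_\beta$ and $d$ starting at $\beta$ is exactly what Definition~\ref{def:ising-program} calls an Ising-program. Thus Lemma~\ref{lem:ising-program:1}, specialised to the regime $0 < |\beta-1|/|\beta+1| < 1$ (Case 1 in its proof), says that $\omega = 1$ is program-approximable for $g_\beta$ and $a_0 = \beta$, which is assumption~(1). Assumptions~(2) and~(3) are precisely the content of Lemma~\ref{lem:ising-program:conditions}, namely $g_\beta'(1) \notin \{0,\infty\}$, $g_\beta''(1) \ne \infty$, and $z := f_\beta'(1)/d = g_\beta'(1) = (\beta-1)^2/(\beta+1)^2$ satisfies $0 < |z| < 1$ and $z \notin \mathbb{R}$. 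Here the hypothesis $|\beta| \ne 1$ (together with $\beta \notin \mathbb{R}$) is exactly what prevents $(\beta-1)/(\beta+1)$ from being purely imaginary, and so prevents $z$ from being real.

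Applying Lemma~\ref{lem:efficient-covering} therefore yields $r_0 \in \algebraic_{>0}$ and, for each positive integer $k$, a polynomial-time algorithm that on inputs $\lambda \in B(1, r_0) \cap \algcomplex$ and rational $\epsilon > 0$ produces the representation of an element $a_m$ of some Mobius-program for $g_\beta$ starting at $\beta$ with $|\lambda - a_m| \le \epsilon$. Taking $k = 1$ suffices for us. Since $r_0$ is a fixed algebraic positive number, I fix once and for all a rational $r \in (0, 1)$ with $r \le r_0$, which can be computed from $\beta$ using the comparison procedure for algebraic numbers described in Section~\ref{sec:pre:algebraic}.

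To finish, I feed the representation of the Mobius/Ising-program $a_0, a_1, \ldots, a_m$ into Proposition~\ref{prop:ising-program}, obtaining in $\mathrm{poly}(\Delta, m)$ additional time a graph $G$ with maximum degree at most $\Delta$ that $(\Delta, \beta)$-implements the edge interaction $\hat{\lambda} := a_m$. Composing the two algorithms yields the desired polynomial-time procedure. There is no real obstacle left: the substantive analytic content was carried out in Lemmas~\ref{lem:ising-program:1} and~\ref{lem:ising-program:conditions}, and the only task in the corollary is to assemble these with Lemma~\ref{lem:efficient-covering} and Proposition~\ref{prop:ising-program} and to check that the algebraic radius supplied by Lemma~\ref{lem:efficient-covering} can be replaced by a rational one.
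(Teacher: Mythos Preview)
Your proof is correct and follows essentially the same approach as the paper: verify the hypotheses of Lemma~\ref{lem:efficient-covering} via Lemmas~\ref{lem:ising-program:1} and~\ref{lem:ising-program:conditions}, take $k=1$ in Definition~\ref{def:dpa}, and translate the resulting Ising-program into a graph using Proposition~\ref{prop:ising-program}. You add a couple of details the paper leaves implicit (checking $1$ is a fixed point of $f_\beta$ and replacing the algebraic radius by a rational one), but the structure is the same.
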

\begin{proof}
	Set $d := \Delta -1$. Lemma~\ref{lem:ising-program:1} and Lemma~\ref{lem:ising-program:conditions} provide us with the three conditions that the fixed point $\omega = 1$ of $f_\beta(x) = g_\beta(x^d)$ has to satisfy to apply Lemma~\ref{lem:efficient-covering}. We find that $1$ is densely program-approximable in polynomial time for $g_\beta$, $d$ and $a_0 = \beta$. In terms of Ising-programs, this gives (Definition~\ref{def:dpa} with $k = 1$) that there is $r > 0$ and an algorithm, on inputs a rational $\epsilon > 0$ and  $\lambda \in B(1, r) \cap \algcomplex$, that computes, in polynomial time in $\mathrm{size}(\epsilon)$ and $\mathrm{size}(\lambda)$ a complex number $\hat{\lambda}$ generated by an Ising-program with $\lvert \lambda - \hat{\lambda} \rvert \le \epsilon$.  This can be then translated to the result given in the statement by applying Proposition~\ref{prop:ising-program}.
\end{proof}

\begin{corollary}  \label{cor:ising:implementations:1}
	Let $\Delta$ be an integer with $\Delta \ge 3$ and let $\beta \in \algcomplex \setminus \mathbb{R}$ with $\lvert \beta \rvert \ne 1$ and  $1/\sqrt{\Delta -1} < \lvert \beta - 1\rvert / \lvert \beta + 1 \rvert < 1$. Then the pair $(\Delta, \beta)$ implements the complex plane in polynomial time for the Ising model.
\end{corollary}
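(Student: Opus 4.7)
\medskip

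\noindent\textbf{Proof proposal.} The plan is to apply Lemma~\ref{lem:implementations} to the Mobius map $g_\beta$ and the fixed point $\omega=1$ of $f_\beta(x):=g_\beta(x^d)$, where $d:=\Delta-1$. If its hypotheses are satisfied, that lemma produces a polynomial-time algorithm that, on input $\lambda\in\algcomplex$ and rational $\epsilon>0$, outputs an element $a_k$ of a Mobius-program for $g_\beta$ starting at $a_0=\beta$ with $|\lambda-a_k|\le\epsilon$. Since Mobius-programs of this shape are exactly Ising-programs for $\beta$, Proposition~\ref{prop:ising-program} converts $a_k$ into a graph $H_k$ with maximum degree at most $\Delta$ that $(\Delta,\beta)$-implements $a_k$, in time polynomial in the representation of $a_k$. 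Combined, these two steps match Definition~\ref{def:impplane} and give the claimed polynomial-time implementation of the complex plane.

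To verify the hypotheses of Lemma~\ref{lem:implementations}, I would check in order: (a) $g_\beta$ is a Mobius map with algebraic coefficients, which is immediate from $g_\beta=h_\beta\circ h_\beta$ together with $\beta\in\algcomplex$; (b) $g_\beta(\infty)\in\mathbb{C}$, which holds because $g_\beta(\infty)=h_\beta(\beta)=(\beta^2+1)/(2\beta)$ is well-defined as $\beta\ne 0$ (we have $\beta\notin\mathbb{R}$); (c) $\omega=1$ is a repelling fixed point of $f_\beta$; (d) $\omega=1$ is densely program-approximable in polynomial time for $g_\beta$ and $a_0=\beta$; (e) $0,\infty\notin E_{f_\beta}$. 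Item (e) is exactly Remark~\ref{rem:Ef}, which shows $E_{f_\beta}=\emptyset$, and item (d) is precisely the conclusion of Corollary~\ref{cor:ising:close-1:1}, whose hypotheses ($\beta\in\algcomplex\setminus\mathbb{R}$, $|\beta|\ne 1$, $0<|\beta-1|/|\beta+1|<1$) are all inherited from our assumptions.

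The only computation that is genuinely new is (c), which is also where the square-root in the hypothesis enters. Using $g_\beta'(1)=(\beta-1)^2/(\beta+1)^2$ from Lemma~\ref{lem:ising-program:conditions}, the chain rule gives
\begin{equation*}
f_\beta'(1)=d\cdot g_\beta'(1)\cdot 1^{\,d-1}=(\Delta-1)\,\frac{(\beta-1)^2}{(\beta+1)^2},
\end{equation*}
so $|f_\beta'(1)|=(\Delta-1)\bigl(|\beta-1|/|\beta+1|\bigr)^2$. The hypothesis $|\beta-1|/|\beta+1|>1/\sqrt{\Delta-1}$ is thus equivalent to $|f_\beta'(1)|>1$, i.e.\ to $\omega=1$ being a repelling fixed point of $f_\beta$. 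This is where the threshold $1/\sqrt{\Delta-1}$ in Lemma~\ref{lem:ising:implementations} comes from.

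The main obstacle in this corollary has already been discharged elsewhere: the technically substantive work is the densely-program-approximable property of Corollary~\ref{cor:ising:close-1:1} (which rests on Lemma~\ref{lem:efficient-covering} and the careful Ising-program construction of Lemma~\ref{lem:ising-program:1}), and the general complex-dynamics packaging in Lemma~\ref{lem:implementations}. The present statement is a clean assembly of these pieces together with the multiplier computation above, and I do not anticipate any further obstacle beyond carefully checking each hypothesis under the standing assumptions on $\beta$.
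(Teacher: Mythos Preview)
Your proposal is correct and follows essentially the same route as the paper: verify the hypotheses of Lemma~\ref{lem:implementations} for $g_\beta$, $d=\Delta-1$, $a_0=\beta$, $\omega=1$ (repelling via the multiplier computation $f_\beta'(1)=d(\beta-1)^2/(\beta+1)^2$, exceptional set empty by Remark~\ref{rem:Ef}, densely program-approximable via Lemma~\ref{lem:efficient-covering}), then convert the resulting Ising-program element to a graph via Proposition~\ref{prop:ising-program}. One small imprecision: the densely-program-approximable property of $\omega=1$ is not literally the \emph{conclusion} of Corollary~\ref{cor:ising:close-1:1} (that corollary outputs graphs, not Mobius-program elements, and does not mention the $k$ distinct values in Definition~\ref{def:dpa}); rather it is established inside the \emph{proof} of that corollary by invoking Lemma~\ref{lem:efficient-covering}, which is exactly how the paper phrases it (``the proof starts the same way as the proof of Corollary~\ref{cor:ising:close-1:1}'').
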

\begin{proof}
  Set $d := \Delta -1$. The proof starts the same way as the proof of Corollary~\ref{cor:ising:close-1:1}. The difference here is that once we show that $1$ is densely program-approximable in polynomial time for $g_\beta$, $d$ and $a_0 = \beta$, we use this property to apply Lemma~\ref{lem:implementations}. First, we have two check the other two hypothesis of Lemma~\ref{lem:implementations}. The first hypothesis that $1$ is a repelling fixed point of $f_\beta$ or, equivalently, $\lvert f_\beta'(1) \rvert > 1$. This follows from  $1/\sqrt{\Delta -1} < \lvert \beta - 1\rvert / \lvert \beta + 1 \rvert$ since $f_\beta'(1) =  d (\beta - 1 )^2 / ( \beta + 1 )^2$. The second hypothesis is that $0$ and $\infty$ are not exceptional points of the rational map $f_\beta$, which holds because $f_\beta$ does not have exceptional points, see Remark~\ref{rem:Ef}.  We conclude by Lemma~\ref{lem:implementations} that there is a polynomial-time algorithm such that, on input $\lambda \in  \algcomplex$  and rational $\epsilon > 0$, computes an element $a_k$ of an Ising-program with $\lvert \lambda - a_k \rvert \le \epsilon$. The result  now follows by applying the algorithm of Proposition~\ref{prop:ising-program} to translate the obtained Ising-program to a graph that $(\Delta, \beta)$-implements $a_k$. 
\end{proof}

Finally, we extend Corollaries~\ref{cor:ising:close-1:1} and~\ref{cor:ising:implementations:1} to the rest of the complex plane when possible.

\begin{lemma}  \label{lem:ising:close-1:2}
	Let $\Delta$ be an integer with $\Delta \ge 3$ and let $\beta \in \algcomplex \setminus \mathbb{R}$ with $\beta \not \in \{i, -i\}$. There is a rational number $r \in (0,1)$ and  a polynomial-time algorithm such that, on input $\lambda \in B(1, r) \cap   \algcomplex$  and rational $\epsilon > 0$, computes a graph $G$ that  $(\Delta, \beta)$-implements a complex number $\hat{\lambda}$ with $\lvert \lambda - \hat{\lambda} \rvert \le \epsilon$.
\end{lemma}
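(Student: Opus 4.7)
The plan is to bootstrap Corollary~\ref{cor:ising:close-1:1} via a reduction through an intermediate edge interaction. For each $\beta\in\algcomplex\setminus(\mathbb{R}\cup\{i,-i\})$ not covered by Corollary~\ref{cor:ising:close-1:1}, namely those with $|\beta|=1$ or $|\beta-1|/|\beta+1|\ge 1$, I would exhibit a polynomial-time constructible graph $G_0=G_0(\beta)$ of maximum degree $\le\Delta$ with two degree-$1$ terminals that $(\Delta,\beta)$-implements some $\gamma=\gamma(\beta)\in\algcomplex$ satisfying the hypotheses of that corollary, namely $\gamma\in\algcomplex\setminus\mathbb{R}$, $|\gamma|\ne 1$, $\gamma\ne\pm i$, and $0<|\gamma-1|/|\gamma+1|<1$. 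Applying Corollary~\ref{cor:ising:close-1:1} to $(\Delta,\gamma)$ yields an algorithm producing graphs $G$ that $(\Delta,\gamma)$-implement approximations of any $\lambda\in B(1,r')\cap\algcomplex$, and substituting each edge of $G$ by a copy of $G_0$ (identifying the edge's endpoints with the terminals of $G_0$) converts $G$ into a $(\Delta,\beta)$-implementation of the same approximation, giving the statement with $r=r'$.

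The construction of $\gamma$ splits by the location of $\beta$. First, when $|\beta-1|/|\beta+1|\ge 1$ and $|\beta|\ne 1$, I would follow the constructions of Cases 2 and 3 of the proof of Lemma~\ref{lem:ising-program:1} to produce an Ising-program whose even-indexed elements $a_{2n}$ converge to $1$. Combining the closed-form expression for $a_{2n}$ that follows from Remark~\ref{rem:stretch} (which yields $(a_{2n}-1)/(a_{2n}+1)= 2((\beta-1)/(\beta+1))^{2n+3}/(1+((\beta-1)/(\beta+1))^{4n+2})$) with the observation that $\beta\notin C(0,1)$ forces the argument of $(\beta-1)/(\beta+1)$ away from $\pm\pi/2$, we see that this quantity leaves the imaginary axis for infinitely many $n$; hence for any sufficiently large such $n$ the value $a_{2n}$ lies in $\algcomplex\setminus(\mathbb{R}\cup C(0,1))$ and satisfies all four required conditions, and $G_0$ is provided by Proposition~\ref{prop:ising-program}.

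Second, when $|\beta|=1$ the Ising-program elements all lie on $C(0,1)$ since $(\beta-1)/(\beta+1)\in i\mathbb{R}$, so a different family of gadgets is needed. In the sub-case $|\beta-1|/|\beta+1|<1$ I would take $\gamma=g_\beta((\beta^2+1)/2)=g_\beta(h_\beta(\beta)\cdot\beta)$, realised by the series composition of two edges of weight $\beta$ flanking the parallel composition of an edge of weight $\beta$ and a path of two edges; this gadget has maximum degree $3$ and degree-$1$ terminals, and the identity $(\gamma-1)/(\gamma+1)=(\beta-1)^3/((\beta+1)(\beta^2+3))$ from Remark~\ref{rem:stretch} reduces the verification of the four conditions to a trigonometric calculation on $\beta=e^{i\theta}$ with $|\theta|<\pi/2$. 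In the remaining sub-case $|\beta|=1$ with $|\beta-1|/|\beta+1|>1$, a more intricate gadget is required; the construction uses real-valued weights $h_\beta^{2k+1}(\beta)$ (implemented by paths of $2k+2$ edges) together with powers of $\beta$, arranged so that the final weight has strictly positive real part while both terminals remain of degree $1$.

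The crux of the proof is this last sub-case. When $|\beta|=1$ and $|(\beta-1)/(\beta+1)|>1$, every outer series piece built from edges of weight $\beta$ or from $\cos\theta$-paths contributes a factor of absolute value at least $|(\beta-1)/(\beta+1)|^2>1$ to $(\gamma-1)/(\gamma+1)$, so naive one-layer constructions automatically violate $|(\gamma-1)/(\gamma+1)|<1$. Escaping this dichotomy requires a carefully chosen series-parallel arrangement of real-valued intermediate gadgets and powers of $\beta$ that distributes the $(\gamma-1)/(\gamma+1)$-magnitude between several layers and exploits the real intermediate values $h_\beta^{2k+1}(\beta)$ to break the unit-circle symmetry. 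Once this case is handled, the rest is the routine reduction described above together with the transitivity of implementations.
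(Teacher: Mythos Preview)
Your overall plan---bootstrapping Corollary~\ref{cor:ising:close-1:1} by first $(\Delta,\beta)$-implementing an auxiliary $\gamma$ that lies in its domain, then invoking transitivity---is exactly the paper's strategy. The difference is in how you organise the case analysis, and this is where a genuine gap appears.

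You split off the sub-case $|\beta|=1$ with $|(\beta-1)/(\beta+1)|>1$ and declare it the crux, sketching only a vague multi-layer gadget. But this case is in fact no harder than your first case. Lemma~\ref{lem:ising-program:1} (specifically its Case~2) applies to \emph{every} $\beta\notin\mathbb{R}\cup\{i,-i\}$ with $|(\beta-1)/(\beta+1)|>1$, regardless of whether $|\beta|=1$; it yields an Ising-program element $a_k\notin\mathbb{R}$ with $|a_k-1|<1/2$, hence $|(a_k-1)/(a_k+1)|<1$. The point you are missing is that you do not need $a_k$ to land directly in the base case $|a_k|\ne 1$: it suffices that $a_k$ lands in one of the two ``easy'' regions $\{|\cdot|\ne 1,\ \mathrm{Re}(\cdot)>0\}$ or $\{|\cdot|=1,\ \mathrm{Re}(\cdot)>0\}$, since you have a separate construction for the latter. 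The paper's proof structures the reduction this way: Case~3 (all $\beta$ with $|(\beta-1)/(\beta+1)|\ge 1$) reduces via Lemma~\ref{lem:ising-program:1} to Case~1 or Case~2, and Case~2 ($|\beta|=1$, $\mathrm{Re}(\beta)>0$) reduces to Case~1. By insisting on a one-step reduction to the base case, you created an artificial obstruction---the need to keep $a_{2n}$ off the unit circle---and then got stuck precisely when $\beta$ lies on the unit circle, where that control is unavailable.

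As a smaller remark, your construction for $|\beta|=1$ with $\mathrm{Re}(\beta)>0$ differs from the paper's (which takes $\beta'=h_\beta(\beta)=\mathrm{Re}(\beta)\in(0,1)$ and then $\gamma=g_{\beta'}(\beta\beta')$), and the ``trigonometric calculation'' you defer is not entirely routine: you still need to verify that your $\gamma$ avoids $C(0,1)$ and $\mathbb{R}$ for all $\theta\in(-\pi/2,\pi/2)\setminus\{0\}$. The paper's choice makes these verifications cleaner because $g_{\beta'}$ has real coefficients and maps the open unit disk to itself.
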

\begin{proof}
	We recall that $\lvert \beta - 1\rvert / \lvert \beta + 1 \rvert < 1$ if and only if $\mathrm{Re}(\beta) > 0$, and $\lvert \beta - 1\rvert / \lvert \beta + 1 \rvert = 1$ if and only if $\mathrm{Re}(\beta) = 0$, see Proposition~\ref{prop:R}, Item~\ref{item:R:3}. We distinguish three cases based on this observation: 
	
	{\bf Case 1:} $\lvert \beta \rvert \ne 1$ and  $0 < \lvert \beta - 1\rvert / \lvert \beta + 1 \rvert < 1$. This case is exactly Corollary~\ref{cor:ising:close-1:1}.
	
{\bf Case 2:} $\lvert \beta \rvert = 1$ and $0 < \lvert \beta - 1\rvert / \lvert \beta + 1 \rvert < 1$. We have $\mathrm{Re}(\beta) > 0$. We consider the edge interaction $\beta' = h_\beta(\beta)$ that is implemented by a path of length two with weights $\beta$. We have $\beta' = (\beta^2 + 1)/ (2 \beta) = (\beta + \beta^{-1})/2 = \mathrm{Re}(\beta) \in (0, 1)$, where we used that $\lvert \beta \rvert = 1$ and $\mathrm{Re}(\beta) > 0$. We now consider the Mobius map $h_{\beta'}(x) = (\beta' x + 1)/(\beta' + x) = (\beta' x + 1)/(\overline{\beta'} + x)$, where we used that $\beta'$ is real. It is well known that this Mobius map fixes $\{x \in \mathbb{C} : \lvert x \rvert = 1\}$ (Proposition~\ref{prop:mobius:disk}). Moreover, $h_{\beta'}(0) = 1/ \mathrm{Re}(\beta) \in (1, \infty)$. Hence, the Mobius map $h_{\beta'}$ sends the open unit disk $\mathbb{D} := B(0,1)$ to $\widehat{\mathbb{C}}\setminus \mathbb{D}$ and it sends $\widehat{\mathbb{C}}\setminus \mathbb{D}$ to $\mathbb{D}$. We conclude that $g_{\beta'}(\mathbb{D}) = \mathbb{D}$. Therefore, $\beta \cdot \beta' \in \mathbb{D}$ and $\gamma := g_{\beta'} (\beta \cdot \beta') \in \mathbb{D}$. We can implement the edge interaction $\gamma$ using the graph given in Figure~\ref{fig:graph-gamma}.  
We have
	\begin{equation*}
		h_{\beta'}(x) = \frac{\beta'^2 x + \beta' + \beta' \lvert x \rvert^2 + \overline{x}}{\lvert \beta' + x \rvert^2},
	\end{equation*}
 	so, since $\beta' \in (0, 1)$, the Mobius map $h_\beta'(x)$ sends points with positive real part to points with positive real part, and non-real points to non-real points. Hence, the Mobius map $g_{\beta'}(x) = h_{\beta'}(h_{\beta'}(x))$ also has these properties. We conclude that $\gamma = g_{\beta'} (\beta \cdot \beta')$ has positive real part and is not real. Putting all this together, $\gamma$ is a non-real number with $\lvert \gamma \rvert < 1$ and  $0 < \lvert \gamma - 1\rvert / \lvert \gamma + 1 \rvert < 1$, so $\gamma$ is in the first case of this proof. We can translate the algorithm of the first case of this proof for $\gamma$ to an algorithm for $\beta$ because we can $(\Delta, \beta)$-implement $\gamma$, see Section~\ref{sec:pre:implementations} for the transitivity property of implementations.  
	
	\begin{figure}[H]
		\centering
			\begin{tikzpicture}
			\node [circle, draw, inner ysep=1mm] (S0) {$s$};
			\node [circle, draw, inner ysep=1mm, right = 1cm of S0] (S1) {};
			\node [circle, draw, inner ysep=1mm, right = 1cm of S1] (S2) {};
			\node [circle, draw, inner ysep=1mm, right = 1.75cm of S2] (T2) {};
			\node [circle, draw, inner ysep=1mm, above right = 0.9cm and 0.7cm of S2] (A) {};
			\node [circle, draw, inner ysep=1mm, right = 1cm of T2] (T1) {};
			\node [circle, draw, inner ysep=1mm, right = 1cm of T1] (T0) {$t$};
			\path[-] 
			(S0) edge node [above] {} (S1)
			(S1) edge node [above] {} (S2)
			(T1) edge node [above] {} (T2)
			(T0) edge node [above] {} (T1)
			(S2) edge[bend left=30] node [above] {} (A)
			(A) edge[bend left=30] node [above] {} (T2)
			(S2) edge[bend right=60] node [below] {} (T2);
		\end{tikzpicture}
		\caption{A graph that $(3, \beta)$-implements $\gamma$.} 
		\label{fig:graph-gamma}
	\end{figure}
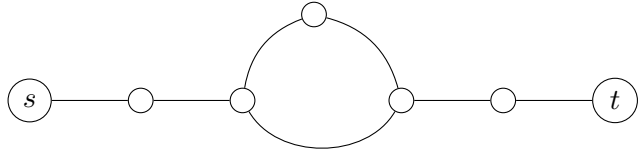

{\bf Case 3:} $\lvert \beta - 1\rvert / \lvert \beta + 1 \rvert \geq 1$. We can use the Ising program of Lemma~\ref{lem:ising-program:1} to generate $a_k \in \mathbb{C} \setminus \mathbb{R}$ with $\lvert 1 - a_k \rvert < 1/2$. We can $(\Delta, \beta)$-implement $a_k$ with the help of Proposition~\ref{prop:ising-program}. Note that  $0 < \lvert a_k - 1\rvert / \lvert a_k + 1 \rvert < 1$, so the edge interaction $a_k$ is in one of the first two cases of the proof. Again from the transitivity property of implementations, we can translate the algorithm of the first two cases for $a_k$ to an algorithm for $\beta$, concluding the proof.
\end{proof}

\begin{lemimplementations}
\statelemimplementations
\end{lemimplementations}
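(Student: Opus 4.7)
My plan is to reduce Lemma~\ref{lem:ising:implementations} to Corollary~\ref{cor:ising:implementations:1} via case analysis and the transitivity of implementations (see Section~\ref{sec:pre:implementations}). Corollary~\ref{cor:ising:implementations:1} already handles the case $|\beta|\ne 1$ together with $\mathrm{Re}(\beta)>0$ (equivalently $0<|(\beta-1)/(\beta+1)|<1$). Under the hypothesis of Lemma~\ref{lem:ising:implementations}, the remaining cases are $(a)$ $|\beta|=1$ with $\mathrm{Re}(\beta)>0$, and $(b)$ $\mathrm{Re}(\beta)\le 0$ (equivalently $|(\beta-1)/(\beta+1)|\ge 1$). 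In each case I will $(\Delta,\beta)$-implement an auxiliary edge interaction~$\beta^{\ast}$ that lies in the domain of Corollary~\ref{cor:ising:implementations:1}, and then apply that corollary to~$\beta^{\ast}$.

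For case~$(a)$, write $\beta=e^{i\theta}$ with $|\theta|\in(0,\pi/2)$; the hypothesis becomes $\tan(|\theta|/2)>1/\sqrt{\Delta-1}$, equivalently $\cos\theta<(\Delta-2)/\Delta$. Following Case~2 of the proof of Lemma~\ref{lem:ising:close-1:2}, I $(\Delta,\beta)$-implement the real weight $\beta'=h_\beta(\beta)=\mathrm{Re}(\beta)\in(0,1)$ by a path of length two, and then combine $\beta$ with $\beta'$ through a gadget analogous to the one in Figure~\ref{fig:graph-gamma} to $(\Delta,\beta)$-implement a non-real edge interaction~$\beta^{\ast}$ with $|\beta^{\ast}|\ne 1$. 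Using the identity~\eqref{eq:hbeta:property}, the ratio $|(\beta^{\ast}-1)/(\beta^{\ast}+1)|$ factors into stretch factors of the form $|(\beta'-1)/(\beta'+1)|$ and $|(w-1)/(w+1)|$ for an inner weight $w=\beta\cdot\beta'$; a direct computation gives $|(w-1)/(w+1)|^{2}=\sin^{2}\theta/(4-3\sin^{2}\theta)$, which exceeds $1/(\Delta-1)$ precisely when $\cos^{2}\theta<(\Delta-2)/(\Delta+2)$. Since $0<\cos\theta<(\Delta-2)/\Delta<\sqrt{(\Delta-2)/(\Delta+2)}$ by the hypothesis, the inner ratio already sits above the target threshold~$1/\sqrt{\Delta-1}$; the remaining task is to pick a terminal extension that preserves enough of this ratio in~$\beta^{\ast}$.

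For case~$(b)$, I apply Lemma~\ref{lem:ising-program:1} (Cases~2 and~3 of its proof) together with Proposition~\ref{prop:ising-program} to $(\Delta,\beta)$-implement a family of auxiliary edge interactions, and argue that for a suitably chosen iterate the resulting value $a_{k}$ falls into Corollary~\ref{cor:ising:implementations:1}'s domain or into case~$(a)$; the conclusion then follows by transitivity. The main obstacle lies in case~$(a)$: because $g_\beta$ preserves the unit circle when $|\beta|=1$, an Ising-program alone cannot produce $|\beta^{\ast}|\ne 1$, so the construction must use a genuinely non-program gadget (for example, one built around a triangle that internally realises $w=\beta\cdot\beta'$). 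Showing uniformly in $\theta$ that the unavoidable $h_\beta$- or $h_{\beta'}$-type corrections at the degree-one terminals do not degrade $|(\beta^{\ast}-1)/(\beta^{\ast}+1)|$ below~$1/\sqrt{\Delta-1}$ is where the bulk of the technical work is concentrated.
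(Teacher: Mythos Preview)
Your approach has a genuine gap. The core idea of reducing to Corollary~\ref{cor:ising:implementations:1} via transitivity is the same as the paper's, but you are missing the crucial step that makes the reduction go through.

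In case~(a) your own computations already show the problem. You correctly establish that the inner weight $w=\beta\beta'$ satisfies $\lvert(w-1)/(w+1)\rvert>1/\sqrt{\Delta-1}$, but any graph that $(\Delta,\beta)$-implements a weight must have terminals of degree~$1$, which forces series compositions at the ends. Each such composition with an available edge interaction (be it $\beta$ with $\lvert(\beta-1)/(\beta+1)\rvert<1$, or $\beta'$ with an even smaller ratio) multiplies the ratio by a factor strictly below~$1$. Near the threshold $\tan(|\theta|/2)\approx 1/\sqrt{\Delta-1}$ you therefore obtain $\lvert(\beta^\ast-1)/(\beta^\ast+1)\rvert\approx (\Delta-1)^{-3/2}<1/\sqrt{\Delta-1}$, and there is no evident ``terminal extension'' that avoids this shrinkage: every weight you can build at this stage also has ratio below~$1$. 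So the construction you sketch does not land in the domain of Corollary~\ref{cor:ising:implementations:1}.

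In case~(b) the gap is even starker: the iterates produced by Lemma~\ref{lem:ising-program:1} converge to~$1$, so $\lvert(a_k-1)/(a_k+1)\rvert\to 0$, and you give no argument that any iterate along the way simultaneously satisfies $\lvert a_k\rvert\ne 1$, $a_k\notin\mathbb{R}$, $\mathrm{Re}(a_k)>0$, and $\lvert(a_k-1)/(a_k+1)\rvert>1/\sqrt{\Delta-1}$.

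The paper avoids both obstacles with a single uniform argument. It uses Lemma~\ref{lem:ising:close-1:2} (valid for \emph{all} $\beta\notin\mathbb{R}\cup\{i,-i\}$, so no case analysis is needed) to implement points in a ball $B(1,r)$. The hypothesis $\lvert(\beta-1)/(\beta+1)\rvert>1/\sqrt{\Delta-1}$ is used precisely to make $1$ a \emph{repelling} fixed point of $f_\beta(x)=g_\beta(x^d)$, so by Theorem~\ref{thm:cd} the iterates $f_\beta^n(B(1,r))$ eventually cover~$\widehat{\mathbb{C}}$. In particular some $f_\beta^N$ sends a point $x^\ast\in B(1,r)$ to the fixed target $\gamma=10(1+i)$; implementing an approximation $\hat{x}$ of $x^\ast$ and then applying $f_\beta^N$ (via Proposition~\ref{prop:ising-program}) yields an implementable $\hat{\gamma}$ with $\lvert\hat{\gamma}-\gamma\rvert\le 0.01$, which is manifestly non-real, off the unit circle, has positive real part, and satisfies $\lvert(\hat{\gamma}-1)/(\hat{\gamma}+1)\rvert>1/\sqrt{2}\ge 1/\sqrt{\Delta-1}$. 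Corollary~\ref{cor:ising:implementations:1} then applies to $\hat{\gamma}$. The complex-dynamics step is exactly the expansion mechanism that your direct gadget constructions lack.
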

\begin{proof}
Set $d := \Delta - 1$. Let $r$ be the positive real number given in Lemma~\ref{lem:ising:close-1:2}. Let $f_\beta(x) = g_\beta(x^d)$. As argued in Corollary~\ref{cor:ising:implementations:1}, $1$ is a repelling fixed point of $f_\beta$ so, by Lemma~\ref{lem:repelling}, $1$ belongs to the Julia set of $f$ and, thus, by Theorem~\ref{thm:cd}, $\bigcup_{n = 0}^{\infty} f_\beta^n(B(1, r)) = \widehat{\mathbb{C}} \setminus E_{f_\beta}$, where $E_{f_\beta}$ is the set of exceptional points of $f_\beta$. In view of Remark~\ref{rem:Ef}, $E_{f_\beta}$ is empty. Let $\gamma = 10(1 + i)$. There is a positive integer $N$ such that $\gamma \in f_\beta^N(B(1, r))$. 
Thus, there is $x^* \in B(1, r) \cap  \algcomplex$ such that $f^N(x^*) = \gamma$. By the continuity of the rational function $f^N$ at $x^*$, there is $\delta \in (0,r)$ such that $\lvert f^N(x^*) - f^N(x) \rvert \le 0.01$ for every $x \in B(x^*, \delta)$. The constants $\gamma$ and $0.01$ are not chosen to be optimal but to make the notation and proof simpler for the reader. In view of Lemma~\ref{lem:ising:close-1:2} we can compute (in constant time) a graph $G$ that $(\Delta, \beta)$-implements a complex number $\hat{x}$ with $\lvert x^* - \hat{x} \rvert \le \delta$. Let $\hat{\gamma} = f_\beta^N(\hat{x})$. Note that we can $(\Delta, \beta)$ implement $\hat{\gamma}$ using the construction from Proposition~\ref{prop:ising-program}  and the fact that we can $(\Delta, \beta)$ implement $\hat{x}$. By continuity, we have $\lvert \gamma - \hat{\gamma} \rvert \le 0.01$. Note that $\hat{\gamma}$ is in $\algcomplex$. Moreover, we have $\mathrm{Re}(\hat{\gamma}) > 0$, so $\lvert \hat{\gamma} - 1\rvert / \lvert \hat{\gamma} + 1 \rvert < 1$. From  $\lvert \gamma - \hat{\gamma} \rvert \le 0.01$ and the triangle inequality we have
\begin{equation*}
   \left| \frac{\hat{\gamma}  - 1}{\hat{\gamma} + 1} \right| = \left| \frac{(\hat{\gamma}-\gamma) + \gamma  - 1}{(\hat{\gamma}-\gamma) + \gamma + 1} \right| \ge \frac{\left| \gamma  - 1\right| - 0.01}{\lvert \gamma +1\rvert + 0.01} > 1 / \sqrt{2}.
\end{equation*}
Hence, the edge interaction $\hat{\gamma}$ is in the region covered by Corollary~\ref{cor:ising:implementations:1}, 
so the pair $(\Delta, \hat{\gamma})$ implements the complex plane in polynomial time for the Ising model. We conclude that the pair $(\Delta, \beta)$ implements the complex plane in polynomial time for the Ising model thanks to the transitivity property of implementations.
\end{proof}

	The complex dynamic argument presented in the proof of Lemma~\ref{lem:ising:implementations} is one of the main ideas behind the results of \cite{Bezb} and is applied twice in  Appendix~\ref{sec:A}. The proof of Lemma~\ref{lem:ising:implementations} is simpler than the ones presented in Appendix~\ref{sec:A} because here we are only trying to approximate $\gamma$ instead of approximating any number in a neighbourhood of $\gamma$. This allows us to use the continuity of $f^N$ at $x^*$ instead of having to use Lipschitz properties of $f^N$ and careful approximations of the quantities involved. 

\begin{remark}
	Lemma~\ref{lem:ising:implementations} can be extended to other points with  $1/\sqrt{\Delta -1} > \lvert \beta - 1\rvert / \lvert \beta + 1 \rvert$. However, we have not found a systematic way to do this. Rather we are aware of points $\beta$ with $1/\sqrt{\Delta -1} > \lvert \beta - 1\rvert / \lvert \beta + 1 \rvert > 1/ (\Delta -1)$ that can be used to $(\Delta, \beta)$-implement edges interactions that are covered by Lemma~\ref{lem:ising:implementations}. For example, this is the case of those points $\beta$ such that there is a ``nice'' graph $G$ with $\ising(G; \beta) = 0$. This is made precise in Section~\ref{sec:zeros}.
\end{remark}

\subsection{Reducing exact computation to approximate computation} \label{sec:hardness:reduction}

In this section we use our implementation results to prove the hardness of approximating the partition function of the Ising model on bounded degree graphs. 
A basic building block for the reduction is the 
binary search (interval-shrinking) technique developed by Goldberg and Jerrum  in the context of the Tutte polynomial \cite{Goldberg2014}.
Since the partition function of the Ising model is a special case of the Tutte polynomial,
this building block is also applicable here.
The interval-stretching  technique requires us to be able to implement the real line in polynomial time, and this is the motivation behind  the results of Section~\ref{sec:hardness:implementations}. 

We need to use a version of the interval-shrinking technique that applies in the
context of non-real edge interactions.  
This has been developed previously~\cite{Galanis2020, Goldberg2017}.
The reduction of \cite{Galanis2020} is particularly relevant
for us because the starting point for their hardness result
is the problem of exactly evaluating the  
the Tutte polynomial,
and crucially this remains $\numP$-hard even in the $q=2$ case (corresponding to the
Ising model) and even when the input is restricted to be a  $3$-regular graph \cite{Kowa2016} (which we require here). In the rest of this section, we briefly explain the reduction given in \cite{Galanis2020} and show how it applies to bounded degree graphs (which were not considered in the previous work). 

First, let us introduce some notation. For numbers $q$ and $\gamma$,  the Tutte polynomial of a graph $G = (V, E)$ is given by
\begin{equation} \label{eq:tutte}
	Z_{\mathrm{Tutte}}(G; q, \gamma) = \sum_{A \subseteq E} q^{k(A)} \gamma^{|A|}, 
\end{equation}
where $k(A)$ denotes the number of connected components in the graph $(V, A)$ (isolated vertices do count).  We have $\ising(G; \beta)  = Z_{\mathrm{Tutte}}(G; 2, \beta - 1)$, see, for instance,~\cite{Sokal2005}. Let $s$ and $t$ be two distinct vertices of G. We define
\begin{equation*}
	Z_{st}(G; q, \gamma) = \sum_{\substack{A \subseteq E: \\ s \text{ and } t \text{ in the same component}}} q^{k(A)} \gamma^{\lvert A \rvert}
\end{equation*}
Analogously, let $Z_{s\vert t}$ be the contribution to $Z_{\text{Tutte}}(G; q, \gamma)$ from the configurations $A \subseteq E$ such that $s$ and $t$ are in different connected components in $(V, A)$. That is, $Z_{s \vert t}(G; q, \gamma) = Z_{\text{Tutte}}(G; q, \gamma) - Z_{st}(G; q, \gamma)$. We define the following computational problems for any rational numbers $q > 0$, $\gamma > 0$, any integer $\Delta \ge 3$ and any $\beta \in \algcomplex$.

\prob{$\I(\Delta, \beta)$.}{A graph $G = (V, E)$ with maximum degree at most $\Delta$.}{The number $\ising(G; \beta ) \in \algcomplex$.}

\prob{$\RT(\Delta, q, \gamma)$.}{A graph $G = (V, E)$ with maximum degree at most $\Delta$ and an edge $(s,t)$ of $G$.}{The rational number $Z_{s|t}(G; q, \gamma ) / Z_{st}(G; q, \gamma )$.}

In \cite{Galanis2020} the authors consider $\RT(\Delta, q, \gamma)$ without restrictions on the maximum degree of the input graph. They only ask for the vertices $s$ and $t$ to be in the same connected component of $G$. Moreover, they let $q$ and $\gamma$ be any non-zero algebraic numbers (possibly non-real or negative real), so they have to study carefully the possibility that $Z_{st}(G; q, \gamma ) = 0$. Their results can be applied to our simplified version of $\RT(\Delta, q, \gamma)$ to obtain the reductions given in Lemmas~\ref{lem:compute-fraction:q>1} and~\ref{lem:reduce-frac-to-exact}.

\begin{lemma}[{Bounded degree version of \cite[Lemmas 42 and 43]{Galanis2020}} for the Ising model] \label{lem:compute-fraction:q>1}
	Let $K$ be a real number with $K > 1$. Let $\Delta \ge 3$ be an integer and let $\beta \in \algcomplex$ such that $(\Delta, \beta)$ implements the real line in polynomial time. Let $y \in \mathbb{C}$ with $y > 1$.  Then we have the reductions
	\begin{align*}
		\RT(\Delta, 2, y-1) & \le_T \IN(\Delta, \beta, K), \\
		\RT(\Delta, 2, y-1) & \le_T \IA(\Delta, \beta, \pi/3). 
	\end{align*}
\end{lemma}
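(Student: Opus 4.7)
The plan is to adapt the interval-shrinking reduction of Goldberg and Jerrum (as refined in \cite{Galanis2020, Goldberg2017}) from exact Tutte evaluation to approximate Ising evaluation, while respecting the degree bound~$\Delta$. The new ingredient compared with \cite{Galanis2020} is the polynomial-time implementation of the real line, which lets us produce bounded-degree gadgets realising any real positive edge interaction of interest.

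Given an instance $(G,(s,t))$ of $\RT(\Delta,2,y-1)$ and a positive real $w$, I would first construct a bounded-degree graph $\tilde{G}_w$. Using the polynomial-time real-line implementation, build a gadget $H_y$ that $(\Delta,\beta)$-implements $y$ and a gadget $H_w$ that $(\Delta,\beta)$-implements $w$; then replace every edge of $G$ other than $(s,t)$ by a copy of $H_y$ and the edge $(s,t)$ by a copy of $H_w$, identifying the degree-$1$ terminals of the gadgets with the original endpoints. Because the terminals have degree $1$, the maximum degree of $\tilde{G}_w$ does not exceed that of $G$. Splitting the Ising partition function according to whether $\sigma(s)=\sigma(t)$ yields the affine identity
\begin{equation*}
\ising(\tilde{G}_w;\beta) \;=\; \mathcal{C}(y,w)\,\bigl(wA+B\bigr),
\end{equation*}
where $\mathcal{C}(y,w)\ne 0$ is an explicit product of gadget normalising constants and $A,B$ are the spin-constrained Ising partition functions of $G\setminus\{(s,t)\}$ at the real parameter~$y$. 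The random-cluster/Tutte correspondence expresses $A$ and $B$ as an invertible linear combination of $Z_{st}(G;2,y-1)$ and $Z_{s|t}(G;2,y-1)$ with coefficients depending only on~$y$, so recovering the pair $(A,B)$ up to scaling is equivalent to computing the target ratio.

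With the affine identity in hand, the interval-shrinking step is routine. Varying $w$ and invoking the $\IN$ (respectively $\IA$) oracle on the polynomial-size graphs $\tilde{G}_w$ approximates $\ising(\tilde{G}_w;\beta)$ and hence pins down the positive real ratio $A/B$ by bisection. A standard binary search on the positive real axis iteratively halves an interval known to contain $A/B$; choosing the bisection weights $w$ to be algebraic reals of polynomially bounded size ensures that each gadget produced by the real-line implementation has polynomial size. The amplification step that in the unbounded-degree proofs of \cite{Galanis2020} is carried out by adding many parallel edges between $s$ and $t$ is realised here instead by a single bounded-degree gadget implementing a suitably large real weight $w_k$; this amplifies the contribution of one of $A$, $B$ relative to the other enough to overcome the multiplicative factor $K$ (respectively the additive slack $\pi/3$) in the approximation oracle. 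Exactly as in \cite[Lemmas 42--43]{Galanis2020}, one analyses separately the cases where $A/B$ is bounded or unbounded, and when only the argument oracle is available one first reconstructs the argument of $wA+B$ and then exploits the fact that a strict change in argument as $w$ sweeps through $(0,\infty)$ must occur (since $A,B$ are real and positive).

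The main obstacle, and the reason this is the bounded-degree \emph{version} of the analogous results in \cite{Galanis2020}, is precisely the amplification step: adding many parallel copies of a gadget would violate the degree bound at $s$ and $t$, so the amplification has to be packaged inside a single bounded-degree gadget. This is exactly what the hypothesis that $(\Delta,\beta)$ implements the real line in polynomial time provides. Once it is available the rest of the argument follows \cite{Galanis2020} with only cosmetic changes, and the whole reduction runs in polynomial time because each oracle call is on a polynomially bounded input and only polynomially many rounds of bisection are needed.
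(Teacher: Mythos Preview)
Your overall architecture is the same as the paper's: replace the edges of $G$ by bounded-degree gadgets, obtain an affine relation in the auxiliary weight, and run the Goldberg--Jerrum interval shrinking. However, two points are handled incorrectly or not at all.

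First, and most importantly, you assume you can $(\Delta,\beta)$-implement the real weights $y$ and $w$ \emph{exactly}. The hypothesis only gives you, for any target $\lambda\in\algebraic$ and any rational $\epsilon>0$, a gadget implementing some $\hat\lambda$ with $|\lambda-\hat\lambda|\le\epsilon$. So every edge of $G$ carries a slightly perturbed weight $\hat y$, not $y$, and the auxiliary weight is $\hat w$, not $w$; your quantities $A,B$ and the normalising factor $\mathcal{C}$ all depend on these perturbations. The paper treats this explicitly as its ``second difference'': one runs the binary search to compute $Z_{s|t}(H;2,\hat\gamma)/Z_{st}(H;2,\hat\gamma)$ for an approximation $\hat\gamma$ of $\gamma=y-1$, and then argues (via polynomial bounds on $|Z_{st}|$, $|Z_{s|t}|$ from \cite[Section 6.1]{Galanis2020}) that by taking $\mathrm{size}(\delta)\in\mathrm{poly}(\mathrm{size}(\epsilon),\mathrm{size}(H))$ the computed ratio is within $\epsilon$ of the true rational target, which can then be recovered exactly. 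Without this step your reduction does not output the exact rational $Z_{s|t}/Z_{st}$.

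Second, your treatment of the argument oracle is wrong as written. You say a ``strict change in argument as $w$ sweeps through $(0,\infty)$ must occur (since $A,B$ are real and positive)'', but if $A,B>0$ and $w>0$ then $wA+B>0$ and its argument is identically $0$; the oracle never sees a change. The point of the binary search in \cite{Galanis2020} (and in the paper) is to locate the \emph{zero} of the affine function $\varepsilon\mapsto f(\varepsilon;H,\gamma)$, which requires the auxiliary weight $\gamma'$ (equivalently your $w$) to range over real values making $f$ change sign. You therefore need gadgets implementing (approximations of) arbitrary real weights, not just large positive ones, and the argument oracle is used to detect the sign change; amplification by a ``suitably large $w_k$'' is not the mechanism here.
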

\begin{proof}
	The proof is the almost the same as that of  \cite[Lemmas 42 and 43]{Galanis2020}. Here we indicate how we adapt the reduction of \cite[Lemmas 42 and 43]{Galanis2020} to graphs with maximum degree $\Delta$. First, let us translate our Ising notation to the notation used in the proof of  \cite[Lemmas 42 and 43]{Galanis2020}. In the original proof we have two weights $\gamma_1 \in (-2, -1)$ and $\gamma_2 > 0$ and access to an oracle that approximates the norm or determines the sign of the multivariate Tutte polynomial on weighted graphs with weights in $\{\gamma_1, \gamma_2\}$. Note that determining the sign reduces to additively approximating the argument of this polynomial with error at most $\pi/3$, so we can use our oracle $\IA(\Delta, \beta, \pi/3)$ instead. The purpose of the weights $\gamma_1$ and $\gamma_2$ is implementing the real line in polynomial time for the Tutte polynomial (\cite[Corollary 12]{Galanis2020}). Here the role of these weights is performed by $\beta$. Hence, every time \cite[Corollary 12]{Galanis2020} is used in the proof of \cite[Lemmas 42 and 43]{Galanis2020} we use the fact that $(\Delta, \beta)$ implements the real line in polynomial time instead. The reduction of \cite[Lemmas 42 and 43]{Galanis2020} computes the ratios $Z_{s|t}(H; q, \gamma ) / Z_{st}(H; q, \gamma)$ for some positive number $\gamma$ that can be implemented using $\gamma_1$ and $\gamma_2$. Here we set $\gamma = y-1$. The only relevant properties of $\gamma$ in the proof of \cite[Lemmas 42 and 43]{Galanis2020} are $\gamma > 0$ and the fact that $\gamma$ can be implemented exactly.
	
	There are two differences between this proof and the proof of \cite[Lemmas 42 and 43]{Galanis2020}. Let $H$ and $(s,t)$ be the inputs of $ \RT(\Delta, 2, y-1)$. The first difference in the proof is that we restrict ourselves to computing ratios $Z_{s|t}(H; q, \gamma ) / Z_{st}(H; q, \gamma )$ where $(s,t)$ is an edge of $H$. This is so that all the graphs considered in the reduction have maximum degree at most $\Delta$. The original proof applies one of the oracles $\IN(\Delta, \beta, K)$ and $\IA(\Delta, \beta, \pi/3)$ to a copy of $H$ with an extra edge joining $s$ and $t$. This extra edge has a weight $\gamma'$ that is updated repeatedly during the binary search. The weight $\gamma'$ is implemented using \cite[Corollary 12]{Galanis2020} or, in our case, using the fact that $(\Delta, \beta)$ implements the real line in polynomial time. Instead of adding an extra edge between $s$ and $t$, here we modify the edge $(s,t)$ so that its weight is $\gamma \cdot \gamma'$, producing the same effect as adding an extra edge from $s$ to $t$ with weight $\gamma'$. This time we have to implement $\gamma \cdot \gamma'$ instead. Let $H'$ be the graph obtained by copying $H$ and substituting the edge $(s,t)$ with an appropriate graph that $(\Delta, \beta)$-implements $\gamma \cdot \gamma'$. Then the graph $H'$ also has maximum degree at most $\Delta$. Moreover, for $\varepsilon = \gamma'+1$ we have
	\begin{equation} \label{eq:reduction:1}
		\begin{aligned}
			Z_{\text{Tutte}}(H'; q, \gamma) & = Z_{st}(H; q, \gamma)(1+ \gamma') + Z_{s|t}(H; q, \gamma) \left( 1 + \frac{\gamma'}{q} \right) \\
			& = Z_{s|t}(H; q, \gamma)\left( 1 - \frac{1}{q} \right) + \varepsilon \left( Z_{st}(H; q, \gamma) + \frac{1}{q} Z_{s|t}(H; q, \gamma) \right) \\ & = f(\varepsilon; H, \gamma), 
		\end{aligned}
	\end{equation}
	where $f(\varepsilon; H, \gamma)$ is the linear function to which the binary search will be performed. The purpose of the binary search is finding a zero of $f(\varepsilon; H, \gamma)$, which allows us to compute the ratio $Z_{s|t}(H; q, \gamma ) / Z_{st}(H; q, \gamma )$.
	
The second difference is that we cannot implement $\gamma$ exactly.   We can bypass this by using a very close approximation $\hat{\gamma}$ of $\gamma$ instead. We use the fact that we can $(\Delta, \beta)$-implement $\hat{\gamma}$ with $\lvert \gamma - \hat{\gamma} \rvert \le \delta$ in polynomial time in the size of $\delta$. We perform the binary search on $f(\varepsilon; H, \hat{\gamma})$ instead. This allows us to compute the rational number $Z_{s|t}(H; q, \hat{\gamma} ) / Z_{st}(H; q, \hat{\gamma} )$. We can choose $\delta$ with $\mathrm{size}(\delta) \in \mathrm{poly}(\mathrm{size}(\epsilon), \mathrm{size}(H))$ such that $\lvert Z_{s|t}(H; q, \hat{\gamma} ) - Z_{s|t}(H; q, \gamma)\rvert \leq \epsilon$ and $\lvert Z_{st}(H; q, \hat{\gamma}) - Z_{st}(H; q, {\gamma}) \rvert \le \epsilon$, see for instance \cite[Lemma 55]{Galanis2020}. Therefore, the error that we make by outputting $Z_{s|t}(H; q, \hat{\gamma} ) / Z_{st}(H; q, \hat{\gamma} )$ instead of $Z_{s|t}(H; q, \gamma ) / Z_{st}(H; q, \gamma)$ can be made to be at most $\epsilon$ by choosing $\delta$ with $\mathrm{size}(\delta) \in \mathrm{poly}(\mathrm{size}(\epsilon), \mathrm{size}(H))$ thanks to the lower  and upper bounds on $\lvert Z_{s|t}(H; q, \cdot )\rvert $ and $\lvert Z_{st}(H; q, \cdot )\rvert $, see \cite[Section 6.1]{Galanis2020} for these bounds. 
We conclude that we can compute $Z_{s|t}(H; q, \gamma ) / Z_{st}(H; q, \gamma)$ exactly. This can be done using  \cite[Lemma 38]{Galanis2020} as it is done 
for algebraic numbers 
in the proof of  \cite[Lemma 42]{Galanis2020}. 
or a simpler version of \cite[Lemma 38]{Galanis2020} for rational numbers. 
\end{proof}

\begin{lemma}[{Particular case of \cite[Lemma 49]{Galanis2020}}]\label{lem:reduce-frac-to-exact}
	Let $\Delta \ge 3$ be an integer and let $\beta \in \mathbb{Q}$ with $\beta > 0$. Then we have the reduction
	\begin{align*}
		\I(\Delta, \beta) & \le_T \RT(\Delta, \beta).
	\end{align*}
\end{lemma}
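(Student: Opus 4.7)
The plan is to reduce $\I(\Delta,\beta)$ to $\RT(\Delta,2,\beta-1)$ by an inductive edge-deletion procedure. Given a graph $G=(V,E)$ of maximum degree at most $\Delta$ with $m := |E|$ edges, I proceed by induction on $m$. When $m=0$ the algorithm outputs $\ising(G;\beta) = 2^{|V|}$. For $m\ge 1$, pick any edge $e=(s,t)$ of $G$ (both $G$ and $G\setminus e$ have maximum degree at most $\Delta$), make one oracle call on $(G,s,t)$ to obtain
\begin{equation*}
    R \;=\; Z_{s|t}(G;2,\beta-1)\,\big/\,Z_{st}(G;2,\beta-1),
\end{equation*}
recursively compute $\ising(G\setminus e;\beta) = Z_{\mathrm{Tutte}}(G\setminus e;2,\beta-1)$ via the same algorithm, and combine $R$ with this value to output $\ising(G;\beta)=Z_{\mathrm{Tutte}}(G;2,\beta-1)$.

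For the combining step I will use the deletion-contraction identity $Z_{\mathrm{Tutte}}(G;2,\gamma) = Z_{\mathrm{Tutte}}(G\setminus e;2,\gamma) + \gamma\,Z_{\mathrm{Tutte}}(G/e;2,\gamma)$ with $\gamma=\beta-1$, together with the identity $Z_{\mathrm{Tutte}}(G/e;2,\gamma) = Z_{st}(G\setminus e;2,\gamma) + \tfrac{1}{2} Z_{s|t}(G\setminus e;2,\gamma)$, which follows by accounting for whether contracting $e$ reduces the connected-component count of $(V,A)$ by one (this happens precisely when $s$ and $t$ lie in different components of $(V(G\setminus e),A)$, contributing the factor $q^{-1}=1/2$). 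Using the easy observations $Z_{s|t}(G)=Z_{s|t}(G\setminus e)$ (since no $A\ni e$ can have $s\not\sim t$) and $Z_{\mathrm{Tutte}}(G)=Z_{st}(G)(1+R)$, together with $Z_{\mathrm{Tutte}}(G\setminus e)=Z_{st}(G\setminus e)+Z_{s|t}(G\setminus e)$, routine elimination yields the closed-form expression
\begin{equation*}
    Z_{\mathrm{Tutte}}(G;2,\beta-1) \;=\; \frac{2\beta(1+R)}{2+(\beta+1)R}\, Z_{\mathrm{Tutte}}(G\setminus e;2,\beta-1),
\end{equation*}
which is what the algorithm evaluates.

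The main step to verify is that the denominator $2+(\beta+1)R$ is non-zero, so that the formula is well-defined. Rearranging the displayed identity, the denominator vanishes only if $Z_{\mathrm{Tutte}}(G\setminus e;2,\beta-1)=0$ or $1+R=0$; the latter would force $Z_{\mathrm{Tutte}}(G;2,\beta-1)=0$. Both alternatives are ruled out for $\beta>0$ rational, because $\ising(H;\beta)=\sum_\sigma \beta^{m(\sigma)}$ is a sum of strictly positive real summands for every graph $H$. A parallel positivity argument is needed to guarantee that the oracle's output is a well-defined rational, i.e.\ $Z_{st}(G;2,\beta-1)\ne 0$: at $\beta=1$ the partition function is trivially $2^{|V|}$ and no oracle call is made, while for $\beta\ne 1$ one reformulates $Z_{st}(G;2,\beta-1)$ as the difference of the Ising sums over configurations with $\sigma(s)=\sigma(t)$ and $\sigma(s)\ne\sigma(t)$, which has a definite sign at every edge $e$ by a ferromagnetic/antiferromagnetic comparison (or, should an unlucky edge still produce $Z_{st}=0$, one may prepend a small gadget at $e$ to perturb $Z_{st}$ away from zero, which is the chief technical obstacle). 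The total cost is at most $|E|$ oracle calls and $\mathrm{poly}(\mathrm{size}(G))$ exact rational arithmetic, giving the desired polynomial-time Turing reduction.
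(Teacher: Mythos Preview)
Your approach is essentially the same as the paper's: an edge-deletion induction that, at each step, queries the oracle for the ratio $Z_{s|t}/Z_{st}$ on an edge $(s,t)$ and combines this with the recursively computed value on $G\setminus e$. Your closed form
\[
Z_{\mathrm{Tutte}}(G;2,\beta-1)=\frac{2\beta(1+R)}{2+(\beta+1)R}\,Z_{\mathrm{Tutte}}(G\setminus e;2,\beta-1)
\]
is correct (indeed $2Z_{st}(G)+(\beta+1)Z_{s|t}(G)=2\beta\,Z_{\mathrm{Tutte}}(G\setminus e)$, so the denominator is nonzero once $Z_{st}(G)\neq 0$ and $\ising(G\setminus e;\beta)>0$), and it is slightly cleaner than the paper's description in that it uses only one ratio per step rather than the two ratios $Z_{s|t}(G)/Z_{st}(G)$ and $Z_{s|t}(G\setminus e)/Z_{st}(G\setminus e)$ that the paper records.

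The one genuine gap is your treatment of $Z_{st}(G;2,\beta-1)\neq 0$. Your identification of $Z_{st}$ with the difference of the ``same-spin'' and ``different-spin'' Ising sums is correct, but the claimed ``definite sign by a ferromagnetic/antiferromagnetic comparison'' is not justified for $0<\beta<1$ (antiferromagnetic correlations need not have a fixed sign on general graphs), and the gadget-perturbation fallback is left unspecified. The paper sidesteps this entirely: its proof explicitly uses that $q>0$ and $\gamma=\beta-1>0$, in which case $Z_{st}(G;2,\gamma)=\sum_{A:\,s\sim t}2^{k(A)}\gamma^{|A|}$ is a nonempty sum of strictly positive terms (e.g.\ $A=\{e\}$ contributes), hence positive. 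Since the lemma is only ever applied downstream with $y>1$, you can and should make the same simplification: assume $\beta>1$, note $Z_{st}>0$ directly from the Tutte sum, and drop the unsubstantiated comparison argument.
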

\begin{proof}
	The reduction given in the proof of  \cite[Lemma 49]{Galanis2020} applies with the change of variables $q = 2$ and $\gamma = \beta-1$. It is important to note that this reduction only invokes the oracle for $\RT(\Delta, \beta)$ with inputs $(G, s,t)$ such that $e = (s,t)$ is an edge of $G$. The reduction reduces the computation of $Z_{\text{Tutte}}(G; q, \gamma)$ to that of $Z_{\text{Tutte}}(G \setminus e; q, \gamma)$, $Z_{s|t}(G; q, \gamma ) / Z_{st}( G; q, \gamma )$ and $Z_{s|t}( G \setminus e; q, \gamma ) / Z_{st} ( G \setminus e; q, \gamma )$, where $G \setminus e$ is the graph $G$ without the edge $e$. Hence, all the calls to the oracle $\RT(\Delta, \beta)$ involve subgraphs of $G$, that have maximum degree at most $\Delta$. Finally, because $q > 0$ and $\gamma > 0$ in our setting, we do not have to consider the cases when $Z_{st} ( G; q, \gamma ) = 0$, simplifying the result.
\end{proof}

Now we have the tools to obtain the desired reductions and the proof of Theorem~\ref{thm:hardness}.

\begin{lemma}[{\cite[Lemma 50]{Galanis2020} for the Ising model}] \label{lem:exact-reduction:q>1}
	Let $K$ be a real number with $K > 1$. Let $\Delta \ge 3$ be an integer and let $\beta \in \algcomplex$ such that $(\Delta, \beta)$-implements the real line in polynomial time. Let $y \in \mathbb{C}$ with $y > 1$.  Then we have the reductions
	\begin{align*}
		\I(\Delta, y) & \le_T \IN(\Delta, \beta, K), \\
		\I(\Delta, y) & \le_T \IA(\Delta, \beta, \pi/3).
	\end{align*}
\end{lemma}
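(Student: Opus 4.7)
The plan is to chain the two reductions that we have just established, via transitivity of Turing reductions. The hypotheses of Lemma~\ref{lem:exact-reduction:q>1} are exactly tailored so that both intermediate lemmas apply: we have $y > 1$, which is the hypothesis of Lemma~\ref{lem:compute-fraction:q>1}, and $y$ (playing the role of the rational $\beta$ in Lemma~\ref{lem:reduce-frac-to-exact}) is a positive real, so the exact-to-ratio reduction is available. The only delicate point is that Lemma~\ref{lem:reduce-frac-to-exact} is stated for $\beta \in \mathbb{Q}$ with $\beta > 0$; since $\I(\Delta, y)$ is defined for $y\in \algcomplex$, one either notes that $y$ may be taken rational in the hardness chain (this is the case relevant for Theorem~\ref{thm:hardness}, as we shall eventually instantiate $y$ at a specific rational exceeding $1$), or one observes that the reduction in Lemma~\ref{lem:reduce-frac-to-exact} goes through verbatim for any $y\in \algebraic$ with $y>1$ because all the arithmetic can be performed in $\algcomplex$ in polynomial time (Section~\ref{sec:pre:algebraic}).

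First, I would apply Lemma~\ref{lem:reduce-frac-to-exact} to obtain
\begin{equation*}
  \I(\Delta, y)\ \le_T\ \RT(\Delta, 2, y-1),
\end{equation*}
under the change of variables $q=2$, $\gamma=y-1$ familiar from the Tutte/Ising correspondence $\ising(G;\beta) = Z_{\mathrm{Tutte}}(G;2,\beta-1)$. Since $(\Delta,\beta)$ implements the real line in polynomial time by hypothesis, Lemma~\ref{lem:compute-fraction:q>1} then gives
\begin{equation*}
  \RT(\Delta, 2, y-1)\ \le_T\ \IN(\Delta,\beta,K) \qquad \text{and} \qquad \RT(\Delta, 2, y-1)\ \le_T\ \IA(\Delta,\beta,\pi/3).
\end{equation*}
Composing these reductions yields both statements of the lemma.

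The main (and essentially only) obstacle is the small discrepancy in the scalar type of $y$: Lemma~\ref{lem:reduce-frac-to-exact} is written for $\beta\in\mathbb{Q}$ while Lemma~\ref{lem:exact-reduction:q>1} allows $y\in \mathbb{C}$ with $y>1$. I would address this by observing that the reduction of Lemma~\ref{lem:reduce-frac-to-exact} only performs additions, multiplications, and contractions/deletions of edges: all such operations work over the field $\algcomplex$ in polynomial time (Section~\ref{sec:pre:algebraic}), so the proof extends without change to $y\in\algebraic$ with $y>1$. Since the eventual use of Lemma~\ref{lem:exact-reduction:q>1} in proving Theorem~\ref{thm:hardness} is via the known $\numP$-hardness of exactly evaluating $\ising(-; y)$ on $3$-regular graphs for a suitable $y>1$ (for instance via \cite{Kowa2016}), any such $y$ can in fact be chosen rational, so even the narrow reading of Lemma~\ref{lem:reduce-frac-to-exact} suffices for the intended application.
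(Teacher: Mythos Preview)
Your proposal is correct and follows exactly the paper's approach: the paper's proof is a one-line remark that the result ``follows directly from combining Lemmas~\ref{lem:compute-fraction:q>1} and~\ref{lem:reduce-frac-to-exact}'', which is precisely the chaining $\I(\Delta,y)\le_T \RT(\Delta,2,y-1)\le_T \IN(\Delta,\beta,K)$ (and similarly for $\IA$) that you spell out. You even go beyond the paper by explicitly addressing the minor type mismatch between $y\in\mathbb{C}$ with $y>1$ here and the hypothesis $\beta\in\mathbb{Q}$ in Lemma~\ref{lem:reduce-frac-to-exact}; the paper does not comment on this point.
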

\begin{proof}
	This result follows directly from combining Lemmas~\ref{lem:compute-fraction:q>1} and~\ref{lem:reduce-frac-to-exact}. The proof of \cite[Lemma 50]{Galanis2020}  takes a bit more work because one has to be careful about possible zeros of $Z_{st} (G; q, \gamma)$.
\end{proof}

\begin{thmhardness}
	\statethmhardness
\end{thmhardness}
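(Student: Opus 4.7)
The plan is to stitch together the implementation lemma and the reduction lemmas already proved in the paper, with the only new ingredient being a classical hardness result for the exact evaluation of the Ising partition function on bounded-degree graphs.

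First I would fix any rational $y > 1$; the simplest choice is $y = 2$. The starting point of the reduction is the fact that $\I(\Delta, y)$ is $\numP$-hard. Since $\ising(G; \beta) = Z_{\mathrm{Tutte}}(G; 2, \beta - 1)$, and since it is known \cite{Kowa2016} that exact evaluation of the Tutte polynomial at $(2, \gamma)$ for $\gamma > 0$ is $\numP$-hard even when restricted to $3$-regular graphs, we get $\numP$-hardness of $\I(\Delta, y)$ for every $\Delta \ge 3$ and every rational $y > 1$. This provides the hard ``source'' problem that the remainder of the reduction will propagate to the approximation problems of interest.

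Next I would invoke Lemma~\ref{lem:ising:implementations} with the hypothesised $\beta$: since $\beta \in \algcomplex \setminus \mathbb{R}$, $\beta \notin \{i,-i\}$, and $\lvert \beta - 1\rvert/\lvert \beta + 1\rvert > 1/\sqrt{\Delta-1}$, the pair $(\Delta,\beta)$ implements the complex plane, and in particular the real line, in polynomial time for the Ising model. This is precisely the hypothesis required to apply Lemma~\ref{lem:exact-reduction:q>1}, from which we deduce the two Turing reductions
\begin{equation*}
\I(\Delta, y) \le_T \IN(\Delta, \beta, 1.01)
\qquad\text{and}\qquad
\I(\Delta, y) \le_T \IA(\Delta, \beta, \pi/3).
\end{equation*}
Chaining these with the $\numP$-hardness of $\I(\Delta, y)$ immediately yields the $\numP$-hardness of $\IN(\beta, \Delta, 1.01)$ and $\IA(\beta, \Delta, \pi/3)$, completing the proof.

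In short, all of the genuine work has been done upstream: the implementation machinery of Section~\ref{sec:hardness:programs} together with the complex-dynamics argument in the proof of Lemma~\ref{lem:ising:implementations} handles the ``expressiveness'' side, while the interval-shrinking reduction of \cite{Galanis2020} adapted in Lemmas~\ref{lem:compute-fraction:q>1} and~\ref{lem:reduce-frac-to-exact} handles the reduction from exact to approximate computation. The only remaining ingredient is picking a convenient rational $y > 1$ and citing the $\numP$-hardness of exact Ising evaluation on bounded-degree graphs; the main (already resolved) obstacle was to establish the implementation lemma, which is what forces the threshold $1/\sqrt{\Delta-1}$ to appear in the statement.
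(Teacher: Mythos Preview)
Your proposal is correct and follows essentially the same approach as the paper's proof: combine Lemma~\ref{lem:ising:implementations} (which supplies the polynomial-time implementation of the real line) with Lemma~\ref{lem:exact-reduction:q>1}, and then invoke the $\numP$-hardness of $\I(3,y)$ for $y>1$ from \cite{Kowa2016}. The paper's proof is just a one-sentence version of exactly this argument.
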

\begin{proof}
	Our hardness theorem now follows from combining Lemmas~\ref{lem:ising:implementations} and~\ref{lem:exact-reduction:q>1} in conjunction with the fact that $\I(3, y) $ is $\numP$-hard for any $y > 1$ \cite{Kowa2016}.
\end{proof}

Since $1 / \sqrt{\Delta - 1}$ converges to $0$ as $\Delta$ diverges, Theorem \ref{thm:hardness}  gives a new proof  of~\cite[Theorem 3]{Galanis2020} which says that $\IN(\beta, \infty, 1.01)$ and $\IA(\beta, \infty, \pi/3)$ (where there are no restrictions on the maximum degree of the input graph) are $\numP$-hard for any algebraic number $\beta \in \mathbb{C} \setminus (\mathbb{R} \cup \{i, -i\})$.

\section{Zeros of the partition function and hardness} \label{sec:zeros} 
 
 In this section we give explicit evidence that zeros of the partition function imply hardness of approximation for the Ising model when the edge interaction $\beta$ is not in $\mathbb{R} \cup \{i, -i\}$. These are the first results that explicitly link zeros to hardness of approximation that we are aware of. Our main technical result Lemma is~\ref{lem:implementing-1}, which shows that implementing $-1$ implies hardness of approximation of the partition function of the Ising model. We then use zeros of the partition function to implement $-1$ and conclude hardness in Lemma~\ref{lem:zeros-hard} and Corollary~\ref{cor:zeros-hard}. Our proofs use the hardness and implementation results of Section~\ref{sec:hardness}. Finally, in Corollary~\ref{cor:hardness-inside}, we give an example of an edge interaction $\beta$ in the region $\mathcal{R}(1/\sqrt{2})$ and a graph $G$ with maximum degree $3$ such that $\ising(G; \beta) = 0$, showing that the hardness region given in Theorem~\ref{thm:hardness} is not optimal.
 
\begin{lemimplementingminusone}
  \statelemimplementingminusone
\end{lemimplementingminusone}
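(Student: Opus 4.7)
The plan is to split into two cases depending on where $\beta$ lies relative to the hardness region of Theorem~\ref{thm:hardness}, invoking the $-1$-implementing graph $H$ only in the case where Theorem~\ref{thm:hardness} does not already apply. If $|\beta-1|/|\beta+1|>1/\sqrt{\Delta-1}$, then Theorem~\ref{thm:hardness} applies directly to $\beta$ and yields $\numP$-hardness of both $\IN(\Delta,\beta,1.01)$ and $\IA(\Delta,\beta,\pi/3)$ with no use of $H$ needed.

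Assume now that $|\beta-1|/|\beta+1|\le 1/\sqrt{\Delta-1}$. Since $\beta\notin\mathbb{R}\cup\{i,-i\}$, also $-\beta\notin\mathbb{R}\cup\{i,-i\}$, and the Mobius identity $|-\beta-1|/|-\beta+1|=|\beta+1|/|\beta-1|\ge\sqrt{\Delta-1}$ places $-\beta$ in the hardness region at one lower degree. For $\Delta\ge 4$ the plan is to use $H$ together with a single $\beta$-edge to build a bounded-degree double-edge gadget implementing $-\beta$: let $H^\ast$ be the parallel composition of $H$ with a single $\beta$-edge, so by the Hadamard-product formula for parallel composition $I_{st}(H^\ast;\beta)=C\bigl(\begin{smallmatrix}-\beta & 1\\ 1 & -\beta\end{smallmatrix}\bigr)$ for some explicit non-zero $C$. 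Thus $H^\ast$ $\beta$-implements $-\beta$, its terminals have degree $2$, and its maximum degree is at most $\Delta$. Substituting $H^\ast$ for every edge of a graph $G$ of maximum degree $\le\Delta-1$ yields a graph $G'$ of maximum degree $\le\Delta$ with $\ising(G';\beta)=C^{|E(G)|}\ising(G;-\beta)$; this furnishes polynomial-time Turing reductions $\IN(\Delta-1,-\beta,1.01)\le_T\IN(\Delta,\beta,1.01)$ and $\IA(\Delta-1,-\beta,\pi/3)\le_T\IA(\Delta,\beta,\pi/3)$. Since $(\Delta-1)(\Delta-2)>1$ for $\Delta\ge 4$, Theorem~\ref{thm:hardness} at degree $\Delta-1$ applied to $-\beta$ now delivers the required hardness.

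The remaining sub-case is $\Delta=3$, where the above degree reduction collapses to maximum degree $2$ and is therefore vacuous; this is the main obstacle, and here the plan is to deploy the complex-dynamics machinery of Section~\ref{sec:hardness:implementations} but centered at a new repelling fixed point supplied by $H$. From a sub-gadget implementing a weight $y$ with terminals of degree $1$ and maximum degree $\le 3$, I would build a new terminal-degree-$1$ gadget implementing $g(y):=g_\beta(-g_\beta(y))$ by series-composing two $\beta$-edges, then parallel-composing with one copy of $H$, and then series-composing two more $\beta$-edges; a direct degree count shows every vertex in the resulting graph has degree at most $3$. Combined with parallel composition of up to $d_k\le d=2$ such sub-gadgets, this realises a Mobius-program for $g$ and $d=2$ in the sense of Definition~\ref{def:mobius-program}, and the associated rational map $f(z)=g(z^2)$ fixes $-1$ because $f(-1)=g_\beta(-g_\beta(1))=g_\beta(-1)=-1$. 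Using $g_\beta'(1)=\mu^2$ and $g_\beta'(-1)=1/\mu^2$, where $\mu=(\beta-1)/(\beta+1)$, a chain-rule computation gives $g'(1)=-g_\beta'(-1)g_\beta'(1)=-1$ and hence $f'(-1)=g'(1)\cdot 2\cdot(-1)=2$, so $-1$ is a repelling fixed point of $f$ of multiplier $2$, independent of $\beta$. The remaining bookkeeping is proving program-approximability of $-1$ for this new Mobius-program (in the spirit of Lemma~\ref{lem:ising-program:1}) and checking that $0$ and $\infty$ do not lie in the exceptional set of $f$, so that Lemmas~\ref{lem:efficient-covering} and~\ref{lem:implementations} apply and conclude that $(3,\beta)$ implements the complex plane in polynomial time; hardness of $\IN(3,\beta,1.01)$ and $\IA(3,\beta,\pi/3)$ then follows from Theorem~\ref{thm:hardness} applied to any implementable $\gamma$ inside its $\Delta=3$ hardness region.
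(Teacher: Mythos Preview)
Your case split matches the paper's, and Case~1 is fine. Both sub-cases of Case~2, however, have genuine gaps.

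\textbf{The $\Delta\ge 4$ sub-case.} The degree count in the substitution step is wrong. If $H^\ast$ has terminals of degree~$2$ and you substitute it for every edge of $G$, then a vertex of degree $k$ in $G$ becomes a vertex of degree $2k$ in $G'$ (each of its $k$ incident edges is replaced by a gadget whose terminal contributes $2$). So for $G$ of maximum degree $\Delta-1$, the graph $G'$ has maximum degree $2(\Delta-1)$, not $\Delta$. To keep $G'$ within degree $\Delta$ you would need $G$ of maximum degree $\lfloor\Delta/2\rfloor$, and then Theorem~\ref{thm:hardness} at that degree requires $\lfloor\Delta/2\rfloor\ge 3$, i.e.\ $\Delta\ge 6$; the cases $\Delta\in\{4,5\}$ are not covered. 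Nor can you restore degree-$1$ terminals on $H^\ast$ by series composition with $\beta$-edges: $h_\beta(-\beta)=\infty$, so $g_\beta(-\beta)=\beta$, and the resulting gadget is trivial.

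\textbf{The $\Delta=3$ sub-case.} Your computation $f'(-1)=2$ is correct, but then $z:=f'(\omega)/d=2/2=1$, which violates \emph{both} parts of condition~3 in Lemma~\ref{lem:efficient-covering}: that lemma requires $0<|z|<1$ and $z\notin\mathbb{R}$, whereas here $z=1$. Hence the machinery of Lemmas~\ref{lem:efficient-covering} and~\ref{lem:implementations} does not apply to this fixed point, and the argument collapses.

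The paper's proof takes a different route in Case~2 and works uniformly for all $\Delta\ge 3$. It does not aim for $-\beta$; instead it invokes Lemma~\ref{lem:ising:close-1:2}, which for \emph{every} $\beta\in\algcomplex\setminus(\mathbb R\cup\{i,-i\})$ gives dense $(\Delta,\beta)$-implementability in a ball $B(1,r)$. Parallel-composing the $-1$-gadget with a gadget for a carefully chosen $\alpha\in B(1,r)$, and flanking by two $\beta$-edges on each side (so the terminals have degree~$1$), yields $\gamma=g_\beta(-\alpha)$ with $(\gamma-1)/(\gamma+1)=\mu^2\,(-\alpha-1)/(-\alpha+1)$ where $\mu=(\beta-1)/(\beta+1)$. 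Choosing $\alpha$ so that $(\alpha-1)/(\alpha+1)$ is close to a small multiple of $i\mu^2$ forces $|\gamma-1|/|\gamma+1|>1$ and $\gamma\notin\mathbb R$, placing $\gamma$ in the hardness region of Theorem~\ref{thm:hardness}; transitivity of implementations then finishes.
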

\begin{proof}
	There are two cases. The first case is when $\lvert \beta - 1 \rvert / \lvert \beta + 1 \rvert  > 1 / \sqrt{\Delta - 1}$. Then, since $\beta \not \in \mathbb{R} \cup \{i, -i\}$, we know that the problems $\IN(\Delta, \beta, 1.01)$ and $\IA(\beta, \Delta, \pi/3)$ are $\numP$-hard (Theorem~\ref{thm:hardness}). In the rest of the proof we assume that $\lvert \beta - 1 \rvert / \lvert \beta + 1 \rvert  \le 1 / \sqrt{\Delta - 1}$. We are going to reduce the approximation problems at  $(\Delta, \gamma)$ to the approximation problems at $(\Delta, \beta)$ for some $\gamma$ such that $\IN(\Delta, \gamma, 1.01)$ and $\IA(\Delta, \gamma, \pi/3)$ are $\numP$-hard. In this reduction we will use the fact that we can  $(\Delta, \beta)$ implement the edge interaction $-1$. Let  $\alpha \in \mathbb{C}$ be some edge interaction that we can  $(\Delta, \beta)$-implement. We fine-tune $\alpha$ later in the proof. We consider the weighted graph $J$ given in Figure~\ref{fig:J}. By the properties of series and parallel compositions, this graph implements the edge interaction $\gamma := h_\beta(h_\beta(-\alpha))$.
	\begin{figure}[H]
		\centering
			\begin{tikzpicture}
			\node [circle, draw, inner ysep=1mm] (S1) {$s$};
			\node [circle, draw, inner ysep=1mm, right = 1cm of S1] (S2) {};
			\node [circle, draw, inner ysep=1mm, right = 1.75cm of S2] (T2) {};
			\node [circle, draw, inner ysep=1mm, right = 1cm of T2] (T1) {$t$};
			\path[-] 
			(S1) edge node [above] {$\beta$} (S2)
			(T1) edge node [above] {$\beta$} (T2)
			(S2) edge[bend left] node [above] {$-1$} (T2)
			(S2) edge[bend right] node [below] {$\alpha$} (T2);
		\end{tikzpicture}
		\caption{The graph $J$ in the proof of Lemma~\ref{lem:implementing-1}.} 
		\label{fig:J}
	\end{figure}
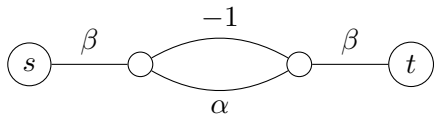
	 From Remark~\ref{rem:stretch} it follows that 
	 \begin{equation} \label{eq:implementing-1}
	 	\frac{\gamma - 1}{\gamma + 1} = \left(\frac{\beta - 1}{\beta+1}\right)^2 \frac{-\alpha - 1}{-\alpha + 1}.
	 \end{equation}
    The idea to complete this proof is  $(\Delta, \beta)$-implementing $\alpha$ so that the complex number in \eqref{eq:implementing-1} has norm larger than $1$ (hence larger than $1/\sqrt{\Delta-1}$ so Theorem~\ref{thm:hardness} applies). By Item~\ref{item:R:3} of Proposition~\ref{prop:R}, the norm is larger than~$1$ if and only if $\mathrm{Re}(\gamma) < 0$, which is what we are aiming for. We also want $\gamma$ to be non-real. Note that $\gamma$ is real if and only if $1 - 2 /(\gamma + 1) = (\gamma - 1) / (\gamma + 1)$ is real. Let $r \in (0,1)$ be the rational number in the statement of Lemma~\ref{lem:ising:close-1:2}. Let $\varepsilon = \lvert (\beta - 1) / (\beta + 1) \rvert^2 r/2$, so $\varepsilon$ is an algebraic number with $\varepsilon \in (0, 1/(\Delta - 1))$ and $\varepsilon \le r/4$. Let $\xi \in\algcomplex$ such that
    \begin{equation*}
    	\frac{\xi - 1}{\xi + 1} = \frac{r(\beta-1)^2}{4(\beta+1)^2} i.
    \end{equation*}
    We have $\lvert \xi - 1 \rvert / \lvert \xi + 1 \rvert = \varepsilon/2$, so $\xi \in \mathcal{R}(\varepsilon / 2)$. From Proposition~\ref{prop:R}, we obtain $\mathrm{Re}(\xi) \geq 0$ and
    \begin{equation*}
    	\lvert \xi \rvert \le  \frac{1 + \varepsilon / 2}{1 - \varepsilon / 2} \le \frac{1 + 1/4 }{1 - 1/4}= 5/3,
    \end{equation*}
  where we used that $0 < \varepsilon \le 1/(\Delta - 1) \le 1/2$. Thus, we have  
  \begin{equation*}
    \lvert \xi - 1 \rvert = \lvert \xi + 1 \rvert \frac{\varepsilon}{2}  \le (5/3 + 1) \frac{\varepsilon}{2} = \frac{4}{3} \varepsilon \le \frac{1}{3} r.
  \end{equation*}
    Therefore, we can use Lemma~\ref{lem:ising:close-1:2} to $(\Delta, \beta)$-implement $\alpha \in \algcomplex$ with $\lvert \xi - \alpha \rvert < \varepsilon / 8$. We have 
     \begin{equation*}
     	\left|\alpha - 1\right| \le \left| \alpha - \xi \right| + 	\left|\xi - 1\right| < (1/8 + 4/3) \varepsilon < 2\varepsilon < 1.
     \end{equation*}
 Hence, $\mathrm{Re}(\alpha) > 0$ and we find that
    \begin{equation*}
    	\left| \frac{\xi - 1}{\xi + 1} -  \frac{\alpha - 1}{\alpha + 1} \right| = 2 \left| \frac{\xi - \alpha}{(\xi + 1 ) (\alpha + 1)}  \right| < 2 \lvert \xi - \alpha \rvert < \varepsilon/4.
    \end{equation*}
  Let $a = (\alpha - 1) / (\alpha + 1)$, $b =  (\beta -1)^2/(\beta+1)^2 r/2$ and $z = (\xi - 1)/(\xi + 1) = i b /2$. The situation is plotted in Figure~\ref{fig:implementing-1}.
  
  \begin{figure}[H]
  	\centering
  	  \begin{tikzpicture}
  		\begin{axis}[
  			ticks=none,
  			xmin=-2,
  			xmax=2,
  			ymin=-2.7,
  			ymax=2.7,
  			axis equal,
  			axis lines=middle,
  			disabledatascaling]
  			
  			\addplot[draw=black, mark=o, only marks, nodes near coords=$0$,every node near coord/.style={anchor=45}] coordinates{(0,0)};
  			
  			\addplot[draw=black, mark=o, only marks, nodes near coords=$\varepsilon$,every node near coord/.style={anchor=225}] coordinates{(2.5,0)};
  			
  			\addplot[draw=black, mark=o, only marks, nodes near coords=$b$,every node near coord/.style={anchor=30}] coordinates{(1.76777,1.76777)};

  			\addplot[draw=black, mark=o, only marks, nodes near coords=$z$,every node near coord/.style={anchor=30}] coordinates{(-1.76777/2,1.76777/2)} ;

  			\addplot[draw=black, mark=o, only marks, nodes near coords=$a$,every node near coord/.style={anchor=30}] coordinates{(-1.76777/2+0.1,1.76777/2 +0.5)} ;

  			\draw (0,0) circle (2.5);
  			
  			\draw (-1.76777/2,1.76777/2) circle (2.5/4);
  			
  		\end{axis}
  	\end{tikzpicture} 
  \caption{The quantities $a, b, z$ in the proof of Lemma~\ref{lem:implementing-1}. We have $\lvert b\rvert = \varepsilon$, $z  = i b /2 $ and $\lvert a - z \rvert < \varepsilon / 4$.}
  \label{fig:implementing-1}
  \end{figure}
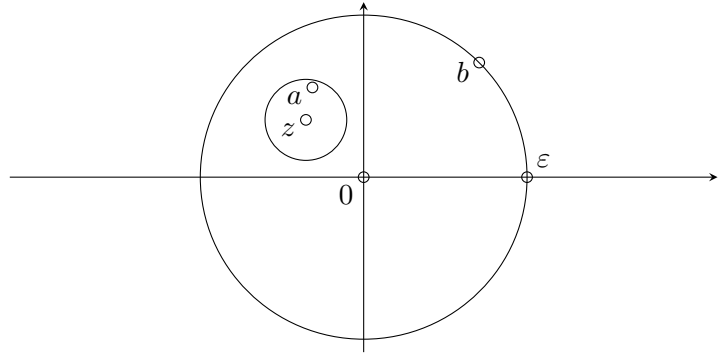

 Let $\overline{x y} = \{\lambda (y-x) : \lambda \in \mathbb{R}\}$ for any $x, y \in \mathbb{C}$. Note that $\overline{0z}$ and $\overline{0b}$ are perpendicular, so $0$ is the closest point of the line $\overline{0b}$ to $z$. Since $0 \not \in B(z, \epsilon/4)$, we conclude that $\overline{0b} \cap B(z, \varepsilon/4) = \emptyset$. In particular, $a$ is not in the line $\overline{0 b}$. Also note that $\lvert a \rvert < \varepsilon$ by the triangle inequality. Putting all this together with equation \eqref{eq:implementing-1}, we find that
 \begin{equation*}
 	\frac{\gamma - 1}{\gamma + 1} = - \frac{2}{r} b a^{-1} \not\in \mathbb{R} \qquad \text{and} \qquad \left| 	\frac{\gamma - 1}{\gamma + 1}  \right| = \frac{2}{r}  \frac{\varepsilon}{\lvert a \rvert} > 1
 \end{equation*}    
    as we wanted. We have shown how to $(\Delta, \beta)$-implement $\gamma \in \algcomplex$ with $\mathrm{Re}(\gamma) < 0$ and $\gamma \not \in \mathbb{R}$. In particular, we have $\lvert \gamma - 1 \rvert / \lvert \gamma + 1 \rvert  > 1 / \sqrt{\Delta - 1}$. As a consequence of Theorem~\ref{thm:hardness}, the problems $\IN(\Delta, \gamma, 1.01)$ and $\IA(\Delta, \gamma, \pi/3)$ are $\numP$-hard. These problems reduce to $\IN(\Delta, \beta, 1.01)$ and $\IA(\Delta, \beta, \pi/3)$ because we can $(\Delta, \beta)$-implement $\gamma$, and the result follows.
\end{proof}

The rest of this section exploits Lemma~\ref{lem:implementing-1} to obtain hardness for zeros of the partition function. Our approach uses a zero to implement $-1$ and conclude hardness with the help of Lemma~\ref{lem:implementing-1}.

\begin{lemma} \label{lem:zeros-hard}
  Let $\Delta$ be an integer with $\Delta \ge 3$. Let $\beta \in \algcomplex \setminus (\mathbb{R} \cup \{i, -i\})$.  Suppose that there is a graph with maximum degree at most $\Delta$ having terminals $s,t$ such that
\begin{enumerate}
\item the degree of $s$ and $t$ is at most $\Delta-1$;
\item $\ising(G; \beta) = 0$;
\item $Z^{ij}_{st}(G; \beta) \ne 0$ for some $i, j \in \{0,1\}$.
\end{enumerate}
Then $\IN(\Delta, \beta, 1.01)$ and $\IA( \Delta, \beta, \pi/3)$ are $\numP$-hard.
\end{lemma}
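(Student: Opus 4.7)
The plan is to use the hypothesised graph $G$ to $(\Delta,\beta)$-implement the edge interaction $-1$ and then invoke Lemma~\ref{lem:implementing-1}. The first step is to exploit the spin-flip symmetry of the Ising model: the involution $\sigma \mapsto 1-\sigma$ preserves $m(\sigma)$ and swaps the boundary conditions on $s,t$, giving $Z^{00}_{st}(G;\beta) = Z^{11}_{st}(G;\beta) =: a$ and $Z^{01}_{st}(G;\beta) = Z^{10}_{st}(G;\beta) =: b$. The vanishing assumption $\ising(G;\beta) = 2a + 2b = 0$ forces $b = -a$, and hence the non-triviality assumption (some $Z^{ij}_{st}$ is non-zero) implies $a \ne 0$. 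The interaction matrix at $(s,t)$ is therefore
\begin{equation*}
I_{st}(G;\beta) \;=\; a\begin{pmatrix}1 & -1\\ -1 & 1\end{pmatrix} \;=\; (-a)\begin{pmatrix}-1 & 1\\ 1 & -1\end{pmatrix},
\end{equation*}
so $(G,s,t)$ $\beta$-implements the weight $-1$ (with $C = -a \ne 0$).

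The second step is to promote $(G,s,t)$ to a genuine $(\Delta,\beta)$-implementation in the sense of Definition~\ref{def:implementation}, that is, to arrange that the terminals have degree exactly $1$. By assumption $\deg_G(s),\deg_G(t) \le \Delta-1$, so we may form $G'$ by adding two fresh vertices $s',t'$ and two new edges $\{s',s\},\{t',t\}$ (each of weight $\beta$), and declaring $s',t'$ to be the new terminals. The maximum degree of $G'$ is still at most $\Delta$, and $s',t'$ have degree $1$. The construction is nothing but two series compositions with a single edge; equivalently, the interaction matrix transforms as
\begin{equation*}
I_{s't'}(G';\beta) \;=\; \begin{pmatrix}\beta & 1\\ 1 & \beta\end{pmatrix} I_{st}(G;\beta) \begin{pmatrix}\beta & 1\\ 1 & \beta\end{pmatrix} \;=\; a(\beta-1)^2\begin{pmatrix}1 & -1\\ -1 & 1\end{pmatrix},
\end{equation*}
where $a(\beta-1)^2 \ne 0$ since $\beta \ne 1$ (as $\beta \notin \mathbb{R}$). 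Equivalently, $-1$ is a fixed point of the Möbius map $h_\beta$, so each series composition with an edge of weight $\beta$ preserves the implemented weight $-1$. Thus $G'$ is a $(\Delta,\beta)$-implementation of the edge interaction $-1$.

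Having produced such an implementation, the hypothesis of Lemma~\ref{lem:implementing-1} is satisfied, and we conclude that $\IN(\Delta,\beta,1.01)$ and $\IA(\Delta,\beta,\pi/3)$ are $\numP$-hard. The only non-routine ingredient in the argument is the symmetry observation that collapses the four quantities $Z^{ij}_{st}$ into two, after which the hypothesis $\ising(G;\beta) = 0$ automatically produces the skew interaction matrix needed to implement $-1$; the rest is a matter of padding the terminals to degree $1$ while exploiting the fact that $-1$ is a fixed point of $h_\beta$.
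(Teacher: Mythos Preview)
Your proof is correct and follows essentially the same approach as the paper's: both use the spin-flip symmetry to deduce that $G$ $\beta$-implements $-1$, then attach two pendant edges to make the terminals have degree $1$ (using that $h_\beta(-1)=-1$, or equivalently your matrix computation), and finally invoke Lemma~\ref{lem:implementing-1}. The only cosmetic difference is that the paper phrases the second step as ``$h_\beta(h_\beta(-1))=-1$'' while you give the explicit interaction-matrix product; the content is identical.
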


\begin{proof}
  By symmetry of the spins $0$ and $1$ in the definition of $\ising$, for any vertex $v$ of $G$ we have $Z_{v}^0(G; \beta) = Z_{v}^1(G; \beta)$. Let $i,j \in \{0,1\}$ as in the statement. We obtain $0 = \ising(G; \beta) = 2 Z_{s}^i(G; \beta)$ so
  \begin{equation} \label{eq:sum-0}
      0 = Z_{s t}^{i0}(G; \beta) + Z_{s t}^{i1}(G; \beta).
  \end{equation}
  Since either $Z_{s t}^{i0}(G; \beta)$ or $Z_{s t}^{i1}(G; \beta)$ is non-zero by hypothesis, both quantities are non-zero. Again, by symmetry of the spins $0$ and $1$, we have $Z_{s t}^{00}(G; \beta) = Z_{s t}^{11}(G; \beta)$ and  $Z_{s t}^{01}(G; \beta) = Z_{s t}^{10}(G; \beta)$. Thus, by dividing by $Z_{s t}^{01}(G; \beta)$ in \eqref{eq:sum-0} we find that
  \begin{equation*}
      -1 = \frac{ Z_{s t}^{11}(G; \beta)}{Z_{s t}^{01}(G; \beta)}.
  \end{equation*}
  We have shown that the graph $G$ $\beta$-implements $-1$. Consider the graph $H$ that is a copy of $G$ with two extra vertices, $s'$ and $t'$, and two extra edges, $(s,s')$ and $(t, t')$. By the properties of series compositions, see \eqref{eq:hbeta}, the graph $H$ $\beta$-implements $h_\beta(h_\beta(-1)) = -1$ for the terminals $s'$ and $t'$ (both of which have degree $1$). Moreover, $H$ has maximum degree at most $\Delta$ because $G$ has maximum degree at most $\Delta$ and the vertices $s$ and $t$ have at most $\Delta-1$ neighbours in $G$. We conclude that $H$ $(\Delta, \beta)$-implements $-1$, and hardness follows from Lemma~\ref{lem:implementing-1}.
\end{proof}

\begin{corollary} \label{cor:hardness-inside}
    Let $\Delta = 3$. There is a  $\beta \in \algcomplex \setminus (\mathbb{R} \cup \{i, -i\})$ with  $\lvert \beta - 1\rvert / \lvert \beta + 1 \rvert < 1/\sqrt{\Delta -1}$ such that $\IN(\Delta, \beta, 1.01)$ and $\IA( \Delta, \beta, \pi/3)$ are $\numP$-hard.
\end{corollary}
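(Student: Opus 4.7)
The plan is to exhibit an explicit graph $G$ and an algebraic edge interaction $\beta_0$ satisfying the hypotheses of Lemma~\ref{lem:zeros-hard}, and then invoke that lemma directly.

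I would take $G$ to be the graph obtained from $K_4$ on vertex set $\{a,b,c,d\}$ by subdividing the two disjoint edges $\{a,b\}$ and $\{c,d\}$ with new vertices $w_1, w_2$; then $G$ has maximum degree~$3$, and the proposed terminals $s := w_1$, $t := w_2$ both have degree $2 \le \Delta-1$. Enumerating the $2^6$ spin configurations (summing out $w_1, w_2$ first, each contributing $\beta^2+1$ or $2\beta$ depending on whether its two neighbours agree or differ) yields
\begin{equation*}
	\tfrac12 \ising(G;\beta) \;=\; (\beta+1)^2\, q(\beta), \qquad q(\beta) := \beta^6 - 2\beta^5 + 5\beta^4 + 5\beta^2 - 2\beta + 1.
\end{equation*}
Since $q$ is palindromic, dividing by $\beta^3$ and substituting $u = \beta + 1/\beta$ reduces $q(\beta)=0$ to the cubic $u^3 - 2u^2 + 2u + 4 = 0$, which has one real root and a complex-conjugate pair of roots.

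Let $u_0$ be the complex root with positive imaginary part ($u_0 \approx 1.441 + 1.570\,i$), and take $\beta_0$ to be either solution of $\beta^2 - u_0\beta + 1 = 0$; numerically, $\beta_0 \approx 1.210 + 1.941\,i \in \algcomplex \setminus (\mathbb{R} \cup \{i,-i\})$, and by construction $\ising(G;\beta_0) = 0$. A brief computation gives $|\beta_0 - 1|/|\beta_0 + 1| \approx 0.664 < 1/\sqrt{2}$, placing $\beta_0$ inside the unaddressed region \eqref{eq:unaddressed-region}. To verify the remaining nondegeneracy condition of Lemma~\ref{lem:zeros-hard}, I would compute by the same enumeration (now with $\sigma(s)=\sigma(t)=0$) the polynomial $Z^{00}_{st}(G;\beta) = \beta^8 + 4\beta^5 + 5\beta^4 + 4\beta^3 + 2\beta^2$; since this is coprime to $q(\beta)$ (verifiable by the Euclidean algorithm, or numerically at $\beta_0$), one concludes $Z^{00}_{st}(G;\beta_0) \ne 0$. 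All hypotheses of Lemma~\ref{lem:zeros-hard} are then in place, which yields the $\numP$-hardness of $\IN(\Delta,\beta_0,1.01)$ and $\IA(\Delta,\beta_0,\pi/3)$.

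The only substantive work is the two finite enumerations of configurations that produce $\ising(G;\beta)$ and $Z^{00}_{st}(G;\beta)$, together with locating the relevant cubic root—no conceptual obstacle arises. The one step that most needs care is checking $Z^{00}_{st}(G;\beta_0) \ne 0$, since without this the purported zero at $\beta_0$ could in principle correspond to a graph that fails to $(\Delta,\beta_0)$-implement $-1$, breaking the bridge from Lemma~\ref{lem:zeros-hard} to Lemma~\ref{lem:implementing-1}.
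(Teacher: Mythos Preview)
Your proposal is correct and follows the same route as the paper: exhibit an explicit degree-$3$ graph with terminals of degree at most $2$, locate a non-real algebraic zero $\beta_0$ of its Ising partition function inside the region $\lvert \beta-1\rvert/\lvert\beta+1\rvert < 1/\sqrt{2}$, verify that some $Z^{ij}_{st}(G;\beta_0)\neq 0$, and invoke Lemma~\ref{lem:zeros-hard}. The only difference is the choice of witness graph---the paper uses the multigraph of Figure~\ref{fig:graph-zero} and finds $\beta\approx 0.397+0.918i$ via \texttt{Mathematica}, whereas you use a subdivided $K_4$ (a simple graph) and extract $\beta_0$ algebraically through the palindromic factorisation and the resulting cubic in $u=\beta+1/\beta$; your computations of $\tfrac12\ising(G;\beta)=(\beta+1)^2 q(\beta)$, of the cubic $u^3-2u^2+2u+4$, of the ratio $\approx 0.664$, and of $Z^{00}_{st}(G;\beta)=\beta^2(\beta^6+4\beta^3+5\beta^2+4\beta+2)$ all check out.
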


\begin{proof}
 Let us consider the graph $G$ given in Figure~\ref{fig:graph-zero} with distinguished vertices $s$ and $t$. 
 	\begin{figure}[H]
		\centering
			\begin{tikzpicture}
			\node [circle, draw, inner ysep=1mm] (S) {$s$};
			\node [circle, draw, inner ysep=1mm, right = 3cm of S] (T) {$t$};

			\node [circle, draw, inner ysep=1mm, above right = 0.5cm and 1cm of S] (S1) {};
			\node [circle, draw, inner ysep=1mm, below right = 0.5cm and 1cm of S] (S2) {};

			\node [circle, draw, inner ysep=1mm, above right = 0.5cm and 2cm of S] (T1) {};
			\node [circle, draw, inner ysep=1mm, below right = 0.5cm and 2cm of S] (T2) {};

			\path[-] 
			(S) edge node [above] {} (S1)
			(S) edge node [above] {} (S2)

			(S1) edge[bend left=50] node [above] {} (T1)
			(S1) edge[bend right=50] node [above] {} (T1)

			(S1) edge[bend left=50] node [above] {} (T1)
			(S1) edge[bend right=50] node [above] {} (T1)

			(S2) edge[bend left=50] node [above] {} (T2)
			(S2) edge[bend right=50] node [above] {} (T2)

			(T) edge node [above] {} (T1)
			(T) edge node [above] {} (T2);
		\end{tikzpicture}
		\caption{A graph that $G$ with maximum degree $3$ such that $\ising(G; x)$ has a zero $\beta \in \mathcal{R}(1/\sqrt{2})$.} 
		\label{fig:graph-zero}
	\end{figure}
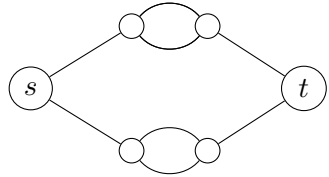
One can check that $Z^{01}_{st}(G; x) = (1 + x^2 + 2 x^3)^2$ and $Z^{11}_{st}(G; x) = x^2 (2 + x + x^3)^2$.  
We have
$\ising(G; x) = 2(1 + 6 x^2 + 8 x^3 + 2 x^4 + 8 x^5 + 6 x^6 + x^8)$. 
Using \texttt{Mathematica} we have determined that $\ising(G; x)$ has a zero at $\beta \approx 0.396608 + 0.917988 i$. Moreover, we have $\lvert Z^{01}_{st}(G; \beta) \rvert  > 2$, so $\beta$ and $G$ satisfy the hypothesis of Lemma~\ref{lem:zeros-hard}. We conclude that  $\IN(\Delta, \beta, 1.01)$ and $\IA( \Delta, \beta, \pi/3)$ are $\numP$-hard. Finally, we have $\lvert \beta - 1 \rvert / \lvert \beta + 1 \rvert  < \sqrt{2}$ since $\lvert \beta - 1 \rvert / \lvert \beta + 1 \rvert  \approx 0.6572981$.
\end{proof}

We point out that one can use the approach that Buys developed for the independent set polynomial to find more zeroes inside the region $\lvert \beta - 1\rvert / \lvert \beta + 1 \rvert \le 1/\sqrt{\Delta -1}$ \cite{Buys2019}.

Let $\beta \in \mathbb{C} \setminus (\mathbb{R}\cup \{i, -i\})$. Lemma~\ref{lem:zeros-hard}  
uses the existence of a graph~$G$
with maximum degree at most~$\Delta$ 
and 
$\ising(G; \beta) = 0$
to demonstrate the hardness of
$\IN(\beta, \Delta, 1.01)$ and $\IA(\beta, \Delta, \pi/3)$.
However, Lemma~\ref{lem:zeros-hard}
relies on the additional condition 
that 
$Z^{ij}_{st}(G; \beta) \ne 0$ for 
some $i, j \in \{0,1\}$
and two terminals~$s$ and~$t$ with
degree at most~$\Delta-1$. In the following conjecture, we conjecture that these additional
conditions are not necessary.

\begin{conjecturezeros}
\stateconjecturezeros
\end{conjecturezeros}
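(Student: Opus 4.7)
The plan is to reduce Conjecture~\ref{con:zeros-hardness} to Lemma~\ref{lem:implementing-1} by constructing, from any graph $G$ with $\ising(G;\beta)=0$ and maximum degree at most $\Delta$, a gadget of maximum degree at most $\Delta$ with two terminals of degree $1$ that $\beta$-implements $-1$. The first observation is purely symmetric: since $Z_{st}^{00}=Z_{st}^{11}$ and $Z_{st}^{01}=Z_{st}^{10}$ for every pair $s,t$, the hypothesis $\ising(G;\beta)=0$ gives
\begin{equation*}
0=\ising(G;\beta)=2\bigl(Z_{st}^{00}(G;\beta)+Z_{st}^{01}(G;\beta)\bigr),
\end{equation*}
so $Z_{st}^{11}/Z_{st}^{01}=-1$ whenever $Z_{st}^{01}(G;\beta)\ne 0$; that is, $G$ already $\beta$-implements $-1$ at $(s,t)$. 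If in addition $\deg_G(s),\deg_G(t)\le\Delta-1$, pendant $\beta$-edges can be attached to $s$ and $t$ without exceeding the maximum degree $\Delta$, giving the required degree-$1$ terminals. This reproduces Lemma~\ref{lem:zeros-hard}; the open case is precisely when no such pair exists, e.g.\ when $G$ is $\Delta$-regular.

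\textbf{Freeing up degrees by edge removal.} To create low-degree terminals, I would pick an edge $e=(u,v)$ of $G$ (some edge exists, for otherwise $\ising(G;\beta)=2^{|V(G)|}\ne 0$) and consider $J:=G-e$, in which $u$ and $v$ now have degree at most $\Delta-1$. Splitting $\ising(G;\beta)$ by the spins at $u,v$ yields
\begin{equation*}
0=\ising(G;\beta)=2\beta\cdot Z_{uv}^{00}(J;\beta)+2\cdot Z_{uv}^{01}(J;\beta).
\end{equation*}
If $Z_{uv}^{00}(J;\beta)\ne 0$, then $J$ $\beta$-implements the weight $Z_{uv}^{11}/Z_{uv}^{01}=-1/\beta$ with terminals $u,v$ of degree at most $\Delta-1$. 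Otherwise both $Z_{uv}^{00}(J;\beta)$ and $Z_{uv}^{01}(J;\beta)$ vanish, so $\ising(J;\beta)=0$, and one iterates the argument on $J$, which has one fewer edge. Since the number of edges strictly decreases while the zero-partition-function property is preserved, the procedure terminates with a gadget $J$ of maximum degree at most $\Delta$ that $\beta$-implements $-1/\beta$ at two terminals of degree at most $\Delta-1$.

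\textbf{The main obstacle.} The remaining task is to bootstrap a $\beta$-implementation of $-1/\beta$ (at terminals of degree at most $\Delta-1$) into a $(\Delta,\beta)$-implementation of $-1$ (at terminals of degree exactly $1$), and this is where I expect the real difficulty. The most natural move is to parallel-compose $J$ with a single $\beta$-edge between $u$ and $v$: the weights multiply to $(-1/\beta)\cdot\beta=-1$, but the merged terminals now have degrees $\deg_J(u)+1$ and $\deg_J(v)+1$, which reach $\Delta$ in the $\Delta$-regular case and leave no room for subsequent pendant edges; indeed, this composition literally reassembles $G$. Replacing the $\beta$-edge by any other non-trivial gadget implementing $\beta$ only worsens the degree count, since any such gadget with degree-$1$ terminals contributes at least $1$ to each merged terminal degree. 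Series-composing $J$ with a gadget of weight $x$ instead produces $h_{-1/\beta}(x)=(\beta-x)/(\beta x-1)$, which equals $-1$ exactly when $x=-1$ (a circular requirement), while self-composing $J$ in series only yields the single rational value $(\beta^2+1)/(-2\beta)$. This ``one-degree-short'' mismatch, combined with the rigidity of the available rational operations, is precisely the obstruction left open by the paper.

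\textbf{Speculative path forward.} A plausible way to close the gap is to enrich the complex-dynamics framework of Section~\ref{sec:hardness:programs} so that Mobius-programs may use \emph{two} seeds simultaneously: an edge of weight $\beta$ (contributing $1$ to each touched terminal) and the $J$-gadget of weight $-1/\beta$ (contributing $\Delta-1$), subject to the combinatorial constraint that the final graph has maximum degree $\le\Delta$ and terminal degrees equal to $1$. In this enriched setting I would try to locate a program-approximable fixed point of the resulting iteration whose multiplier is repelling, and then invoke analogues of Lemma~\ref{lem:efficient-covering} and Lemma~\ref{lem:implementations} to generate a $(\Delta,\beta)$-implementation of some weight $w$ with $|(w-1)/(w+1)|>1/\sqrt{\Delta-1}$, at which point Theorem~\ref{thm:hardness} gives the hardness directly. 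Developing this two-seed dynamical theory uniformly in $\beta\in\algcomplex\setminus(\mathbb{R}\cup\{i,-i\})$ appears to be the technical heart of the conjecture, and is presumably the reason the statement is left open.
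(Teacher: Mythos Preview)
The statement is Conjecture~\ref{con:zeros-hardness}, which the paper explicitly leaves open; there is no proof in the paper to compare against. You correctly recognise this: your proposal is not a proof but a diagnosis of the difficulty, and your final paragraph labels the completion as speculative.

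Your edge-removal argument is essentially the content of the paper's partial result, Corollary~\ref{cor:zeros-hard}. The paper takes a graph $H$ of minimum edge-count with $\ising(H;\beta)=0$, picks an edge $e=(s,t)$, uses the relations $Z^{00}_{st}(H;\beta)=\beta\, Z^{00}_{st}(H\setminus e;\beta)$ and $Z^{01}_{st}(H;\beta)=Z^{01}_{st}(H\setminus e;\beta)$ together with minimality to force some $Z^{ij}_{st}(H;\beta)\neq 0$, and then feeds $H$ into Lemma~\ref{lem:zeros-hard}. The paper phrases this as $H$ implementing $-1$ at $(s,t)$, whereas you phrase it as $H\setminus e$ implementing $-1/\beta$ at $(u,v)$; these are two faces of the same identity, since parallel-composing the $-1/\beta$ gadget with a single $\beta$-edge literally reconstructs $H$ and the weight $-1$. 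Your observation that this reconstruction exhausts the degree budget in the $\Delta$-regular case is precisely the obstruction the paper flags when it writes that ``our implementation results seem not enough to prove the full conjecture.''

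As a side remark complementing your obstacle analysis: attaching pendant $\beta$-edges at both terminals of your $-1/\beta$ gadget produces $h_\beta(h_\beta(-1/\beta))=h_\beta(0)=1/\beta$, not $-1$, so even the degree-$1$-terminal version of $J$ does not feed directly into Lemma~\ref{lem:implementing-1}. This is consistent with your conclusion that a genuinely new idea is needed. Your two-seed dynamical suggestion is a reasonable research direction but lies outside anything the paper establishes.
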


We make some progress on this conjecture in Corollary~\ref{cor:zeros-hard} (by changing the degree constraint from~$\Delta$ to~$\Delta-1$), but the full result seems to be out of reach for our implementation techniques.

\begin{corollary} \label{cor:zeros-hard}
Let $\Delta$ be an integer with $\Delta \ge 3$ and let $\beta \in \algcomplex \setminus (\mathbb{R}\cup \{i, -i\})$. Suppose that there is a graph $G$ of maximum degree at most $\Delta-1$ with $\ising(G; \beta) = 0$. Then $\IN(\beta, \Delta, 1.01)$ and $\IA(\beta, \Delta, \pi/3)$ are $\numP$-hard.
\end{corollary}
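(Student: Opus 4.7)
The plan is to apply Lemma~\ref{lem:zeros-hard} to a subgraph $G^\ast$ of $G$, using two of its vertices as terminals $s,t$. Since $G$ has maximum degree at most $\Delta-1$, every subgraph inherits this bound, so the degree conditions of Lemma~\ref{lem:zeros-hard} (maximum degree at most $\Delta$ and $\deg(s),\deg(t)\le \Delta-1$) are automatic for any choice of distinct $s,t$. What remains is to find $G^\ast\subseteq G$ and distinct $s,t\in V(G^\ast)$ such that $\ising(G^\ast;\beta)=0$ and $Z^{ij}_{st}(G^\ast;\beta)\ne 0$ for some $i,j\in\{0,1\}$.

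The technical heart of the argument is an edge-deletion identity. For any graph $H$ and any non-loop edge $e=(s,t)\in E(H)$, partitioning configurations according to the spins at $s$ and $t$ and noting that the edge $e$ contributes a factor of $\beta$ exactly when $\sigma(s)=\sigma(t)$ gives
\begin{equation*}
\ising(H\setminus e;\beta)=\frac{1}{\beta}\bigl(Z^{00}_{st}(H;\beta)+Z^{11}_{st}(H;\beta)\bigr)+Z^{01}_{st}(H;\beta)+Z^{10}_{st}(H;\beta).
\end{equation*}
Using spin symmetry $Z^{00}_{st}=Z^{11}_{st}=:a$ and $Z^{01}_{st}=Z^{10}_{st}=:b$, together with $\ising(H;\beta)=2(a+b)=0$ which forces $b=-a$, this simplifies to
\begin{equation*}
\ising(H\setminus e;\beta)=-\frac{2(\beta-1)}{\beta}\,Z^{00}_{st}(H;\beta).
\end{equation*}
Because $\beta\notin\mathbb{R}$ implies $\beta\ne 0,1$, removing $e$ preserves the zero of the partition function if and only if $Z^{00}_{st}(H;\beta)=0$.

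With this identity in hand, I would proceed by induction on $|E(G)|$. Self-loops may be removed first: each self-loop multiplies $\ising$ by the nonzero factor $\beta$, so the zero is preserved and we may assume $G$ has no self-loops. If some edge $(s,t)\in E(G)$ satisfies $Z^{00}_{st}(G;\beta)\ne 0$, take $G^\ast:=G$ and the pair $(s,t)$ as terminals, and Lemma~\ref{lem:zeros-hard} immediately yields the claimed $\numP$-hardness. Otherwise $Z^{00}_{st}(G;\beta)=0$ for every edge; the identity shows that deleting any such edge leaves $\ising=0$, and we recurse on the strictly smaller graph (whose maximum degree is still at most $\Delta-1$). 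The recursion must terminate before exhausting all edges, because an edgeless graph has $\ising=2^{|V|}\ne 0$, contradicting the invariant $\ising(\cdot;\beta)=0$ maintained throughout the recursion.

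The main thing I expected to be an obstacle, namely verifying the non-degeneracy condition $Z^{ij}_{st}(G^\ast;\beta)\ne 0$, is handled transparently by this shrinking procedure: if no edge of the current graph works, we pass to a proper subgraph with the same zero. No complex-dynamics or implementation-theoretic machinery beyond Lemma~\ref{lem:zeros-hard} is needed.
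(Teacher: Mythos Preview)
Your proposal is correct and follows essentially the same approach as the paper: both arguments pick a minimal (in number of edges) subgraph with vanishing partition function, observe via the edge-deletion identity that some $Z^{ij}_{st}$ must be nonzero at an edge $(s,t)$, and then invoke Lemma~\ref{lem:zeros-hard}. Your derivation of the identity $\ising(H\setminus e;\beta)=-\tfrac{2(\beta-1)}{\beta}Z^{00}_{st}(H;\beta)$ (using the symmetry $Z^{00}=-Z^{01}$ forced by $\ising(H;\beta)=0$) is a slightly cleaner packaging of the same computation, and your explicit treatment of self-loops is a small nicety the paper omits.
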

\begin{proof}
  Let $\mathcal{F} = \{G' : G' \text{ has maximum degree at most } \Delta-1 \text{ and } Z(G', \beta) = 0\}$, which is not empty by our hypothesis. We can choose $H \in \mathcal{F}$ with the minimum possible number of edges. Let $e = (s,t)$ be an edge of $H$. Let $H\setminus e$ be the graph obtained by deleting the edge $e$ from $H$. We have
  \begin{align*}
    Z_{st}^{00}(H; \beta) & = \beta Z_{st}^{00}(H \setminus e; \beta), \\
    Z_{st}^{01}(H; \beta) & = Z_{st}^{01}(H \setminus e; \beta).
  \end{align*}
Therefore, if $Z_{st}^{00}(H; \beta) = Z_{st}^{01}(H; \beta) = 0$, then $\ising(H \setminus e; \beta) = 2 Z_{s}^{0}(H\setminus e; \beta) = 2(Z_{st}^{00}(H; \beta) + Z_{st}^{01}(H; \beta)) = 0$, which contradicts the minimality of $H$. We conclude that either $Z_{st}^{00}(H; \beta) \ne 0$ or  $Z_{st}^{01}(H; \beta) \ne 0$. Since $s$ and $t$ have degree at most $\Delta -1$, the result follows from Lemma~\ref{lem:zeros-hard}.
\end{proof}

\bibliographystyle{plainurl}
\bibliography{IsingBib}

 \begin{appendices}

 \section{Mobius-programs: proofs of Lemmas~\ref{lem:efficient-covering} and~\ref{lem:implementations}} 
 
 \label{sec:neighbourhood}\label{sec:A}

In this appendix we prove Lemmas~\ref{lem:efficient-covering} and~\ref{lem:implementations}. These lemmas generalise the results on implementations for the independent set polynomial given in \cite{Bezb} to a more general setting so that they can be applied to other spin systems, including the Ising model. Some of the definitions and results required in this appendix have been stated in Section~\ref{sec:hardness:programs}, so we ask the reader to read Section~\ref{sec:hardness:programs} before this appendix. This appendix is organised as follows. In Section~\ref{sec:neighbourhood:covering} we show how to generate approximations of any point around a program-approximable fixed point as a first step towards the proof of Lemma~\ref{lem:efficient-covering}. In Section~\ref{sec:neighbourhood:precision} we prove Lemma~\ref{lem:efficient-covering}. Finally, in Section~\ref{sec:mobius:plane} we prove Lemma~\ref{lem:implementations}.

\subsection{From program-approximable to densely program-approximable} \label{sec:neighbourhood:covering}

In this section we generalise the results in \cite[Section 7.2]{Bezb} on hardcore-programs to Mobius-programs. The main result of this section is Lemma~\ref{lem:covering}, where we show that program-approximable fixed points are densely program-approximable under some hypothesis. The main idea behind the results given in \cite[Section 7]{Bezb} is that, locally around $\omega$, hardcore-programs behave as \emph{straight-line-programs}, which are much easier to study. This property is not specific to hardcore-programs, as illustrated in Lemma~\ref{lem:g:1}.

\begin{definition}
  Let $z \in \mathbb{C}$ with $z \ne 0$. A straight-line-program with operation
  \begin{equation}
  \label{eq:slp}
  (a_1, \ldots, a_d) \mapsto z \sum_{j = 1}^d a_j
\end{equation}
is a sequence of assignments starting with $a_0 = 0, a_1 = 1$ and
  \begin{equation*}
    a_k = z \left( a_{i_{k,1}} + \cdots + a_{i_{k,d}} \right), \quad \text{for } k = 2,3,\ldots,
  \end{equation*}
  where $i_{k,1}, \ldots, i_{k,d} \in \{0, \ldots, k-1\}$. We say that the straight-line-program generates $x \in \mathbb{C}$ if there exists integer $k \ge 0$ such that $a_k = x$.
\end{definition}

\begin{lemma}[{\cite[Lemma 7.9 for Mobius-programs]{Bezb}}] \label{lem:g:1}
  Let $d$ be an integer with $d \ge 2$ and let $g$ be a Mobius map. Let  $\omega \in \mathbb{C}$ be a fixed point of $f(z) = g(z^d)$ with $\omega \ne 0$ and  $g'(\omega^d), g''(\omega^d) \ne \infty$. Set $z := g'(\omega^d)\omega^{d-1}$. There exist reals $C_0 := C_0(g, d, \omega) > 1$ and $\delta_0 := \delta_0(g, d, \omega) > 0$ such that for any $a_1, \ldots, a_d \in \mathbb{C}$ with $|a_j| \le \delta_0$ (for $j \in [d]$) we have
  \begin{equation*}
    g\left(\left(\omega + a_1\right) \cdots \left(\omega + a_d \right) \right) = \omega + z \left(\sum\nolimits_{j = 1}^d a_j\right) + \tau,
  \end{equation*} 
  where  $\tau \in \mathbb{C}$ with $|\tau| \le C_0 \max_{j \in [d]} |a_j|^2$.
\end{lemma}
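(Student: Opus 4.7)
The plan is a two-stage Taylor expansion: first expand the polynomial $(\omega+a_1)\cdots(\omega+a_d)$ around $\omega^d$, then apply the Taylor expansion of the Mobius map $g$ around $\omega^d$. I would write
\[(\omega+a_1)\cdots(\omega+a_d)=\omega^d+\omega^{d-1}\sum_{j=1}^d a_j+P(a_1,\ldots,a_d),\]
where $P$ collects every monomial containing at least two of the $a_j$. A combinatorial bound on the $2^d-d-1$ terms of $P$, valid whenever $\max_j|a_j|\le 1$, produces a constant $C_1=C_1(d,|\omega|)$ with $|P|\le C_1\max_j|a_j|^2$.

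For the second stage, observe that since $g$ is a Mobius map and $g'(\omega^d),g''(\omega^d)$ are finite, the point $\omega^d$ is not a pole of $g$; equivalently, $g$ is holomorphic on an open disk $U$ centred at $\omega^d$. Taylor's theorem on $U$ provides a constant $C_2=C_2(g,\omega^d)$ such that $|g(y)-g(\omega^d)-g'(\omega^d)(y-\omega^d)|\le C_2|y-\omega^d|^2$ for every $y\in U$. I would then choose $\delta_0\in(0,1]$ small enough that $y:=(\omega+a_1)\cdots(\omega+a_d)$ lies in $U$ whenever $\max_j|a_j|\le\delta_0$; this is possible by continuity, since $y=\omega^d$ when all $a_j$ vanish.

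Combining the two expansions and using the fixed-point identity $g(\omega^d)=f(\omega)=\omega$, I obtain
\[g(y)=\omega+g'(\omega^d)\Bigl(\omega^{d-1}\sum_{j=1}^d a_j+P\Bigr)+\tau',\]
where $|\tau'|\le C_2|y-\omega^d|^2$. The linear contribution equals $z\sum_j a_j$ by definition of $z$, so the total error is $\tau=g'(\omega^d)P+\tau'$. The first summand is bounded by $|g'(\omega^d)|\,C_1\max_j|a_j|^2$; for the second, the triangle inequality gives $|y-\omega^d|\le d|\omega|^{d-1}\max_j|a_j|+C_1\max_j|a_j|^2$, and using $\max_j|a_j|\le\delta_0\le 1$ yields $|\tau'|\le C_3\max_j|a_j|^2$. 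Adding the two bounds produces the required constant $C_0$.

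The argument is entirely routine; the only delicate point is the bookkeeping of constants and shrinking $\delta_0$ so that $y$ stays inside $U$ uniformly. The hypotheses $g'(\omega^d),g''(\omega^d)\ne\infty$ are exactly what is needed to make $g$ holomorphic at $\omega^d$, and the fixpoint identity $g(\omega^d)=\omega$ is what cleans up the constant term in the final formula.
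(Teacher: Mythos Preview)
Your proof is correct, but it takes a different route from the paper. The paper introduces the one-variable parameterisation $F(t)=g\bigl((\omega+tb_1)\cdots(\omega+tb_d)\bigr)$ with $|b_j|\le 1$, computes $F(0)=\omega$ and $F'(0)=z\sum_j b_j$ directly, and then bounds $F''(t)$ uniformly on $|t|\le\delta_0$ via continuity of $g'$ and $g''$ near $\omega^d$; a single application of one-variable Taylor with remainder gives $|F(t)-F(0)-F'(0)t|\le C_0 t^2$, and setting $t=\max_j|a_j|$, $b_j=a_j/t$ finishes. Your two-stage expansion (first the product, then $g$) is more hands-on and arguably more transparent about where each constant comes from, whereas the paper's parameterisation compresses everything into a single second-derivative bound. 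Both arguments use the finiteness of $g'(\omega^d)$ and $g''(\omega^d)$ in the same essential way, and both are equally routine; the paper's version just packages the bookkeeping slightly differently. One small remark: for a M\"obius map, $g(\omega^d)=\omega\in\mathbb{C}$ already forces $\omega^d$ not to be a pole, so holomorphy of $g$ there is automatic; the stated hypotheses on $g'$ and $g''$ are (for a M\"obius map) redundant with this, though of course harmless.
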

\begin{proof}
  The proof is analogous to that of \cite[Lemma 7.9]{Bezb}. The only difference is the determination of the constants $C_0$ and $\delta_0$. Here these constants are obtained by a continuity argument whereas in \cite[Lemma 7.9]{Bezb} $C_0$ and $\delta_0$ are determined explicitly. We include this proof to illustrate this continuity argument. Let $b_1, \ldots, b_d \in \mathbb{C}$ with $|b_j| \le 1$ for every $j \in [d]$. For $t \in \mathbb{R}$, we define
  \begin{equation*}
    F(t) =  g\left(\left(\omega + t b_1\right) \cdots \left(\omega + t b_d \right) \right).
  \end{equation*}
  Note that $F(0) = g(\omega^d) = \omega$. To simplify our notation, for each $j \in [d]$, let $x_j(t) = \omega + t b_j$,  and set $y(t) = x_1(t) \cdots x_d(t)$, so $F(t) = g(y(t))$. We have
  \begin{equation*}
    F'(t) =  g'\left( y \left( t \right) \right) \sum_{j = 1}^d b_j \prod_{i = 1, \, i \ne j}^{d} x_i(t).
  \end{equation*}
  In particular, we obtain
  \begin{equation} \label{eq:G:first-derivative}
    F'(0) =  g'\left( \omega^d\right)  \sum_{j = 1}^d b_j \omega^{d-1} = z \sum_{j = 1}^d b_j.
  \end{equation}
  We have
  \begin{equation} \label{eq:G:second-derivative}
    F''(t) =  g''\left( y \left( t \right) \right) \left(\sum\nolimits_{j = 1}^d b_j  \prod\nolimits_{\substack{i = 1 \\ i \ne j}}^{d} x_i(t)\right)^{2} + 2 g'\left( y \left( t \right) \right) \sum_{1 \le j < i \le d} b_j b_i \prod_{\substack{l=1\\ l \ne i, j}}^d x_l(t).
  \end{equation}
  Since $g'(\omega^d), g''(\omega^d) \ne \infty$ (by assumption) and $y(0) = \omega^d$, from the continuity of the maps $y$, $g'$ and $g''$ we find that there is $\delta_0 := \delta_0(g, d, \omega) \in (0, 1)$ such that $g'(y(t))$ and $g''(y(t))$ are bounded when $|t| \le \delta_0$. Note that $|x_j(t)| \le |\omega| + 1$ when $|t| \le \delta_0$. Therefore, \eqref{eq:G:second-derivative} can be upper bounded when $|t| \le \delta_0$ by a constant $C_0 := C_0(g, d, \omega) > 1$. By Taylor's formula we conclude that, for every $t \in \mathbb{R}$ with $|t| \le \delta_0$, 
  \begin{equation}\label{eq:G:second-derivative:bound}
    \left| F(t) - F(0) - F'(t) t \right| \le C_0 t^2.
  \end{equation}
  Finally, let $a_1, \ldots, a_d$ with $|a_j| \le \delta_0$. We choose $t = \max_{j \in [d]} |a_j|$. The result for $t = 0$ is equivalent to $F(0) = \omega$. Hence, we can assume that $t > 0$ and define $b_j = a_j / t$ for $j \in [d]$. The result then follows from $F(0) = \omega$, \eqref{eq:G:first-derivative} and \eqref{eq:G:second-derivative:bound}.
\end{proof}

\begin{remark}
  If the Mobius map $g$ of Lemma~\ref{lem:g:1} is given explicitly, then the constants $\delta_{0}(g, d, \omega)$ and  $C_0(g, d, \omega)$ can be determined explicitly as it is done for $g(x) = 1 / (1 +\lambda x)$ in the proof of \cite[Lemma 7.9]{Bezb}. 
\end{remark}

As noted in \cite[Section 7]{Bezb}, straight-line-programs can generate evaluations of any polynomial $p(z)$ with positive coefficients, up to a factor $z^n$. This property of straight-line-programs is used in  \cite[Lemma 2.10]{Bezb} in conjunction with a density result on evaluations of polynomials to come up with hardcore-programs that generate approximations of any number near a fixed point of $f(x) = 1 / (1 + \lambda x^d)$. Here we extend this result to Mobius-programs in Lemma~\ref{lem:covering}. Apart from differences in notation, the proof is the same as that of \cite[Lemma 2.10]{Bezb}; hence, we omit this proof and only highlight the notation differences. We also note that the difference between Lemmas~\ref{lem:covering} and~\ref{lem:efficient-covering} is that the latter gives an algorithm whereas the former only proves existence of these Mobius-programs.

\begin{lemma}[{\cite[Lemma 2.10]{Bezb} for Mobius-programs}] \label{lem:covering}
  Let $d$ be an integer with $d \ge 2$ and let $g$ be a Mobius map.  Let $f(x):= g(x^d)$ and let $\omega$ be a fixed point of $f$. Let us assume that the following assumptions hold.
  \begin{enumerate}
  	\item  $\omega$ is program-approximable for $g$ and $a_0 \in \mathbb{C}$;
  	\item \label{item:lem:covering:2}  $\omega \ne 0$ and  $g'(\omega^d), g''(\omega^d) \ne  \infty$;
  	\item Let $z := f'(\omega) / d = g'(\omega^d) \omega^{d-1}$. We have $0 < |z| < 1$ and $z \not \in \mathbb{R}$.
  \end{enumerate}
  Then, for any $\epsilon, \kappa > 0$ there exists a radius $\rho \in (0, \kappa)$ such that the following holds. For every $x \in B(\omega, \rho)$ there is a Mobius-program for $g$ starting at $a_0$ that generates $a_k$ with $|x-a_k| \le \epsilon \rho$.
\end{lemma}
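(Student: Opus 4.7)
The plan is to adapt the strategy of \cite[Lemma 2.10]{Bezb} from hardcore-programs to the general Mobius setting, using Lemma~\ref{lem:g:1} as the key linearization step. The central insight is that, locally near~$\omega$, the Mobius-program operation $(a_1, \ldots, a_d) \mapsto g(a_1 \cdots a_d)$ approximately implements the straight-line-program operation $(a_1, \ldots, a_d) \mapsto z \sum_j a_j$, up to a quadratic error; straight-line-programs in turn evaluate polynomials with non-negative integer coefficients at~$z$, whose values are dense in~$\mathbb{C}$ when $z \notin \mathbb{R}$ and $|z| < 1$.

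First, I would use program-approximability of~$\omega$ to obtain, for a parameter $\tau > 0$ to be fixed later, a Mobius-program element~$c$ with $v := c - \omega$ satisfying $0 < |v| \le \tau$. Second, I would show by induction on the length of a straight-line-program $(b_k)_{k}$ that the program value $b_k$ corresponds to a Mobius-program element of the form $\omega + b_k v + \eta_k$ with controlled error $\eta_k$. The induction step plugs Mobius-program elements $\omega + b_{i_{k,j}} v + \eta_{i_{k,j}}$ into $g(\cdot)$; by Lemma~\ref{lem:g:1} this yields
\begin{equation*}
  g\bigl((\omega + b_{i_{k,1}}v + \eta_{i_{k,1}}) \cdots (\omega + b_{i_{k,d}}v + \eta_{i_{k,d}})\bigr) = \omega + z\sum_{j} (b_{i_{k,j}} v + \eta_{i_{k,j}}) + \tau_k = \omega + b_k v + \eta_k,
\end{equation*}
where $\tau_k$ is quadratic in the perturbations and $\eta_k := z\sum_j \eta_{i_{k,j}} + \tau_k$. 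Third, given $\epsilon, \kappa > 0$, I would apply a density result for polynomial evaluations: since $z \notin \mathbb{R}$ and $|z| < 1$, the set $\{p(z) : p \in \mathbb{Z}_{\ge 0}[X]\}$ is dense in~$\mathbb{C}$. Consequently, for a suitable scale, for every target $x = \omega + r \in B(\omega, \rho)$ there is a straight-line-program whose output $b_k$ satisfies $|b_k v - r| \le \epsilon \rho / 2$; the Mobius-program simulating it generates a value within $\epsilon \rho$ of~$x$, provided the accumulated simulation error $|\eta_k|$ is at most $\epsilon \rho / 2$.

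The main obstacle is controlling the error propagation in the inductive simulation of step two. Each application of Lemma~\ref{lem:g:1} contributes a quadratic term $O(\max_j|b_{i_{k,j}}v + \eta_{i_{k,j}}|^2)$, which could compound dangerously across iterations. The trade-off is standard but delicate: fixing $\epsilon$ and $\kappa$, one first chooses a polynomial degree~$N$ large enough so that the dense subset of $\{p(z) : \deg p \le N\}$ covers the relevant disk at scale~$\epsilon$; one then picks~$|v|$ small enough (and $\rho$ scaled accordingly, $\rho \asymp |v| \cdot \max |b_k|$) that the aggregate error~$|\eta_k|$ across all~$O(N)$ straight-line steps is dominated by $\epsilon \rho$. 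Here the hypotheses $|z| < 1$ and the bounds on $g'(\omega^d), g''(\omega^d)$ from Lemma~\ref{lem:g:1} ensure that the constants involved are finite and independent of $v$. Once these parameters are balanced, the argument goes through exactly as in \cite[Lemma 2.10]{Bezb}; the abstraction to a general Mobius map~$g$ requires no further change because Lemma~\ref{lem:g:1} already packages the linearization estimate in the required form.
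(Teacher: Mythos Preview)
Your proposal is correct and follows essentially the same approach as the paper's proof, which simply states that the argument of \cite[Lemma~2.10]{Bezb} carries over verbatim once one replaces their linearization lemma by Lemma~\ref{lem:g:1} and notes that the three hypotheses on~$\omega$ are exactly the properties of the specific fixed point used in~\cite{Bezb}. In fact your sketch is more explicit than the paper's own proof, spelling out the simulation of straight-line-programs by Mobius-programs and the balancing of parameters that the paper leaves entirely to the reference.
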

\begin{proof}
  The proof is the same one as that of \cite[Lemma 2.10]{Bezb} apart from a few differences in notation. Here we point out these differences in notation so that the reader can translate the proof to our setting if needed. First of all, in our version we have an arbitrary Mobius map $g$ whereas \cite[Lemma 2.10]{Bezb} sets $g(z) = 1 / (1 + \lambda z)$ for some activity $\lambda \in \mathbb{Q}_{\mathbb{C}} \setminus \mathbb{R}$ of the independent set polynomial. The particular choice of $g$ does not affect the proof, so every instance of ``hardcore-program'' in \cite[Lemma 2.10]{Bezb} can be effectively replaced by ``Mobius-program for $g$ and $a_0$'', and every time that the proof invokes \cite[Lemma 7.9]{Bezb} we can use the more general Lemma~\ref{lem:g:1} instead. 

  The second main difference is that our statement adds a layer of generality in the choice of the fixed point $\omega \in \mathbb{C}$. In \cite[Lemma 2.10]{Bezb} $\omega$ is chosen as the fixed point of $f(z) = 1 / (1 + \lambda z^d)$ with the smallest norm. It turns out that such a fixed point satisfies the hypothesis of our statement. First, $\omega$ is program-approximable for $g(z) = 1 / (1 + \lambda z)$ and $a_0 = \lambda$ (see \cite[Lemma 2.7]{Bezb}). Secondly, we have $\omega$ and  $g'(\omega^d), g''(\omega^d) \ne  \infty$. Thirdly,  $0 < \lvert z \rvert < 1$ and $z \not \in \mathbb{R}$, see  \cite[Lemma 7.4]{Bezb} (we should point out that in  \cite{Bezb}  the authors set $z = \omega - 1$, which agrees with $z = g'(\omega^d)\omega^{d-1}$ for their choice of $g$). These are all the properties of $\omega$ needed to carry out the proof of  \cite[Lemma 2.10]{Bezb}.
  
  Finally, it is useful to note that if a hardcore-program generates a number $a_k$, then there is a tree of maximum degree at most $d+1$ that implements $\lambda a_k$. This explains why in the proof and statement of \cite[Lemma 2.10]{Bezb} there is an extra factor $\lambda$ when activities of the independent set polynomial are considered. Here we can omit this factor because we are not translating programs to gadgets.
\end{proof}

\subsection{Proof of Lemma~\ref{lem:efficient-covering}} \label{sec:neighbourhood:precision}

In this section we translate the results given in \cite[Section 7.3]{Bezb} to Mobius-programs. The main result of this section is Lemma~\ref{lem:efficient-covering}, which gives an algorithmic version of Lemma~\ref{lem:covering}. First, we need some technical results, Lemmas~\ref{lem:g:2} and~\ref{lem:g:3}, which extend \cite[Lemmas 7.10 and 7.11]{Bezb} to our more general setting.

\begin{lemma}[{\cite[Lemma 7.10 for Mobius-programs]{Bezb}}] \label{lem:g:2}
  Let $d$ be an integer with $d \ge 2$ and let $g$ be a Mobius map. Let  $\omega \in \mathbb{C}$ be a fixed point of $f(z) = g(z^d)$ with $\omega \ne 0$, $g'(\omega^d) \not \in \{0, \infty\}$ and $g''(\omega^d) \ne \infty$. Set $z := g'(\omega^d)\omega^{d-1}$. There exist reals $C_1 := C_1(g, d, \omega) > 1$ and $\delta_1 := \delta_1(g, d, \omega) > 0$ such that for any $a_1, \ldots, a_d \in \mathbb{C}$ with $|a_j| \le \delta_1$ (for $j \in [d]$) we have
\begin{equation*}
 \Phi^{-1}\left(\omega + a_d\right) = \omega + \frac{a_d}{z} - \sum_{j = 1}^{d-1} a_j + \tau,
\end{equation*} 
where  
\begin{equation*}
  \Phi \left( x \right) = g \left( x \prod\nolimits_{j = 1}^{d-1} \left( \omega + a_j \right)  \right)
\end{equation*}
and $\tau \in \mathbb{C}$ with $|\tau| \le C_0 \max_{j \in [d]} |a_j|^2$. 
\end{lemma}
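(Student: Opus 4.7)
The plan is to observe that $\Phi$ is the composition of scaling by the constant $P := \prod_{j=1}^{d-1}(\omega + a_j)$ and the Mobius map $g$, so the inverse is $\Phi^{-1}(y) = g^{-1}(y)/P$. The hypothesis $g'(\omega^d) \notin \{0,\infty\}$ ensures that $g^{-1}$ is a Mobius map that is holomorphic at $\omega = g(\omega^d)$ with $(g^{-1})'(\omega) = 1/g'(\omega^d)$, and $g''(\omega^d) \ne \infty$ ensures that $(g^{-1})''$ is bounded in a neighbourhood of $\omega$. The hypothesis $\omega \ne 0$ ensures that $P$ stays bounded away from $0$ and $\infty$ when the $a_j$ are small, so $1/P$ admits a convergent expansion.

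First I would Taylor-expand the two factors to first order. Using $g^{-1}(\omega)=\omega^d$ and $(g^{-1})'(\omega) = 1/g'(\omega^d) = \omega^{d-1}/z$, we get
\begin{equation*}
g^{-1}(\omega + a_d) = \omega^d + \frac{\omega^{d-1}}{z}\, a_d + \tau_1,
\end{equation*}
with $|\tau_1| \le C\, |a_d|^2$ for some constant $C$ depending on a bound for $(g^{-1})''$ near $\omega$. Expanding the product gives
\begin{equation*}
P = \omega^{d-1} + \omega^{d-2}\sum_{j=1}^{d-1} a_j + \tau_2, \qquad |\tau_2|\le C'\max_j |a_j|^2,
\end{equation*}
and then, since $1/P = \omega^{-(d-1)}(1 + \omega^{-1}\sum a_j + O(\max|a_j|^2))^{-1}$, one obtains $1/P = \omega^{-(d-1)} - \omega^{-d}\sum_{j=1}^{d-1} a_j + \tau_3$, where $|\tau_3| \le C''\max_j|a_j|^2$.

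Next I would multiply the two expansions. The leading contribution $\omega^d \cdot \omega^{-(d-1)}$ gives $\omega$; the cross term $\omega^d \cdot (-\omega^{-d})\sum a_j$ gives $-\sum_{j=1}^{d-1} a_j$; the cross term $(\omega^{d-1}/z)\,a_d \cdot \omega^{-(d-1)}$ gives $a_d/z$; every other product is a sum of two or more factors that are each $O(\max_j|a_j|)$, hence $O(\max_j |a_j|^2)$. Collecting everything into the error term $\tau$ gives exactly the claimed identity.

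The constants $\delta_1 := \delta_1(g,d,\omega)$ and $C_1:=C_1(g,d,\omega)$ are produced by exactly the continuity-with-Taylor-remainder argument used in the proof of Lemma~\ref{lem:g:1}: parameterise by $a_j = t b_j$ with $|b_j|\le 1$, view $t \mapsto \Phi^{-1}(\omega+tb_d)$ as a real-analytic map, and bound its second derivative uniformly on $|t|\le \delta_1$ by means of the continuity of $g', g'', (g^{-1})', (g^{-1})''$ at $\omega^d$ and $\omega$ respectively. The only thing to watch is that each of these derivatives is finite at the relevant point, which is ensured by the three hypotheses $\omega\ne 0$, $g'(\omega^d)\notin\{0,\infty\}$, $g''(\omega^d)\ne\infty$. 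The main (quite mild) obstacle is the bookkeeping of which terms are genuinely quadratic versus linear, but since all of the factors are explicit Mobius expressions, this reduces to the standard estimate already used in Lemma~\ref{lem:g:1}.
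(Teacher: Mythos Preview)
Your proposal is correct and follows essentially the same approach as the paper: both write $\Phi^{-1}(x)=g^{-1}(x)\prod_{j=1}^{d-1}(\omega+a_j)^{-1}$, identify the linear part via $(g^{-1})'(\omega)=1/g'(\omega^d)=\omega^{d-1}/z$ and $g^{-1}(\omega)=\omega^d$, and control the quadratic remainder by the parameterisation $a_j=tb_j$ and a continuity bound on the second derivative exactly as in Lemma~\ref{lem:g:1}. The only cosmetic difference is that you expand and multiply the two factors explicitly, whereas the paper packages the same computation into the single function $F(t)=g^{-1}(\omega+tb_d)\prod_{j=1}^{d-1}(\omega+tb_j)^{-1}$ and asserts $F'(0)=b_d/z-\sum_{j=1}^{d-1}b_j$ directly.
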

\begin{proof}
  First, note that
  \begin{equation*}
    \Phi^{-1}\left( x \right) = g^{-1} \left( x \right) \prod_{j = 1}^{d-1} \left( \omega + a_j \right)^{-1}.
  \end{equation*}
  Let $b_1, \ldots, b_d \in \mathbb{C}$ with $|b_j| \le 1$ for every $j \in [d]$. For $t \in (-\lvert \omega \rvert, \lvert \omega \rvert)$, note that $\omega + t b_j \ne 0$, so we can define
  \begin{equation*}
    F(t) =  g^{-1}\left(\omega + t b_d \right) \prod_{j = 1}^{d-1} \left( \omega + t b_j \right)^{-1}.
  \end{equation*}
  We note that when $g$ is particularised to $g(x) = 1/(1+\lambda x)$, $F$ coincides with the definition of $F$ given in \cite[Lemma 7.10]{Bezb}. Moreover, $F(t)$ agrees with $\Phi^{-1}(\omega + a_d)$ for $t = \max_{j \in \{ 1, \ldots, d \}} \lvert a_j \rvert$ and $b_{j } = a_j / t$.  Note that $F(0) = g^{-1}(\omega) \omega^{-d+1} = \omega$. One can check that $F'(0) = b_d/z - \sum_{j=1}^{d-1} b_j$. The proof is now  analogous to that of \cite[Lemma 7.10]{Bezb}, with the difference that the constants $C_1 := C_1(g, d, \omega) > 1$ and $\delta_1 := \delta_1(g, d, \omega) > 0$ are not explicitly determined but rather obtained by a continuity argument as in Lemma~\ref{lem:g:1}  that uses the hypotheses $\omega \ne 0$, $g'(\omega^d) \not \in \{0, \infty\}$ and $g''(\omega^d) \ne \infty$. Hence, we do not repeat the rest of the proof here.
\end{proof}

\begin{lemma}[{\cite[Lemma 7.11 for Mobius-programs]{Bezb}}] \label{lem:g:3}
  Let $d$ be an integer with $d \ge 2$ and let $g$ be a Mobius map. Let  $\omega \in \mathbb{C}$ be a fixed point of $f(z) = g(z^d)$ with $g'(\omega^d), g''(\omega^d) \ne \infty$. Set $z := g'(\omega^d)\omega^{d-1}$.  There exist reals $C_2 := C_2(g, d, \omega) > 1$ and $\delta_2 := \delta_2(g, d, \omega) > 0$ such that for any $a_1, \ldots, a_d \in \mathbb{C}$ with $|a_j| \le \delta_2$ (for $j \in [d]$) we have
\begin{equation*}
 \Phi'\left(\omega + a_d\right) = z + \tau,
\end{equation*} 
where  
\begin{equation*}
  \Phi \left( x \right) = g \left( x \prod\nolimits_{j = 1}^{d-1} \left( \omega + a_j \right)  \right)
\end{equation*}
and $\tau \in \mathbb{C}$ with $|\tau| \le C_0 \max_{j \in [d]} |a_j|$. 
\end{lemma}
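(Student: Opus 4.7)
The plan is to mirror the continuity arguments used in Lemmas~\ref{lem:g:1} and~\ref{lem:g:2}. A direct computation gives
\[
\Phi'(x) = g'\Bigl(x \prod_{j=1}^{d-1}(\omega + a_j)\Bigr) \prod_{j=1}^{d-1}(\omega + a_j),
\]
so that
\[
\Phi'(\omega + a_d) = g'\Bigl(\prod_{j=1}^{d}(\omega + a_j)\Bigr) \prod_{j=1}^{d-1}(\omega + a_j).
\]
As in the earlier lemmas, write $a_j = t\, b_j$ with $|b_j| \le 1$ and $t \in \mathbb{R}$, and set
\[
G(t) := g'\Bigl(\prod_{j=1}^{d}(\omega + t b_j)\Bigr) \prod_{j=1}^{d-1}(\omega + t b_j).
\]
Then $G(0) = g'(\omega^d)\,\omega^{d-1} = z$, which matches the claimed main term.

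Next I would bound $G$ by a first-order Taylor expansion. Since $g'(\omega^d) \ne \infty$ and $g''(\omega^d) \ne \infty$ by hypothesis, there is a punctured neighbourhood of $\omega^d$ on which both $g'$ and $g''$ are bounded; by continuity of $y(t) = \prod_{j=1}^{d}(\omega + t b_j)$ in $t$ (uniformly in the $b_j$'s with $|b_j|\le 1$), one can choose $\delta_2 := \delta_2(g,d,\omega) > 0$ so that $y(t)$ stays inside this neighbourhood whenever $|t| \le \delta_2$. In that regime each $|\omega + t b_j|$ is bounded by $|\omega|+1$. Differentiating,
\[
G'(t) = g''(y(t))\, y'(t) \prod_{j=1}^{d-1}(\omega + t b_j) \;+\; g'(y(t))\,\frac{d}{dt}\!\prod_{j=1}^{d-1}(\omega + t b_j),
\]
and every factor on the right-hand side is uniformly bounded for $|t|\le \delta_2$ by the considerations above. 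Hence there exists $C_2 := C_2(g,d,\omega) > 1$ with $|G'(t)| \le C_2$ on $|t| \le \delta_2$.

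By the mean value theorem, $|G(t) - G(0)| \le C_2\, |t|$ for all $|t|\le \delta_2$. Specialising to $t = \max_{j \in [d]} |a_j|$ and $b_j = a_j/t$ (if $t=0$, the bound is trivial since $G(0) = z$) yields
\[
\bigl|\Phi'(\omega + a_d) - z\bigr| \le C_2 \max_{j\in[d]} |a_j|,
\]
which is the asserted inequality (reading the $C_0$ in the statement as $C_2$, as in the analogous statements of Lemmas~\ref{lem:g:1} and~\ref{lem:g:2}). The main, and essentially only, point to check is that the finiteness assumptions $g'(\omega^d), g''(\omega^d) \ne \infty$ really do give uniform bounds on $g'$ and $g''$ on a neighbourhood of $\omega^d$; this is immediate because $g'$ and $g''$ are rational functions, so ``$\ne \infty$ at a point'' is equivalent to ``holomorphic and bounded on a neighbourhood of that point''. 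There is no substantive obstacle beyond this standard Taylor/continuity argument.
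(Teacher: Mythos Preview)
Your proof is correct and follows essentially the same approach as the paper's: compute $\Phi'$, substitute $a_j = t b_j$ with $|b_j|\le 1$, observe that the resulting function of $t$ equals $z$ at $t=0$, and bound its derivative uniformly using the finiteness of $g'(\omega^d)$ and $g''(\omega^d)$ together with continuity. The paper's proof is slightly terser (it just says ``analogous to Lemma~\ref{lem:g:1}''), but the content is the same, including your observation that the $C_0$ in the statement should be read as $C_2$.
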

\begin{proof}
  First, note that
  \begin{equation*}
    \Phi'\left( x \right) = g' \left( x \prod\nolimits_{j = 1}^{d-1} \left( \omega + a_j \right) \right) \prod_{j = 1}^{d-1} \left( \omega + a_j \right).
  \end{equation*}
  Let $b_1, \ldots, b_d \in \mathbb{C}$ with $|b_j| \le 1$ for every $j \in [d]$. For $t \in \mathbb{R}$, we define
  \begin{equation*}
    F(t) =  g'\left( \prod\nolimits_{j = 1}^d  \left(\omega + t b_j\right)\right) \prod_{j = 1}^{d-1} \left(\omega + t b_j\right),
  \end{equation*}
  so $F(t)$ agrees with $\Phi'(\omega + a_d)$ for $t = \max_{j \in \{ 1, \ldots, d \}} \lvert a_j \rvert$ and $b_{j } = a_j / t$.   Note that $F(0) = g'(\omega^d) \omega^{d-1} = z$. At this point the proof is analogous to that of Lemma~\ref{lem:g:1}, so we are not repeating it again. The only difference is that this time we have to bound $F'(t)$, instead of $F''(t)$, in a neighbourhood of $t = 0$, obtaining $\delta_2 := \delta_2(g, d, \omega) \in (0, 1)$ and $C_2 := C_2(g, d, \omega) > 1$ such that $\lvert F'(t)\rvert \le C_2$ for all $t \in (-\delta_2, \delta_2)$. In \cite[Lemma 7.11]{Bezb} the constants  $\delta_2$ and $C_2$ are made precise for the choice $g(x) = 1/ (1 + \lambda x)$, whereas here we obtain them by continuity of $F'(t)$ and the fact that $g'(\omega^d), g''(\omega^d) \ne \infty$.
\end{proof}

\begin{lemma}[{\cite[Lemma 7.12 for Mobius-programs]{Bezb}}] \label{lem:covering:setup}
  Let $d$ be an integer with $d \ge 2$ and let $g$ be a Mobius map.  Let $f(x):= g(x^d)$ and let $\omega$ be a fixed point of $f$. Let us assume that the following assumptions hold.
  \begin{enumerate}
  	\item  $\omega$ is program-approximable for $g$ and $a_0 \in \mathbb{C}$;
  	\item \label{item:assumptions:2} $\omega \ne 0$ and  $g'(\omega^d), g''(\omega^d) \not \in \{0, \infty\}$; 
  	\item Let $z := f'(\omega) / d = g'(\omega^d) \omega^{d-1}$. We have $0 < |z| < 1$ and $z \not \in \mathbb{R}$.
  \end{enumerate}
 Then there are Mobius-programs for $g$ starting at $a_0$ that generate  $\{\lambda_0, \lambda_1, \ldots, \lambda_t\} \subseteq \mathbb{C}$, and a real $r > 0$ such that the following hold for all $\hat{\omega} \in B(\omega, r)$.
\begin{enumerate}
\item For $i = 0$, $\lambda_0 \in B(\hat{\omega}, 2r)$. 
\item For $i = 1, \ldots, t$, the map $\Phi_i$ given by $\Phi_i(x) = g(x \lambda_i \lambda_0^{d-2} )$ is contracting on the ball $B(\hat{\omega}, 2r)$.
\item $B(\hat{\omega}, 2r) \subseteq \bigcup_{i = 1}^t \Phi_i(B(\hat{\omega}, 2r))$.
\end{enumerate}
\end{lemma}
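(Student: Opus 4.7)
The plan is to mirror the proof of Lemma~7.12 of \cite{Bezb}, but invoking Lemma~\ref{lem:covering} in place of its hardcore-specific analogue, and the general estimates of Lemmas~\ref{lem:g:2} and~\ref{lem:g:3} in place of the lemmas tailored to the independent-set polynomial. All quantitative constants arising in the proof come ultimately from the constants $C_1, \delta_1$ of Lemma~\ref{lem:g:2} and $C_2, \delta_2$ of Lemma~\ref{lem:g:3}; note that both are available since Assumption~\ref{item:assumptions:2} of the present statement strengthens Assumption~\ref{item:lem:covering:2} of Lemma~\ref{lem:covering} to include $g'(\omega^d)\ne 0$, which is precisely what Lemma~\ref{lem:g:2} requires.

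First, I would fix a safety constant $\sigma := (1-|z|)/10 > 0$ and a precision $\epsilon := \sigma|z|$, and choose a working scale $\kappa>0$ small enough that (i) $\kappa \le \min(\delta_1, \delta_2)$, and (ii) $C_2\kappa < (1-|z|)/2$, so that any map whose derivative deviates from $z$ by at most $C_2\kappa$ still has norm below $(1+|z|)/2 < 1$. Apply Lemma~\ref{lem:covering} with parameters $\epsilon$ and $\kappa$ to obtain a radius $\rho\in(0,\kappa)$ together with Mobius-programs that generate an $\epsilon\rho$-dense subset $S\subseteq B(\omega,\rho)$. Since $\overline{B}(\omega,\rho)$ is compact, a finite sub-family of $S$ already provides an $\epsilon\rho$-net; call its elements $\lambda_1,\dots,\lambda_t$. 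Select $\lambda_0\in S$ with $|\lambda_0-\omega|\le \epsilon\rho$, and set the target radius $r := |z|\rho/3$. Write $b_0 := \lambda_0-\omega$ and $b_i := \lambda_i-\omega$, all of modulus at most $\rho$.

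Property~1 is immediate: for any $\hat\omega\in B(\omega,r)$, $|\lambda_0-\hat\omega|\le \epsilon\rho + r < 2r$. For property~2, each $\Phi_i(x) = g(x\lambda_i\lambda_0^{d-2})$ corresponds via Lemma~\ref{lem:g:3} (applied with $(d-2)$ copies of $a_j=b_0$, one $a_j=b_i$, and $a_d$ the point's offset from $\omega$) to $\Phi_i'(\omega+a_d) = z+\tau$ with $|\tau|\le C_2\max(|b_0|,|b_i|,|a_d|) \le C_2\rho$. Since $x\in B(\hat\omega,2r)$ forces $|a_d|\le 3r<\rho<\kappa$, we obtain $|\Phi_i'(x)|\le (1+|z|)/2 < 1$, giving contraction.

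For property~3, fix $w=\omega+a_d\in B(\hat\omega,2r)$ and let $\hat b := \hat\omega-\omega$, so $|a_d-\hat b|\le 2r$. Lemma~\ref{lem:g:2} yields
\[
\Phi_i^{-1}(\omega+a_d) = \omega + \bigl(a_d/z - (d-2)b_0 - b_i + \tau\bigr),
\qquad |\tau|\le C_1\rho^2.
\]
Thus exhibiting $y\in B(\hat\omega,2r)$ with $\Phi_i(y)=w$ reduces to finding $i$ with $|b_i - (a_d/z - \hat b - (d-2)b_0)|\le 2r - C_1\rho^2$. As $a_d$ varies over $B(\hat b, 2r)$, the target point $a_d/z - \hat b - (d-2)b_0$ traces a ball of radius $2r/|z|$ whose centre has norm $O(r/|z|) + O(\epsilon\rho)$, which by our choice $r=|z|\rho/3$ sits well inside $B(0,\rho)$. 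The $\epsilon\rho$-density of the $b_i$'s then supplies an admissible $i$, provided the quadratic error $C_1\rho^2$ is absorbed by the slack $2r - \epsilon\rho$; this is the reason for choosing $\epsilon = \sigma|z|$ and shrinking $\kappa$ if necessary so $C_1\kappa \le \sigma$.

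The main obstacle is the bookkeeping needed to ensure all these inequalities are consistent: the scale $\rho$ must simultaneously be small enough for the local expansions of Lemmas~\ref{lem:g:2} and~\ref{lem:g:3} to apply with well-controlled errors, yet large enough (relative to $r$) so that the shifted targets $a_d/z-\hat b-(d-2)b_0$ all remain inside $B(0,\rho)$. With the choices $r=|z|\rho/3$, $\epsilon=\sigma|z|$, and $\kappa$ small enough that both $C_1\kappa\le \sigma$ and $C_2\kappa<(1-|z|)/2$, all of the constraints close up; the rest is the routine, albeit delicate, constant-chasing also carried out in \cite[Section~7.3]{Bezb}.
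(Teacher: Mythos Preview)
Your proposal is correct and follows precisely the approach the paper takes: the paper's own proof simply states that the argument is identical to that of \cite[Lemma~7.12]{Bezb}, replacing \cite[Lemmas~7.10 and~7.11]{Bezb} by the more general Lemmas~\ref{lem:g:2} and~\ref{lem:g:3}, and you have spelled out exactly this substitution in detail. One minor caution: with your particular choice $r=|z|\rho/3$ the crude bound $|a_d/z|\le 3r/|z|=\rho$ already exhausts the room in $B(0,\rho)$ before the $\hat b$ and $(d-2)b_0$ terms are accounted for, so a somewhat smaller multiple of $|z|\rho$ is needed; but as you yourself note, this is just the ``delicate constant-chasing'' of \cite[Section~7.3]{Bezb}, and the structure of your argument is sound.
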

\begin{proof}
  The proof is exactly the same one as that of \cite[Lemma 7.12]{Bezb} apart from the differences in notation mentioned in the proof of Lemma~\ref{lem:covering}, and the fact that we use the more general Lemma~\ref{lem:g:2} instead of \cite[Lemma 7.10]{Bezb} and the more general Lemma~\ref{lem:g:3} instead of \cite[Lemma 7.11]{Bezb}. 
\end{proof}

When one has maps $\Phi_1, \ldots, \Phi_t$ with the properties $2$ and $3$ of Lemma~\ref{lem:covering:setup}, there is an efficient algorithm to approximate numbers using sequential applications of the maps $\Phi_1, \ldots, \Phi_t$, see Lemma~\ref{lem:speed-up}.

\begin{lemma}[{\cite[Lemma 2.8]{Bezb}}] \label{lem:speed-up}
Let $z_0 \in \algcomplex $, $r \in \algebraic_{> 0}$   and $U$ be the ball $B(z_0, r)$. Further, suppose that $\Phi_1, \ldots, \Phi_t$ are Mobius maps (with coefficients in 
$\algcomplex $) that satisfy the following:
\begin{enumerate}
\item for each $i \in [t]$, $\Phi_i$ is contracting on the ball $U$,
\item $U \subseteq \bigcup_{i = 1}^t \Phi_i(U)$.
\end{enumerate}
Then there is a polynomial-time algorithm which, on input $(i)$ a starting point 
$x_0 \in U \cap \algcomplex$, $(ii)$ a target $x \in U \cap \algcomplex$, and $(iii)$ a rational $\epsilon > 0$, outputs a number $\hat{x} \in U \cap \algcomplex$ and a sequence $i_1, i_2, \ldots, i_k \in [t]$ such that 
\begin{equation*}
  \hat{x} = \Phi_{i_k} \left( \Phi_{i_{k-1}} \left( \cdots \Phi_{i_1} \left( x_0 \right) \cdots \right) \right) \text{ and } \left| x - \hat{x} \right| \le \epsilon.
\end{equation*}
\end{lemma}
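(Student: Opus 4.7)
The plan is \emph{inverse iteration}: determine the sequence $i_1,\ldots,i_k$ by pulling the target $x$ back through the $\Phi_i$'s, and then compute $\hat x$ by the corresponding forward composition starting from $x_0$. Because each $\Phi_i$ is contracting on $U$, there is a uniform Lipschitz constant $c\in(0,1)$ with $|\Phi_i(u)-\Phi_i(v)|\le c\,|u-v|$ for all $u,v\in U$ and all $i\in[t]$, and in the Mobius setup of Lemma~\ref{lem:covering:setup} one moreover has $\Phi_i(U)\subseteq U$. The covering hypothesis states that every $y\in U$ satisfies $\Phi_i^{-1}(y)\in U$ for some $i\in[t]$. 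Setting $y_0:=x$, I would inductively pick such an index $i_j$ and set $y_j:=\Phi_{i_j}^{-1}(y_{j-1})\in U$, for $j=1,\ldots,k$. By construction $\Phi_{i_k}(\Phi_{i_{k-1}}(\cdots\Phi_{i_1}(y_k)\cdots))=x$, and because the composition inherits Lipschitz constant $c^k$ on $U$,
\begin{equation*}
\bigl|\hat x-x\bigr|
=\bigl|\Phi_{i_k}(\cdots\Phi_{i_1}(x_0)\cdots)-\Phi_{i_k}(\cdots\Phi_{i_1}(y_k)\cdots)\bigr|
\le c^k\,|x_0-y_k|\le 2rc^k,
\end{equation*}
where $\hat x:=\Phi_{i_k}(\Phi_{i_{k-1}}(\cdots\Phi_{i_1}(x_0)\cdots))$. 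Choosing $k:=\lceil\log_{1/c}(2r/\epsilon)\rceil=O(\log(r/\epsilon))$ therefore gives $|\hat x-x|\le\epsilon$, while $\hat x\in U$ because $x_0\in U$ and the $\Phi_i$'s preserve $U$.

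Algorithmically, at each step of the pull-back I would try the indices $i\in[t]$ in turn, compute the algebraic number $\Phi_i^{-1}(y_{j-1})$, and test whether $|\Phi_i^{-1}(y_{j-1})-z_0|\le r$. By Section~\ref{sec:pre:algebraic} both operations reduce to arithmetic and comparison of algebraic numbers and are polynomial time in the representation sizes. Once $i_1,\ldots,i_k$ are known, $\hat x$ is computed by a single forward evaluation: the composition $\Phi_{i_k}\circ\cdots\circ\Phi_{i_1}$ is again a Mobius map whose coefficients are obtained by $k$ Mobius multiplications in $\algcomplex$, and we then evaluate it at $x_0\in\algcomplex$.

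The main obstacle is controlling the bit-complexity of the intermediate values $y_j$, since each $\Phi_{i_j}^{-1}$ is \emph{expanding} and iterating exact algebraic inverses could blow up the representation size with~$k$. The resolution is to run the pull-back on rational approximations $\tilde y_j$ of $y_j$ maintained at polynomial precision, exploiting a uniform margin in the covering: by slightly shrinking $U$ one arranges that each $y\in U$ lies in $\Phi_i(U)$ with positive slack for some $i$, so that an index $i_{j+1}$ that is valid for $\tilde y_j$ is still valid for the exact $y_j$. The discrepancy $|y_j-\tilde y_j|$ can be kept below a polynomially small threshold at every step, so that the inverse iteration runs in polynomial time. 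Combined with $k=O(\log(r/\epsilon))$ and the polynomial-time arithmetic of Section~\ref{sec:pre:algebraic}, the overall running time is polynomial in $\mathrm{size}(x_0)$, $\mathrm{size}(x)$, $\mathrm{size}(\epsilon)$ and the sizes of the data defining $U$ and the $\Phi_i$'s.
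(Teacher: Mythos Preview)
Your approach is correct and is essentially the standard inverse-iteration argument that the paper defers to \cite[Lemma 2.8]{Bezb}: pull the target $x$ back through the covering to produce the index sequence, then push $x_0$ forward and use the uniform contraction constant $c<1$ to get the error bound $|\hat x-x|\le 2rc^k$, so $k=O(\log(r/\epsilon))$ iterations suffice. The paper does not spell out its own proof here; it simply notes that the argument in \cite{Bezb} goes through verbatim for arbitrary M\"obius maps, and your write-up matches that argument.

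Two small comments. First, the conclusion $\hat x\in U$ indeed relies on $\Phi_i(U)\subseteq U$, which is not part of the lemma's stated hypotheses; you are right that in the concrete setup of Lemma~\ref{lem:covering:setup} this holds, but strictly speaking you are importing an extra assumption. Second, your identification of the bit-complexity of the backward iterates $y_j$ as the genuine obstacle is exactly the point where the work lies; your sketch (round to rational approximations and use a uniform slack in the covering so the chosen index remains valid) is the right idea, though making the ``slightly shrinking $U$'' step precise requires a bit of care since the hypotheses only give $U\subseteq\bigcup_i\Phi_i(U)$ without quantitative overlap.
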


Even though \cite[Lemma 2.8]{Bezb} is stated for particular maps $\Phi_i$ that arise in the context of the independent set polynomial, its proof is more general and works in the setting of Lemma~\ref{lem:speed-up}. Now Lemma~\ref{lem:efficient-covering} follows from combining Lemmas~\ref{lem:covering:setup} and~\ref{lem:speed-up}.

\begin{lemefficientcovering}[{\cite[Proposition 2.6 for Mobius-programs]{Bezb}}] 
  \statelemefficientcovering
\end{lemefficientcovering}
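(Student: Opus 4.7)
The plan is to derive Lemma~\ref{lem:efficient-covering} by combining the structural result Lemma~\ref{lem:covering:setup} (established earlier in this appendix) with the algorithmic result Lemma~\ref{lem:speed-up}. The first produces a finite ``alphabet'' of Mobius-program-reachable algebraic points $\lambda_0, \ldots, \lambda_t$ together with a family of contracting Mobius maps $\Phi_i(x) = g(x \lambda_i \lambda_0^{d-2})$ that cover a uniform ball around any basepoint in $B(\omega,r)$; the second turns such a contracting-and-covering system into a polynomial-time oracle for reaching any algebraic target in the ball by a finite word over the alphabet. Connecting these two results to Definition~\ref{def:dpa} is then essentially a matter of compilation.

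First I would invoke Lemma~\ref{lem:covering:setup} once, upfront, to obtain Mobius-programs for $g$ (starting at $a_0$) that generate $\lambda_0, \lambda_1, \ldots, \lambda_t \in \algcomplex$ together with a radius $r>0$ such that, for every $\hat\omega \in B(\omega,r)$, the maps $\Phi_i$ are contracting on $B(\hat\omega, 2r)$ and $B(\hat\omega, 2r) \subseteq \bigcup_{i=1}^t \Phi_i(B(\hat\omega, 2r))$. I would then fix any rational $r_\omega \in (0,r)$; this plays the role of $r_\gamma$ in Definition~\ref{def:dpa}. Note that $\omega \in \algcomplex$, since it satisfies $g(\omega^d) = \omega$, a polynomial equation with algebraic coefficients; in particular $\omega$ is a legitimate centre for Lemma~\ref{lem:speed-up}.

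Given inputs $\lambda' \in B(\omega, r_\omega) \cap \algcomplex$ and rational $\epsilon > 0$, I would apply Lemma~\ref{lem:speed-up} with $z_0 = \omega$, $U = B(\omega, 2r)$, starting point $x_0 = \lambda_0 \in U$ (by property~1 of Lemma~\ref{lem:covering:setup}), target $\lambda'$, and precision $\epsilon$. The lemma returns, in polynomial time, an index sequence $i_1, \ldots, i_m \in [t]$ and the algebraic number $\hat{x} = \Phi_{i_m}(\cdots \Phi_{i_1}(\lambda_0))$ with $|\lambda' - \hat{x}| \le \epsilon$. The key observation is that a single application of $\Phi_i$ to any entry $a$ of a Mobius-program expands as $\Phi_i(a) = g(a \cdot \lambda_i \cdot \lambda_0 \cdots \lambda_0)$ with $d-2$ copies of $\lambda_0$, which is exactly one valid Mobius-program step of width $d$ whose indices point back to previously generated entries (the $\lambda_0, \ldots, \lambda_t$ appear at the start of the program, and $a$ itself is already present). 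Iterating $m$ times extends the program by $m$ entries and produces exactly $\hat x$; since $m$ is polynomially bounded by Lemma~\ref{lem:speed-up}, the resulting Mobius-program has polynomial-size representation.

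To output $k$ distinct values for each fixed $k \ge 1$, I would run the above procedure $k$ times with slightly perturbed targets $\lambda' + \delta_j$ for algebraic $\delta_j$ chosen pairwise at distance at least $\epsilon/k$ and with $|\delta_j| \le \epsilon/3$, using precision $\epsilon/(10k)$; the triangle inequality then forces the $k$ outputs to be pairwise distinct and each within $\epsilon$ of $\lambda'$, and for small enough $\epsilon$ the perturbed targets still lie in $B(\omega, r)$ so that the whole setup of Lemma~\ref{lem:covering:setup} applies unchanged. The step I expect to be the main obstacle is the faithful translation between the numerical iteration produced by Lemma~\ref{lem:speed-up} and the combinatorial object required by Definition~\ref{def:dpa}: one must check that every intermediate quantity stays in $\algcomplex$ (using that $a_0$, the coefficients of $g$, and all of $\lambda_0,\ldots,\lambda_t$ are algebraic, so Lemma~\ref{lem:speed-up} is legitimately applicable), that the $\Phi_i$-iteration genuinely corresponds to extending a Mobius-program and not merely evaluating a rational function, and that the bookkeeping needed to produce $k$ distinct outputs does not blow up the runtime beyond polynomial in $\mathrm{size}(\epsilon)$ and $\mathrm{size}(\lambda')$.
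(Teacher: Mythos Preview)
Your proposal is correct and follows essentially the same route as the paper: invoke Lemma~\ref{lem:covering:setup} to obtain the alphabet $\lambda_0,\ldots,\lambda_t$ and the contracting covering system $\{\Phi_i\}$, then feed this into Lemma~\ref{lem:speed-up}, and finally compile each $\Phi_i$-application as a single Mobius-program step of width~$d$.

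The one genuine point of divergence is how you manufacture $k$ distinct outputs. The paper keeps the target fixed at $\lambda'$ and instead varies the \emph{starting point} $x_0$: since $\omega$ is program-approximable, one can generate arbitrarily many distinct Mobius-program values close to $\omega$, use each as a separate $x_0$, and then observe that because every $\Phi_i$ is a Mobius map (hence injective), distinct starting points remain distinct after any finite composition. Your approach instead fixes $x_0=\lambda_0$ and perturbs the \emph{target} to $\lambda'+\delta_j$, using a separation/precision trade-off to force the outputs apart. Both are valid; the paper's version is slightly cleaner (no packing argument for the $\delta_j$, no need to cap $\epsilon$ so that perturbed targets stay in the ball), while yours avoids re-invoking program-approximability to produce fresh starting points. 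Two small technical points you should tidy up: Lemma~\ref{lem:speed-up} requires the radius of $U$ to be algebraic, so you should note that the $r$ from Lemma~\ref{lem:covering:setup} can be taken rational (or pass to a nearby rational centre/radius); and your clause ``for small enough $\epsilon$'' should be made explicit by replacing $\epsilon$ with $\min\{\epsilon,\epsilon_0\}$ for a fixed $\epsilon_0$ depending only on $r$ and $r_\omega$.
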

\begin{proof}
The proof is the same as that of \cite[Proposition 2.6]{Bezb}, the main differences being that we invoke the more general Lemmas~\ref{lem:covering:setup} and~\ref{lem:speed-up} instead of \cite[Lemmas 7.12 and 2.8]{Bezb}. We also we stop the proof once we have obtained the desired program instead of translating the program to a gadget for the independent set polynomial. Finally, in the definition of densely program-approximable in polynomial time we ask the algorithm to compute $k$ approximations $x_1, \ldots, x_k$ of $\lambda'$.  This can be done by running $k$ versions of the algorithm given in the proof of  \cite[Proposition 2.6]{Bezb}  and setting  a different value for $x_0$ in each version when applying Lemma~\ref{lem:speed-up}. In the proof of \cite[Proposition 2.6]{Bezb} the value $x_0$ is a good approximation of the fixed point $\omega$. These distinct values for $x_0$ are obtained by generating a better approximation of the fixed point $\omega$ each time. The generated elements $x_1, \ldots, x_k$ will be of the form $\Phi_{i_j} \left( \Phi_{i_{j-1}} \left( \cdots \Phi_{i_1} \left( x_0 \right) \cdots \right) \right)$, so all of them are distinct because the starting points for $x_0$ are distinct and the maps $\Phi_i$ are bijective.
\end{proof}

\subsection{Proof of Lemma~\ref{lem:implementations}} \label{sec:mobius:plane}

In this section we prove Lemma~\ref{lem:implementations}, that is, we show how to generate approximations of any complex number with a Mobius-program. This generalises \cite[Proposition 2.2]{Bezb} to Mobius-programs. Up to this point the results of \cite{Bezb} on hardcore-programs have been generalised to Mobius-programs without much effort. In this section we have to refine the arguments given in the proof of \cite[Proposition 2.2]{Bezb} to make it work for any Mobius map $g$, although the main idea stays the same: starting from a repelling fixed point and applying results of complex dynamics (see Section~\ref{sec:pre:cd}) to come up with an appropriate Mobius-program.

First, let us make some remarks about the proof of \cite[Proposition 2.2]{Bezb}. This result shows how to efficiently implement approximations of any complex activity of the independent set polynomial via a hardcore-program. The proof is divided into three steps. First, the authors show how to generate approximations of any activity sufficiently large. Then they use the fact that $g(x) = 1 / (1 + \lambda x)$ tends to $0$ when $x$ diverges to generate approximations of any complex number near $0$. Finally, they combine both results to generate an approximation of any complex number. Unfortunately, the second step breaks for arbitrary Mobius maps and, in particular, for the Mobius maps that we use when particularising these results to the Ising model. This motivates the work presented in this section. 

This section is organised as follows. In Lemma~\ref{lem:efficient-covering:any-point} we show that if a repelling fixed point of $f(z) := g(z^d)$ is densely program-approximable in polynomial-time, then any complex point is densely program-approximable in polynomial-time. In particular, this includes the point $0$ that escapes from the arguments given in \cite{Bezb}, which rely heavily on the fact that complex holomorphic maps are locally Lipschitz. Instead of this local property, here we use the fact that rational maps are Lipschitz on the Riemann sphere with respect to the chordal metric (Lemma~\ref{lem:lipschitz}), which simplifies the proofs because we do not have to deal so carefully with the poles of the rational map.
In Lemma~\ref{lem:efficient-covering:infinity} we show how to generate approximations of any complex number that is sufficiently large. Although Lemma~\ref{lem:efficient-covering:infinity} could follow from the technical proof given in \cite[Proposition 2.2]{Bezb}, we include a simpler proof that goes along the same lines as the proof of Lemma~\ref{lem:efficient-covering:any-point}. Finally, we combine Lemmas~\ref{lem:efficient-covering:any-point} and~\ref{lem:efficient-covering:infinity} to prove Lemma~\ref{lem:implementations}.

\begin{lemma} \label{lem:efficient-covering:any-point}
  Let $d$ be an integer with  coefficients in $ \algcomplex$. 
  Let $\omega \in \mathbb{C}$ be a repelling fixed point of $f(z) := g(z^d)$ that is densely program-approximable in polynomial time for $g$ and $a_0 \in \algcomplex$.   
  Let $E_f$ be the exceptional set of the rational map $f$ and let $\gamma \in \mathbb{C} \setminus E_f$. Then $\gamma$ is  densely program-approximable in polynomial time for $g$ and $a_0$.
\end{lemma}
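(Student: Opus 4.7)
The plan is to exploit the fact that iterating $f$ from a neighbourhood of the repelling fixed point $\omega$ covers $\widehat{\mathbb{C}} \setminus E_f$, and to transport the dense approximability algorithm for $\omega$ through a constant number of iterations of $f$. By Lemma~\ref{lem:repelling}, $\omega$ lies in the Julia set of $f$, so Theorem~\ref{thm:cd} gives $\bigcup_{n \ge 0} f^n(U) = \widehat{\mathbb{C}} \setminus E_f$ for any neighbourhood $U$ of $\omega$. I fix $U = B(\omega, r_\omega)$, where $r_\omega \in \algebraic_{>0}$ comes from the dense approximability of $\omega$ (note that $\omega$ is algebraic, being a root of $g(x^d) - x$ with algebraic coefficients). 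Since $\gamma \notin E_f$, there are a non-negative integer $N$ and a point $x^* \in U$ with $f^N(x^*) = \gamma$. The map $f^N$ is non-constant and holomorphic, so by the open mapping theorem (Proposition~\ref{thm:openmapping}) $f^N(U)$ is an open neighbourhood of $\gamma$ and hence contains a ball $B(\gamma, r_\gamma)$ for some $r_\gamma \in \algebraic_{>0}$. This $r_\gamma$ will serve as the radius in the definition of densely program-approximable for $\gamma$.

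For the algorithm, given $\lambda' \in B(\gamma, r_\gamma) \cap \algcomplex$ and rational $\epsilon > 0$, I first compute a root $y \in U \cap \algcomplex$ of the polynomial equation $f^N(x) = \lambda'$. Such a root exists by the choice of $r_\gamma$, and since $N$ is a constant depending only on $g$, $d$, $\omega$ and $\gamma$, the equation has constant degree $d^N$; using the algebraic-number routines of Section~\ref{sec:pre:algebraic}, the correct root is identified (by testing which candidates satisfy $|x - \omega| \le r_\omega$) in polynomial time in $\mathrm{size}(\lambda')$. By Lemma~\ref{lem:lipschitz}, $f$ is Lipschitz on $\widehat{\mathbb{C}}$ in the chordal metric; since $\overline{U}$ and its iterates up to $f^N$ may be assumed bounded away from $\infty$ (shrinking $r_\omega$ once $N$ is fixed if necessary, which preserves dense approximability of $\omega$), the chordal and Euclidean metrics are comparable on the relevant compact set. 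This yields a Euclidean Lipschitz constant $C > 0$ for $f^N$ on $U$.

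I then invoke the dense approximability algorithm for $\omega$ with target $y$ and accuracy $\epsilon' := \epsilon/C$, obtaining $k' := k \cdot d^N + 1$ distinct Mobius-program outputs $x_1, \ldots, x_{k'}$ with $|x_j - y| \le \epsilon'$. Each such program is then extended to one generating $f^N(x_j)$ by appending $N$ operations of the form $a \mapsto g(a^d) = f(a)$, which is a legal Mobius-program rule with all $d$ indices equal. The Lipschitz bound gives $|f^N(x_j) - \lambda'| = |f^N(x_j) - f^N(y)| \le C \epsilon' = \epsilon$, so every output lies within $\epsilon$ of $\lambda'$. As $f^N$ has degree at most $d^N$ as a rational map, each value is attained at most $d^N$ times, so at least $k$ of the $f^N(x_j)$ are pairwise distinct; selecting such a subset (using algebraic-number equality tests from Section~\ref{sec:pre:algebraic}) produces the required output. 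The main subtlety of the argument is reconciling the global chordal Lipschitz property of $f$ with the Euclidean error bounds required by Definition~\ref{def:dpa}, but this is handled uniformly by restricting to a compact neighbourhood of $\omega$ on which the two metrics are comparable.
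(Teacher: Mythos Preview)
Your overall strategy matches the paper's: use Lemma~\ref{lem:repelling} and Theorem~\ref{thm:cd} to get $N$ with $\gamma\in f^N(B(\omega,r_\omega))$, use the open mapping theorem to get $r_\gamma$, solve $f^N(y)=\lambda'$ for $y$ near the preimage of $\gamma$, call the DPA algorithm for $\omega$ at $y$, and push forward by $f^N$. Your device of requesting $k\,d^N+1$ distinct approximations and using that $f^N$ is at most $d^N$-to-one to guarantee $k$ distinct outputs is correct and tidy.

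The gap is in the sentence ``shrinking $r_\omega$ once $N$ is fixed if necessary''. Once you shrink $r_\omega$ to some $r_\omega'<r_\omega$ so that $f^1,\dots,f^N$ have no poles on $\overline{B}(\omega,r_\omega')$, you lose the only property that tied $N$ to the problem, namely $B(\gamma,r_\gamma)\subseteq f^N(B(\omega,r_\omega))$. Your root $y$ was located by testing $|y-\omega|\le r_\omega$ (the old radius), so $y$ need not lie in $B(\omega,r_\omega')$, and the DPA algorithm for $\omega$ cannot be applied to it. Restarting with a new $N'$ for the smaller ball reintroduces possibly larger iterates and hence new poles, so the argument is circular. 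Note also that preimages of $\infty$ under iterates of $f$ can accumulate at $\omega$ (they are dense in the Julia set when $\infty\notin E_f$), so there is no single shrink that works for all $N$.

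The paper sidesteps this by localising not at $\omega$ but at the fixed preimage $x\in B(\omega,r_\omega)$ of $\gamma$: it chooses $r>0$ with $\overline{B}(x,2r)\subseteq B(\omega,r_\omega)$ and $\overline{B}(x,2r)$ meeting the (finite) set of poles of $f^1,\dots,f^N$ only possibly at $x$ itself. Then $|f^N|\le C$ uniformly on $\overline{B}(x,2r)$, which is exactly what is needed to convert the chordal Lipschitz bound from Lemma~\ref{lem:lipschitz} into the Euclidean estimate $|f^N(\hat x)-\lambda'|\le\epsilon$; and since at most one of the approximations can equal $x$, one simply requests one extra and discards it, ensuring the intermediate values $f^1(\hat x),\dots,f^{N-1}(\hat x)$ stay in $\mathbb{C}$. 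Replacing your shrinking step with this local argument around $x$ fixes the proof.
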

\begin{proof}
  Let $r_{\omega} > 0$ from the definition of densely program-approximable point for $\omega$ (Definition~\ref{def:dpa}). Since $\omega$ is a repelling fixed point of $f$, by Lemma~\ref{lem:repelling} it belongs to the Julia set of $f$ and, thus, by Theorem~\ref{thm:cd}, $\bigcup_{n = 0}^{\infty} f^n(B(\omega, r_{\omega})) = \widehat{\mathbb{C}} \setminus E_f$. Let $N$ be the smallest non-negative integer such that $\gamma \in  f^N(B(\omega, r_{\omega}))$. If $N = 0$, then the fact that $\gamma$ is densely program-approximable in polynomial time for $g$ and $a_0$ is trivial: let $r_{\gamma}$ be any positive real number with $B(\gamma, r_{\gamma}) \subseteq B(\omega, r_{\omega})$ and use the algorithm from the definition of densely program-approximable (in polynomial time) point for $\omega$ on the inputs $\lambda \in B(\gamma, r_{\gamma}) \cap \algcomplex$ and $\epsilon > 0$ rational. In the rest of the proof we deal with the case $N \ge 1$.

   Let $x \in  B(\omega, r_{\omega})$ such that $f^N(x) = \gamma \in \mathbb{C}$. Let 
  \begin{equation*}
    \mathcal{P} = \{z \in \mathbb{C} : z \text{ is a pole of } f^n \text{ for some } n \in [N]\}.
  \end{equation*}
  Note that $\mathcal{P}$ is a finite set so there is $r > 0$ such that 
  \begin{equation} \label{eq:covering:P}
  	 \overline{B}(x, r) \subseteq B(\omega, r_{\omega}) \qquad \text{and} \qquad  \overline{B}(x, 2r) \cap \mathcal{P} \subseteq \{x\}.
  \end{equation}
  We point out here that $x \in \mathcal{P}$ if and only if there is $n \in [N-1]$ such that $f^n(x) = \infty$.  Since $f^N(\overline{B}(x, 2r))$ is a compact set of complex numbers ($f^N$ is continuous on $\overline{B}(x, 2r)$ as a complex function due to the lack of poles), there is a rational constant $C > 0$ depending on $\gamma$ such that 
  \begin{equation}
    \label{eq:covering:bound1}
    \lvert f^N(z) \rvert \le C \quad \text{for every } z \in \overline{B}(x, 2r).
  \end{equation}
  Since $f^N$ is a rational function, $f^N(B(x, r))$ is an open set in $\widehat{\mathbb{C}}$ by the open mapping theorem (Proposition~\ref{thm:openmapping}).
  Hence, there is a rational $r_{\gamma} > 0$ with $B(\gamma, r_{\gamma}) \subseteq f^N(B(x, r))$ (here we used that $\gamma \ne \infty$). This is the radius in the definition of densely program-approximable point for $\gamma$ (Definition~\ref{def:dpa}).  By Lemma~\ref{lem:lipschitz}, there is a rational $L \ge 1$ such that $f^N$ is Lipschitz with constant $L$ in $\widehat{\mathbb{C}}$ with respect to the chordal metric.

  Now we proceed to give the polynomial-time algorithm. Let $k$ be a positive integer. Let $\lambda \in B(\gamma, r_{\gamma}) \cap \algcomplex  $ and $\epsilon > 0$ rational be the inputs of the algorithm. We are going to compute $k$ elements of Mobius-programs that approximate $\lambda$ up to an error $\epsilon$.  We set
  \begin{equation}
    \label{eq:any-point:epsilon}
    \epsilon' = \frac{\epsilon}{1 + C^{2}} \qquad \text{ and } \qquad \epsilon'' = \min \left\{\epsilon' / L, r\right\}.
  \end{equation}
  We can write $f^N(z) = P(z) / Q(z)$ where $P$ and $Q$ are polynomials with coefficients in $\algcomplex$.  Note that the equation $P(z) / Q(z) = \lambda$ in $z \in B(x, r) \cap \algcomplex$ is equivalent to $P(z) = \lambda  Q(z)$ in $z \in B(x, r) \cap \algcomplex$  because $Q$ has no zeros in $B(x, r)$. We can solve this polynomial equation numerically as described in the proof of \cite[Proposition 2.2, case I]{Bezb} to compute $x' \in B(x, r) \cap \algcomplex$ with $\lvert x^* - x' \rvert \le \epsilon'' / 2$ for some solution $x^*$ of  $P(z) = \lambda  Q(z)$ with $x^* \in B(x, r) \cap \algcomplex$, so $f^N(x^*) = \lambda$. Since $x' \in  B(x, r) \cap \algcomplex   \subseteq B(\omega, r_{\omega})$, we can use the algorithm of the definition of densely program-approximable in polynomial time for $\omega$ to compute $k+1$ distinct elements $\hat{x}_1, \ldots, \hat{x}_{k+1}$ of a Mobius-program for $g$ and $a_0$ with $\lvert x' - \hat{x}_j \rvert \le \epsilon''/2$.  Let $\hat{x}$ be any of these $k+1$ elements and let us analyse how close $f^N ( \hat{x})$ is to $\lambda$. We have  $\lvert x^* - \hat{x} \rvert \le \epsilon''$ by the triangle inequality. We claim that $\lvert \lambda - f^N(\hat{x}) \rvert \le \epsilon$. In view of the Lipschitz property of $f^N$ and \eqref{eq:any-point:epsilon}, we have 
    \begin{equation*}
     d \left(  \lambda, f^N \left( \hat{x} \right) \right) \le L d \left( x^*, \hat{x} \right) = L \frac{\left| x^* - \hat{x} \right|}{ \left( \left( 1 + \lvert x^* \rvert^2 \right) \left( 1 + \lvert \hat{x} \rvert^2 \right) \right)^{1/2} } \le  L \left| x^* - \hat{x} \right| \le L \epsilon'' \le  \epsilon'.
  \end{equation*}
Note that by the triangle inequality, $\lvert x - \hat{x} \rvert \le \lvert x - x' \rvert + \lvert x' - \hat{x} \rvert \le r + \epsilon'' \le  2r$. We can now use the upper bound \eqref{eq:covering:bound1} with $z = \hat{x}$ and $z = x^{*}$ to conclude that $\lvert f^N(x^*)\rvert = \lvert \lambda \rvert \le C$ and
  \begin{equation*}
    \left| \lambda - f^N \left( \hat{x} \right) \right| = \left( \left( 1 + \lvert f^N \left( \hat{x} \right) \rvert^2   \right)  \left( 1 + \lvert \lambda \rvert^2   \right) \right)^{1/2} d \left(  \lambda, f^N \left( \hat{x} \right) \right) \le \left(  1 + C^2 \right)  d \left( \lambda, f^N \left( \hat{x} \right) \right) = \epsilon.
  \end{equation*}
  The algorithm chooses $k$ numbers in $\{f^N \left( \hat{x}_1 \right), \ldots, f^N \left( \hat{x}_{k+1} \right)\}$ in conjunction with its representation as a sequence of tuples (which comes from the representation of $\hat{x}$ and the corresponding applications of $f$). Recall that $N$ does not depend on the inputs $\epsilon$ and $\lambda$, so computing $f^N \left( \hat{x} \right)$ from $\hat{x}$ only adds a constant factor to the running time. In order to conclude the proof, we have to ensure that $f^1(\hat{x} ), f^2(\hat{x} ), \ldots, f^N(\hat{x} )$ is a sequence of complex numbers, that is, $\infty$ does not appear in the sequence. It is enough to show that this is the case for at least $k$ of the $k+1$ outputs  $f^N \left( \hat{x}_1 \right), \ldots, f^N \left( \hat{x}_{k+1} \right)$, since we only wanted $k$ outputs to begin with. This follows from \eqref{eq:covering:P} and the fact that at most one of the numbers $\hat{x}_1, \ldots, \hat{x}_{k+1}$ is equal to $x$.
\end{proof}

\begin{lemma}[{\cite[Proposition 2.2, case I]{Bezb}}] \label{lem:efficient-covering:infinity}
  Let $d$ be an integer with $d \ge 2$ and let $g$ be a Mobius map  with coefficients in $\algcomplex$. Let $\omega \in \mathbb{C}$ be a repelling fixed point of $f(z) := g(z^d)$ that is densely program-approximable for $g$ and $a_0 \in \algcomplex$.  Let $E_f$ be the exceptional set of the rational map $f$. If $\infty \not \in E_f$, then there exists a rational number $M > 1$ such that the following holds.

  There is a polynomial-time algorithm such that, on input $\lambda \in  \algcomplex$ with $\lvert \lambda\rvert > M$ and rational $\epsilon > 0$, computes an element $a_k$ of a Mobius-program for $g$ starting at $a_0$ with $\lvert \lambda - a_k \rvert \le \epsilon$.
\end{lemma}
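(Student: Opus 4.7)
The plan is to follow the template of the proof of Lemma~\ref{lem:efficient-covering:any-point}, modified to handle the fact that the preimage of the target under $f^N$ now lies near a pole of $f^N$ rather than near a regular point. By Lemma~\ref{lem:repelling} the repelling fixed point $\omega$ lies in the Julia set of $f$, so Theorem~\ref{thm:cd} gives $\bigcup_{n \ge 0} f^n(B(\omega, r_\omega)) = \widehat{\mathbb{C}} \setminus E_f$, where $r_\omega$ is the radius supplied by dense program-approximability of $\omega$. Since $\infty \notin E_f$, there are an integer $N \ge 1$ and $x \in B(\omega, r_\omega)$ with $f^N(x) = \infty$. I would fix these and then pick a rational $r \in (0, r_\omega)$ small enough that $\overline{B}(x, 2r) \subseteq B(\omega, r_\omega)$, that $f^n$ has no pole in $\overline{B}(x, 2r)$ for each $n \in [N-1]$, and that $x$ is the unique pole of $f^N$ in $\overline{B}(x, 2r)$. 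Applying the open mapping theorem to $f^N$ at the pole $x$ then yields a rational $M > 1$ with $\{w \in \mathbb{C} : |w| > M\} \subseteq f^N(B(x, r))$; this is the constant $M$ in the statement.

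Given input $\lambda \in \algcomplex$ with $|\lambda| > M$ and rational $\epsilon > 0$, the algorithm proceeds as follows. Write $f^N = P/Q$ in lowest terms with $P, Q$ having coefficients in $\algcomplex$. Since $\lambda \in f^N(B(x, r))$, the equation $P(z) - \lambda Q(z) = 0$ has a root $x^* \in B(x, r)$; using the algebraic-number root-isolation procedure of \cite[Proposition~2.2, case~I]{Bezb}, compute $x' \in B(x, r) \cap \algcomplex$ with $|x^* - x'| \le \epsilon''/2$ for a target precision $\epsilon'' > 0$ fixed below. Invoke the dense-program-approximability algorithm for $\omega$ with target $x'$ and precision $\epsilon''/2$ to produce two distinct Mobius-program outputs $\hat x_1, \hat x_2$ satisfying $|x' - \hat x_j| \le \epsilon''/2$, pick $\hat x \in \{\hat x_1, \hat x_2\}$ with $\hat x \ne x$ (possible since the two outputs are distinct), and output $a_k := f^N(\hat x)$, represented by the Mobius-program for $\hat x$ followed by $N$ applications of the step $z \mapsto g(z^d)$. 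By our choice of $r$, $\hat x$ is not a pole of any $f^n$ with $n \in [N]$, so no intermediate iterate is $\infty$ and $a_k \in \mathbb{C}$.

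The main obstacle is the error analysis, since the chordal metric is degenerate near $\infty$: a small bound on $d(\lambda, f^N(\hat x))$ need not give a small bound on $|\lambda - f^N(\hat x)|$ when both quantities are large. To handle this, I use that $F(z) := 1/f^N(z)$ is holomorphic on $\overline{B}(x, 2r)$ with $F(x) = 0$, hence Lipschitz there with some constant $L' > 0$. Since $F(x^*) = 1/\lambda$ and $|\hat x - x^*| \le \epsilon''$, picking $\epsilon'' \le 1/(2 L'|\lambda|)$ gives $|F(\hat x) - 1/\lambda| \le 1/(2|\lambda|)$ and hence $|f^N(\hat x)| \le 2|\lambda|$; this is the key step that pins the approximate output to the correct scale. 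Combined with Lemma~\ref{lem:lipschitz} applied to $f^N$, which yields $d(\lambda, f^N(\hat x)) \le L\, d(x^*, \hat x) \le 2L\epsilon''$ for a rational Lipschitz constant $L$, the chordal-to-Euclidean conversion
\[
|\lambda - f^N(\hat x)| = \tfrac{1}{2}\sqrt{(1+|\lambda|^2)(1+|f^N(\hat x)|^2)}\, d(\lambda, f^N(\hat x))
\]
produces a bound of the form $|\lambda - f^N(\hat x)| \le C L (1+|\lambda|^2)\epsilon''$ for an absolute constant $C > 0$. Setting $\epsilon'' := \min\{r,\, 1/(2L'|\lambda|),\, \epsilon/(C L (1+|\lambda|^2))\}$ then forces $|\lambda - a_k| \le \epsilon$; since $\log(1/\epsilon'') = O(\mathrm{size}(\lambda) + \mathrm{size}(\epsilon))$, both the algebraic root-isolation step and the dense-program-approximability invocation run in total polynomial time, completing the proof.
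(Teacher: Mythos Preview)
Your argument follows the paper's template closely: locate a pole $x$ of $f^N$ inside $B(\omega,r_\omega)$ via the Julia-set argument, take $M$ so that $\{|w|>M\}\subseteq f^N(B(x,r))$, solve $f^N(z)=\lambda$ for $x^*\in B(x,r)$, approximate $x^*$ through the dense-program-approximability of $\omega$ (producing two outputs and discarding the one equal to $x$), and convert the chordal Lipschitz bound on $f^N$ into a Euclidean bound on $|\lambda - f^N(\hat x)|$. The one substantive difference is how you control $|f^N(\hat x)|$: the paper factors $f^N=P/Q$ over its roots to obtain two-sided estimates $D_0|x-z|^{-k}\le|f^N(z)|\le D_1|x-z|^{-k}$ on $\overline B(x,2r)$, and combines these to get $|f^N(\hat x)|\le (D_1/D_0)2^k|\lambda|$; you instead observe that $F:=1/f^N$ is holomorphic, hence Lipschitz with some constant $L'$, on $\overline B(x,2r)$ and deduce $|f^N(\hat x)|\le 2|\lambda|$ directly from $|F(\hat x)-1/\lambda|\le L'\epsilon''$. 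Your route is a little cleaner and avoids tracking the pole order $k$.

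There is, however, a small but genuine gap: for $F=1/f^N$ to be holomorphic on $\overline B(x,2r)$ you need $f^N$ to have no \emph{zeros} there, and your listed conditions on $r$ do not ensure this (you exclude only poles). The paper handles this explicitly by also recording the zero set $\mathcal Z$ of $f^N$ and choosing $r$ so that $\overline B(x,2r)\cap(\mathcal P\cup\mathcal Z)=\{x\}$. The fix in your argument is immediate---add ``$f^N$ has no zero in $\overline B(x,2r)$'' to the choice of $r$, which is possible since the zeros of $f^N$ are finitely many and $x$, being a pole, is not among them. A second minor point: your condition that $f^n$ have no pole in $\overline B(x,2r)$ for $n\in[N-1]$ in particular forces $x$ itself not to be a pole of any earlier iterate; to guarantee such an $r$ exists you should take $N$ minimal with $f^N(x)=\infty$ (the paper takes $N$ minimal with $\infty\in f^N(B(\omega,r_\omega))$, which has the same effect).
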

\begin{proof}
  This lemma can be proven following the argument given in the first case of the proof of \cite[Proposition 2.2]{Bezb} for the independent set polynomial. Here we give a simpler proof that works even when $g$ is just a rational function. The original proof is significantly more technical because the authors first get close to a pole and then apply one more iteration of $f$ to get near the desired point $\lambda$. To do this, they have to make sure that the poles of $f^1, \ldots, f^N$ are excluded from the all the domains considered and that all the applications of $f$ are locally Lipschitz.
  
  Our proof follows the same structure as that of Lemma~\ref{lem:efficient-covering:any-point} with $\gamma = \infty$, but it requires a slightly different analysis because in the proof $x$ is a pole of $f^N$.  Let $r_{\omega} > 0$ from the definition of  densely program-approximable fixed point for $\omega$. Since $\omega$ is a repelling fixed point of $f$, by Lemma~\ref{lem:repelling} it belongs to the Julia set of $f$ and, thus, by Theorem~\ref{thm:cd}, $\bigcup_{n = 0}^{\infty} f^n(B(\omega, r_{\omega})) = \widehat{C} \setminus E_f$. Since $\infty \not \in E_f$, we can consider the smallest non-negative integer $N$ such that $\infty \in  f^N(B(\omega, r_{\omega}))$. Note that $N \ge 1$ because $\infty \not \in B(\omega, r_{\omega})$.

  Let $x \in  B(\omega, r_{\omega})$ such that $f^N(x) =\infty$. Let  
  \begin{equation*}
    \mathcal{P} = \{z \in \mathbb{C} : z \text{ is a pole of } f^n \text{ for some } n \in [N]\}
  \end{equation*}
  and
  \begin{equation*}
    \mathcal{Z} = \{z \in \mathbb{C} : z \text{ is a zero of } f^n \text{ for some } n \in [N]\}.
  \end{equation*}
  Note that $\mathcal{P}$ and $\mathcal{Z}$ are  finite sets and $x  \in \mathcal{P}$ because $f^N(x) = \infty$.   Let $\delta$ be the minimum distance between any two distinct numbers in $\mathcal{P} \cup \mathcal{Z}$. There is $0 < r < \delta/4$ such that
    \begin{equation} \label{eq:infinity:PZ}
  	\overline{B}(x, r) \subseteq B(\omega, r_{\omega}) \qquad \text{and} \qquad  \overline{B}(x, 2r) \cap (\mathcal{P} \cup \mathcal{Z}) = \{ x \}.
  \end{equation}
  Since $f^N$ is a rational function, $f^N(B(x, r))$ is an open set in $\widehat{\mathbb{C}}$, see Section~\ref{sec:pre:cd}. Hence, by the topology of the Riemann sphere and the fact that $\infty \in f^N(B(x, r))$, there is $M > 1$ with $U =  \{z \in \mathbb{C} : \lvert z \rvert > M \} \subseteq f^N(B(x, r))$. This is the constant given in the statement. 	We can write $f^N(z) = P(z) / Q(z)$ where $P$ and $Q$ are polynomials with coefficients in $\algcomplex$  that do not share any root, so the set of roots of $P$ is $\mathcal{Z}$ and the set of roots of $Q$ is $\mathcal{P}$.  We are going to bound $\lvert P(z) \rvert$ and $\lvert Q(z) \rvert$ in $B(x, r)$.  First, let us bound $\lvert P(z) \rvert$. Let $k$ be the multiplicity of the pole $x$ of $f^N$. For any $z \in \overline{B}(x, 2r)$ and $\zeta \in \mathcal{Z} \cup  \mathcal{P}$ with $\zeta \ne x$, in view of $r < \delta/4$ and \eqref{eq:infinity:PZ}, we have 
  \begin{equation*}
	\lvert z - \zeta \rvert \le \lvert x - \zeta  \rvert + \lvert z -x \rvert \le  \lvert x - \zeta  \rvert  + 2r \le 2r + \max_{\zeta' \in  \mathcal{Z} \cup  \mathcal{P}} \left| x - \zeta' \right|
\end{equation*}
 and
  \begin{equation*}
  	\lvert z - \zeta \rvert \ge \lvert x - \zeta  \rvert - \lvert x -z \rvert \ge \delta - 2r \ge  \delta/2.
  \end{equation*}
 Hence,  there are real numbers $C_0, C_1 > 0$ such that, for any  $z \in \overline{B}(x, 2r)$,
  \begin{equation} \label{eq:infinity:pre-bounds}
  	C_0 (\delta / 2)^{\deg P} \le \left| P(z)\right|  \le  C_1 \quad   \text{ and } \quad  C_0 (\delta/2)^{(\deg Q) - k} \left| x - z\right|^k \le  \left| Q(z)\right| \le  C_1 \left| x - z\right| ^k.
  \end{equation}
  Combining the bounds in \eqref{eq:infinity:pre-bounds} and setting  $D_0 =  C_0 (\delta/2)^{\deg P} / C_1  $ and $D_1 = C_1 (\delta/2)^{k- (\deg Q)}  /  C_0$  we find that, for any  $z \in \overline{B}(x, 2r)$,
  \begin{equation} \label{eq:lem:infinity:bounds}
  	D_0  \left| x - z\right| ^{-k}	\le \left| f^N(z)\right|  \le D_1  \left| x - z\right| ^{-k}.
  \end{equation}
 Note that $D_0$ and $D_1$ are positive.  The bounds \eqref{eq:lem:infinity:bounds} play an important role in our algorithm. We also need Lemma~\ref{lem:lipschitz}, which gives  a rational $L \ge 1$ such that $f^N$ is Lipschitz with constant $L$ in $\widehat{\mathbb{C}}$. 
  
  Now we proceed to give the polynomial-time algorithm. Let $\lambda \in \algcomplex  $ with $\lvert \lambda \rvert > M $ and $\epsilon > 0$ rational be the inputs. 
We set $\tau = \frac{D_1}{D_0} 2^k \lvert \lambda \rvert$ and 
	\begin{equation}
		\label{eq:infinity:epsilon}
		\epsilon' = \frac{\epsilon}{  \left( \left(1 + \lvert \tau \rvert^2\right) \left(1 + \lvert \lambda\rvert^2\right) \right)^{1/2}} \qquad \text{ and } \qquad \epsilon'' = \min \left\{\epsilon' / L, r,  \frac{1}{2}\left( D_0 / \lvert \lambda \rvert \right)^{1/k} \right\}.
	\end{equation}
Note that $\mathrm{size} ( \epsilon'')  = \mathrm{poly} (\mathrm{size}(\epsilon), \mathrm{size}(\lambda ))$ since $L, D_0, D_1, r, k$ are constants that do not depend on the inputs.   The equation $P(z) / Q(z) = \lambda$ in $z \in B(x, r) \cap \algcomplex$ is equivalent to $P(z) = \lambda  Q(z)$ in $z \in B(x, r) \cap \algcomplex$  because $Q$ has no zeros in $B(x, r)$ other than $x$.  We can solve this polynomial equation numerically as described in the proof of \cite[Proposition 2.2, case I]{Bezb} to compute $x' \in B(x, r) \cap
\algcomplex$ with $\lvert x^* - x' \rvert \le \epsilon'' / 2$ for some solution $x^*$ of  $P(z) = \lambda  Q(z)$ with $x^* \in B(x, r) \cap \algcomplex$, so $f^N(x^*) = \lambda$. Since $x' \in  B(x, r) \cap \algcomplex   \subseteq B(\omega, r_{\omega})$, we can use the algorithm of the definition of densely program-approximable in polynomial time for $\omega$ to compute $2$ distinct elements $\hat{x}_1, \hat{x}_{2}$ of a Mobius-program for $g$ and $a_0$ with $\lvert x' - \hat{x}_j \rvert \le \epsilon''/2$ for any $j \in \{1,2\}$. By  the triangle inequality,  we have $\lvert x^* - \hat{x}_j \rvert \le \epsilon''$ and $\lvert x - \hat{x}_j \rvert \le \lvert x - x^* \rvert  + \lvert x^* - \hat{x}_j \rvert  \le r + \epsilon'' \le 2r$, where we used \eqref{eq:infinity:epsilon}. In light of the choice of $r$ in \eqref{eq:infinity:PZ}, for any $z \in \overline{B}(x, 2r)$, we have $f^N(z) = \infty$ if and only if $z = x$. Hence, we can check if $\hat{x}_j$ is $x$ by evaluating $f^N(\hat{x}_j)$ and checking if the result is $\infty$ or not. Since $\hat{x}_1$ and $\hat{x}_{2}$ are distinct, at least one of the two is not $x$, so we can pick $\hat{x} \in \{\hat{x}_1, \hat{x}_{2}\}$ with $\hat{x} \ne x$. We work with $\hat{x}$ in the rest of the proof. We claim that $\lvert \lambda - f^N(\hat{x}) \rvert \le \epsilon$. 
Recall that $x^*, \hat{x} \in \overline{B}(x, 2r)$. In view of the bounds \eqref{eq:lem:infinity:bounds} and  \eqref{eq:infinity:epsilon} we have
\begin{equation*}
	\frac{1}{2}\left| x - x^* \right| \ge  \frac{1}{2}  \left(\frac{D}{\lvert f^N(x^*)\rvert}\right)^{1/k}  =  \frac{1}{2} \left(\frac{D}{\lvert \lambda \rvert}\right)^{1/k} \ge  \epsilon'' ,
\end{equation*}
so $\left| x - \hat{x} \right| \ge \left| x - x^* \right| -  \left| x^* - \hat{x} \right| \ge \left| x - x^* \right| -  \epsilon'' \ge \left| x - x^* \right| /2$ and 
\begin{equation} \label{eq:infinity:upper-bound}
	\left| f^N \left( \hat{x} \right) \right|  \le   \frac{D_1}{ \left| x - \hat{x} \right|^k  } \le  D_1  \left(  \frac{2}{ \left| x - x^* \right| } \right) ^k   \le \frac{D_1	}{D_0} 2^k \lvert \lambda \rvert = \lvert \tau \rvert.
\end{equation}
In view of the Lipschitz property of $f^N$ and \eqref{eq:infinity:epsilon}, we have 
	\begin{equation*}
		d \left(  \lambda, f^N \left( \hat{x} \right) \right) \le L d \left( x^*, \hat{x} \right) = L \frac{\left| x^* - \hat{x} \right|}{ \left( \left( 1 + \lvert x^* \rvert^2 \right) \left( 1 + \lvert \hat{x} \rvert^2 \right) \right)^{1/2} } \le L \left| x^* - \hat{x} \right| \le L \epsilon'' \le  \epsilon',
	\end{equation*}
  which in combination with \eqref{eq:infinity:upper-bound} yields
	\begin{equation*}
		\left| \lambda - f^N \left( \hat{x} \right) \right| = \left( \left( 1 + \lvert f^N \left( \hat{x} \right) \rvert^2   \right)  \left( 1 + \lvert \lambda \rvert^2   \right) \right)^{1/2} d \left(  \lambda, f^N \left( \hat{x} \right) \right) \le  \left( \left(1 + \lvert \tau \rvert^2\right) \left(1 + \lvert \lambda\rvert^2\right) \right)^{1/2} \epsilon' = \epsilon
	\end{equation*}
  as we wanted to prove. The algorithm outputs $f^N \left( \hat{x} \right)$ in conjunction with its representation as a sequence of tuples (which comes from the representation of $\hat{x}$ and the corresponding applications of $f$). In order to conclude the proof, we have to guarantee that the sequence $f^1(\hat{x} ), f^2(\hat{x} ), \ldots, f^N(\hat{x} )$ does not contain the point $\infty$.  In light of \eqref{eq:infinity:PZ},  $\hat{x}$ is a pole of $f^n$ for some $n \in [N]$ if and only if $\hat{x} = x$. But we chose $\hat{x} \in \{ \hat{x}_1, \hat{x}_2\}$ with $\hat{x} \ne x$, concluding the proof.
\end{proof}

Now we can combine Lemmas~\ref{lem:efficient-covering:any-point} and~\ref{lem:efficient-covering:infinity} to prove Lemma~\ref{lem:implementations}. The proof follows the same idea as that of \cite[Proposition 2.2, case I]{Bezb} with the difference that we can not use the particular shape of $g$ to simplify some steps. Hence, we have again to use the Lipschitz property of rational functions on the Riemann sphere (Lemma~\ref{lem:lipschitz}).

\begin{lemmobiusimplementations}[{\cite[Proposition 2.2 for Mobius-programs]{Bezb}}]  
  \statelemmobiusimplementations
\end{lemmobiusimplementations}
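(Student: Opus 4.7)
The plan is to combine the two technical results already established earlier in this appendix: Lemma~\ref{lem:efficient-covering:infinity}, which handles inputs $\lambda$ of large modulus, and Lemma~\ref{lem:efficient-covering:any-point}, which handles inputs in a small ball around any non-exceptional point of $f$. The two hypotheses $0, \infty \notin E_f$ are precisely what is needed to invoke these two lemmas. From Lemma~\ref{lem:efficient-covering:infinity} (applicable since $\infty \notin E_f$) I extract a rational constant $M > 1$ and a polynomial-time algorithm $\mathcal{A}_\infty$ that, on input $(\lambda, \epsilon)$ with $\lvert \lambda \rvert > M$, outputs an element of a Mobius-program within distance $\epsilon$ of $\lambda$. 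This reduces the remaining task to inputs lying in the compact disk $\overline{B}(0, M)$.

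To cover this disk, I use that $\lvert E_f \rvert \le 2$ (Lemma~\ref{lem:Ef}), so applying Lemma~\ref{lem:efficient-covering:any-point} to each $\gamma \in \overline{B}(0, M) \setminus E_f$ produces an open cover of $\overline{B}(0,M) \setminus E_f$ by balls $B(\gamma, r_\gamma)$; if an exceptional point $\gamma^* \in E_f$ happens to lie in the disk, I can still enlarge the cover by adding a ball $B(\gamma', r_{\gamma'})$ with $\gamma'$ algebraic, close enough to $\gamma^*$ that $\gamma^* \in B(\gamma', r_{\gamma'})$ (using $r_{\gamma'} > 0$). Compactness then yields a finite subcover $B(\gamma_1, r_{1}), \ldots, B(\gamma_N, r_{N})$, where each $\gamma_i$ may be taken algebraic, and Lemma~\ref{lem:efficient-covering:any-point} supplies a polynomial-time algorithm $\mathcal{A}_i$ for approximating any algebraic input in $B(\gamma_i, r_i)$. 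Crucially, $M$, $N$, the $\gamma_i$, the $r_i$, and the routines $\mathcal{A}_\infty, \mathcal{A}_1, \ldots, \mathcal{A}_N$ depend only on $g, d, \omega, a_0$, not on the input $(\lambda, \epsilon)$, so they are hard-coded constants of the final algorithm.

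The combined algorithm, on input $(\lambda, \epsilon)$, proceeds as follows: first test whether $\lvert \lambda \rvert > M$ using the algebraic-arithmetic routines of Section~\ref{sec:pre:algebraic} (this only requires comparing an algebraic norm with a rational); if so, run $\mathcal{A}_\infty(\lambda, \epsilon)$. Otherwise, iterate through $i = 1, \ldots, N$, test whether $\lambda \in B(\gamma_i, r_i)$ (at least one such $i$ exists by construction), and run $\mathcal{A}_i(\lambda, \epsilon)$ on the first index that succeeds. Since $N$ is a constant, each ball-membership test is polynomial-time, and each subroutine runs in time polynomial in $\size{\lambda}$ and $\size{\epsilon}$, so the whole procedure is polynomial-time. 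The main conceptual obstacle is producing the uniform finite cover: one must verify that even if $E_f$ intersects $\overline{B}(0,M)$ the exceptional points can be absorbed into an open ball whose centre is non-exceptional and algebraic; once this is done, all remaining work is bookkeeping that assembles the outputs of Lemmas~\ref{lem:efficient-covering:infinity} and~\ref{lem:efficient-covering:any-point} into a single polynomial-time procedure.
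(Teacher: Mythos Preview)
Your compactness approach is correct and genuinely different from the paper's. The paper applies Lemma~\ref{lem:efficient-covering:any-point} only at the single centre $\gamma=0$: for $|\lambda|\le M$ it passes to $x^*=g^{-1}(\lambda)$, then either invokes Lemma~\ref{lem:efficient-covering:infinity} on $x^*$ (when $|x^*|>M$) or rescales by a pre-generated large program element $\lambda_4$ so that $x^*/\lambda_4\in B(0,r_0)$, approximates this quotient by some $\hat y$, and uses one further program step $g(\lambda_4\cdot\hat y)$ to land near $\lambda$. This exploits that $g$ is a bijection and that $g$ applied to a product of two previously generated elements is itself a legitimate Mobius-program step; your route bypasses both tricks and would generalise more readily if $g$ were merely rational. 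One correction: your contingency for an exceptional point $\gamma^*\in\overline{B}(0,M)$ is not justified---nothing prevents $r_{\gamma'}\to 0$ as $\gamma'\to\gamma^*$---but it is also unnecessary, because the hypotheses $0,\infty\notin E_f$ already force $E_f=\emptyset$: every exceptional point of $f$ is a critical point of $f$, and the only critical points of $z\mapsto g(z^d)$ in $\widehat{\mathbb C}$ are $0$ and $\infty$ since a Mobius map has none. This is where the assumption $0\notin E_f$ actually enters your argument; once it is noted, $\overline{B}(0,M)$ is honestly covered by the balls $B(\gamma,r_\gamma)$ and the finite-subcover reasoning goes through.
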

\begin{proof}
 By Lemma~\ref{lem:lipschitz} there is  a rational $L \ge 1$ such that $f^N$ is Lipschitz with constant $L$ in $\widehat{\mathbb{C}}$. Note that $g$ only has one complex pole because $g$ is a Mobius map with $g(\infty) \in \mathbb{C}$.  Let $p \in \mathbb{C}$ be the pole of $g$, so $g(p) = \infty$.  We can write $g$ as $g(z) =  a/(z - p) + b$ for some $a,b \in \algcomplex$ with $a \ne 0$. Let $r_0 \in (0,1)$ be rational number as in Lemma~\ref{lem:efficient-covering:any-point} for $\gamma = 0$. Let $M > 1$ be a rational number as in Lemma~\ref{lem:efficient-covering:infinity}. There is a rational $r > 0$ such that 
  \begin{equation} \label{eq:implementations:bound:1}
 	 \lvert g(z) \rvert > 2M \qquad \text{for all } z \in B(p, r).
 \end{equation}
 Since $g(\infty) = b \in \mathbb{C}$ and $p$ is the only pole of $g$, we find that $g$ is bounded on $\mathbb{C} \setminus B(p, r/2)$. Let $C$ be a positive real number with 
 \begin{equation} \label{eq:implementations:bound:2}
 	\left| g(z) \right| \le C \qquad \text{for all } z \in \mathbb{C} \setminus B(p, r/2). 
 \end{equation}

  We can now specify the algorithm announced in the statement.  Let $\lambda \in \algcomplex$ and $\epsilon > 0$ rational be its inputs.  Our algorithm distinguishes two cases depending on $\lambda \in B(p, r_p)$. 
  
  \begin{enumerate}
  	\item $\lvert \lambda \rvert > M$. Then we can use the algorithm of Lemma~\ref{lem:efficient-covering:infinity} to compute  an element $a_k$ of a Mobius-program for $g$ starting at $a_0$ with $\lvert \lambda - a_k \rvert \le \epsilon$.
  	\item $\lvert \lambda \rvert \le M$. Let $x^* = g^{-1}(\lambda)$. We have  $x^* \not \in  B(p, r)$ because $\lvert g(x^*) \rvert = \lvert \lambda \rvert < 2M$, see \eqref{eq:implementations:bound:1}. Let 
  	\begin{equation} \label{eq:implementations:epsilon}
  		\epsilon' = \min\left\{\frac{\epsilon}{ L(1 + C^2)}, r/2 \right\}
  	\end{equation}
  	Our algorithm computes $x^*$ and distinguish two more cases depending on $x^*$. In each of the two cases the algorithm is going to compute  an element $\hat{x}$ of a Mobius-program for $g$ starting at $a_0$ with $\lvert x^* - \hat{x} \rvert \le \epsilon'$.
  	\begin{itemize}
  		\item $\lvert x^* \rvert > M$.  Our algorithm uses the algorithm of Lemma~\ref{lem:efficient-covering:infinity} with inputs $x^*$ and $\epsilon'$ to compute  an element $\hat{x}$ of a Mobius-program for $g$ starting at $a_0$ with $\lvert x^* - \hat{x} \rvert \le \epsilon'$, and returns $a_k$.
  		\item $\lvert x^* \rvert \le M$.  Note that $2M/r_0  > M$. Our algorithm first uses the algorithm of Lemma~\ref{lem:efficient-covering:infinity} with inputs $\lambda = 2M/r_0$ and $\epsilon = 1$ to compute an element $\lambda_4$ of a Mobius-program for $g$ starting at $a_0$ with $\lvert  2M/r_0 - \lambda_4 \rvert \le 1$, so $\lvert \lambda_4 \rvert > M/r_0$.  This step takes constant time since all the quantities involved are constants stored in our algorithm. The idea for this part of the proof is borrowed from  \cite[Proposition 2.2, case III]{Bezb}. Here $\lambda_4$ plays the same role as the activity $\lambda_4$ implemented in \cite[Proposition 2.2, case III]{Bezb}, hence the choice of the name. We have $ \lvert x^* / \lambda_4 \rvert < r_0$, so we can use the  algorithm of Lemma~\ref{lem:efficient-covering:any-point} for $\gamma =0$ with inputs $x^* / \lambda_4$ and $\epsilon' / \lvert \lambda_4 \rvert$ to compute an element $\hat{x}$ of a Mobius-program for $g$ starting at $a_0$ with $\lvert x^* / \lambda_4 - \hat{y} \rvert \le \epsilon'/ \lvert \lambda_4 \rvert$.  We set $\hat{x} = \lambda_4 \hat{y}$ and note that $\lvert x^* - \hat{x} \rvert \le \epsilon'$.
  	\end{itemize}
   From the definition of $\epsilon'$ \eqref{eq:implementations:epsilon} and the triangle inequality we have 
   \begin{equation*}
   	\lvert p - \hat{x} \rvert \ge \lvert p - x^* \rvert - \lvert x^* - \hat{x} \rvert \ge r - \epsilon' \ge r/2.
   \end{equation*}
  	Hence, \eqref{eq:implementations:bound:2} yields
  	\begin{equation} \label{eq:implementations:bound:3}
  		\left| g \left( \hat{x} \right) \right| \le C.
  	\end{equation}  	
  	In view of the Lipschitz property of $f^N$ we have
  		\begin{equation*}
  		d \left(  \lambda, g \left( \hat{x} \right) \right) \le L d \left( g^{-1}(\lambda), \hat{x} \right) = L \frac{\left| x^* - \hat{x} \right|}{ \left( \left( 1 + \lvert x^* \rvert^2 \right) \left( 1 + \lvert \hat{x} \rvert^2 \right) \right)^{1/2} } \le L \left| x^* - \hat{x} \right| \le L \epsilon' \le  \frac{\epsilon}{ L(1 + C^2)}
  	\end{equation*}
    which in combination with \eqref{eq:implementations:bound:3} and $\lvert \lambda \rvert \le M \le C$ yields
  \begin{equation*}
  	\left| \lambda - g \left( \hat{x} \right) \right| = \left( \left( 1 + \lvert g \left( \hat{x} \right) \rvert^2   \right)  \left( 1 + \lvert \lambda \rvert^2   \right) \right)^{1/2} d \left(  \lambda, g \left( \hat{x} \right) \right) \le L(1 + C^2) d \left(  \lambda, g \left( \hat{x} \right) \right) \le \epsilon.
  \end{equation*}
  Our algorithm returns the representation of $g(\hat{x})$ as an element of a Mobius-program. \qedhere
  \end{enumerate}
\end{proof}

\begin{remark}
 The hypothesis $g(\infty) \in \mathbb{C}$ can be removed from Lemma~\ref{lem:implementations}. This would require us to study the case $g(\infty) = \infty$ in the proof of Lemma~\ref{lem:implementations}. Note that $g(z) = \infty$ if and only if $g$ is of the form $az + b$ for $a, b \in \mathbb{Q}$ with $a \ne 0$. This case is not relevant for this work, hence why we left it out of the statement. 
 Also, for convenience, \cite{Bezb} restricted attention to complex numbers whose real and imaginary parts are rational, but it suffices for them to be algebraic.
\end{remark}

\section{\texttt{Mathematica} code   for the proof of Lemma~\ref{lem:region-R}}
\label{sec:B}

The following Mathematica code shows that
\begin{equation*}
	\left| \frac{e^{\tau i}-e^{\theta i}}{2 + \delta(e^{\tau i}+e^{\theta i})} \right| \le 1,
\end{equation*}
which was promised in the proof of Lemma~\ref{lem:region-R}.
The output of the code is {\tt False}. The code uses
  the (rigorous) {\tt Resolve} function of Mathematica.

\begin{verbatim}
	cos1 = 2 t1/(1 + t1^2);
	sin1 = (1 - t1^2)/(1 + t1^2);
	cos2 = 2 t2/(1 + t2^2);
	sin2 = (1 - t2^2)/(1 + t2^2);
	Z = ComplexExpand[((cos1 + I sin1) - (cos2 + I sin2)) /
	(2 +  delta (cos1 + cos2 + I (sin1 + sin2)))];
	X = Simplify[(Z + ComplexExpand[Conjugate[Z]])/2];
	Y = Simplify[(Z - ComplexExpand[Conjugate[Z]])/(2 I)];
	Resolve[Exists[{t1, t2, delta}, 
	X^2 + Y^2 > 1 && -1 <= t1 <= 1 && -1 <= t2 <= 1 && 0 < delta < 1]]
\end{verbatim}

 \end{appendices}
\end{document}